\newtheorem{theorem}{Theorem}
\newtheorem{lemma}{Lemma}
\newtheorem{proposition}{Proposition}
\newtheorem{remark}{Remark}
\newtheorem{corollary}{Corollary}
\newcommand{\abs}[1]{\left\lvert#1\right\rvert}
\newcommand{\norm}[1]{\left\lVert#1\right\rVert}
\newcommand{\x}{\mathcal{X}}
\newcommand{\h}{\mathcal{H}^+}
\newcommand{\br}[1]{\left \lbrace #1 \right \rbrace}
\newcommand{\p}{\Psi_i}
\newcommand{\dd}{\bm{\slashed{\mathcal{D}}}_1}
\newcommand{\DD}{\bm{\slashed{\mathcal{D}}}_2}
\newcommand{\slas}[1]{\slashed{#1}}
\newcommand{\I}{\check{I}_{\tau_1}^{\tau_2}}
\newcommand{\el}[1]{{}^{(F)}\hspace{-0.08cm} {#1}}
\begin{document}
	\title{ Instability of gravitational and electromagnetic perturbations of extremal Reissner--Nordstr\"om spacetime}
	\author{Marios Antonios Apetroaie \footnote{University of Toronto, marios.apetroaie@mail.utoronto.ca}}
	\date{}
	\maketitle
	
	\begin{abstract}We study the linear stability problem to gravitational and electromagnetic perturbations of the \textit{extremal},  $ \abs{\mathcal{Q}}=M, $  Reissner--Nordstr\"om spacetime, as a solution to the Einstein-Maxwell equations. Our work uses and extends the framework \cite{giorgi2019boundedness,giorgie2020boundedness} of Giorgi, and contrary to the subextremal case we prove that instability results hold for a set of gauge invariant quantities along the event horizon $ \h $. In particular, we prove decay, non-decay, and polynomial blow-up estimates asymptotically along $ \h $, the exact behavior depending on the number of translation invariant  derivatives that we take. As a consequence, we show that for generic initial data, solutions to the generalized Teukolsky system of positive and negative spin satisfy both stability and instability results. It is worth mentioning that the negative spin solutions are significantly more unstable, with the extreme curvature component $ \underline{\alpha} $ not decaying asymptotically along the event horizon $ \h, $ a result previously unknown in the literature. 
	\end{abstract}

{
	\hypersetup{linkcolor=black}
	\tableofcontents
}
	\setstretch{1}
	\section{Introduction} 
	The question of \textit{stability} of black holes, as solutions to the Einstein equation, has led to a vast interdisciplinary research work addressing this problem for various spacetime models. Some of the most recent results include the proof of non-linear stability of Schwarzschild \cite{dafermos2021non}, and Kerr \cite{giorgi2022wave,klainerman2021kerr,klainerman2022brief} for small angular momentum, i.e. $ \abs{a}/m\ll 1 $. This was done in the spirit of the seminal work of Christodoulou--Klainerman \cite{christodoulou1993global}, proving the non-linear stability of Minkowski spacetime.
	
	In this paper, we are interested in	the Reissner--Nordstr\"om family of spacetimes $ (\mathcal{M}, g_{_{M,Q}}) $, 
	which in local coordinates takes the form \begin{align}
		g_{_{M,Q}} = - \left(1-\frac{2M}{r}+\dfrac{\abs{Q}^{2}}{r^{2}}\right)dt^{2}+ \left(1-\frac{2M}{r}+\dfrac{\abs{Q}^{2}}{r^{2}}\right)^{-1} dr^{2}+ r^{2}\left(d\theta^{2}+ \sin^{2}\theta d\phi^{2}\right),
	\end{align}
	and represent the spacetime outside a non-rotating, spherical symmetric, charged black hole of mass $ M $ and charge $ Q, $ with $ \abs{Q}\leq M. $
	It stands as the \textit{unique} spherically symmetric, asymptotically flat solution of the Einstein-Maxwell equations \begin{align}
		\begin{aligned}
			Ric(g)_{\mu\nu} &= 2F_{\mu \lambda}F_{\nu}^{\lambda}- \dfrac{1}{2}g_{\mu\nu}F_{\alpha\beta}F^{\alpha\beta}\\
			D_{[\alpha}F_{\beta \gamma]} &=0, \hspace{1cm} D^{\alpha}F_{\alpha\beta}=0.
		\end{aligned}
	\end{align}
	where $ D $ is the Levi--Civita connection associated to the metric $ g $, and the 2-form $ F $ is the electromagnetic tensor verifying the Maxwell equations.
	
	In a series of papers \cite{giorgi2019boundedness,giorgi2020linear,giorgi2020linearfull,giorgie2020boundedness}, the author concluded the \textit{linear} stability of the full \textit{subextremal} range of Reissner--Nordstr\"om spacetimes as solutions to the Einstein-Maxwell equations. Roughly, this means that all solutions to the linearized Einstein-Maxwell equations around a Reissner--Nordstr\"om solution, $ g_{_{M,Q}} $ with $ \abs{Q}<M $, arising from regular asymptotically flat initial data remain \textit{uniformly bounded} in the exterior, and \textit{decay} to a linearized Kerr-Newman solution after adding a pure gauge solution.
	However, some of the results developed in this work no longer hold in the \textbf{extremal} case, $ \abs{Q}=M, $ because of the degeneracy of the
	redshift effect at the event horizon $ \h $, a necessary ingredient in showing linear stability in \cite{giorgi2019boundedness,giorgie2020boundedness}. In addition, in view of the \textit{Aretakis instabilities} that manifest on the horizon $ \h $  for the homogeneous wave equation \cite{aretakis2011stabilityII,aretakis2011stability}, one expects that similar instabilities arise in the linearized gravity as well.
	
	The purpose of this paper is to address the linear stability problem for the extreme Reissner--Nordstr\"om spacetime of maximally charged black holes. Our  results are at the level of \textit{gauge--invariant quantities}, characterized by the fact that they vanish in any \textit{pure gauge} solution. We are looking at quantities that arise naturally in the linearization procedure and are shown to satisfy a generalized version of the so-called \textit{Teukolsky equation}; see \cite{giorgi2019boundedness,giorgie2020boundedness}. These wave-type equations govern the gravitational and electromagnetic perturbations of \textit{ERN} and decouple completely from the full set of linearized Einstein-Maxwell system when written in a null frame, and thus can be studied independently. 
	\subsection{The Teukolsky system and the instability result}
	In the work of linear stability  of Schwarzschild and Kerr spacetimes, or even in the case of Maxwell equations in these backgrounds, the resulting Teukolsky equations are independent for each corresponding extreme component (gravitational or electromagnetic); see \cite{dafermos2019linear,shlapentokh2020boundedness,blue2008decay,pasqualotto2019spin}. 
	However, in the case of Reissner--Nordstr\"om, we obtain generalized Teukolsky equations that are heavily coupled with each other. This is, roughly, due to the initiation of both gravitational and electromagnetic perturbations in the presence of charge. The Teukolsky system of $ \pm $ spin is
	satisfied by a pair of extreme curvature components $\alpha_{AB}, \underline{a}_{AB}$, and two pairs $\mathfrak{f}_{AB}, \underline{\mathfrak{f}}_{AB}, \ \tilde{\beta}_{A}, \underline{\tilde{\beta}}_{A}$,  defined in terms of both Ricci coefficients and curvature/electromagnetic components.
	In particular, let $ \mathcal{T}_i:= 	\square_{g_{M,Q}} + c_i(r)\cdot \slas{\nabla}_{e_{3}} + d_i(r)\cdot \slas{\nabla}_{e_{4}} + V_i(r) ,$  be a Teukolsky type operator with coefficients depending on $M, Q $, then the generalized Teukolsky system is schematically given by 
	\begin{align*}
		\begin{aligned}[c]
			\mathcal{T}_1(\alpha)\ &=\ w_1(r)\ \slas{\nabla}_{e_{4}}\mathfrak{f} + z_1(r)\ \mathfrak{f} \\
			\mathcal{T}_2(\mathfrak{f})\ &=\ w_2(r)\ \slas{\nabla}_{e_{3}}\alpha + z_2(r)\ \alpha 
		\end{aligned}
		\hspace{1.5cm} \text{and} \hspace{1.5cm} 
		\begin{aligned}[c]
			\mathcal{T}_3(\tilde{\beta})\ &=\ w_3(r) \ \slas{div} \mathfrak{f} + z_3(r) \ \slas{div} \alpha,
		\end{aligned}
	\end{align*}	
	written with respect to a null frame  $\br{e_3,e_4,e_A}_{_{A=1,2}} $,  and by $ \slas{\nabla}_{e_i} $ we denote the projection of spacetime covariant derivative $ D_{e_i} $ on the section spheres.
	
	As a central result in this paper, we obtain estimates for the Teukolsky system in the exterior of extreme Reissner--Nordstr\"om spacetime, up to and including the event horizon $ \h. $ A set of conservation laws that hold for induced gauge--invariant quantities, along the event horizon $ \h,$  yields an analogue of Aretakis instability that carries up to the level of Teukolsky solution. Both stability and instability results coexist, which can be summarized in the following theorem.
	
	\begin{flushleft}
		{\large \textbf{Theorem.}}\ \normalsize \textit{(Rough version)} Let $ \alpha, \mathfrak{f} ,  \tilde{\beta}  $ and $ \underline{\alpha}, \underline{\mathfrak{f}}, \underline{\tilde{\beta}} $ be solutions to the generalized Teukolsky system of $ \pm $ spin on the extreme Reissner--Nordstr\"om exterior, and let $ Y $ denote a transversal invariant derivative, then for generic initial data
		\begin{enumerate}[i)]
			\item Away from the event horizon $ H^{+} \equiv \br{r=M}, $ i.e. $ \br{r\geq r_0} $ for any $ r_0>M $, Teukolsky solutions \textbf{decay} with respect to the time function of a suitable foliation of the exterior,
			\item The following pointwise decay, non-decay and  blow-up estimates hold asymptotically along the event horizon $ \h $ \footnote{$ \norm{\xi}_{\infty}(\tau) := \norm{\xi}_{L^{\infty}(S^{2}_{\tau,M})}$,  $ \norm{\xi}_{S^{2}_{\tau,M}} := \norm{\xi}_{L^{2}(S^{2}_{\tau,M}).} $, and $ f(\tau) \sim_{_{p}} g(\tau) \ \Rightarrow \displaystyle \lim_{\tau\to \infty} \frac{f(\tau)}{g(\tau)} = c,$ with $ c $ depending on $ p. $ }
			\begin{enumerate}
				\item For the positive spin solutions, we have 
				\begin{itemize}\item 
					$ \norm{\slas{\nabla}^{m}_{Y}\mathfrak{f}}_{\infty}\hspace{-0.1cm}(\tau) ,\  \norm{\slas{\nabla}^{m}_{Y}\tilde{\beta}}_{\infty}\hspace{-0.2cm}(\tau)$, and $\norm{\slas{\nabla}^{n}_{Y}\alpha}_{\infty} \hspace{-0.1cm}(\tau)$  \textbf{decay} for any $ m\leq 2, n\leq 4. $
					\vspace{0.1cm}
					\item $ \norm{\slas{\nabla}^{3}_{Y}\mathfrak{f}}_{S^{2}_{\tau,M}}, \ \norm{\slas{\nabla}^{3}_{Y}\tilde{\beta}}_{S^{2}_{\tau,M}}, $ and $ \norm{\slas{\nabla}^{5}_{Y}\alpha}_{S^{2}_{\tau,M}} $ do \textbf{not} decay along $ \h. $ \vspace{0.1cm}
					\item  $ \norm{\slas{\nabla}^{k+3}_{Y}\xi}_{S^{2}_{\tau,M}}\thicksim_{_{k}} \tau^{k},$ and $\norm{\slas{\nabla}^{k+5}_{Y}\alpha}_{S^{2}_{\tau,M}} \thicksim_{_{k}} \tau^{k} $, as $ \tau \rightarrow \infty $, for any $ k\in \mathbb{N},\  \xi \in \br{\mathfrak{f},\tilde{\beta}}. $
				\end{itemize}
				\item For the negative spin solutions, we have   \begin{itemize}
					\item Decay,  \quad $\norm{\underline{\mathfrak{f}}}_{\infty}\hspace{-0.1cm}(\tau)+ \norm{\underline{\tilde{\beta}}}_{\infty}\hspace{-0.1cm}(\tau)  \xrightarrow{\tau \to \infty}\ 0 $
					\vspace{0.1cm}
					\item $ \norm{\slas{\nabla}_{Y}\underline{\mathfrak{f}}}_{S^{2}_{\tau,M}}, \ \norm{\slas{\nabla}_{Y}\underline{\tilde{\beta}}}_{S^{2}_{\tau,M}}, $ and $ \norm{\underline{\alpha}}_{S^{2}_{\tau,M}} $ do \textbf{not} decay along $ \h, $ \vspace{0.1cm}
					\item 
					$ \norm{\slas{\nabla}^{k+1}_{Y}\xi}_{S^{2}_{\tau,M}}\thicksim_{_{k}} \tau^{k},$ and $\norm{\slas{\nabla}^{k}_{Y}\alpha}_{S^{2}_{\tau,M}} \thicksim_{_{k}} \tau^{k} $, as $ \tau \rightarrow \infty $, for any $ k\in \mathbb{N},\  \xi \in \br{\underline{\mathfrak{f}},\underline{\tilde{\beta}}}. $
				\end{itemize}
			\end{enumerate}
		\end{enumerate}
	\end{flushleft}
	\setstretch{1}
	\begin{remark} The extreme curvature component $ \underline{\alpha} $ does not decay, itself \footnote{This is due to the appearance of a transversal invariant derivative of $ \underline{\mathfrak{f}} $ on the right-hand side of the Teukolsky equation satisfied by $  \underline{\alpha} $, which acts as a source and it does not decay along the horizon $ \h. $ }, along $ \h $. This is \textbf{not} the case for axisymmetric
		linear perturbations of \textbf{extreme Kerr}, where it has been shown numerically \cite{burko2021scalar} that $ \underline{\alpha}  $ \textbf{decays} asymptotically on $ \mathcal{H}^{+}$; see relevant works \cite{lucietti2012gravitational,casals2016horizon}. This suggests that extreme Reissner--Nordstr\"om spacetimes are linearly more unstable than extreme Kerr ones in the axisymmetric setting.
	\end{remark}
\begin{remark}
For the positive spin equations, in both the extreme Reissner--Nordstr\"om and extreme Kerr (axisymmetric perturbations), it takes \textbf{five} transversal invariant derivatives of the curvature component $  \alpha $ not to decay asymptotically along $ \h $, as seen in \cite{lucietti2012gravitational}. Moreover, they show that any additional transversal derivative of $ \alpha $ yields asymptotic blow-up estimates with the same rates as in the Theorem above.
\end{remark}

	In order to arrive at these estimates, we rely on the resolution introduced in the proof of linear stability of Schwarzschild \cite{dafermos2019linear}, and then adapted in the case of Reissner--Nordstr\"om \cite{giorgi2019boundedness,giorgie2020boundedness}. The key part is a set of physical space transformations of Teukolsky solutions to higher order gauge invariant quantities which satisfy a system of generalized Regge--Wheeler equations (\ref{csystem}). To study the latter, we follow standard techniques and ideas developed in \cite{dafermos2013lectures,dafermos2009red,dafermos2010decay}.
	
	\subsection{Previous Works on Extreme Black Holes} 
	A first, rigorous study of the horizon instability pertaining to extreme Reissner--Nordstr\"om  spacetimes were initiated by Aretakis in \cite{aretakis2011stabilityII,aretakis2011stability}. A series of works followed, addressing the dynamics of wave equation models in the extreme Reissner--Nordstr\"om \cite{angelopoulos2020non,angelopoulos2021price,gajic2019interior,gajic2021quasinormal}, and  extreme Kerr \cite{aretakis2012decay,teixeira2020mode} spacetime. At the same time, Aretakis' results inspired several heuristics and numerical works that shed light on the stability of extreme black holes including \cite{murata2013happens,marolf2010dangers,lucietti2012gravitational,zimmerman2017horizon,casals2016horizon,gralla2018scaling,burko2021scalar,ori2013late}. Closely related to our paper are the results of \cite{lucietti2013horizon}, where the authors derive a horizon instability for Regge--Wheeler type equations in the linearized gravity of extreme Reissner--Nordstr\"om. The number of transversal invariant derivatives that appear in their corresponding  Aretakis constants agree with the ones we obtain in Section  \ref{Conservation Section}.

	\subsection{Outline of the paper} We present the structure of this paper, along with the main Theorems in each section.
	
	In Section \ref{Geometry}, we introduce the main coordinate systems and foliations we are using throughout the paper. In Section \ref{set up section}, we briefly review the set-up and the Teukolsky equations of \cite{giorgi2019boundedness,giorgie2020boundedness}, on which our work is based. We write the transformation theory, adapted in the extreme Reissner--Nordstr\"om case, which yields the Regge--Wheeler system, and after we consider its spherical harmonic decomposition we decouple it.
	
	In Section \ref{Regge estimates}, we study the induced decoupled Regge--Wheeler equations, and we prove Morawetz type estimates in Theorems \ref{Morawetz-Spacetime}, \ref{Spacetime non degenerate photon estimate}, and a degenerate redshift estimate with a degeneracy of the transversal derivative on the horizon $ \h $, as in Theorem \ref{Nuniform }. In Section \ref{EE first}, we remove the aforementioned degeneracy as shown in Theorem \ref{remove_degeneracy}.
	
	In section \ref{Conservation Section}, we show that solutions to the Regge--Wheeler equations are subject to a conservation law on the horizon $ \h $, Theorem \ref{conservation horizon}. We proceed with Section \ref{EE higher}, where we obtain higher order transversal invariant derivative estimates, Theorem \ref{high-order horizon estimates}. Last, we derive $ r^{p}- $hierarchy estimates in Section \ref{r-p}, which allows us to prove energy decay estimates in the exterior.
	
	Using the energy decay results and the conservation laws, we prove pointwise decay, non-decay, and blow-up estimates along the event horizon $ \h $, as in Theorems \ref{PWD-estimates}, \ref{Scalar blow up}. As a consequence, in Proposition \ref{finished estimates of Regge} we derive estimates for the initial Regge--Wheeler system.
	
	Finally, in the last two Sections \ref{Teukolsky positive section}  and \ref{Teukolsky negative section}, we use the transformations of Section \ref{set up section} to derive decay estimates for the Teukolsky system of $ \pm $ spin away from the horizon; Corollaries \ref{Decay teukolsky +}, \ref{Decay teukolsky -}. Estimates along the horizon $ \h $ are shown in Theorems \ref{f,b big Theorem}, \ref{a big Theorem} and Theorems \ref{f,b undeline estimates}, \ref{underline a estimate}.
	
	\subsection{Acknowledgments} I would like to express my gratitude to Aretakis Stefanos for introducing me to this problem and for his fruitful advice, insights, and comments. I am particularly grateful to Elena Giorgi for her constant support and invaluable ideas while this paper was being written. Finally, I would like to thank Mihalis Dafermos and Gustav Holzegel for assisting with helpful discussions and suggestions.

	\section{Extreme Reissner--Nordstr\"om Spacetime} \label{Geometry}
	In this section, we introduce the main coordinate systems we will be working with, and the foliations we use to derive the energy estimates. In addition, we briefly go through relevant elliptic notions and identities that are frequently used throughout the paper.

	\subsection{Differential structure and metric}
	With respect to the Boyer–Lindquist coordinates $ (t,r,\vartheta,\varphi) $ 
	the Reissner--Nordstr\"om metric takes the form $ g=g_{M,\mathcal{Q}} = -D dt^2 + \frac{1}{D}dr^2 +r^2 g_{\mathbb{S}^2}$, where $ D(r)= 1- \frac{2M}{r}+\frac{\abs{\mathcal{Q}}^2}{r^2} $ and $ g_{\mathbb{S}^2}  $ is the standard round metric on $ \mathbb{S}^2. $ In this paper, we are interested in the case where $ \abs{\mathcal{Q}} = M $, i.e the extremal Reissner--Nordstr\"om spacetime (ERN).
	
	Our main goal is to capture the behavior of gauge--invariant quantities up to and \textbf{including} the event horizon $ \h $, so we introduce a different coordinate system that extends regularly on $ \h. $ 
	To do so, we first  define the so-called \textbf{double null} coordinates in ERN. Consider the tortoise coordinate $ 
	r^{\star}(r):=r+2M\log(r-M)-\frac{M^2}{r-M}+C,
	$ for a constant $ C, $ and note
	$ r^{\star} $ satisfies $ \frac{\partial r^{\star}}{\partial r}= \frac{1}{D}. $ Then, the null coordinates are defined via $ u:=t-r^{\star}, \ v:=t+r^{\star}, $ with respect to which the ERN metric is given by \begin{align*}
		g_{_{ERN}} = -D(r) dv du + r^{2}g_{\mathbb{S}^{2}}
	\end{align*}
	We use this coordinate system to produce the $ r^{p}- $hierarchy estimates in Section \ref{r-p}, however, it doesn't extend regularly to $ \h. $
	One way to extend the metric beyond the event horizon is to consider the so-called \textbf{ingoing Eddington--Finkelstein} coordinates $ (v,r,\vartheta,\varphi) $ and with respect to that system the ERN metric is given by \begin{align} \label{ingoing metric}
		g_{_{ERN}} = - D dv^2 + 2dvdr +r^2 g_{\mathbb{S}^2}, \hspace{1cm} D= \left (1-\frac{M}{r}\right )^2.
	\end{align}
	In this setting, the event horizon is captured by $ \mathcal{H}^{+}\equiv \br{r=M} $, while the coordinate $ v \in (-\infty, \infty)$ traverses it. While the ingoing coordinates are regular up to $ r>0, $ we focus on producing estimates only on the domain of outer communication $ \mathcal{M} $, where $ r\geq M,$ i.e. \begin{align*}
		\mathcal{M} =\left((-\infty,\infty)\times [M,+\infty)\times \mathbb{S}^{2}\right).
	\end{align*}
	
	\paragraph{Foliations.} Here, we introduce the two foliations we are going to be using in this paper. 
	
	\begin{itemize}
		\item 	
		For the first one, let $ \Sigma_{0} $ be a flat $ SO(3)- $ invariant spacelike hypersurface terminating at $ i^{o} $ and crossing the event horizon $ \h $ with $ \partial \Sigma_0 = \Sigma_0\cap \h.$ Note, $ \Sigma_0 $ can be chosen such that its unit normal future directed vectorfield $ n_{\Sigma_0} $,  satisfies everywhere \[ \frac{1}{C}< -g(n_{\Sigma_0},n_{\Sigma_0}) < C, \hspace{1cm}  \frac{1}{C} < - g(n_{\Sigma_0},T) < C, \]
		for a positive constant $ C>0 $, where $ T=\partial_v $ is the global killing vector field in ERN, with respect to (\ref{ingoing metric}). Let  $ \phi_{\tau}^{T} $ be the one-parameter family of isomorphisms corresponding to the killing field T, and define the foliation $ \Sigma_{\tau} := \phi_{\tau}^T(\Sigma_0) $. Then, the hypersurfaces $ \Sigma_{\tau} $ are isometric to $ \Sigma_0 $ and the coercive relations above hold uniformly in $ \tau,
		$ for the same constant $ C. $ Estimates associated with this foliation take place in the region $ \quad
		\mathcal{R}(0,\tau):= \cup_{0\leq \tilde{\tau}\leq \tau} \Sigma_{\tilde{\tau}} $
		
		\begin{center}
			
			\includegraphics[scale=0.4]{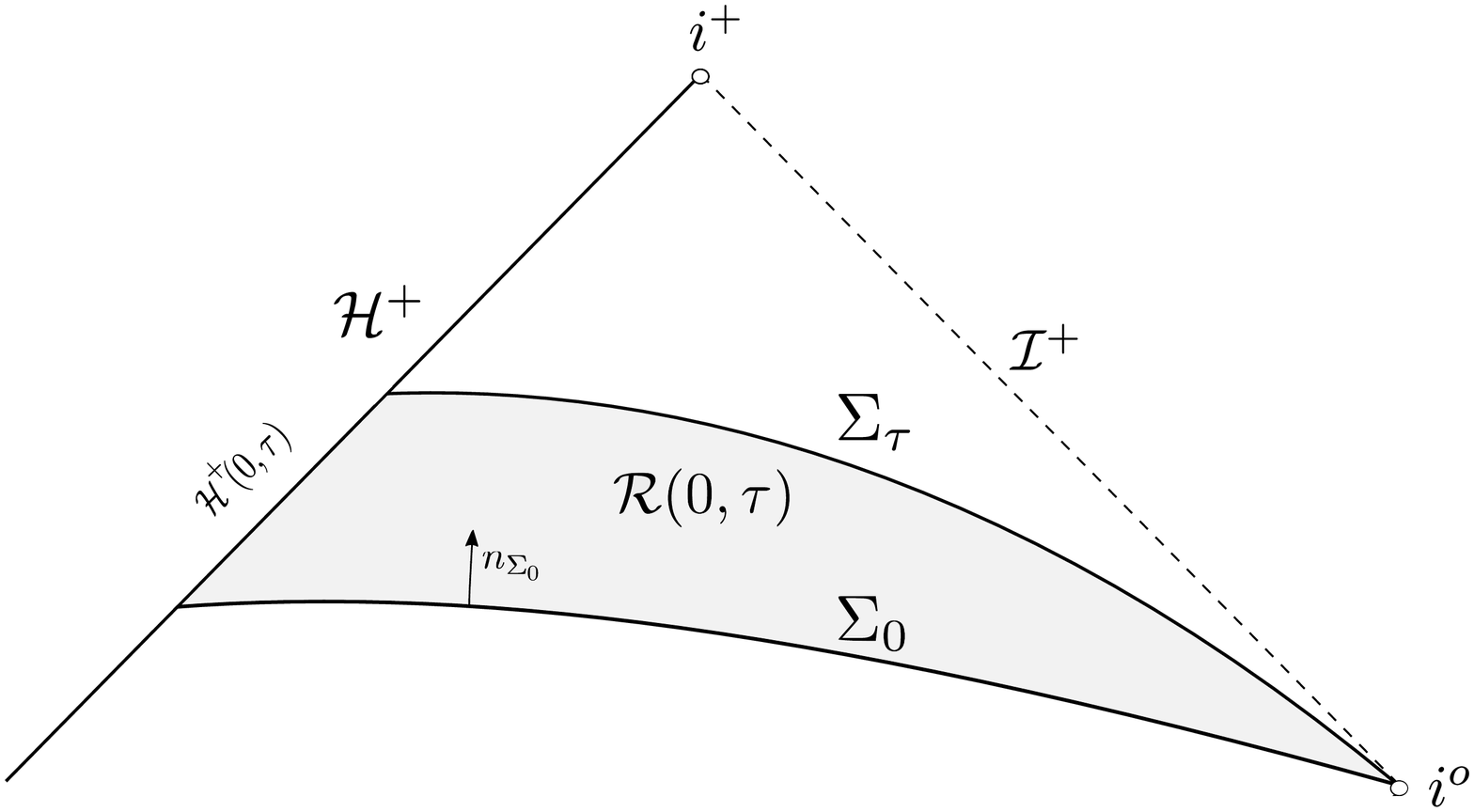}
			
		\end{center}
		\item Next, we introduce a foliation $ \check{\Sigma}_{\tau} $ that captures the radiating energy towards future null infinity $ \mathcal{I}^{+} $, and is ultimately used to obtain energy decay estimates. For that, fix $ R >2M $ and take the tortoise coordinate $ r^{\star} $ introduced earlier for $ C= -R-4M\log(R-M)- \frac{2M^2}{R-M}$. Define also the coordinate  $ t^{\star}:=t+ 2M\log(r-M) - \frac{M^2}{r-M}, $ and consider the following hypersurfaces for all $ \tau \in \mathbb{R} $
		\begin{align*}
			\check{\Sigma}_{\tau}:= \begin{cases}
				\lbrace t^{\star}=\tau \rbrace, \hspace{1cm} \text{for }\ M\leq r\leq R \\
				\\
				\lbrace u = \tau \rbrace, \hspace{1.1cm} \text{for} \ \ r\geq R,
			\end{cases}
		\end{align*}
		where $ u $ is the advanced null coordinate we saw above.
		With the specific choice of $ C $ in the definition of the tortoise coordinate $ r^{\star}(r) $, the hypersurfaces $ \check{\Sigma}_{\tau} $ are well defined on $ \br{r=R} $ for all $ \tau \in \mathbb{R}. $ Moreover, $ \check{\Sigma}_{\tau} $ crosses the event horizon $ \mathcal{H}^+ $ and terminates at future null infinity $ \mathcal{I}^+ $ for every $ \tau \in \mathbb{R} $, as seen in the figure below. Note, along $ \h $ the parameter of the foliation above satisfies:  $ \tau = v+ (M + C)$.
		\begin{center}
			\includegraphics[scale=0.4]{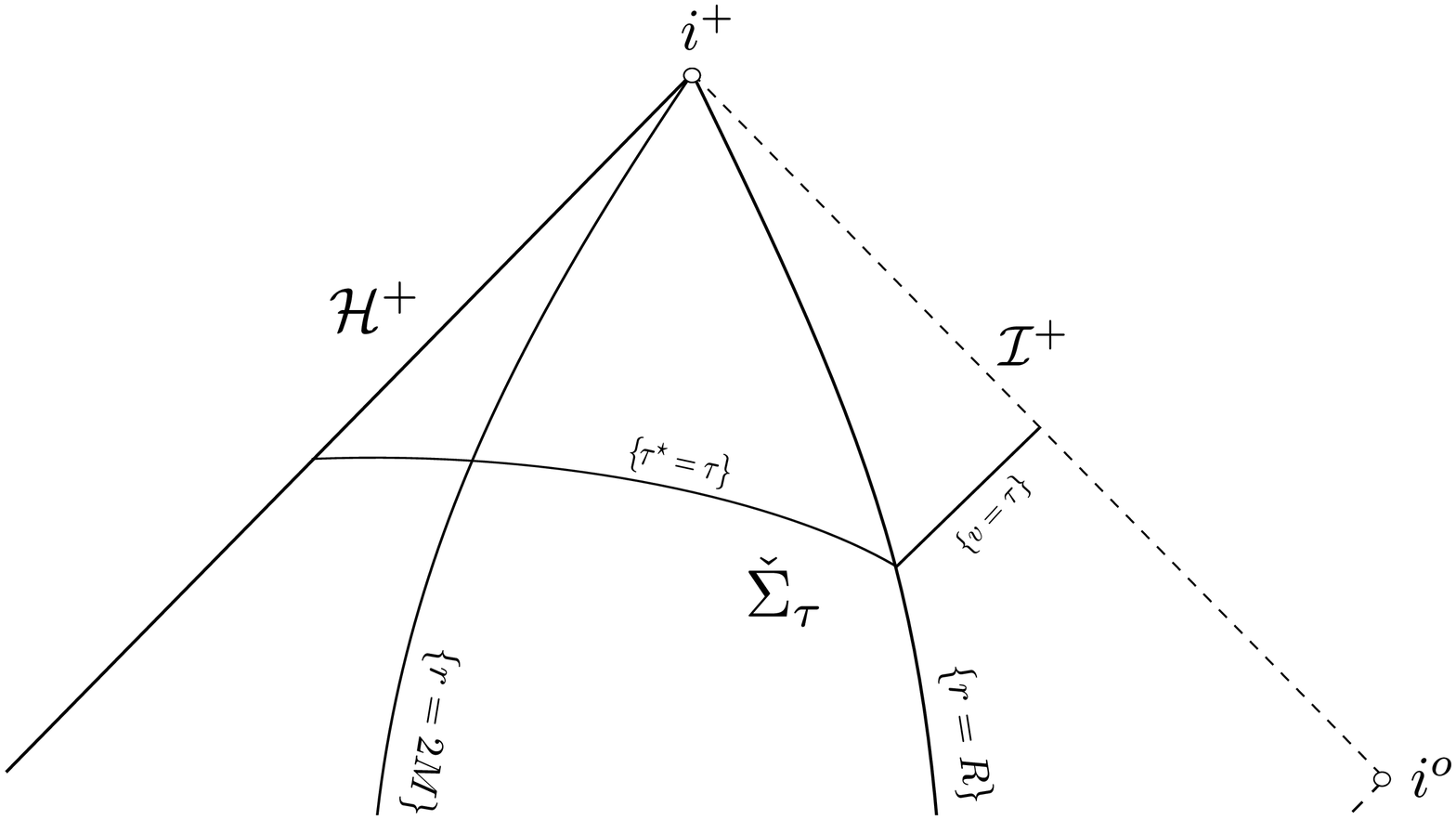}
		\end{center}
	\end{itemize}	
	For both foliations above, it is rather useful to consider a coordinate system associated to them. Of course, there is a natural way to define an induced one; for any $ P\in \Sigma_{\tau} $ (or $ \check{\Sigma}_{\tau} $) with coordinates $ P=(v_P,r_P,\omega_P) $,
	let $ \rho = r_p $ and $ \omega = \omega_P $, where $ \omega_P $ corresponds to the spherical coordinates. Thus, for each point on the hypersurface there is an associated pair $ (\rho,\omega) $ for $\rho\geq M, \ \omega\in \mathbb{S}^{2}. $ In addition, by the construction of the foliations, we have $ \partial_{\rho} = k(r)\cdot \partial_v + \partial_r $, for a bounded function $ k(r) $. Hence, in view of $ [\partial_v,\partial_{\rho}]=[\partial_{\rho},\partial_{\theta}] = [\partial_{\rho},\partial_{\phi}]= [\partial_{\theta},\partial_{\phi}] = 0 ,  $ and Frobenius theorem, we have that $ (\rho,\omega) $ defines a Lie propagated coordinate system on the foliation $ \Sigma_{\tau} $ (similarly for $ \check{\Sigma}_{\tau} $).

	\subsection{The \texorpdfstring{$ S^2_{v,r} $}{PDFstring}--tensor algebra and commutation formulae in ERN.}
	We briefly present relevant angular operators for $ S^2_{v,r} -$tensors and useful identities they satisfy, adopted on the ERN spacetime.  We express everything with respect to the ingoing Eddington--Finkelstein coordinates $ (v,r,\vartheta,\varphi). $
	
	Let $ \slashed{\nabla} $ be the covariant derivative associated to the round  metric $ \slashed{g} $ on the spheres $ S^2_{v,r}. $ Next, we denote by $ \slashed{\nabla}_{\partial_r} $ the projection to $ S^2_{v,r} $ of the spacetime covariant derivative $ \nabla_{\frac{\partial}{\partial r}} $. For the following  definitions of angular operators, let $ \xi_{A} $ be  any one-form and $ \theta_{AB} $ be any symmetric traceless 2-tensor on $ S^2_{v,r} $ . Then,  \begin{enumerate}[$ \diamond $ ]
		\item $ \dd $ takes  $ \xi $ to the pair of scalars $ (\slashed{div}\xi, \slashed{curl}\xi) ,$ where \[ \slashed{div}\xi = \slashed{\nabla}^{A}\xi_A, \hspace{1cm}   \slashed{curl}\xi = \slashed{\epsilon}^{AB}\slashed{\nabla}_{A}\xi_B\]
		\item $ \dd^{\star} $ is the formal $ L^2- $adjoint of $ \dd $, which takes any pair of scalars $ (f,g) $ into the one-form $ -\slashed{\nabla}_A f + \slashed{\epsilon}_{AB}\slashed{\nabla}^{B}g $. 
		\item $ \DD  $ takes the tensor $ \theta $ to the one-form 	$ (\slashed{div}  \theta) _{A} = \slashed{\nabla}^{C}\theta_{CA}$. 
		\item $ \DD^{\star} $ is the formal $ L^2- $adjoint of $ \DD $ which takes a one form $ \xi $ to the symmetric traceless two tensor \[ (\DD^{\star}\xi) _{AB} = -\frac{1}{2}\left(\slashed{\nabla}_B\xi_A+ \slashed{\nabla}_A\xi_B- (\slashed{div}\xi)\slashed{g}_{AB}\right). \]
	\end{enumerate}
	The first set of identities relating the above operators to the Laplacian $ \slashed{\Delta} $ on $ S^2_{v,r} $ can be easily checked \begin{align}
		\begin{aligned}
			&\dd \dd^{\star} = - \slashed{\Delta}_0, \hspace{1cm} &\dd^{\star}\dd = - \slas{\Delta}_1
			+ K  \\
			&\DD \DD^{\star} = -\frac{1}{2}\slas{\Delta}_1 - \frac{1}{2}K, \hspace{1cm} &\DD^{\star}\DD = -\frac{1}{2} \slas{\Delta}_2 +  K,  \label{elliptic 1}
		\end{aligned}
	\end{align}
	where $ K $ is the Gauss curvature of $ S^2_{v,r} $ and $ \slas{\Delta}_k $ is the Laplacian acting on $ k- $tensors respectively. 
	
	In addition, for  any $ \xi_{A_1,\dots,A_n} $, an $ n-$covariant $ S^2_{v,r} $-tensor, we have \begin{align}
		\left 	(\slas{\nabla}_{\partial_r}\slas{\nabla}_B -\slas{\nabla}_B \slas{\nabla}_{\partial_r}\right )\xi_{A_1\dots A_n} = - \frac{1}{r}\slas{\nabla}_B  \xi_{A_1\dots A_n}  
	\end{align} 
	from which we obtain the commutation identities \begin{align}
		[\slas{\nabla}_{\partial_r}, \dd]\xi = - \dfrac{1}{r}\dd \xi, \hspace{1cm}   [\slas{\nabla}_{\partial_r}, \DD]\theta = - \dfrac{1}{r}\DD \theta  \label{cummutation formula}
	\end{align}

	\subsection{Spherical harmonics and elliptic identities.} 
	In this paragraph, we briefly recall the spherical harmonics, and using the operators introduced above we define the corresponding orthogonal decomposition of one forms $ \xi $ and traceless symmetric 2-tensors $ \theta $ on  $ S_{v,r}^2 $.
	
	Fix $ \ell \in \mathbb{N} $, $ \abs{m}\leq \ell $ and denote by $ \mathring{Y}_{m}^{\ell}(\vartheta,\varphi) $  the spherical harmonics on the unit sphere, i.e. \[ \slashed{\Delta}_{\mathring{g}} \mathring{Y}_{m}^{\ell} = - \ell(\ell+1)\mathring{Y}_{m}^{\ell}\]
	This family forms an orthogonal basis of $ L^2(\mathbb{S}^2) $ with respect to the standard inner product on the sphere. Since we will be working on spheres of radius $ r $, $ S^2_{v,r} $, we take the normalized spherical harmonics denoted by $  Y_{m}^{\ell} (r,\vartheta,\varphi) $ and satisfy \[ \slashed{\Delta} Y_{m}^{\ell}= - \dfrac{1}{r^2}\ell(\ell+1) Y_{m}^{\ell}. \]
	A function $ f $ is said to be supported on the fixed frequency $ \ell  $ if the following projections vanish \[ \int_{S^2_{v,r}} f\cdot Y_{m}^{\tilde{\ell}} = 0\ , \]
	for all $ \tilde{\ell}\neq \ell. $ 
	
	Now, we recall that any one-form $ \xi $ has a unique representation $ \xi= r \dd^{\star}(f,g) $ for two functions $ f,g  $ on the unit sphere with vanishing mean, i.e. $ f_{\ell=0}=g_{\ell=0}=0.$ 
	\\ Similarly, any traceless symmetric two-tensor $ \theta $ on $ S^2_{v,r} $ has a unique representation $ \theta = r^2 \DD^{\star}\dd^{\star}(f,g) $, where both scalars $ f,g $ are supported on $ \ell \geq 2.  $ 
	
	We say that $ \xi, $ or $ \theta, $ are supported on the fixed frequency $ \ell  $, if the scalars $ f,g $ in their unique representation are supported on the fixed frequency $ \ell. $
	Now, using the identities from relation (\ref{elliptic 1}) we can show  \begin{align}
		\int_{S^2_{v,r}} \left(\abs{\slas{\nabla}\xi}^2 + \dfrac{1}{r^2}\abs{\xi}^2\right)= \int_{S^2_{v,r}}\abs{\dd \xi}^2  \label{elliptic 2.5}\\
		\int_{S^2_{v,r}} \left(\abs{\slas{\nabla}\theta}^2 + \dfrac{2}{r^2}\abs{\theta}^2\right)= 2\int_{S^2_{v,r}}\abs{\DD \theta}^2, \label{elliptic 3}
	\end{align}
	where for any $ S^2_{v,r}-$ tensor of rank n, $ \xi_{A_1\dots A_n} $, we have \[ \abs{\xi}^2 := \xi^{A_1\dots A_n}\cdot \xi_{A_1\dots A_n}= \slas{g}^{A_1B_1}\cdots \slas{g}^{A_nB_n}\xi_{A_1\dots A_n}\xi_{B_1\dots B_n} \]
	In addition, if $ \xi, \theta $  are supported on the fixed angular frequency $ \ell $, we have the following elliptic identities \begin{align}
		\int_{S^2_{v,r}}\abs{\slas{\nabla}\xi}^2 = \int_{S^2_{v,r}}\dfrac{\ell(\ell+1)-1}{r^2}\abs{\xi}^2, \\
		\int_{S^2_{v,r}}\abs{\slas{\nabla}\theta}^2 = \int_{S^2_{v,r}}\dfrac{\ell(\ell+1)-4}{r^2}\abs{\theta}^2. \label{elliptic 4}
	\end{align} 
	In particular, for $ \xi $ supported on the fixed  angular frequency $ \ell\geq 1 $, we have \begin{align}
		\int_{S^2_{v,r}}\abs{\dd \xi}^2 = 	\int_{S^2_{v,r}} \dfrac{\ell(\ell+1)}{r^2}\abs{\xi}^2  \label{elliptic 5 }
	\end{align}
	and for a traceless symmetric tensor $ \theta $ supported on the angular frequency $ \ell\geq 2, $ using the above identity for $ \xi = \DD\theta $ and then identities (\ref{elliptic 3}), (\ref{elliptic 4}) we obtain  \begin{align}
		\int_{S^2_{v,r}} \abs{\dd \DD \theta }^2 = 	\int_{S^2_{v,r}} \dfrac{\ell(\ell+1)}{2r^2}\cdot \dfrac{\ell(\ell+1) -2}{r^2}\abs{\theta}^2 \label{elliptic 6}
	\end{align}

	\section{The generalized Teukolsky and Regge--Wheeler system and the gauge invariant hierarchy} \label{set up section}
	The main goal of this paper is to study induced gauge--invariant quantities, arising from linear gravitational and electromagnetic perturbations of ERN spacetime. To do so, we rely on the set-up and resulting linearized equations obtained in \cite{giorgi2019boundedness,giorgie2020boundedness}, however, we adapt all notions involved to the extremal case $ M = \abs{\mathcal{Q}} .$ 
	
	In the following subsections, we first give an overview of the general set-up used and briefly describe the quantities we will be working with throughout the paper. Next, we write down the generalized Teukolsky system that the aforementioned quantities satisfy; we also go through the transformation theory that allows us to study this system by reducing it to a set of generalized Regge--Wheeler equations. Finally, using operators introduced in Section \ref{Geometry}, we derive the resulting scalar system from the tensorial one, and we show that it is subject to decoupling after we consider its spherical harmonics decomposition.
	
	\subsection{The set-up and the gauge--invariant quantities} Let $ (\mathcal{M},\mathbf{g}) $ be a $ 3+1-$dimensional Lorentzian manifold satisfying the Einstein--Maxwell equations \begin{align} \label{EME}
		\begin{aligned}
			\bm{Ric}(\mathbf{g})_{\mu \nu }\ &=\ 2 \bm{F}_{\mu \lambda} \bm{F}^{\lambda}_{\nu} - \dfrac{1}{2}\mathbf{g}_{\mu\nu}\bm{F}^{\alpha\beta}\bm{F}^{\alpha\beta}\\
			\bm{D}_{[\alpha}\bm{F}_{\beta\gamma]} \ &= \ 0, \hspace{1cm} \bm{D}^{\alpha}\bm{F}_{\alpha\beta} \ = \ 0,
		\end{aligned}
	\end{align}
	where $ \bm{D} $ is the covariant derivative associated to the metric $ \mathbf{g}_{\mu\nu} $, and $ \bm{F}_{\mu\nu} $ is the electromagnetic tensor. The author in \cite{giorgi2020linear} builds upon the formalism of \textbf{null frames} and initiates a null decomposition of (\ref{EME}) for induced quantities, i.e. Ricci coefficients, curvature, electromagnetic components, and the equations they satisfy. This can be done since any Lorentzian manifold admits a foliation out of 2-surfaces ($ S,\slas{\mathbf{g}} $), where $ \slas{\mathbf{g}} $ is the pullback metric of $ \mathbf{g} $ on $ S,  $ and at each point in $ \mathcal{M} $ we can associate a local null frame\footnote[1]{A null frame is one that satisfies:  $ g(e_3,g_3)=g(e_4,e_4)=0, \hspace{0.2cm} g(e_3,e_4)=-2\ $ and $\ g(e_A,e_B) = \slas{g}_{AB} $}
	$ \mathcal{N} = \br{e_3,e_4, e_A} $, with $ e_A$ tangent to $ S,$ for  $  A\in \br{1,2} $. We briefly express relevant quantities with respect to this null frame: \begin{itemize}
		\item \textbf{Ricci coefficients}
		\vspace{-0.3cm}
		\begin{align*}
			\begin{aligned}
				\chi_{A B} &:=\mathbf{g}\left(\mathbf{D}_A e_4, e_B\right), & \underline{\chi}_{A B}:=\mathbf{g}\left(\mathbf{D}_A e_3, e_B\right) \\
				\eta_A: &=\frac{1}{2} \mathbf{g}\left(\mathbf{D}_3 e_4, e_A\right), & \underline{\eta}_A:=\frac{1}{2} \mathbf{g}\left(\mathbf{D}_4 e_3, e_A\right) \\
				\xi_A: &=\frac{1}{2} \mathbf{g}\left(\mathbf{D}_4 e_4, e_A\right), & \underline{\xi}_A:=\frac{1}{2} \mathbf{g}\left(\mathbf{D}_3 e_3, e_A\right) \\
				\omega: &=\frac{1}{4} \mathbf{g}\left(\mathbf{D}_4 e_4, e_3\right), & \underline{\omega}:=\frac{1}{4} \mathbf{g}\left(\mathbf{D}_3 e_3, e_4\right) \\
				\zeta_A &:=\frac{1}{2} \mathbf{g}\left(\mathbf{D}_A e_4, e_3\right),& 
			\end{aligned}		
		\end{align*}
		and we also denote by $ \hspace{1cm}\kappa : = tr \chi, \hspace{1cm} \underline{\kappa} := tr \underline{\chi}. $
		\item \textbf{Curvature components} 
		\vspace{-0.3cm}
		\begin{align*}
			\begin{aligned}
				\alpha_{A B} &:=\mathbf{W}\left(e_A, e_4, e_B, e_4\right), & & \underline{\alpha}_{A B}:=\mathbf{W}\left(e_A, e_3, e_B, e_3\right) \\
				\beta_A: &=\frac{1}{2} \mathbf{W}\left(e_A, e_4, e_3, e_4\right), & & \underline{\beta}_A:=\frac{1}{2} \mathbf{W}\left(e_A, e_3, e_3, e_4\right) \\
				\rho: &=\frac{1}{4} \mathbf{W}\left(e_3, e_4, e_3, e_4\right), & & \sigma:=\frac{1}{4}{ }^{\star} \mathbf{W}\left(e_3, e_4, e_3, e_4\right)
			\end{aligned}
		\end{align*}
		where $ \mathbf{W} $ is the Weyl curvature of $ \mathbf{g} $ and $ ^{\star}\mathbf{W} $ denotes the Hodge dual.
		\item \textbf{Electromagnetic components} \begin{align*}
			\begin{aligned}
				&^{(F)}\beta_{A}:=\mathbf{F}\left(e_A, e_4\right), \quad ^{(F)}\underline{\beta}_{A}:=\mathbf{F}\left(e_A, e_3\right)\\
				&{ }^{(F)} \rho:=\frac{1}{2} \mathbf{F}\left(e_3, e_4\right), \quad{ }^{(F)} \sigma:=\frac{1}{2}{ }^{\star} \mathbf{F}\left(e_3, e_4\right)
			\end{aligned}
		\end{align*}
		where $ ^{\star}\mathbf{F} $ denotes the Hodge dual of $ \mathbf{F}.$
		
	\end{itemize}
	\begin{flushleft}
		\hrulefill
	\end{flushleft}

	With the above in mind, consider a one-parameter family of Lorentian metrics $ \bm{g}(\epsilon) $, around the Reissner--Nordstr\"om (\textit{RN}) solution, $ \bm{g}(0)\equiv g_{M,\mathcal{Q}} $, solving (\ref{EME}) and written in the following \textbf{Bondi form}; see \cite{giorgi2020linear}.
	\begin{align*}
		\begin{aligned}
			\mathbf{g}(\epsilon) &=-2 \varsigma(\epsilon) d u d s-\varsigma(\epsilon)^2 \underline{\Omega}(\epsilon) d u^2 \\
			&+\slas{\mathbf{g}}_{A B}(\epsilon)\left(d \theta^A-\frac{1}{2} \varsigma(\epsilon) \underline{b}(\epsilon)^A d u\right)\left(d \theta^B-\frac{1}{2} \varsigma(\epsilon) \underline{b}(\epsilon)^B d u\right),
		\end{aligned}
	\end{align*}
	with $ \varsigma(0)=1, \ \underline{\Omega}(0)= \left(1-\frac{2M}{r} + \frac{\mathcal{Q}^{2}}{r^{2}}\right), \ \underline{b}_A(0)=0 $ and $ \slas{\mathbf{g}}_{AB}(0)\equiv r^{2}\gamma_{AB}. $
	
	\textit{Initiating a linear gravitational and electromagnetic perturbation of RN spacetime
		corresponds to linearizing the full system of equations obtained from the null decomposition of (\ref{EME}) in terms of $ \epsilon $, with respect to the associated null frame $ \mathcal{N}_{\epsilon} = \br{\bm{e}_3(\epsilon),\bm{e}_4(\epsilon), \bm{e}_A(\epsilon)} $} given by \begin{align*}
		e_3(\epsilon)=2 \varsigma^{-1}(\epsilon) \partial_u+\underline{\Omega}(\epsilon) \partial_s+\underline{b}(\epsilon)^A \partial_{\theta^A}, \quad e_4(\epsilon)=\partial_s, \quad e_A=\partial_{\theta^A}.
	\end{align*} For example, the linearization of the induced Bianchi equation satisfied by the extreme curvature component $ \bm{\alpha}(\epsilon) \equiv \ 0 \ + \epsilon\cdot \alpha $  yields the following equation for the linearized quantity $ \alpha $  \begin{align*}
		\slas{\nabla}_3 \alpha+\left(\frac{1}{2} \underline{\kappa}-4 \underline{\omega}\right) \alpha=-2 \slas{D}_{2}^{\star} \beta-3 \rho \hat{\chi}-2\el{\rho}\left(\slas{D}^{\star}_2 \el{\beta}+\el{\rho} \hat{\chi}\right),
	\end{align*}
	where all operators and scalars that appear above are taken with respect to the RN background metric; the rest are unknowns of the linearized system. For the complete set of linearized system and their list of unknowns, see section 4 in
	\cite{giorgi2020linear}.
	
	\paragraph{Gauge--invariant quantities.} In this paper, we  focus our analysis on a special set of derived unknowns called \textit{gauge--invariant quantities},  identified as the ones that vanish in any \textit{pure gauge}  solution. Pure gauge solutions are derived from linearizing families of metrics that correspond to a smooth coordinate transformation of \textit{RN}, preserving its Bondi form. 
	
	The first two such quantities are unknowns of the linearized system itself, $ \alpha $ and $ \underline{\alpha}, $ which correspond to the linearized quantities of the extreme curvature components $ \bm{\alpha}(\epsilon)$, $ 
	\underline{\bm{\alpha}}(\epsilon),$  respectively. In the linear theory of Einstein \textbf{vacuum} equations, these quantities satisfy decoupled wave equations, known as the Teukolsky equation, which is the starting point of the analysis.  In the Einstein-Maxwell case, they no longer appear alone in these equations since there is a new set of gauge invariant quantities acting as a source. These new quantities are defined as \begin{align}
		\mathfrak{f} := \DD^{\star}\el{\beta} + \el{\rho} \hat{\chi}, \hspace{2cm} \underline{\mathfrak{f}}  := \DD^{\star}\el{\underline{\beta}} - \el{\rho} \underline{\hat{\chi}}
	\end{align}
	where $ \el{\beta}, \el{\underline{\beta}} $ and $ \hat{\chi}, \underline{\hat{\chi}} $ are unknowns of the linearized system, while $ \el{\beta},\ \DD^{\star} $ are to be taken with respect to the background RN metric. It is quite remarkable that also $ \mathfrak{f} $ and $ \underline{\mathfrak{f}} $ satisfy Teukolsky type equations themselves, however, coupled with $ \alpha $ and $ \underline{\alpha} $, respectively, acting as a source.
	
	The last set of gauge invariant quantities we are interested in are given by \begin{align}
		\tilde{\beta} := 2 \el{\rho}\beta - 3 \rho \el{\beta}, \hspace{2cm} \tilde{\underline{\beta}} := 2 \el{\rho}\underline{\beta} - 3 \rho \el{\underline{\beta}}, \label{beta definition}
	\end{align}
	where $ \el{\beta}, \el{\underline{\beta}} $ and $ \beta, \underline{\beta} $ are unknowns. In the case of Maxwell equations in Schwarzschild  spacetime, we have $ \el{\rho} = 0$ and the quantities $ \el{\beta}, \el{\underline{\beta}} $ are gauge--invariant themselves, satisfying Teukolsky equations of $ \pm 1 $ spin. However, in our case where $ \el{\rho}\neq 0 $, we consider the modified quantities (\ref{beta definition}) which now are gauge--invariant and we will see they satisfy a generalized Teukolsky equation of $ \pm 1 $ spin. 
	
	\subsection{Rescaled null frame and regular quantities.} Throughout the papers \cite{giorgi2020linear,giorgi2020linearfull}, the author uses the null frame $ \mathcal{N}_{\epsilon}  $ defined in terms of the Bondi coordinates reducing  to the \textit{outgoing Eddington--Finkelstein} coordinates in the case of RN spacetime. However, much like the coordinate system itself, this null frame does \textbf{not} extend smoothly to the event horizon $ \h. $ Nevertheless, the rescaled  null  frame $ \mathcal{N}_{\star} : = \br{\underline{\Omega}^{-1}(\epsilon)\epsilon_{3}(\epsilon), \underline{\Omega}(\epsilon) e_{4}(\epsilon), e_{A} }$
	extends smoothly to $ \h $, and so do all induced gauge--invariant quantities when expressed with respect to this frame.
	
	Below, we write the rescaled frame in the extreme RN spacetime with respect to the \textit{Ingoing Eddington-Finkelstein} coordinates, and we give the appropriate rescalings of relevant quantities so that they extend smoothly on $ \h. $ In particular, in ERN we have $ \underline{\Omega} = D(r) = \left(1-\frac{M}{r}\right) ^{2}$, and in the ingoing coordinates (\ref{ingoing metric}) the rescaled null vectors are given by \begin{align}
		e_{3}^{\star} = - \partial_r, \hspace{1cm} e_{4}^{\star} = 2\partial_v + D\partial_r.
	\end{align}
	Then, the corresponding set of rescaled gauge--invariant quantities that extend smoothly on $ \h $, and the ones that we aim to obtain estimates for, are given by \begin{align}\begin{aligned}
			\alpha_{\star} &= D^{2}\cdot\alpha, \hspace{1cm} 	\mathfrak{f}_{\star} = D\cdot \mathfrak{f}, \hspace{1.5cm} \tilde{\beta} _{\star} = D\cdot \tilde{\beta} \\
			\underline{\alpha}_{\star}&= D^{-2} \cdot \underline{\alpha} 	, \hspace{1cm} \underline{\mathfrak{f}} _{\star} = D^{-1} \cdot \underline{\mathfrak{f}}, \hspace{0.9cm} \tilde{\underline{\beta}} _{\star} = D^{-1}\cdot \tilde{\underline{\beta}}
		\end{aligned}
	\end{align}
	From now on, all quantities with $ \star $ subscript are expressed with respect to the rescaled frame $ \mathcal{N}_{\star} $ defined above, and thus are regular up to and including the horizon $ \h. $

	\subsection{The Teukolsky and Regge--Wheeler system of \texorpdfstring{$ \pm $}{PDFstring} spin}\label{Teukolsky section} First, let us present the generalized Teukolsky equations of $ \pm 2 $ spin satisfied by the traceless, symmetric gauge--invariant quantities $ \alpha _{\star}, \mathfrak{f}_{\star} $, and $ \underline{\alpha}_{\star}, \underline{\mathfrak{f}}_{\star} $. We simply adapt the equations of \cite{giorgi2020linear} in the case of ERN and  express them in terms of the $ \mathcal{N}_{\star} $ null frame.
	For the $ +2 $ spin equations we write
	\begin{align}
		\begin{aligned}
			\square_{\mathbf{g}} \alpha_{\star}\ =\ &2(\kappa_{\star}+2 \omega_{\star}) \slas{\nabla}_3 \alpha_{\star}+\left(\frac{1}{2} \kappa_{\star} \underline{\kappa}_{\star}-4 \rho_{\star}+4\el{\rho}_{\star}^2+2 \omega _{\star}\underline{\kappa}_{\star}\right) \alpha_{\star} \\ & +4\el{\rho}_{\star}\left(\slas{\nabla}_{4^{\star}} \mathfrak{f}_{\star}+(\kappa_{\star}+2 \omega_{\star}) \mathfrak{f}_{\star}\right), \\
			\square_{\mathbf{g}}(r \mathfrak{f}_{\star})\ = \ &(\kappa_{\star}+2 \omega_{\star}) \slas{\nabla}_{3^{\star}}(r \mathfrak{f}_{\star})+\left(-\frac{1}{2} \kappa_{\star} \underline{\kappa}_{\star}-3 \rho_{\star}+\omega_{\star} \underline{\kappa}_{\star}\right) r \mathfrak{f}_{\star} \\ & -r\el{\rho}_{\star}\left(\slas{\nabla}_{3^{\star}} \alpha_{\star}+\underline{\kappa}_{\star} \alpha_{\star}\right).
		\end{aligned}
	\end{align}
	where $ \square_{\mathbf{g}} = \mathbf{g}^{\mu\nu} \nabla_{\nu}\nabla_{\mu} $
	is taken with respect to the extreme Reissner--Norstr\"om metric, and  we denote by $ \slas{\nabla}_{3^{\star}} \equiv \slas{\nabla}_{e_{3}^{\star}}  $, and similarly for $ e_{4}^{\star} $. All coefficients above correspond to the background ERN values when expressed in the $ \mathcal{N}_{\star} $ null frame, given by \begin{align*}
		\kappa _{\star}= \frac{2D}{r}, \hspace{1cm} \underline{\kappa}_{\star} = -\dfrac{2}{r}, \hspace{1cm} \omega_{\star}= -\dfrac{M}{r^{2}}\sqrt{D}, \hspace{1cm} \underline{\omega}_{\star}=0, \hspace{1cm} \el{\rho}_{\star}= \dfrac{M}{r^{2}}, \hspace{1cm} \rho _{\star}= -\dfrac{2M}{r^{3}}\sqrt{D}.
	\end{align*}
	Similarly, we have the $ -2 $ spin equations \begin{align}
		\begin{aligned}
			\square_{\mathbf{g}} \underline{\alpha}_{\star}=&-4 \omega \slas{\nabla}_{3^{\star}} \underline{\alpha}_{\star}+2\underline{\kappa} \slas{\nabla}_{4^{\star}} \underline{\alpha}_{\star}+\left(\frac{1}{2} \kappa \underline{\kappa}-4 \rho+4\el{\rho}^2-10 \omega \underline{\kappa}-4 \slas{\nabla}_{3^{\star}} \omega\right) \underline{\alpha} _{\star}\\
			&-4\el{\rho}\left(\slas{\nabla}_{3^{\star}} \underline{\mathfrak{f}}_{\star}+(\underline{\kappa}+2 \underline{\omega}) \underline{\mathfrak{f}}_{\star}\right), \\
			\square_{\mathbf{g}}(r \underline{\mathfrak{f}}_{\star})=&-2 \omega \slas{\nabla}_{3^{\star}}(r \underline{\mathfrak{f}}_{\star})+\underline{\kappa} \slas{\nabla}_{4^{\star}}(r \underline{\mathfrak{f}}_{\star})+\left(-\frac{1}{2} \kappa \underline{\kappa}-3 \rho-3 \omega \underline{\kappa}-2 \slas{\nabla}_{3^{\star}} \omega\right) r \underline{\mathfrak{f}}_{\star} \\
			&+r\el{\rho}\left(\slas{\nabla}_{4^{\star}} \underline{\alpha}_{\star}+(\kappa-4 \omega) \underline{\alpha}_{\star}\right)
		\end{aligned}
	\end{align}
	Regarding the $ \pm 1 $ spin generalized Teukolsky equations satisfied by $ \tilde{\beta}_{\star} $ and $ \underline{\tilde{\beta}}_{\star} $ we have \begin{align}
		\begin{aligned}
			\square_{\mathbf{g}}\left(r^3 \tilde{\beta}_{\star}\right)\ =\ &(\kappa_{\star}+2 \omega_{\star}) \slas{\nabla}_{3^{\star}}\left(r^3 \tilde{\beta}_{\star}\right)+\left(\frac{1}{4} \kappa_{\star} \underline{\kappa}_{\star}+\omega _{\star}\underline{\kappa}_{\star}-2 \rho_{\star}+3\el{\rho}_{\star}^2+2 \slas{\nabla}_{3^{\star}} \omega_{\star}\right) r^3 \tilde{\beta} _{\star}\\
			&-2 r^3 \underline{\kappa}_{\star}\el{\rho}_{
				\star}^2\left(\slas{\nabla}_{4^{\star}}\el{\beta}_{\star}+\left(\frac{3}{2} \kappa_{\star}+2 \omega_{\star}\right)\el{\beta}_{\star}-2\el{\rho}_{\star}\xi_{\star}\right)+8r^{3}\left (\el{\rho}_{\star}\right ) ^{2}\cdot \slas{div}\mathfrak{f}_{\star}
			\\
			\square_{\mathbf{g}}\left(r^3 \underline{\tilde{\beta}}_{\star}\right)=&-2 \omega_{\star} \slas{\nabla}_{3^{\star}}\left(r^3 \underline{\tilde{\beta}}_{\star}\right)+\underline{\kappa} \slas{\nabla}_{4^{\star}}\left(r^3 \underline{\tilde{\beta}}_{\star}\right)+\left(\frac{1}{4} \kappa_{\star} \underline{\kappa}_{\star}-3 \omega_{\star} \underline{\kappa}_{\star}-2 \rho+3^{(F)} \rho_{\star}^2\right) r^3 \underline{\tilde{\beta}}_{\star} \\
			&-2 r^3 \kappa_{\star}\el{\rho}_{\star}^2\left(\slas{\nabla}_{3^{\star}}\el{\underline{\beta}}_{\star}+\frac{3}{2}\kappa_{\star}\el{\underline{\beta}}_{\star}+2\el{\rho} _{\star}\underline{\xi}_{\star}\right)+8r^{3}\left (\el{\rho}_{\star}\right )^{2}\cdot \slas{div}\underline{\mathfrak{f}}_{\star},
		\end{aligned}
	\end{align}
	where here $ \el{\beta}_{\star}, \xi_{\star} $ and $ \el{\underline{\beta}}_{\star}, \underline{\xi}_{\star} $ are unknowns of the linearized system. 
	
	In addition to the Teukolsky equations above, the following relating equations have been derived in \cite{giorgi2019boundedness}, for $ \alpha, \mathfrak{f} $ and $ \tilde{\beta} $, and their underlined analogues.
	\begin{align}
		^{(F)}\rho_{\star} \dfrac{1}{\underline{\kappa}_{\star}} \slas{\nabla}_{3^{\star}}\left(r^3 \underline{\kappa}_{\star}^2 {\alpha}_{\star}\right) &\ =\  - \left(^{(F)}\rho_{\star}^{2} + 3 \rho_{\star}\right) r^3 \underline{\kappa}_{\star}{\mathfrak{f}}_{\star} - r^3 \underline{\kappa}_{\star}  \DD^{\star}\left(\tilde{{\beta}}_{\star}\right) \\
		^{(F)}\rho_{\star} \dfrac{1}{\kappa_{\star}} \slas{\nabla}_{4^{\star}}\left(r^3 \kappa_{\star}^2 \underline{\alpha}_{\star}\right) &\  =\  \left(^{(F)}\rho_{\star}^{2} + 3 \rho_{\star}\right) r^3 \kappa_{\star} \underline{\mathfrak{f}}_{\star} + r^3 \kappa_{\star} \DD^{\star}\left(\tilde{\underline{\beta}}_{\star}\right)
	\end{align}
	We will use these relations to obtain estimates for $ \alpha_{\star} $ and $ \underline{\alpha}_{\star} $, after we control the quantities of the right-hand side.

	Below, we present the transformation theory that allows us to obtain the generalized Regge--Wheeler equation. We write our equations in the ingoing coordinates of Section \ref{Geometry}.
	\paragraph{The transformation theory.} For any $ n- $rank $ S^{2}_{v,r}-$tensor $ \xi $, consider the following operators  \begin{align}\label{Transformation operators}
		\underline{P}(\xi) :=& \dfrac{1}{\underline{\kappa}_{\star}}\slas{\nabla}_{3^{\star}} (r\cdot\xi) \ = \ \dfrac{r}{2} \slas{\nabla}_{\partial_r}(r\cdot\xi) , \\
		P(\xi) :=& \dfrac{1}{\kappa_{\star}}\slas{\nabla}_{4^{\star}}(r\cdot \xi) \ = \ \dfrac{r}{2D} \slas{\nabla}_{2\partial_v + D\partial_r}(r\cdot\xi)
	\end{align}
	Using the above operators we define the following tensors \begin{align}
		\begin{aligned}
			\bm{q}^{F}\ :=& \ \underline{P}(r^{2}\underline{\kappa}_{\star}\cdot \mathfrak{f}_{\star}) = - r \slas{\nabla}_{\partial_r}(r^2\mathfrak{f}_{\star})	 
		\end{aligned}
		\\
		\begin{aligned}
			\bm{p}\ :=& \ \underline{P}(r^{4}\underline{\kappa}_{\star}\cdot \tilde{\beta}_{\star}) = - r \slas{\nabla}_{\partial_r}(r^4\tilde{\beta}_{\star})
		\end{aligned}
	\end{align}
	and similarly, 
	\begin{align}
		\begin{aligned}
			\underline{\bm{q}}^{F} \ :=& \ P(r^{2}\kappa_{\star}\underline{\mathfrak{f}}) = 
			\dfrac{1}{\kappa_{\star}} \slas{\nabla}_{4^{\star}}\left (r^3 \kappa_{\star}\cdot \underline{\mathfrak{f}}_{\star}\right )  \\
			\ =& \ r^{3}\slas{\nabla}_{\partial_{v}}\underline{\mathfrak{f}}_{\star} + r \slas{\nabla}_{\partial_{r}}\left(r^{2}D\cdot\underline{\mathfrak{f
			}}_{\star}\right),
		\end{aligned}
		\\
		\begin{aligned}
			\underline{\bm{p}} \ :=&\ P(r^{4}\kappa_{\star}\underline{
				\tilde{\beta}}_{\star}) =\dfrac{1}{\kappa_{\star}} \slas{\nabla}_{4^{\star}}\left (r^5 \kappa_{\star}\cdot \underline{\tilde{\beta}}_{\star}\right ) \\ \ =& \  r^{5}\slas{\nabla}_{\partial_{v}}\underline{\tilde{\beta}}_{\star} + r \slas{\nabla}_{\partial_{r}}\left(r^{4}D\cdot\underline{\tilde{\beta}}_{\star}\right).
		\end{aligned}
	\end{align}
	Note, all tensor $ \bm{q}^{F}, \bm{p}$ and $ \underline{\bm{q}}^{F}, \underline{
		\bm{p}} $, are regular on the horizon $ \h. $
	In \cite{giorgi2020linearfull}, it has been shown that the above symmetric traceless tensors satisfy the following generalized Regge--Wheeler equations, adapted in ERN spacetime 
	\begin{align}
		\begin{aligned} 
			\square_{g}\bm{p} -V_1 \bm{p} &= \dfrac{8 M^2}{r^2} \DD \bm{q^F} \\
			\square_{g}\bm{q^{F}} -V_2 \bm{q^{F}} &= \dfrac{1}{r^2} \DD^{\star}\bm{p}
		\end{aligned} \label{csystem}
	\end{align}
	where, $ g $ corresponds to the background ERN metric as in (\ref{ingoing metric}), and 
	\begin{align} \label{potentials of initial system}
		V_1= \dfrac{1}{r^2}\left ( 1-\dfrac{2M}{r}+\dfrac{6M^2}{r^2}\right ) &= \dfrac{1}{r^2}\left ( D(r) +\dfrac{5M^2}{r^2}\right ), \\
		V_2= \dfrac{4}{r^2}\left (1-\dfrac{2M}{r}+\dfrac{3M^2}{2r^2}\right ) &= \dfrac{4}{r^2}\left (D(r)+\dfrac{M^2}{2r^2}\right ), 
	\end{align}
	The same equations hold for $ \underline{\bm{q}}^{F} $ and $ \underline{\bm{p}} $ as well.
	In the next subsection, we derive the scalar system and we show how to decouple it. This decoupling is necessary in order to capture the dominant behavior of solutions to (\ref{csystem}) asymptotically along the event horizon $ \h. $

	\subsection{Derivation and decoupling of the scalar Regge--Wheeler system}
	System \eqref{csystem} involves two tensorial equations, of type 2 and 1. Instead, we use the $ S^{2}_{v,r} $ angular operators of Section \ref{Geometry} to derive the corresponding scalar system. \textit{From now on we only write the equations for the positive spin case, and all estimates we derive will automatically hold for the underline  counterparts since they satisfy the same equations.}
	\begin{proposition} The induced scalars  $\phi := r^2 \dd\DD \bm{q^{F}} $, and $ \psi:= r\dd\bm{p} $, satisfy the following coupled system
		\begin{align}
			\begin{aligned}
				\square_{g} \phi + \left (4K-V_2\right )\phi &= -\dfrac{1}{2r}\slashed{\Delta} \psi -\dfrac{K}{r}\psi  \\
				\square_{g} \psi + \left (K-V_1\right )\psi &= \dfrac{8 M^2}{r^3}\phi.  
			\end{aligned} \label{ceq}
		\end{align} 
		where $ K $ is the Gauss curvature of the section spheres, and $ V_1, V_2 $ as in (\ref{potentials of initial system}).
		
		\begin{proof}
			We use Lemma A.1.4. of \cite{giorgie2020boundedness}: \[ \left (-r \slashed{D}_2 \square_2 + \square_1 r \slashed{D}_2\right ) \bm{\Psi} = -3 K r \slashed{D}_2 \bm{\Psi} \]
			\[ \left (-r\slashed{D}_1 \square_1 + \square_0 r\slashed{D}_1 \right ) \bm{\Phi}=-K r \slashed{D}_1 \bm{\Phi}. \]
			\[ -\slashed{D}_1 \slashed{\Delta}_1 + \slashed{\Delta}_0 \slashed{D}_1 = - K \slashed{D}_1.\]
			In particular, commuting equations of system (\ref{csystem}) with $ \slashed{D}_1, \slashed{D}_2 $ and using the relations above we obtain the left hand side of the equations as seen above in (\ref{ceq}).
			
			For the right-hand side of the first equations we need to compute \[ \slashed{D}_1\slashed{D}_2\slashed{D}_2^{\star }\bm{p} \]
			We use the fact that $ \slashed{D}_2 \slashed{D}_2^{\star} = -\dfrac{1}{2}\slashed{\Delta}_1 - \dfrac{1}{2}K $, and the last relation of the Lemma above to  obtain \[  
			\slashed{D}_1\left (\slashed{D}_2\slashed{D}_2^{\star }\bm{p}\right )= \slashed{D}_1 \left (-\dfrac{1}{2}\slashed{\Delta}_1-\dfrac{1}{2}K \right )\bm{p}= -\dfrac{1}{2}\slashed{\Delta}_0 \left (\slashed{D}_1 \bm{p}\right )- K \slashed{D}_1 \bm{p},
			\]
			which concludes the proof.

		\end{proof}
	\end{proposition}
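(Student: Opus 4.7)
The plan is to commute each tensorial equation of system \eqref{csystem} with an appropriate composition of the angular operators $\dd$ and $\DD$, so that the compositions land on the scalar unknowns $\phi$ and $\psi$. The key tools are the three commutation identities stated in the proof: two of them move the wave operator past $r\DD$ and $r\dd$ at the cost of lower-order Gauss-curvature terms, while the third commutes $\dd$ with the Hodge Laplacian $\slashed{\Delta}_1$ on one-forms. Combined with the elliptic identity $\DD\DD^{\star}=-\tfrac{1}{2}\slashed{\Delta}_1-\tfrac{1}{2}K$ from \eqref{elliptic 1}, and with the observation that $r$ is constant on the section spheres $S^{2}_{v,r}$ and therefore commutes freely with $\dd$, $\DD$ and $\slashed{\Delta}$, the derivation reduces to algebraic bookkeeping.

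For the $\psi$ equation I would apply $r\dd$ to the wave equation for $\bm{p}$ in \eqref{csystem}. Since $\bm{p}$ is a one-form, the second commutation formula gives $r\dd\,\square_1\bm{p}=\square_0(r\dd\bm{p})+K(r\dd\bm{p})=\square_g\psi+K\psi$, while the right-hand side reduces to $r\dd\bigl(\tfrac{8M^{2}}{r^{2}}\DD\bm{q}^{F}\bigr)=\tfrac{8M^{2}}{r^{3}}\phi$ directly from the definition $\phi=r^{2}\dd\DD\bm{q}^{F}$. This yields the second equation of \eqref{ceq}.

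The $\phi$ equation requires a two-step commutation applied to the wave equation for $\bm{q}^{F}$. First, apply $r\DD$: the first commutation formula produces $\square_1(r\DD\bm{q}^{F})+3K(r\DD\bm{q}^{F})$ on the left, while on the right the elliptic identity rewrites $r\DD\cdot\tfrac{1}{r^{2}}\DD^{\star}\bm{p}=\tfrac{1}{r}\DD\DD^{\star}\bm{p}$ as $-\tfrac{1}{2r}\slashed{\Delta}_1\bm{p}-\tfrac{K}{2r}\bm{p}$. Second, apply $r\dd$ to this intermediate one-form equation: the second commutation formula contributes a further $+K$ on the left, so combining with the previous $3K$ produces the full $+4K$ correction, yielding $\square_g\phi+(4K-V_2)\phi$. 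On the right-hand side, using $\dd\bm{p}=\psi/r$ together with the third commutation formula $\dd\slashed{\Delta}_1=\slashed{\Delta}_0\dd+K\dd$, the two contributions collapse exactly to $-\tfrac{1}{2r}\slashed{\Delta}\psi-\tfrac{K}{r}\psi$.

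No genuine obstacle is expected: the entire derivation is a careful tracking of lower-order $K$-corrections --- two arising from commuting $\square$ past the angular operators and one from commuting $\dd$ with $\slashed{\Delta}_1$ --- which conspire to convert $V_2$ into $V_2-4K$ and to produce the precise coefficient $-K/r$ multiplying $\psi$ on the right-hand side of the first scalar equation.
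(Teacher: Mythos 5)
Your proposal is correct and follows essentially the same route as the paper's own proof: commuting the two tensorial Regge--Wheeler equations with $r\DD$ and $r\dd$ via Lemma A.1.4 and using $\DD\DD^{\star}=-\tfrac{1}{2}\slashed{\Delta}_1-\tfrac{1}{2}K$. You simply make explicit what the paper's proof states tersely — one commutation with $r\dd$ for the $\psi$ equation, a two-step $r\DD$-then-$r\dd$ commutation for the $\phi$ equation — and your accounting of the $K$-corrections ($3K+K=4K$ on the left, and the collapse of the right-hand side to $-\tfrac{1}{2r}\slashed{\Delta}\psi-\tfrac{K}{r}\psi$) is accurate.
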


	\begin{remark}
		We saw earlier that $ \bm{p}, \bm{q^{F}} $ are regular quantities on the horizon $ H^{+} $, and thus, $ \phi$ and $ \psi $ are regular on the horizon as well. 
	\end{remark}
	\subsubsection*{The Cauchy problem for the scalar coupled system}
	The Teukolsky system we introduced earlier admits a well-posed Cauchy initial value problem; see \cite{giorgi2019boundedness,giorgie2020boundedness}. However, we will be studying the induced coupled scalar system (\ref{ceq}) independently as a system of its own. In particular, we consider general solutions $ (\phi,\psi) $ to system (\ref{ceq}) arising from initial data \begin{align}
		\begin{aligned}
			\left 	(\phi\big|_{\Sigma_0} , n_{_{\Sigma_0}}\phi\big|_{\Sigma_0}\right ) \equiv (\phi_0, \phi_1) \in H_{loc}^{k}(\Sigma_0) \times H_{loc}^{k-1}(\Sigma_0),  \\
			\left 	(\psi\big|_{\Sigma_0},n_{_{\Sigma_0}}\psi\big|_{\Sigma_0}\right ) \equiv (\psi_0, \psi_1) \in H_{loc}^{k}(\Sigma_0) \times H_{loc}^{k-1}(\Sigma_0)
		\end{aligned}
	\end{align}
	for any $ k\geq 2 $,
	with $ \Sigma_0 $ as in the foliation paragraph of Section \ref{Geometry}, and $ n_{\Sigma_0} $ its future directed unit normal.
	Then, the solution   $(\phi,\psi)$  is unique in $ \mathcal{R}(0,\tau) $, with 
	$ (\phi,\psi) \in H_{loc}^{k}(\Sigma_{\tau})\times H_{loc}^{k}(\Sigma_{\tau})  $ and $ (n_{_{\Sigma}}\phi,n_{_{\Sigma}}\psi ) \in H_{loc}^{k-1}(\Sigma_{\tau})\times H_{loc}^{k-1}(\Sigma_{\tau})$, for all $ \tau >0. $
	
	In order to obtain pointwise estimates, we also impose the extra assumptions  \begin{align}
		\lim_{p\to i^{0}}r\phi^{2}(p) \Big|_{\Sigma_0}=0, \hspace{2cm} 	\lim_{p\to i ^{0}}r\psi^{2}(p) \Big|_{\Sigma_0}=0.
	\end{align}
	For the rest of this paper, \textit{by \textbf{``generic initial data"} we implicitly refer to initial data satisfying the assumptions above.}
	
	We proceed with the following proposition, in which we show how to decouple the scalar system (\ref{ceq}).
	
	\begin{proposition} Consider the spherical harmonic decomposition of (\ref{ceq}), then the system of equations supported on the fixed frequency $ \ell \geq 2 $ decouple to
		\begin{align}
			\square_{g} \Psi_1^{^{(\ell)}}+  \left(\frac{5M}{r^3}- \frac{M(2\ell+1)}{r^3}-\frac{6M^2}{r^4}  \right)\Psi_1^{^{(\ell)}}=&\ 0 \label{deq1} \\
			\square_{g} \Psi_2^{^{(\ell)}}+  \left(\frac{5M}{r^3}+ \frac{M(2\ell+1)}{r^3}-\frac{6M^2}{r^4}  \right)\Psi_2^{^{(\ell)}}=&\ 0 \label{deq2}
		\end{align}
		where $ \mu^2= (\ell+2)(\ell-1)$ and \begin{align*}
			\Psi_1 ^{^{(\ell)}} : &= 2M\mu\  \tilde{\phi}_{\ell}+ 2M(\ell+2)\ \tilde{\psi}_{\ell} \\  \Psi_2 ^{^{(\ell)}} : &= 2M(\ell+2)\  \tilde{\phi}_{\ell}- 2M\mu\ \tilde{\psi}_{\ell}
		\end{align*}
		for $ \tilde{\psi}=\frac 1 4  \mu \psi \quad$ and $\quad \tilde{\phi} = M \phi $.
	\end{proposition}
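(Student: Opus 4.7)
The plan is to project the coupled system \eqref{ceq} onto the eigenspace of $\slashed{\Delta}$ at a fixed frequency $\ell\geq 2$, rescale the two unknowns so the coupling becomes symmetric, and then diagonalize the resulting $2\times 2$ algebraic coupling by an $r$-independent change of basis.

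Since the ERN background is spherically symmetric, $\square_{g}$ commutes with projection onto the $\ell$-th spherical harmonic, so each equation of \eqref{ceq} descends to a scalar equation on that eigenspace. Using $K=1/r^{2}$, $\slashed{\Delta}\psi_\ell=-\tfrac{\ell(\ell+1)}{r^{2}}\psi_\ell$, and the explicit forms of $V_{1},V_{2}$ from \eqref{potentials of initial system}, the right-hand side of the first equation becomes $\tfrac{1}{2r^{3}}(\ell(\ell+1)-2)\psi_\ell=\tfrac{\mu^{2}}{2r^{3}}\psi_\ell$, while a direct computation gives $4K-V_{2}=\tfrac{8M}{r^{3}}-\tfrac{6M^{2}}{r^{4}}$ and $K-V_{1}=\tfrac{2M}{r^{3}}-\tfrac{6M^{2}}{r^{4}}$. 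The next step is to introduce the rescalings $\tilde\phi=M\phi$ and $\tilde\psi=\tfrac{1}{4}\mu\psi$, chosen precisely so that both off-diagonal couplings collapse to the common value $\tfrac{2M\mu}{r^{3}}$, yielding the symmetric system
\begin{align*}
\square_{g}\tilde\phi_\ell + \Bigl(\tfrac{8M}{r^{3}}-\tfrac{6M^{2}}{r^{4}}\Bigr)\tilde\phi_\ell &= \tfrac{2M\mu}{r^{3}}\tilde\psi_\ell,\\
\square_{g}\tilde\psi_\ell + \Bigl(\tfrac{2M}{r^{3}}-\tfrac{6M^{2}}{r^{4}}\Bigr)\tilde\psi_\ell &= \tfrac{2M\mu}{r^{3}}\tilde\phi_\ell.
\end{align*}

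The last step is diagonalization. Writing $\vec u_\ell=(\tilde\phi_\ell,\tilde\psi_\ell)^{T}$, the system has the form $\square_{g}\vec u_\ell+A(r)\vec u_\ell=0$ with $A(r)=\bigl(\tfrac{5M}{r^{3}}-\tfrac{6M^{2}}{r^{4}}\bigr)\mathbb{I}+\tfrac{M}{r^{3}}B$, where $\mathbb{I}$ denotes the $2\times 2$ identity and $B$ is the \emph{constant} (i.e.\ $r$-independent) symmetric matrix with entries $B_{11}=3,\ B_{22}=-3,\ B_{12}=B_{21}=-2\mu$. Since the non-scalar part of $A(r)$ is $r$-independent, a single constant orthogonal transformation simultaneously diagonalizes $A(r)$ for every $r$. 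The eigenvalues of $B$ are $\pm\sqrt{9+4\mu^{2}}=\pm(2\ell+1)$, by the identity $9+4(\ell+2)(\ell-1)=(2\ell+1)^{2}$, with corresponding eigenvectors $(\mu,\ell+2)^{T}$ for $-(2\ell+1)$ and $(\ell+2,-\mu)^{T}$ for $+(2\ell+1)$. Taking inner products of $\vec u_\ell$ with these eigenvectors and multiplying by the normalization constant $2M$ produces exactly $\Psi_{1}^{(\ell)}$ and $\Psi_{2}^{(\ell)}$ as in the statement, and each then satisfies a scalar wave equation whose potential is the corresponding eigenvalue of $A(r)$, namely \eqref{deq1} or \eqref{deq2}.

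The argument is purely linear-algebraic, so there is no real analytic obstacle. The only subtle point is that the rescaling $\tilde\psi=\tfrac{\mu}{4}\psi$ is forced upon us by the requirement that the coupling matrix be symmetric, and this symmetry is in turn what guarantees diagonalizability by a \emph{constant} orthogonal transformation (which is the only kind that commutes with $\square_{g}$). The algebraic identity $9+4(\ell+2)(\ell-1)=(2\ell+1)^{2}$ is what produces the clean form of the two decoupled potentials and should be regarded as the key miracle underlying the decoupling.
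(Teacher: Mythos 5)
Your proposal is correct and takes essentially the same route as the paper: project onto fixed spherical harmonics, rescale by $\tilde\phi=M\phi$ and $\tilde\psi=\tfrac{\mu}{4}\psi$ to symmetrize the coupling, then diagonalize the resulting constant symmetric coupling matrix, whose eigenvalues $\pm(2\ell+1)$ come from the identity $9+4\mu^{2}=(2\ell+1)^{2}$. The paper phrases it as diagonalizing the matrix $\mathcal{C}=\begin{pmatrix}-3M&2M\mu\\2M\mu&3M\end{pmatrix}$ sitting on the right of $(r^{3}\mathcal{P})\vec u=\mathcal{C}\vec u$, which is just $-M$ times your $B$, so the two computations are identical up to normalization conventions.
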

	\begin{proof}
		We project system (\ref{ceq}) to their spherical harmonics $ \psi_{\ell},\phi_{\ell} $ and in view of $ \slashed{\Delta}\psi_{\ell}= -\frac{\ell(\ell+1)}{r^2}\psi_{\ell}, $ $ K=\frac{1}{r^2} $, they satisfy the system \begin{align*}
			\square_{g}\phi_{\ell}+\left(\dfrac{8M}{r^3}-\dfrac{6M^2}{r^4}\right)\phi_{\ell} &= \dfrac{\mu^2}{2r^3}\psi_{\ell}, \hspace{1cm} \mu^2= (\ell+2)(\ell-1) \\
			\square_g \psi_{\ell} + \left(\dfrac{2M}{r^3}- \dfrac{6M^2}{r^4}\right)\psi_{\ell} &= \dfrac{8M^2}{r^3} \phi_{\ell}.
		\end{align*}
		Now, consider the rescalings $$ \tilde{\psi}_{\ell} := \frac{\mu}{4}\psi_{\ell}, \hspace{1cm} \tilde{\phi}_{\ell} :=M\phi_{\ell}, $$
		and by writing $ \frac{8M}{r^3}= \frac{5M}{r^3}+\frac{3M}{r^3}, \ \ \frac{2M}{r^3} = \frac{5M}{r^3}-\frac{3M}{r^3}  $ we obtain the following system \begin{align*}
			\square_{g}\tilde{\phi}_{\ell}+\left(\dfrac{5M}{r^3}-\dfrac{6M^2}{r^4}\right)\tilde{\phi}_{\ell} &= \dfrac{2M\mu}{r^3}\tilde{\psi}_{\ell} - \dfrac{3M}{r^3}\tilde{\phi}_{\ell},  \\
			\square_g \tilde{\psi}_{\ell} + \left(\dfrac{5M}{r^3}- \dfrac{6M^2}{r^4}\right)\tilde{\psi}_{\ell} &= \dfrac{2M\mu}{r^3} \tilde{\phi}_{\ell}+\dfrac{3M}{r^3}\tilde{\psi}_{\ell}.
		\end{align*}
		Let us denote  by $ \mathcal{P} := \square_{g} + \left(\frac{5M}{r^3}-\frac{6M^2}{r^4}\right) $,  the operator of the  left-hand side above, and the above  system now  reads \begin{align}
			\left(r^3\mathcal{P}\right)\binom{\tilde{\phi}_{\ell}}{\tilde{\psi}_{\ell}} \ = \ \binom{-3M \quad 2M\mu}{2M\mu \quad 3M}\binom{\tilde{\phi}_{\ell}}{\tilde{\psi}_{\ell}} 
		\end{align}
		The decoupling of system (\ref{ceq}) will follow after we diagonalize the symmetric matrix $ \mathcal{C} = \binom{-3M \quad 2M\mu}{2M\mu \quad 3M}, $ with $ \det(\mathcal{C}) = - M^2 \cdot (2\ell+1)^2 <0 $. In order to find the eigenvalues of $ \mathcal{C}, $ we compute its characteristic polynomial \[ p_{_{\mathcal{C}}}(\lambda) = \det(\lambda I - \mathcal{C})=M^2(2\ell+1)^2-\lambda^2, \]
		and thus, we obtain the two  eigenvalues $ \lambda_1 =  M(2\ell+1) $, $ \lambda_2 = - M(2\ell+1).	 $ One can check directly that $ \binom{2M\mu}{3M+\lambda_1},\ \binom{3M-\lambda_2}{-2M\mu} $ are two distinct non trivial eigenvectors of $ \mathcal{C}, $ thus the following two scalars 
		\begin{align*}
			\Psi_1 ^{^{(\ell)}} : &= 2M\mu\  \tilde{\phi}_{\ell}+ (3M+\lambda_1)\ \tilde{\psi}_{\ell} \\  \Psi_2 ^{^{(\ell)}} : &= (3M-\lambda_2)\  \tilde{\phi}_{\ell}- 2M\mu\ \tilde{\psi}_{\ell}
		\end{align*}
		satisfy \begin{align*}
			(r^3\mathcal{P})\left (\p^{^{(\ell)}}\right )& = \lambda_i \p^{^{(\ell)}} \\
			\Leftrightarrow \quad \square_{g}\p^{^{(\ell)}} + \left(\frac{5M}{r^3}-\frac{6M^2}{r^4}\right)\p^{^{(\ell)}} &=\dfrac{\lambda_i}{r^3} \p^{^{(\ell)}} \\
			\Leftrightarrow \quad \square_{g}\p^{^{(\ell)}} + \left(\frac{5M}{r^3}-\dfrac{\lambda_i}{r^3}-\frac{6M^2}{r^4}\right)\p^{^{(\ell)}} &= 0,
		\end{align*}
		which concludes the proof. \\
	\end{proof}
	
	\begin{corollary} \label{inverse relation}
		Given the scalars $ \Psi_{i}^{^{(\ell)}} $, for $ \ell\geq 2 $ and $ i\in\br{1,2} $ as above, it is  immediate to check  that  \begin{align}
			2M^2\cdot \phi_{\ell} & = \dfrac{\mu}{(\ell+2 )(2\ell+1)}\cdot \Psi_1^{^{(\ell)}}+ \dfrac{1}{(2\ell+1)}\Psi_2^{^{(\ell)}} \\
			M\cdot \psi_{\ell} &= \dfrac{2}{\mu\cdot (2\ell+1 )}\cdot \Psi_1^{^{(\ell)}}- \dfrac{2}{(\ell+2)(2\ell+1)}\Psi_2^{^{(\ell)}}
		\end{align}
	\end{corollary}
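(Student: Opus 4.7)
The plan is to simply invert the linear change of variables that defines $\Psi_1^{(\ell)}$ and $\Psi_2^{(\ell)}$ in terms of $\tilde{\phi}_\ell$ and $\tilde{\psi}_\ell$. Writing the definitions from the previous proposition as a matrix equation
\begin{align*}
\begin{pmatrix} \Psi_1^{(\ell)} \\ \Psi_2^{(\ell)} \end{pmatrix}
= 2M \begin{pmatrix} \mu & \ell+2 \\ \ell+2 & -\mu \end{pmatrix}
\begin{pmatrix} \tilde{\phi}_\ell \\ \tilde{\psi}_\ell \end{pmatrix},
\end{align*}
the whole content of the corollary is the computation of the inverse of the $2\times 2$ matrix above.

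The determinant of the coefficient matrix is $-4M^2\bigl[\mu^2+(\ell+2)^2\bigr]$. Using the definition $\mu^2=(\ell+2)(\ell-1)$, one checks the algebraic identity
\begin{align*}
\mu^2+(\ell+2)^2 = (\ell+2)(\ell-1)+(\ell+2)^2 = (\ell+2)(2\ell+1),
\end{align*}
which is strictly positive for $\ell\geq 2$, so the matrix is invertible and the inverse computes to
\begin{align*}
\frac{1}{2M(2\ell+1)(\ell+2)}\begin{pmatrix} \mu & \ell+2 \\ \ell+2 & -\mu \end{pmatrix}.
\end{align*}
Applying this inverse gives $\tilde{\phi}_\ell$ and $\tilde{\psi}_\ell$ in terms of $\Psi_1^{(\ell)},\Psi_2^{(\ell)}$, and then substituting back the rescalings $\tilde{\phi}_\ell=M\phi_\ell$ and $\tilde{\psi}_\ell=\tfrac{\mu}{4}\psi_\ell$ yields exactly the two displayed formulas of the corollary after multiplying through by the appropriate factors.

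There is no real obstacle here; the only point requiring even a moment's attention is the non-degeneracy of the transformation, i.e.\ checking that $\mu^2+(\ell+2)^2\neq 0$ for $\ell\geq 2$, which is the same observation that guaranteed the matrix $\mathcal{C}$ in the preceding proposition had distinct non-zero eigenvalues $\pm M(2\ell+1)$ and hence could be diagonalized. Once that is noted, the statement is pure linear algebra and the proof can be done in essentially one line.
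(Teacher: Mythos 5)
Your proposal is correct and is the natural (indeed the only sensible) verification: the paper itself offers no proof, merely calling the inversion "immediate to check." Your matrix computation and the key algebraic identity $\mu^2+(\ell+2)^2=(\ell+2)(2\ell+1)$ are all accurate, and substituting back $\tilde{\phi}_\ell=M\phi_\ell$, $\tilde{\psi}_\ell=\tfrac{\mu}{4}\psi_\ell$ does recover both displayed formulas.
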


	\section{Estimates for the decoupled scalar Regge--Wheeler equations.}  
\label{Regge estimates}	In this section, we prove estimates for each equation (\ref{deq1}), (\ref{deq2}) at the same time. Note, the decoupled system can be written as \begin{align}
		\Big (\square_{g} - V_i^{^{(\ell)}}\Big ) \Psi_i^{^{(\ell)}} =0, 
	\end{align}
	where  \begin{align}
		V_1^{^{(\ell)}} & =- \left (\frac{5M}{r^3}- \frac{M\sqrt{9+4\mu^2}}{r^3}-\frac{6M^2}{r^4}\right )= -\dfrac{2M}{r^3}\left (3\sqrt{D}-(\ell+1)\right ), \ \ \ \ \ell\geq 2 \label{potential1} \\ 
		V_2^{^{(\ell)}} & =-\left (\frac{5M}{r^3}+ \frac{M\sqrt{9+4\mu^2}}{r^3}-\frac{6M^2}{r^4} \right )  =- \dfrac{2M}{r^3}\left (3\sqrt{D}+\ell\right ).\ \ \ \ \ell\geq 2
	\end{align} 
	By projecting the system (\ref{ceq}) to the $ \ell=1 $ frequency we obtain only one equation, i.e \begin{align}
		\square_g (\psi_{\ell=1})-\dfrac{2M}{r^3}(2-3\sqrt{D})(\psi_{\ell=1})=0.
	\end{align}
	Note, the above wave equation can be included in the case $ \Psi_1^{(\ell=1)} $ with $ V_1^{(\ell=1)} $ as in (\ref{potential1}). 
	Thus, we will be studying solutions to the equation \begin{align}
		\begin{aligned}
			&\Big (\square_{g} - V_i^{^{(\ell)}}\Big ) \Psi_i^{^{(\ell)}} =0, \\
			\text{with}\quad 
			& V_i^{^{(\ell)}}= -\dfrac{2M}{r^3}\Big(3\sqrt{D}+(-1)^{i}(\ell+2-i)\Big), \ \ \ \ell\geq i, \label{waveeq}
		\end{aligned}
	\end{align}
	with $ \p^{^{(\ell)}} $ supported on the fixed frequency $ \ell\geq i, \ i\in\br{1,2}.  $ For brevity, the superscript $ (\ell) $ will be frequently dropped and  inferred through the equations.

	\subsection{Preliminaries} In this section, we briefly recall the vector field method.
	First, consider the energy-momentum tensor associated to the wave equation (\ref{waveeq})
	\begin{align}
		\begin{aligned}
			\mathcal{Q}_{\mu \nu }[\Psi_i]=  \nabla_{\mu}\Psi_i \cdot \nabla_{\nu} \Psi_i - \dfrac{1}{2}g_{\mu \nu}\Big(\nabla^{a}\Psi_i\cdot  \nabla_{a}\Psi_i + V_i\Psi_i^2 \Big).
		\end{aligned}
	\end{align}
	\begin{proposition}\label{GenCur}
		Consider a scalar $ \Psi_i $ verifying equation (\ref{waveeq}). Let $ X $ be a vectorfield, $ \omega $ a scalar function, and $ M $ a one form. Define the general current \begin{align}
			J_{\mu}^{X,\omega,M} [\Psi_i] = \mathcal{Q}_{\mu \nu}X^{\nu} + \dfrac{1}{2} \omega \Psi_i \cdot \nabla_{\mu}\Psi_i-\dfrac{1}{4}(\nabla_{\mu}\omega)\Psi_i^2 + \dfrac{1}{4}\Psi_i^2 M_{\mu}, 
		\end{align}
		then, \begin{equation}
			\begin{aligned}
				K^{X,\omega, M}[\Psi_i]:=Div \left(J_{\mu}^{X,\omega,M} [\Psi_i] \right)  =& \dfrac{1}{2}\mathcal{Q}\cdot {}^{(X)}\pi + \left (-\dfrac{1}{2}X(V_i)-\dfrac{1}{4}\square_g \omega\right )\Psi_i^2 \\ &+ \dfrac{1}{2}\omega \left( \nabla^{a}\Psi_i\cdot\nabla_{a}\Psi_i + V_i\Psi_i^2\right) +\dfrac{1}{4}\nabla^{\mu}\left (\Psi_i^2 M_{\mu}\right ).
			\end{aligned} 
		\end{equation}
		\begin{proof}
			We begin by computing $ Div\left(\mathcal{Q}_{\mu\nu}\right) = \nabla^{\mu} \mathcal{Q}_{\mu\nu} $  \begin{align*}
				\nabla^{\mu} \mathcal{Q}_{\mu\nu}   & = \left(\nabla^{\mu}\nabla_{\mu}\Psi_i\right)   \cdot \nabla_{\nu} \Psi_i + \nabla_{\mu}\Psi_i \cdot\nabla^{\mu} \nabla_{\nu} \Psi_i - \frac{1}{2}g _{\mu\nu} \left(\nabla^{\mu}\Big(\nabla^{a}\Psi_i\cdot  \nabla_{a}\Psi_i\Big) + \nabla^{\mu}\left(V_i\cdot \Psi_i^2 \right) \right).
			\end{align*}
			We first treat the following term \begin{align*}
				- g_{\mu\nu} \nabla^{\mu}\Big(\nabla^{a}\Psi_i\cdot  \nabla_{a}\Psi_i\Big) &= -\dfrac{1}{2} \nabla_{\nu} \Big(\nabla^{a}\Psi_i\cdot  \nabla_{a}\Psi_i\Big) = -\dfrac{1}{2}\left(\nabla^{a}\nabla_{\nu}\p\right) \nabla _{a}\p -\dfrac{1}{2}\nabla^{a}\p \cdot \nabla_{a}\nabla_{\nu}\p \\
				&= -\left(\nabla^{a}\nabla_{\nu}\p\right) \nabla _{a}\p = -\left(\nabla^{\mu}\nabla_{\nu}\p\right) \nabla _{\mu}\p. 
			\end{align*}
			Using the above relation and the fact that  $ \nabla^{\mu}\nabla_{\mu}\Psi_i  = \square \p = V_i \p$ we obtain \begin{align}
				\nabla^{\mu} \mathcal{Q}_{\mu\nu} = V_i \p \nabla_{\nu}\p - \dfrac{1}{2}\nabla_{\nu}(V_i) \p^2 - V_i \p \nabla_{\nu}\p = -\dfrac{1}{2}\nabla_{\nu}(V_i) \p^2.
			\end{align}
			On the other hand, we have \begin{align*}
				&\nabla^{\mu}\left(\dfrac{1}{2} \omega \Psi_i \cdot \nabla_{\mu}\Psi_i-\dfrac{1}{4}(\nabla_{\mu}\omega)\Psi_i^2\right)\\ &= \dfrac{1}{2}(\nabla^{\mu}\omega) \Psi_i \cdot \nabla_{\mu}\Psi_i + \dfrac{1}{2}\omega  \nabla^{a}\Psi_i\cdot\nabla_{a}\Psi_i +\dfrac{1}{2}\omega \p \square\p - \dfrac{1}{4}\square\omega \p^2 -\dfrac{1}{2}(\nabla^{\mu}\omega) \Psi_i \cdot \nabla_{\mu}\Psi_i \\
				& = \dfrac{1}{2}\omega\left( \nabla^{a}\Psi_i\cdot\nabla_{a}\Psi_i + V_i \p^2\right) - \dfrac{1}{4}\square\omega \cdot\p^2 
			\end{align*}
			Last, we recall that $ ^{(X)}\pi^{\mu\nu} := (\mathcal{L}_{X}g)^{\mu\nu} = (\nabla^{\mu}X)^{\nu} + (\nabla^{\nu}X)^{\mu}  $ is the deformation tensor, thus we have \begin{align*}
				2\mathcal{Q}_{\mu\nu} \nabla^{\mu}X^{\nu} =  \mathcal{Q}_{\mu\nu}\cdot ^{(X)}\pi^{\mu\nu} = \mathcal{Q}\cdot\  ^{(X)}\pi.
			\end{align*}
			Combining all the above, we conclude the formula for $ K^{X,\omega,M}[\p]. $

		\end{proof}
	\end{proposition}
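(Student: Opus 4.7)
The identity is a direct divergence computation, so my plan is simply to split $J^{X,\omega,M}_\mu[\Psi_i]$ into the three natural pieces and differentiate each one. Write $J_\mu = J^{(1)}_\mu + J^{(2)}_\mu + J^{(3)}_\mu$, where
\begin{align*}
J^{(1)}_\mu &= \mathcal{Q}_{\mu\nu}X^\nu, \\
J^{(2)}_\mu &= \tfrac{1}{2}\omega\, \Psi_i\, \nabla_\mu \Psi_i - \tfrac{1}{4}(\nabla_\mu \omega)\Psi_i^2, \\
J^{(3)}_\mu &= \tfrac{1}{4}\Psi_i^2 M_\mu.
\end{align*}
The third divergence $\nabla^\mu J^{(3)}_\mu = \tfrac{1}{4}\nabla^\mu(\Psi_i^2 M_\mu)$ is already in the desired form and needs nothing further.

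For the $X$-piece, I would use the Leibniz rule $\nabla^\mu J^{(1)}_\mu = (\nabla^\mu \mathcal{Q}_{\mu\nu})X^\nu + \mathcal{Q}_{\mu\nu}\nabla^\mu X^\nu$. For the first factor, expand the definition of $\mathcal{Q}_{\mu\nu}$, apply $\nabla^\mu$, and use (i) symmetry of the Hessian $\nabla^\mu \nabla_\nu \Psi_i = \nabla_\nu \nabla^\mu \Psi_i$ to rewrite one term, and (ii) the equation $\square_g \Psi_i = V_i \Psi_i$ to eliminate $\square_g \Psi_i$. The kinetic contributions cancel pairwise and the $V_i \Psi_i \nabla_\nu \Psi_i$ terms also cancel, leaving only the potential derivative $\nabla^\mu \mathcal{Q}_{\mu\nu} = -\tfrac{1}{2}(\nabla_\nu V_i)\Psi_i^2$; contracting with $X^\nu$ produces the $-\tfrac{1}{2}X(V_i)\Psi_i^2$ term. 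For the second factor, symmetrize: since $\mathcal{Q}_{\mu\nu}$ is symmetric, $\mathcal{Q}_{\mu\nu}\nabla^\mu X^\nu = \tfrac{1}{2}\mathcal{Q}_{\mu\nu}\bigl(\nabla^\mu X^\nu + \nabla^\nu X^\mu\bigr) = \tfrac{1}{2}\mathcal{Q}\cdot{}^{(X)}\pi$.

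For the $\omega$-piece I would just differentiate, distributing $\nabla^\mu$ across each factor. The two cross terms $\tfrac{1}{2}(\nabla^\mu\omega)\Psi_i\nabla_\mu\Psi_i$ and $-\tfrac{1}{2}(\nabla^\mu\omega)\Psi_i\nabla_\mu\Psi_i$ (the latter coming from $-\tfrac{1}{4}\nabla^\mu[(\nabla_\mu\omega)\Psi_i^2]$) are designed to cancel, leaving $\tfrac{1}{2}\omega\,\nabla^a\Psi_i\nabla_a\Psi_i + \tfrac{1}{2}\omega\,\Psi_i\square_g\Psi_i - \tfrac{1}{4}(\square_g\omega)\Psi_i^2$. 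Replacing $\square_g\Psi_i$ by $V_i\Psi_i$ via the wave equation gives precisely $\tfrac{1}{2}\omega(\nabla^a\Psi_i\nabla_a\Psi_i + V_i\Psi_i^2) - \tfrac{1}{4}(\square_g\omega)\Psi_i^2$.

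Adding the three contributions yields the claimed formula. There is no real obstacle here; the only thing to watch out for is bookkeeping of the $\tfrac{1}{2}$ and $\tfrac{1}{4}$ prefactors and making sure the $(\nabla^\mu\omega)\Psi_i\nabla_\mu\Psi_i$ cross terms from $J^{(2)}$ cancel exactly (this is the reason for the $-\tfrac{1}{4}(\nabla_\mu\omega)\Psi_i^2$ correction term in the definition of the current, rather than, say, $-\tfrac{1}{2}$). Modulo these combinatorial checks, the identity is purely algebraic once $\square_g\Psi_i = V_i\Psi_i$ is used.
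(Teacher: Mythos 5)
Your proposal is correct and follows essentially the same route as the paper: split the current into the $\mathcal{Q}_{\mu\nu}X^\nu$ piece, the $\omega$-piece, and the $M$-piece; use the wave equation to reduce $\nabla^\mu\mathcal{Q}_{\mu\nu}$ to $-\tfrac{1}{2}(\nabla_\nu V_i)\Psi_i^2$; symmetrize $\mathcal{Q}_{\mu\nu}\nabla^\mu X^\nu$ into $\tfrac{1}{2}\mathcal{Q}\cdot{}^{(X)}\pi$; and observe the cancellation of the $(\nabla^\mu\omega)\Psi_i\nabla_\mu\Psi_i$ cross terms in the $\omega$-piece. The bookkeeping matches the paper's, so nothing further is needed.
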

	\begin{remark}
		If $ \p $ satisfies (\ref{waveeq}) with a non-homogeneous term on the right-hand side, i.e. $$ \Big (\square_{\bm{g}} - V_i^{^{(\ell)}}\Big ) \Psi_i^{^{(\ell)}} = F $$
		then we have \begin{align*}
			Div\left(J_{\mu}^{X,\omega,M}[\p]\right) =  K^{X,\omega, M}[\Psi_i] + X(\p)\cdot F,
		\end{align*}
		which can be easily checked by the calculations above when computing $ \nabla^{\mu}Q_{\mu\nu}. $
	\end{remark}
	
	\paragraph{The Vector Field Method.}
	The  vector field method is simply the application of Stokes' theorem  in appropriate regions for the current scalar $ J_{\mu}^{X,\omega,M} $. In particular, given a (0,1) current $ P_{\mu} $, then Stokes' theorem  yields in the region $ \mathcal{R}(0,\tau) $\begin{equation}\label{Stokes'}
		\int_{\Sigma_0}P_{\mu}n_{\Sigma_0}^{\mu} = \int_{\Sigma_{\tau}} P_{\mu}n_{\Sigma_{\tau}}^{\mu} + \int_{\mathcal{H}^+(0,\tau)} P_{\mu}n_{\mathcal{H}^+}^{\mu} + \int_{\mathcal{R}(0,\tau) } \nabla^{\mu}(P_{\mu}),
	\end{equation}
	where all the integrals are with respect to the induced volume form and the unit normals $ n_{S} $ are future-directed.
	
	\subsection{Uniform Boundedness of Degenerate Energy} Let us apply the vectorfield method for $ X=T=\partial_v $, $ \omega=M=0, $ where $ T $ is written with respect to the coordinate system $ (v,r,\vartheta,\varphi) $. Since $ T $ is killing, we have $ {}^{(T)}\pi=0 $ and $ T(V_i^{^{(\ell)}})=0 $ because $ V_i$ is a function of $ r $ alone, thus $ K^T[\p] =0$. Therefore, the divergence theorem in the region $ \mathcal{R}(0,\tau) $ yields \begin{equation}
		\int_{\Sigma_0}J_{\mu}^T[\Psi_i]n_{\Sigma_0}^{\mu} = \int_{\Sigma_{\tau}} J_{\mu}^T[\Psi_i]n_{S}^{\mu} + \int_{\mathcal{H}^+(0,\tau)} J_{\mu}^T[\Psi_i]n_{\mathcal{H}^+}^{\mu} 
	\end{equation}
	On the event horizon $ \mathcal{H}^+ $ we can take $ n_{\mathcal{H}^+}^{\mu}=T$, thus \begin{align*}
		J_{\mu}^T[\Psi_i]n_{\mathcal{H}^+}^{\mu} = \mathcal{Q}(T,T) = (\partial_v\Psi_i)^2\geq 0,
	\end{align*}
	which proves the following proposition. 
	\begin{proposition}  \label{T estimate}
		For all solutions $ \Psi_i $ to equation (\ref{waveeq}) we have \begin{align}
			\int_{\Sigma_{\tau}} J_{\mu}^T[\Psi_i]n_{\Sigma_{\tau}}^{\mu}	\leq \int_{\Sigma_0}J_{\mu}^T[\Psi_i]n_{\Sigma_0}^{\mu}.
		\end{align} 
	\end{proposition}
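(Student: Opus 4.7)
The plan is to apply the divergence identity \eqref{Stokes'} to the current $J^{T}_{\mu}[\Psi_i]$ in the region $\mathcal{R}(0,\tau)$ and show that the spacetime bulk term vanishes while the horizon flux is non-negative, after which the claimed inequality follows immediately by dropping the horizon term. The choice $X = T = \partial_v$, $\omega = 0$, $M = 0$ in Proposition \ref{GenCur} reduces the bulk divergence to
\begin{align*}
K^{T}[\Psi_i] \;=\; \tfrac{1}{2}\,\mathcal{Q}\cdot {}^{(T)}\pi \;-\; \tfrac{1}{2}\,T(V_i^{(\ell)})\,\Psi_i^{2}.
\end{align*}
Both contributions vanish: first, $T = \partial_v$ is a Killing vector field of the ERN metric as written in the ingoing Eddington--Finkelstein form \eqref{ingoing metric}, so ${}^{(T)}\pi \equiv 0$; second, the potential $V_i^{(\ell)}$ is a function of $r$ alone, hence $T(V_i^{(\ell)}) = 0$. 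Therefore $\nabla^{\mu} J^{T}_{\mu}[\Psi_i] = 0$ on all of $\mathcal{R}(0,\tau)$.

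Next I would evaluate the horizon flux. On $\mathcal{H}^{+} = \{r=M\}$ the Killing field $T$ is null and future directed (since $g(T,T) = -D(r) = 0$ on the horizon), so one may take $n_{\mathcal{H}^{+}}^{\mu} = T^{\mu}$ as the generator. Then
\begin{align*}
J^{T}_{\mu}[\Psi_i]\,T^{\mu} \;=\; \mathcal{Q}_{\mu\nu}\,T^{\mu}T^{\nu} \;=\; (\partial_v \Psi_i)^{2} \;-\; \tfrac{1}{2}\,g(T,T)\bigl(\nabla^{a}\Psi_i\cdot\nabla_{a}\Psi_i + V_i^{(\ell)}\Psi_i^{2}\bigr) \;=\; (\partial_v \Psi_i)^{2} \;\geq\; 0,
\end{align*}
where the middle term drops precisely because $g(T,T) = 0$ on $\mathcal{H}^{+}$. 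Inserting these facts into the identity \eqref{Stokes'} gives
\begin{align*}
\int_{\Sigma_0} J^{T}_{\mu}[\Psi_i]\,n_{\Sigma_0}^{\mu} \;=\; \int_{\Sigma_{\tau}} J^{T}_{\mu}[\Psi_i]\,n_{\Sigma_{\tau}}^{\mu} \;+\; \int_{\mathcal{H}^{+}(0,\tau)}(\partial_v \Psi_i)^{2},
\end{align*}
and dropping the non-negative horizon contribution yields the asserted bound.

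No genuine obstacle arises in this argument; it is the standard $T$-energy conservation identity for a stationary, axisymmetric background and hinges only on (i) $T$ being Killing, (ii) $V_i^{(\ell)}$ being $T$-invariant, and (iii) $T$ being null and future directed on $\mathcal{H}^{+}$. The only subtlety worth emphasising, which motivates the subsequent redshift analysis of Section \ref{Regge estimates}, is that the flux $J^{T}_{\mu}n_{\Sigma_{\tau}}^{\mu}$ is \emph{degenerate} near $\mathcal{H}^{+}$: since $T$ becomes null on the horizon, the coercivity of the flux deteriorates in the transversal-derivative direction, so Proposition \ref{T estimate} alone does not control a full non-degenerate energy. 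This is precisely what Theorem \ref{Nuniform } and Theorem \ref{remove_degeneracy} are designed to remedy.
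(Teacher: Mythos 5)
Your proof is correct and follows essentially the same route as the paper's: apply the divergence identity for the current $J^{T}_{\mu}[\Psi_i]$ in $\mathcal{R}(0,\tau)$, note that $K^{T}=0$ because $T$ is Killing and $V_i^{(\ell)}$ is $T$-invariant, and observe that the horizon flux $\mathcal{Q}(T,T)=(\partial_v\Psi_i)^2$ is non-negative. The additional remarks at the end about the degeneracy of the $T$-flux near $\mathcal{H}^+$ accurately describe the motivation for the later sections but are not needed for the proposition itself.
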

	\paragraph{The T-flux.} We will see below that $ J_{\mu}^T[\Psi_i]n_{\Sigma_{\tau}}^{\mu} $ is non-negative definite only after we integrate on the spheres $S^2_{v,r} $ and use Poincare's Inequality  due to the negative values of the potentials $ V_i^{^{(\ell)}}$, $ i=1,2. $ However, we also need to know how  $ J_{\mu}^T[\Psi_i]n_{\Sigma_{\tau}}^{\mu} $  depends on the 1-jet of $ \Psi_i $. In particular, write $ n_{{\Sigma_{\tau}}}= n_{{\Sigma_{\tau}}}^v \partial_v + n_{\Sigma_{\tau}}^r\partial_r $ and note that the normal to the spacelike hypersurface $ \Sigma_0 $, $ n_{\Sigma_0} $, was chosen such that \begin{align} \label{C1}
		&	\dfrac{1}{C_1}<-g(n_{\Sigma_0},n_{\Sigma_0})< C_1 \\
		&	\dfrac{1}{C_1}<\ -g(n_{\Sigma_0},T)\ \  <C_1, \label{C2}
	\end{align} for a positive constant $ C_1 $ depending only on $ M, \Sigma_0 $. Thus, the same holds for $ n_{\Sigma_{\tau}} $ since $ \Sigma_{\tau} = \phi_{\tau}(\Sigma_0)  $, where $ \phi_{\tau} $ is the flow associated to the killing vector field T. First, we recall the following 
	\begin{proposition}\textbf{(Poincare Inequality)} \label{poincare} 
		Let $ \psi \in L^2 (S^2_{v,r}) $ and $ \psi_{\ell}=0 $ for all $ \ell \leq L-1 $ for some finite natural number $ L $, then  \begin{equation}
			\dfrac{L(L+1)}{r^2}\int_{S^2_{v,r}} \psi^2 \leq 	\int_{S^2_{v,r}} |\slashed{\nabla}\psi|^2 ,
		\end{equation}  
		and equality holds if and only if $ \psi_{\ell}=0 $ for all $ \ell \neq L. $
	\end{proposition}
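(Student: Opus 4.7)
The plan is to reduce the inequality to a spectral calculation on the round 2-sphere $S^2_{v,r}$ of radius $r$, using the orthogonal decomposition of $L^2(S^2_{v,r})$ into eigenspaces of the Laplacian $\slas{\Delta}$, as reviewed earlier in the ``Spherical harmonics and elliptic identities'' subsection. Recall that $\slas{\Delta} Y_m^{\ell} = -\frac{\ell(\ell+1)}{r^2}Y_m^{\ell}$, and the family $\{Y_m^{\ell}\}$ forms a complete orthogonal basis of $L^2(S^2_{v,r})$.

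First, I would write $\psi = \sum_{\ell,m}c_{\ell,m}Y_m^{\ell}$, where by the vanishing hypothesis $c_{\ell,m}=0$ for every $\ell \leq L-1$, so the sum effectively ranges over $\ell \geq L$. By orthogonality,
\[
\int_{S^2_{v,r}}\psi^2 \ = \ \sum_{\ell \geq L}\sum_{|m|\leq \ell}|c_{\ell,m}|^2\,\|Y_m^{\ell}\|_{L^2(S^2_{v,r})}^{2}.
\]
Next, since $S^2_{v,r}$ has no boundary, an integration by parts (no boundary terms) together with the eigenvalue equation yields
\[
\int_{S^2_{v,r}}|\slas{\nabla}\psi|^2 \ = \ -\int_{S^2_{v,r}}\psi\,\slas{\Delta}\psi \ = \ \sum_{\ell\geq L}\sum_{|m|\leq \ell}\frac{\ell(\ell+1)}{r^{2}}\,|c_{\ell,m}|^2\,\|Y_m^{\ell}\|_{L^2(S^2_{v,r})}^{2}.
\]

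Now the function $\ell \mapsto \ell(\ell+1)$ is strictly increasing on $\mathbb{N}$, so for every $\ell \geq L$ one has $\ell(\ell+1) \geq L(L+1)$, with equality if and only if $\ell=L$. Therefore
\[
\int_{S^2_{v,r}}|\slas{\nabla}\psi|^2 \ \geq \ \frac{L(L+1)}{r^{2}}\sum_{\ell\geq L}\sum_{|m|\leq \ell}|c_{\ell,m}|^2\,\|Y_m^{\ell}\|_{L^2(S^2_{v,r})}^{2} \ = \ \frac{L(L+1)}{r^{2}}\int_{S^2_{v,r}}\psi^{2},
\]
which is the desired inequality. Equality in the penultimate step forces $c_{\ell,m}=0$ whenever $\ell(\ell+1) > L(L+1)$, i.e.\ whenever $\ell \neq L$; conversely, if $\psi$ is supported on the fixed frequency $L$, the computation above gives equality. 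This establishes the characterization of the equality case.

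The proof is essentially a spectral computation, so there is no serious obstacle; the only mild care needed is to keep track of the $1/r^2$ factor coming from the fact that the spheres $S^2_{v,r}$ have radius $r$ rather than unit radius, which is already built into the normalization $\slas{\Delta}Y_m^{\ell}=-\ell(\ell+1)r^{-2}Y_m^{\ell}$ fixed in Section \ref{Geometry}, and to ensure the absence of boundary contributions when integrating $-\int \psi \slas{\Delta}\psi$ by parts, which is automatic since $S^{2}_{v,r}$ is compact and closed.
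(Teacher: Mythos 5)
Your proof is correct and is the standard spectral argument: expand $\psi$ in the spherical-harmonic basis, use $L^2$-orthogonality and $-\int\psi\,\slashed{\Delta}\psi=\int|\slashed{\nabla}\psi|^2$ on the closed sphere, and compare eigenvalues $\ell(\ell+1)\geq L(L+1)$ for $\ell\geq L$, with the equality characterization falling out of the strict monotonicity of $\ell\mapsto\ell(\ell+1)$. The paper states this proposition without proof (treating it as a classical fact), so there is nothing to compare against; your argument is exactly the expected one and correctly tracks the $1/r^2$ normalization of $\slashed{\Delta}$ on $S^2_{v,r}$.
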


	\begin{proposition}\label{posT} Let $ \Psi_i^{^{(\ell)}} $ be a solution to equation (\ref{waveeq}), supported on the fixed frequency $ \ell\geq i,\ i\in\br{1,2} $, then there exists a positive constant $ C=C(M, \Sigma_0) $ such that  \begin{align}
			\int_{\Sigma_{\tau}}  \left(  \left(\partial_{v} \Psi_i\right)^{2} + D \left(\partial_{r} \Psi_i\right)^{2} + |\slashed{\nabla} \Psi_i|^{2} + \dfrac{\ell(\ell+1)}{r^2}\Psi_i^{2}\right) \leq 	C\int_{\Sigma_{0}} J_{\mu}^{T}[\Psi_i] n_{\Sigma_{0}}^{\mu} . 
		\end{align}
		
	\end{proposition}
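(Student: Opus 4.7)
The plan is to establish Proposition~\ref{posT} by deriving a pointwise coercivity estimate for $J^T_\mu n_{\Sigma_\tau}^\mu$, valid once we integrate over the section spheres $S^2_{v,r}$, and then invoking the $T$-flux conservation of Proposition~\ref{T estimate}.

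First I would express $Q(T, n_{\Sigma_\tau})$ in the null frame $\mathcal{N}_\star = \br{e_3^\star, e_4^\star, e_A}$ of Section~\ref{set up section}. Writing $n_{\Sigma_\tau} = n^3 e_3^\star + n^4 e_4^\star$, using $T = \partial_v = \tfrac{1}{2}(e_4^\star + D\, e_3^\star)$, and noting $-\tfrac{1}{2} g(T,n_{\Sigma_\tau}) = \tfrac{1}{2}(n^3 + D n^4)$, the mixed $e_3\Psi_i \, e_4\Psi_i$ terms cancel cleanly to give
\begin{align*}
J^T_\mu n_{\Sigma_\tau}^\mu = \frac{n^4}{2}(e_4^\star \Psi_i)^2 + \frac{D n^3}{2}(e_3^\star \Psi_i)^2 + \frac{n^3 + D n^4}{2}\left(|\slashed{\nabla}\Psi_i|^2 + V_i^{(\ell)}\Psi_i^2\right).
\end{align*}
Conditions (\ref{C1})--(\ref{C2}) translate into two-sided bounds $n^3, n^4, n^3 + D n^4 \sim_{_{M,\Sigma_0}} 1$ (the spurious large root of the quadratic $D(n^4)^2 - (n^3+Dn^4)n^4 + n^3 n^4 = 0$ is ruled out by continuity of $n_{\Sigma_\tau}$ through the horizon, where $D=0$ forces $n^3, n^4 \sim 1$). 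Substituting $e_4^\star\Psi_i = 2\partial_v \Psi_i + D\partial_r \Psi_i$, $e_3^\star\Psi_i = -\partial_r\Psi_i$ and using the elementary quadratic form bound $(2a + Db)^2 + Db^2 \geq \tfrac{1}{2}(a^2 + Db^2)$ valid for $D \in [0,1]$ (verified by direct diagonalization), the first two terms above yield a positive multiple of $(\partial_v\Psi_i)^2 + D(\partial_r\Psi_i)^2$.

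The core of the argument is the angular--potential piece $|\slashed{\nabla}\Psi_i|^2 + V_i^{(\ell)}\Psi_i^2$, which can fail to be pointwise positive. Here I would integrate over $S^2_{v,r}$ and invoke Poincaré (Proposition~\ref{poincare}), available since $\Psi_i$ is supported on angular frequency $\ell \geq i$: splitting $|\slashed{\nabla}\Psi_i|^2 = \epsilon|\slashed{\nabla}\Psi_i|^2 + (1-\epsilon)|\slashed{\nabla}\Psi_i|^2$ and applying Poincaré to the second piece reduces the problem to the pointwise inequality $(1-\epsilon-c)\frac{\ell(\ell+1)}{r^2} \geq |V_i^{(\ell)}|$ in the regime $V_i^{(\ell)} < 0$, uniform in $r \geq M$ and $\ell \geq i$. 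Direct inspection of (\ref{potential1}) shows that $\frac{|V_i^{(\ell)}|\,r^2}{\ell(\ell+1)}$ attains its worst value $2/3$ in the borderline case $i=2, \ell=2$ at $r = M$; in the remaining cases either $V_i^{(\ell)} \geq 0$ (for $i=1$, $\ell \geq 2$, or $\ell =1$ with $r \leq 3M$) or the ratio is dominated by $M/r \leq 1/3$. A single pair $(\epsilon, c)$ depending only on $M$ therefore suffices, and we obtain the coercive bound
\begin{align*}
\int_{S^2_{v,r}}\left(|\slashed{\nabla}\Psi_i|^2 + V_i^{(\ell)}\Psi_i^2\right)\ \geq\ c \int_{S^2_{v,r}}\left(|\slashed{\nabla}\Psi_i|^2 + \frac{\ell(\ell+1)}{r^2}\Psi_i^2\right).
\end{align*}

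Collecting the three coercive estimates and integrating over $\Sigma_\tau$ yields $\int_{\Sigma_\tau}(\text{LHS integrand}) \leq C \int_{\Sigma_\tau} J^T_\mu n^\mu_{\Sigma_\tau}$; Proposition~\ref{T estimate} then replaces $\Sigma_\tau$ by $\Sigma_0$ on the right, completing the proof. The main technical obstacle is the uniform Poincaré absorption: verifying that the negative potentials $V_i^{(\ell)}$ never overwhelm the angular energy near the horizon, and that the critical case $i=2, \ell=2$ at $r=M$ leaves just enough slack for the $\epsilon$-splitting to close across all admissible $\ell$.
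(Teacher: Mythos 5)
Your approach is essentially the paper's, recast in the null frame $\mathcal{N}_\star$ rather than the $(v,r)$ coordinate frame: in both cases one expresses $Q(T, n_{\Sigma_\tau})$, extracts two-sided bounds on the normal components from (\ref{C1})--(\ref{C2}), and then invokes Poincar\'e on the section spheres to absorb the negative potential, finishing with Proposition~\ref{T estimate}. The null-frame decomposition is marginally cleaner, since $Q_{33}$ and $Q_{44}$ are pure squares and $Q_{34}$ is exactly the angular--potential block, so there is no cross term to absorb; the paper works directly with $n^v,n^r$ and the relations (\ref{C1}), (\ref{C2}) written in those variables, but the content is the same.

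One numerical slip worth correcting: the critical case is misidentified. Writing $x=\sqrt{D}$ so that $M/r = 1-x$, one has for $i=2$, $\ell=2$
\begin{align*}
\frac{\lvert V_2^{(2)}\rvert\, r^2}{\ell(\ell+1)} = \frac{(1-x)(3x+2)}{3},
\end{align*}
which is maximized at $x=\tfrac{1}{6}$ (i.e. $r=\tfrac{6M}{5}$) with value $\tfrac{25}{36}\approx 0.694$, not $\tfrac{2}{3}$ at $r=M$. Likewise, the statement that for $i=2$, $\ell\geq 3$ the ratio ``is dominated by $M/r\leq 1/3$'' fails near the horizon, where $M/r\to 1$; there the correct bound is $\tfrac{2}{\ell+1}\leq \tfrac12$, attained at $r=M$. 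None of this breaks the argument, since $\tfrac{25}{36}<1$ still leaves a uniform slack for the $\epsilon$-splitting, consistent with the paper's choice $a=0.7$ for $i=2$, but a reader who takes your stated worst value $\tfrac{2}{3}$ at face value could choose an $\epsilon$ that is too generous and fail at $r=\tfrac{6M}{5}$.
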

	\begin{proof}
		Let $ n_{{\Sigma_{\tau}}}= n_{{\Sigma_{\tau}}}^v \partial_v + n_{\Sigma_{\tau}}^r\partial_r $, then direct computations yield \begin{align*}
			J_{\mu}^{T}[\Psi_i] n_{\Sigma_{\tau}}^{\mu} = n^{v}(\partial_v\p)^2 + \dfrac{1}{2}\left(Dn^v-2n^r\right) \dfrac{D}{4}(\partial_r\p)^2 + \dfrac{1}{2}\left(Dn^v-2n^r\right) \left(\abs{\slas{\nabla}\p}^2+ V_i \p^2\right) 
		\end{align*}
		We argue that relations (\ref{C1}, \ref{C2}) and Poincare inequality suffice to prove the proposition. 
		
		Away from the horizon $ \br{r\geq r_0},  $ for $ r_0  >  M $, we might as well choose $ n_{\Sigma_{0}} \equiv T $. However, near the horizon $ \h $, the relations (\ref{C1}, \ref{C2}) read \begin{align*}
			C_1 >& D (n^v)^2 - 2n^v n^r > \dfrac{1}{C_1} \\
			C_1 >& D n^v - n^r > \dfrac{1}{C_1}
		\end{align*}
		thus we must have $ n_{\Sigma_0}^r <0$ and $n_{\Sigma_0}^v >0 $. On the other hand, squaring the second relation  yields \begin{align*}
			&C_1^2 > D( D (n^v)^2 -2n^vn^r) +(n^r)^2 > \dfrac{1}{C_1^2}
		\end{align*}
		and since the first term is positive from (\ref{C1}), we obtain that  $ n^r $ is uniformly bounded. Hence, $ Dn^v-2n^r = (Dn^v-n^r) -n^r >\frac{1}{C_1} $ and bounded from above. Thus, using (\ref{C1}) we have \begin{align*}
			\dfrac{C_1}{\left(Dn^v-2n^r\right)}>n^v > \dfrac{1}{C_1\left(Dn^v-2n^r\right)} 
		\end{align*}
		and as we saw $ \left(Dn^v-2n^r\right) $ is bounded from above, which makes  $ n^v $ uniformly bounded from below by a positive constant. We put all the above together and we find a constant $ C $ depending on $ M, \Sigma_0 $ such that 
		\begin{align}
			(\partial_v\p)^2 + D(\partial_r\p)^2 +  \abs{\slas{\nabla}\p}^2+ V_i \p^2\leq C	J_{\mu}^{T}[\Psi_i] n_{\Sigma_{\tau}}^{\mu}. \label{T-flux dpsi}
		\end{align}
		However, the potential $ V_i^{^{(\ell)}}(r) $ takes negative values as well. 
		To show that the $ T- $flux is also coercive with respect to the zeroth order term we borrow from the angular derivative of $ \p $ using Poincare inequality. 
		Integrating (\ref{T-flux dpsi})  along $ \Sigma_{\tau} $ and using the uniform boundedness of degenerate energy yields \begin{align*}
			\int_{\Sigma_{\tau}}(\partial_v\p)^2 + D(\partial_r\p)^2 +  \abs{\slas{\nabla}\p}^2+ V_i \p^2 \leq C \int_{\Sigma_{0}}	J_{\mu}^{T}[\Psi_i] n_{\Sigma_{0}}^{\mu}.
		\end{align*}
		Now, we write $ \abs{\slas{\nabla}\p}^2 = (1-a)\abs{\slas{\nabla}\p}^2 + a\abs{\slas{\nabla}\p}^2  $, for some $ 0<a<1 $ to be determined later, and using  Poincare we examine the following term \begin{align}
			a \abs{\slas{\nabla}\p}^2+ V_i \p^2 = \left(a + \dfrac{V_i\cdot r^2}{\ell(\ell+1)}\right) \abs{\slas{\nabla}\p}^2. \label{finding a}
		\end{align}
		It suffices to study the above coefficient for each $ \ell\geq i, \ i\in\br{1,2} $ and for each $ r\geq M. $ \begin{itemize}
			\item If $ i=1, $ and for any $ \ell\geq 1 $ we have \begin{align*}
				a + \dfrac{V_1(r)\cdot r^2}{\ell(\ell+1)} = a - \dfrac{2M}{r}\dfrac{3\sqrt{D}}{\ell(
					\ell+1)}+\dfrac{2M}{\ell\cdot r}\geq a + \dfrac{M}{\ell\cdot r}\left(2-3\sqrt{D}\right) 
			\end{align*}
			The second term becomes negative only when $ r\geq 3M $, and it's easy to check that $ \frac{M}{r}(2-3\sqrt{D}) \geq -\frac{1}{3} $. Thus, it suffices to consider $ \frac{1}{3}<a<1, $ which works for all $ \ell \geq 1. $
			\item If $ i=2, $ for any $ \ell\geq 2  $ we have \begin{align*}
				a + \dfrac{V_2(r)\cdot r^2}{\ell(\ell+1)} = a- \dfrac{2M}{r}\dfrac{3\sqrt{D}}{\ell(\ell+1)} - \dfrac{2M}{(\ell+1)\cdot r} \geq a -\dfrac{1}{3} \dfrac{M}{r}\left(2+3\sqrt{D}\right) 
			\end{align*}
			However, if we set $ x:= \sqrt{D} \in [0,1) $ we write the above as \[ a + \left (x^2-\dfrac{1}{3}x -\frac{2}{3}\right ) \] 
			The quadratic polynomial above attains its minimum at $ x=\frac{1}{6} $ with value $ -\frac{25}{36} \sim 0.69 $. Thus, it suffices to consider $ 0.7 < a <1. $
		\end{itemize}
		We can see from the analysis above that $ a=0.7 $ is sufficient to get (\ref{finding a}) uniformly positive for all $ \ell\geq i, \ i\in\br{1,2} $ and $ r\geq M. $
	\end{proof}

	\subsection{Morawetz Estimates} We apply the vector field method for vector fields of the form  $ X= f(r^{\star})\frac{\partial}{\partial r^{\star}}, \  $ written with respect to the coordinate system $ (t,r^{\star}) $. We will often differentiate with respect to the $ r-$coordinate instead of $ r^{\star} $, and the two are related by $ \frac{\partial h}{\partial {r^{\star}}}=D\cdot \frac{\partial h}{\partial {r}} ,$ for any scalar function $ h. $ By $ h'= \frac{\partial h}{\partial r} $ we will denote the derivative with respect to the $ r-$coordinate in the $ (t,r,\vartheta,\varphi) $ coordinate system.

	\paragraph{The scalar current $ K^X $.} Let us consider the following current 
	\begin{align*}
		J^{X}_{\mu} [\p] = \mathcal{Q}_{\mu\nu}[\p]\cdot X^{\nu},
	\end{align*}
	then using Proposition \ref{GenCur} we arrive at \begin{align*}
		K^X  = \left(\dfrac{f'}{2}+ \dfrac{f}{r}\right) (\partial_t \p)^{2} + \left(\dfrac{f'}{2}-\dfrac{f}{r}\right) (\partial_{r^{\star}}\p)^{2} &+ \left(-\dfrac{(D\cdot f)'}{2}\right) \abs{\slas{\nabla}\p}^{2} \\ &+ \left(-\dfrac{V_i}{2}\left( (D\cdot f)' + \dfrac{2Df}{r}\right ) - \dfrac{1}{2} X(V_i) \right) \p^2
	\end{align*}
	However, there is no choice of scalar $ f $ such that all coefficients above are positive definite everywhere. Indeed, assume the coefficients of the first two terms of $ K^{X}  $ are positive, then  \begin{align*}
		f' > 2 \dfrac{f}{r}, \hspace{1cm} \text{and} \hspace{1cm} f' > -2 \dfrac{f}{r} \hspace{1cm}
		\Rightarrow \hspace{1cm} f' > \dfrac{2}{r} \abs{f} > 0
	\end{align*}
	On the other hand, if the coefficient of $ \abs{\slas{\nabla}\p} ^{2} $ is also non-negative  we must have \begin{align*}
		f \leq - f' \dfrac{D}{D'}
	\end{align*}
	Thus, going back to the coefficient of $ (\partial_t\p)^2 $ we obtain \begin{align*}
		\dfrac{f'}{2} + \dfrac{f}{r} \leq f' \left(\dfrac{1}{2}-\dfrac{D}{rD'}\right) = \dfrac{f'}{2} \left(1-2\sqrt{D}\right).
	\end{align*}
	Hence, in view of $ f'> 0 $ we have that $ \frac{f'}{2} + \frac{f}{r} $ becomes negative for $ r\geq r_{p}=2M. $
	
	Already, the above suggests that we modify the energy current by introducing more terms. The idea is to introduce a term with the effect of canceling $ (\partial_t\p)^2 $ when computing the scalar current. Nevertheless, we retrieve this term in the final Morawetz estimates of Proposition \ref{Morawetz final}.
	\paragraph{The scalar current $ K^{X,G}. $}
	Consider the following current \[ J_{\mu}^{X,G}[\Psi_i]:= Q_{\mu \nu }[\Psi_i]\cdot X^{\nu} +\dfrac{1}{2} G \Psi_i \nabla_{\mu}\Psi_i - \dfrac{1}{4}\left (\nabla_{\mu}G\right ) \Psi_i^2, \]
	where $ G= r^{-2}\partial_{r^{*}}\left (f\cdot r^2\right ) $. Consequently, this choice of $ G $ cancels the time derivative term when computing the scalar current. In particular, using Proposition \ref{GenCur} direct computations yield the expression \begin{align}
		\begin{aligned}
			K^{X,G}= &\  f'\left(\partial_{r^{*}} \Psi_i\right)^{2}+\frac{f \cdot P}{r}|\slashed{\nabla} \Psi_i|^{2}-\dfrac{1}{4}\left(\square_{g} G\right) \Psi_i^{2}  - \dfrac{1}{2}f\cdot \partial_r\Big (D\cdot V_i^{^{(\ell)}}\Big ) \Psi_i^2 , \label{K^X}
		\end{aligned}
	\end{align}
	where $ P(r):= \sqrt{D}(2\sqrt{D}-1) = \frac{1}{r^2} (r-M)(r-2M) $. While, in view of the factor $ P(r) $, the degeneracy of the angular derivative term at the photon sphere $\br{ r=r_p:= 2M }$ is inevitable, we shall find functions $ f, G $ such that the coefficients of the two first terms in $ K^{X,G} $ are non-negative definite. In particular, it is  imperative that we choose $ f $ that is increasing and changes sign from negative to positive at the photon sphere $ \br{r=r_p} $.  
	
	\paragraph{The choice of G, f.}
	In the extremal Reissner--Nordstr\"om spacetime, the wave operator with respect to the coordinate system $ (t,r^{\star},\vartheta,\varphi) $ reads \begin{equation}
		\square_g \Psi_i= \dfrac{1}{D}\Big (-\partial^2_t \Psi_i +r^{-2}\partial_{r^{*}}(r^2\partial_{r^{*}}\Psi_i)\Big ) + \slashed{\Delta}\Psi_i.
	\end{equation}
	Since $ G= r^{-2}\partial_{r^{*}}\left (f\cdot r^2\right ) $ is a function of $ r^{\star} $ alone, we have \begin{equation}
		\square_g G= \dfrac{1}{D\cdot r^2}\partial_{r^{*}}\Big (r^2 \partial_{r^{*}}G\Big )=\dfrac{1}{D\cdot r^2}\partial_{r^{*}}\Big (r^2 \partial_{r^{*}}\big(r^{-2}\partial_{r^{*}}(f\cdot r^2)\big)\Big ). \label{Box G}
	\end{equation} 
	Let us choose \begin{align}
		G(r)= \dfrac{2}{r}D(r), \ \ \ r\geq M.
	\end{align}
	Then, direct computation yields \begin{align*}
		r^2 \partial_{r^{*}}G=& 2D \sqrt{D}(2-3\sqrt{D}), \\
		\partial_{r^{*}}\left (r^2 \partial_{r^{*}}G\right )=& \dfrac{12 D^2}{r}(1-\sqrt{D})(1-2\sqrt{D}).
	\end{align*}
	Therefore, we have \begin{align}
		-\dfrac{1}{4}(\square_g G)=3 \dfrac{D}{r^3}(1-\sqrt{D})(2\sqrt{D}-1). \label{boxG}
	\end{align}
	In addition, now that we have $ G $ we can find $ f $  using the transport equation $  G= r^{-2}\partial_{r^{*}}\left (f\cdot r^2\right ) $ or equivalently $ \partial_{r}(f\cdot r^2)=\frac{r^2G}{D} $, thus integrating from $ r_p $ to $ r $ we obtain \begin{align}
		f(r)=& (2\sqrt{D}-1)(3-2\sqrt{D}) \label{fform}, \\
		f'(r)=& \dfrac{8}{r}(1-\sqrt{D})^2. \label{f'}
	\end{align}
	As we can see, $ f $ is increasing and changes sign at the horizon and thus satisfies the requirements we were looking for. However,
	for both $ i=1,2 $, the zeroth order coefficient in  $ K^{X,G} $ takes negative values as well, for all $ \ell \geq i $. 
	\paragraph{The zeroth order term of $K^{X,G}$.}
	Let's first study the expression $  - \dfrac{1}{2}f\cdot \partial_r\Big (D\cdot V_i^{^{(\ell)}}\Big ). $ Denote by $ x:=\sqrt{D}\in [0,1) $ then using (\ref{fform}), direct computations yield \begin{align}
		\begin{aligned} \label{z_i^l}
			z_1^{(\ell)}:=- \dfrac{1}{2}f\cdot \partial_r\Big (D\cdot V_1^{(\ell)}\Big )=&\dfrac{1}{r^3}(1-x)x(2x-1)(3-2x)\Big(-18x^2+14x-2+\ell(5x-2)\Big) \\ 
			z_2^{(\ell)}:=- \dfrac{1}{2}f\cdot \partial_r\Big (D\cdot V_2^{(\ell)}\Big )=&\dfrac{1}{r^3}(1-x)x(2x-1)(3-2x)\Big(-18x^2+9x-\ell(5x-2) \Big)  
		\end{aligned}
	\end{align}
	In this form, it is apparent that both $ z_i^{(\ell)} $ for all $ \ell \geq i, \ i\br{1,2} ,$ are not positive definite.  In addition, the term $ -\frac{1}{4} \square G = 3 \frac{x^2}{r^3}(1-x)(2x-1)$ comes to add extra "negativity" to the overall zeroth order term. 
	
	On the other hand,  $ f(r)\cdot P(r) $ degenerates at the photon sphere to second order, as opposed to the two aforementioned terms, thus even after using Poincare inequality to borrow from the angular derivative coefficient, we cannot obtain a positive definite zeroth order term of the current $ K^{X,G} .$  
	Nevertheless, in what follows we show that there is a modified energy current that ultimately provides us with the required positive bulk for all terms.

	\subsubsection{The scalar current \texorpdfstring{$ K^{X,G,h}. $}{PDFstring}}
	In view of the above discussion, we 
	consider the following modified current \begin{equation}
		J_{\mu}^{X,G,h}[\Psi_i] := J_{\mu}^{X,G}[\Psi_i] + \dfrac{h}{2}\Psi_i^2 \left (\dfrac{\partial}{\partial_{r^{*}}}\right )_{\mu}
	\end{equation}
	Then, applying Proposition \ref{GenCur}, the corresponding scalar current is given by \begin{align}
		\begin{aligned}
			K^{X,G,h} = &\  K^{X,G} + \left (\dfrac{h\sqrt{D}}{r}+ \dfrac{D}{2}h'\right )(\Psi_i)^2 + h \Psi_i\partial_{r^{*}}\Psi_i.  \\
			\Rightarrow K^{X,G,h}=&\ f'(\partial_{r^{*}}\Psi_i)^2 +\frac{f \cdot P}{r}|\slashed{\nabla} \Psi_i|^{2} -\dfrac{1}{4}\left(\square_{g} G\right) \Psi_i^{2} \\
			&-\dfrac{f}{2}(D\cdot V_1)' (\Psi_i)^2 + h \Psi_i\partial_{r^{*}}\Psi_i + \left (\dfrac{h\sqrt{D}}{r}+ \dfrac{D}{2}h'\right )(\Psi_i)^2. \label{K^{X,h}}
		\end{aligned}
	\end{align}
	Our goal is to find an appropriate function $ h(r) $ such that $ K^{X,G,h} $ is positive definite. However, we first need to treat the extra term $  h \Psi_i\partial_{r^{*}}\Psi_i $ such that only quadratic terms appear. We borrow from the coefficient of $ (\partial_{r^{*}} \Psi_i)^2 $  by writing \begin{align*}
		f'\abs{\partial_{r^{*}}\Psi_i}^2= \nu f'\abs{\partial_{r^{*}}\Psi_i}^2 + (1-\nu) f'\abs{\partial_{r^{*}}\Psi_i}^2,	
	\end{align*} for some $ \nu\in (0,1) $ to be determined in the end, and we complete the square as 
	\begin{align}
		\begin{aligned}
			& \nu f'(\partial_{r^{*}}\Psi_i)^2 +  h \Psi_i\partial_{r^{*}}\Psi_i =  \\
			& = \nu f'(\partial_{r^{*}}\Psi_i)^2 + 2 \dfrac{ h}{2\sqrt{\nu f'}}\sqrt{\nu f'} \Psi_i\partial_{r^{*}}\Psi_i  + \dfrac{h^2}{4\nu f'} \Psi_i^2- \dfrac{h^2}{4\nu f'} \Psi_i^2= \\
			& = \left (\sqrt{\nu f'}\partial_{r^{*}}\Psi_i + \dfrac{h}{2\sqrt{\nu f'}}\Psi_i\right )^2 - \dfrac{h^2}{4\nu f'} \Psi_i^2. \label{square}
		\end{aligned}
	\end{align}
	Therefore, using relation $ (\ref{square}) $, we can now rewrite  (\ref{K^{X,h}}) as \begin{align}
		\begin{aligned}
			K^{X,G,h}[\p] = & (1-\nu)f'(\partial_{r^{*}}\Psi_i)^2 + \left (\sqrt{\nu f'}\partial_{r^{*}}\Psi_i + \dfrac{h}{2\sqrt{\nu f'}}\Psi_i\right )^2 + \frac{f \cdot P}{r}|\slashed{\nabla} \Psi_i|^{2}  \\
			&
			+ \left (\left (\dfrac{h\sqrt{D}}{r}+ \dfrac{D}{2}h'- \dfrac{h^2}{4\nu f'}  \right )  -\dfrac{1}{4}\left(\square_{g} G\right) -\dfrac{f}{2}(D\cdot V_1)'\right ) \Psi_i^2.
		\end{aligned} \label{fK^{X,h}}
	\end{align}
	Denote by $ T(h) $ the expression below \begin{equation}
		T(h):= \dfrac{h\sqrt{D}}{r}+ \dfrac{D}{2}h'- \dfrac{h^2}{4\nu f'} -\dfrac{1}{4}\left(\square_{g} G\right)= \dfrac{1}{2r^2}\partial_{r}(Dr^2h)-\dfrac{h^2}{4\nu f'}-\dfrac{1}{4}\left(\square_{g} G\right). 
	\end{equation}
	We simply focus on finding a function $ h(r) $ such that $ T(h) $ beats the negative values of $ z_i^{(\ell)} $, $ \forall \ell \geq i, \ i\br{1,2}$, uniformly.
	
	\paragraph{The coefficient \texorpdfstring{$ T(h) $}{PDFstring}.} In order to better understand the expression $ T(h) $ we rewrite it in a more concise way using relation (\ref{Box G}), i.e. \begin{align*}
		\square_g G= \dfrac{1}{D\cdot r^2}\partial_{r^{*}}\Big (r^2 \partial_{r^{*}}G\Big ) = \dfrac{1}{r^2}\partial_r\left(Dr^2\partial_r G\right).
	\end{align*}
	Then we have \begin{align*}
		T(h) = \dfrac{1}{4r^2} \partial_r\Big(D(2r^2h -r^2\partial_rG)\Big)-\dfrac{h^2}{4\nu f'}.
	\end{align*}
	However, $ \partial_rG = 2r^{-2} \sqrt{D}(2-3\sqrt{D})$, and if we denote by $ x:=\sqrt{D} \in [0,1) $ and express $ T(h) = T(h(x(r))) $ we obtain \begin{align*}
		T(h) = \dfrac{(1-x)}{2r^3} \partial_x\Big(x^2\left(r^2\cdot h-x(2-3x)\right) \Big) -\dfrac{h^2}{4\nu f'}
	\end{align*}
	Now, consider the following choice \begin{align} \label{choice of h}
		h(r) : = \dfrac{1}{r^2} \left (x(2-3x) + x\right ) = \dfrac{1}{r^2} 3x (1-x),
	\end{align}
	and using the fact that $ f'(r) = \frac{8(1-x)^2}{r} $, then $ T(h) $ becomes \begin{align*}
		T(h) =  \dfrac{(1-x)}{2r^3} \partial_x\Big(x^3 \Big) - \dfrac{1}{r^4}\dfrac{9x^2(1-x)^2}{4\nu f'} = \dfrac{3(1-x)x^2}{2r^2} - \dfrac{1}{r^3} \dfrac{9x^2}{32\cdot \nu}  \label{T(h) formula}
	\end{align*}
	Let $ \nu = \frac{15}{16} < 1 $, then $ T(h) $ becomes \begin{align}
		T(h) = \dfrac{1}{r^3}\dfrac{3}{10} x^2(4-5x) \ = \ \dfrac{0.3}{r^3}\cdot D  (4-5\sqrt{D}).
	\end{align}	
	We can already see that $ T(h) $ is positive definite around the photon sphere and becomes negative only for $ r> 5M. $ Nevertheless, with the help of Poincare inequality we will show that the zeroth order term of $ K^{X,G,h}[\Psi_i^{^{(\ell)}}] $ is positive definite uniformly in $ \ell \geq i, \ i\in \br{1,2} $, with a degeneracy only at the horizon $ \h $.

	\begin{proposition}
		\label{zeroth order positivity} \label{Positivity of K^Xh}
		Let $ h(r) $ be as in (\ref{choice of h}) and choose $ \nu=\frac{15}{16} $ in (\ref{fK^{X,h}}), then there exists a positive constant depending only on $ M$, such that for all solutions $\Psi_i^{^{(\ell)}} $ to (\ref{waveeq}), supported on the fixed frequency  $ \ell\geq i, \ i\in\br{1,2} $, we have  \begin{equation}
			\int_{S^2(r)} K^{X,G,h}[\Psi_i]\geq C \int_{S^2(r)} \left(\dfrac{1}{r^3}(\partial_{r^{\star}}\Psi_i)^2+\dfrac{(r-M)(r-2M)^2}{r^4}\abs{\slashed{\nabla}\Psi_i}^2+\dfrac{\sqrt{D}}{r^3}\p^2\right). \label{morawetz}
		\end{equation}
		\begin{proof}
			After we integrate relation (\ref{fK^{X,h}}) on the spheres and apply Poincare inequality we obtain \begin{align*}
				\begin{aligned}
					\int_{S^2(r)}	K^{X,G,h}[\p] = & \int_{S^2(r)}  (1-\nu)f'(\partial_{r^{*}}\Psi_i)^2 + \left (\sqrt{\nu f'}\partial_{r^{*}}\Psi_i + \dfrac{h}{2\sqrt{\nu f'}}\Psi_i\right )^2 + \frac{f \cdot P}{r}|\slashed{\nabla} \Psi_i|^{2}  \\
					& -\dfrac{1}{4}\left(\square_{g} G\right) \Psi_i^{2}  
					+ \left (\left (\dfrac{h\sqrt{D}}{r}+ \dfrac{D}{2}h'- \dfrac{h^2}{4\nu f'}  \right ) -\dfrac{f}{2}(D\cdot V_i)'\right ) \Psi_i^2 \\
					& \geq   \int_{S^2(r)}  (1-\nu)f'(\partial_{r^{*}}\Psi_i)^2 + \frac{f \cdot P}{r}|\slashed{\nabla} \Psi_i|^{2} +\Big(T(h)+z_i^{(\ell)} \Big)\p^2.
				\end{aligned} 
			\end{align*}
			Recall that $ 	f(r)\cdot P(r) = (2\sqrt{D}-1)(3-2\sqrt{D})\cdot (2\sqrt{D}-1)\sqrt{D} $,  thus \begin{align*}
				\dfrac{f\cdot P}{r} &= \dfrac{(3-2\sqrt{D})(r-M)(r-2M)^2}{r^4} \\
				&\geq \dfrac{(r-M)(r-2M)^2}{r^4}.
			\end{align*}
			In addition, the coefficient of $ (\partial_{r^{\star}}\p)^2 $ becomes \begin{align*}
				(1-\nu)f'(r)= \dfrac{8 }{16} \cdot \dfrac{M^2}{r^3} = \dfrac{1}{2}\cdot \dfrac{M^2}{r^3}.
			\end{align*} 
			Finally, we need to treat the coefficient of the zeroth order term as well. For that, consider each case $ i=1,2 $ separately because they pose different difficulties. We remind our readers that for the calculations below we express all functions in terms of $ x := \sqrt{D} $, and we produce estimates for all $ x\in [0,1), $ which corresponds to $ r\geq M. $
			\paragraph{The zeroth order term of \texorpdfstring{$ K^{X,G,h}[\Psi_1^{^{(\ell)}}],\   \ell \geq 1.$}{PDFstring}} 
			According to relation (\ref{T(h) formula}), the zeroth order coefficient reads \begin{align*}
				T(h) + z_1^{(\ell)} = \dfrac{0.3}{r^3} x^2(4-5x) + z_1^{(\ell)},
			\end{align*}
			where \begin{align*}
				z_1^{(\ell)} = \dfrac{1}{r^3} x(1-x) f \left(-18x^2+14x-2+\ell(5x-2)\right).
			\end{align*}
			However, we may rewrite the quadratic polynomial as \begin{align*}
				-18x^2+14x-2+\ell(5x-2) &= -18x^2+9x+5x-2+ \ell(5x-2) \\ &= -9x(2x-1) + (l+1)(4x-2+x)\\ & = (2(\ell+1)-9x)(2x-1) + (\ell+1)x.		\end{align*}
			Hence, $ z_1^{(\ell)} $ reads \begin{align}
				z_1^{(\ell)} = \dfrac{1}{r^3} (1-x) f\cdot P \Big( 2(\ell+1) -9x \Big) + \dfrac{(\ell+1)}{r^3} x^2(1-x)f \label{z_1-good}
			\end{align}
			We treat each term separately below. 
			\begin{itemize}
				\item 	The first term in (\ref{z_1-good}) can be controlled using Poincare inequality by borrowing a fraction $ a \in (0,1) $ from the angular derivative coefficient. In particular, we have \begin{align*}
					\dfrac{1}{r^3} (1-x) f\cdot P \Big( 2(\ell+1) -9x \Big)& \left(\Psi_1^{^{(\ell)}}\right)^2 +  \dfrac{a}{r}f\cdot P \abs{\slas{\nabla}\Psi_1^{(\ell)}}^2 \\ & \geq \  \dfrac{1}{r^3}f\cdot P \Big((1-x)(2(\ell+1)-9x) + a \ell(\ell+1)\Big) \left(\Psi_1^{^{(\ell)}}\right)^2 \\
					& = \  \dfrac{1}{r^3}f\cdot P \Big(9x^2-(2\ell+11)x + (\ell+1)(2+a\ell)\Big)\left(\Psi_1^{^{(\ell)}}\right)^2. 
				\end{align*}	
				Since $ f\cdot P $ has the correct sign, we are only interested in choosing a sufficient $ a \in (0,1) $ such that the discriminant of the quadratic expression is negative uniformly in $ \ell\geq 1 $, i.e. \begin{align*}
					a > \dfrac{(2\ell-7)^2}{36\ell(\ell+1)}, \hspace{1cm} \forall \ \ell \geq 1.
				\end{align*}
				However, in view of the denominator growing quadratically in $ \ell $ with a higher rate than the numerator, we have \begin{align*}
					\dfrac{(2\ell-7)^2}{36\ell(\ell+1)} < \dfrac{(2\cdot 1-7)^2}{36(1+1)} < 0.35, \hspace{1cm} \forall \ell\geq 1, 
				\end{align*} 
				thus $ a=0.35 $ is sufficient to make the first term of (\ref{z_1-good}) non-negative definite for all $ \ell \geq 1. $
				\item We have a remaining fraction $ 0<b<0.65 $ of the angular derivative coefficient  that we can use to control the second term of (\ref{z_1-good}). Once again, using Poincare inequality we write \begin{align*}
					\dfrac{(\ell+1)}{r^3}x^2(1-x)f + &\dfrac{b}{r^3}\ell(\ell+1) f\cdot P \geq  	\dfrac{(\ell+1)}{r^3}x^2(1-x)f +  \dfrac{b}{r^3}\ell(\ell+1)(1-x) f\cdot P \\
					&=	\dfrac{(\ell+1)}{r^3}x^2(1-x)(2x-1)(3-2x) + \dfrac{b}{r^3}\ell(\ell+1) x(2x-1)^2(3-2x) (1-x) \\
					& = \dfrac{(\ell+1)}{r^3} x(1-x)(3-2x) (2x-1)\Big((1+2b\ell)x-b\ell\Big)
				\end{align*}
				Clearly for $ x\geq \frac{1}{2} $ the expression above is non-negative. Let us focus on $ x<\frac{1}{2} $ and notice that the only interval it takes negative values is $ N_{\ell}:= \left(x_{\ell}, \frac{1}{2}\right) $, where $ x_{\ell}:=\frac{b\ell}{1+2b\ell},\  \ell \geq 1, $ and $ x_{\ell} \xrightarrow{\ell\rightarrow \infty} \frac{1}{2} $. By studying the quadratic polynomial \begin{align*}
					p_{\ell}(x) := (2x-1)\Big((1+2b\ell)x-b\ell\Big) = (4b\ell+2)x^2 -(4b\ell+1)x+b\ell 
				\end{align*}
				we see that it attains its minimum at $ \underline{x}_{\ell} = \frac{(4b\ell+1)}{2(4b\ell+2)}  $ with the value \begin{align*}
					p_{\ell}(x) \geq p_{\ell}(\underline{x}_{\ell}) = - \dfrac{1}{4(4b\ell+2)}, \hspace{1cm} \forall x \in [0,1).
				\end{align*}
				$ \diamond $ For all $ \ell \geq 2 $ we may choose $ b=\frac{1}{2}\in (0,0.65) $ and thus we have the lower bound \begin{align*}
					\dfrac{(\ell+1)}{r^3} x(1-x)(3-2x) p_{\ell}(x) \geq -\dfrac{1}{8r^3}x(1-x)(3-2x), \hspace{1cm} \ \forall x\in N_{\ell},\ \forall \ \ell\geq 2
				\end{align*}
				On the other hand, note that $ N_{\ell=2} \supset N_{\ell},\ \forall \ell\geq 2$ and thus, for all $ x \in N_{\ell=2}=\left (\frac{1}{3},\frac{1}{2}\right ) $ we have \begin{align*}
					T(h)  -\dfrac{1}{8r^3}x(1-x)(3-2x) & = \dfrac{1}{r^3} x \cdot \left(\dfrac{3}{10}x(4-5x) - \dfrac{1}{8}(1-x)(3-2x) \right) \\
					& =\dfrac{1}{r^3}\frac{x}{4}\left(-7x^2 +7.3x-1.5\right),
				\end{align*}
				and its easy to check that the quadratic polynomial is uniformly positive in $ N_{\ell=2}. $ \\
				$ \diamond $ For $ \ell = 1$, we need a little more help from Poincare inequality so we choose $ b = \frac{5}{8} = 0.625 <0.65 $ and  now we have \begin{align*}
					\dfrac{2}{r^3} x(1-x)(3-2x) p_{_{\ell=1}}(x) \geq -\dfrac{1}{9r^3}x(1-x)(3-2x), \hspace{1cm} \ \forall x\in N_{_{\ell=1}}=\left (\frac{5}{18},\frac{1}{2}\right ).
				\end{align*}
				Once again, one can check that the expression below is uniformly positive for  $ x\in N_{_{\ell=1}} $, \[ 	T(h)  -\dfrac{1}{9r^3}x(1-x)(3-2x) \] 
			\end{itemize}
			Finally, regarding the region that $ T(h) $ becomes negative, i.e. $ r\geq 5M $, note that the Poincare inequality used earlier for $ b = \frac{1}{2} $ is sufficient to make it positive definite for all $ \ell \geq 1. $
			
			Combining all the above allows us to find a positive constant $ C $ that depends only on $ M $ such that \[ T(h)+ z_1^{(\ell)} \geq C \dfrac{\sqrt{D}}{r^3}. \]
			
			\paragraph{The zeroth order term of \texorpdfstring{$ K^{X,G,h}[\Psi_2^{^{(\ell)}}],\   \ell \geq 2.$}{PDFstring}} We approach this case similarly and now have \begin{align*}
				z_2^{(\ell)} & = \dfrac{1}{r^3}x(1-x)f \Big(-18x^2+9x-\ell(5x-2) \Big) \\
				& = \dfrac{1}{r^3}x(1-x)f \Big( -(2\ell+9x)(2x-1) - \ell x\Big) \\
				& = -\dfrac{(2\ell+9x)}{r^3}(1-x)f\cdot P -  \dfrac{\ell}{r^3}x^2(1-x)f
			\end{align*}
			\begin{itemize}
				\item Using Poincare inequality, we control the first term as \begin{align*}
					-\dfrac{(2\ell+9x)}{r^3}(1-x)f\cdot P + \dfrac{a}{r^3}\ell(\ell+1)f\cdot P &=   \dfrac{1}{r^3}f\cdot P \Big( -(1-x)(2\ell+9x) + a\ell(\ell+1) \Big) \\ &=
					\dfrac{1}{r^3}f\cdot P \Big( 9x^2+(2\ell-9)x + \ell(a(\ell+1)-2) \Big)
				\end{align*}
				and by checking the discriminant of the quadratic polynomial, it suffices to have $ a\in (0,1) $ such that \begin{align*}
					&(2\ell-9)^2-36\ell(a(\ell+1)-2)<0, \hspace{1cm} \forall \ \ell\geq 2 \\
					\Leftrightarrow \ \ \ 	& a>\dfrac{(2\ell+9)^2}{36\ell(\ell+1)},  \hspace{1cm} \forall \ \ell\geq 2
				\end{align*}
				However, the right-hand side term satisfies for all $ \ell \geq 2  $ \[  \dfrac{(2\ell+9)^2}{36\ell(\ell+1)} \leq \dfrac{(4+9)^2}{36\cdot 6} < 0.79,\]
				thus $ a=0.79 $ is sufficient to make the first term of $ z_2^{(\ell)} $ non-negative for all $ \ell \geq 2. $
				\item There is a remaining fraction $ 0<b<0.21 $ we can use to bound the second term of $ z_2^{(\ell)} $ and with the help of $ T(h) $ we can show uniform positivity for the zeroth order term. In particular, we have \begin{align*}
					-  \dfrac{\ell}{r^3}x^2(1-x)f\ +\ & b\ell(\ell+1) f\cdot P \geq -  \dfrac{\ell}{r^3}x^2(1-x)f +b\ell(\ell+1)x f\cdot P \\ & = \dfrac{\ell}{r^3}x^2(3x-2) (2x-1)\Big(b(\ell+1)(2x-1)-(1-x)\Big) \\
					& := \dfrac{\ell}{r^3}x^2(3x-2)\cdot  p_{\ell}(x)
				\end{align*}
				Clearly, for $ x\leq\frac{1}{2} $ the expression above is non-negative, and for $  x>\frac{1}{2} $ it is negative only in the interval $ N_{\ell}:=\left (\frac{1}{2},x_{\ell}\right ) $, where $ x_{\ell}:= \frac{1+b(\ell+1)}{2b(\ell+1)+1}>\frac{1}{2} $ and we have $ x_{\ell} \xrightarrow{\ell \to \infty} \frac{1}{2}. $ 
				It's straight forward calculations to check that $ p_{\ell}(x) $ attains its minimum at $ \underline{x}_{\ell} =  \frac{4b(\ell+1)+3}{2(4b(\ell+1)+2)}$ with the value \begin{align*}
					p_{\ell}(x)\geq p_{\ell}(\underline{x}_{\ell}) = - \dfrac{1}{8\Big(2b(\ell+1)+1 \Big)}, \hspace{1cm} \forall \ell \geq 2, \ \forall x \in N_{\ell}. 
				\end{align*}
				We choose $ b=0.2<0.21 $ and thus we have for all $ \ell \geq 2,\  x\in N_{\ell}, $ \begin{align*}
					\ell\cdot P_{\ell}(x) \geq - \dfrac{1}{8} \cdot \dfrac{10}{4\left ( 1+ \frac{7}{2\ell}\right )} &\geq - \frac{5}{16} \\
					\Rightarrow \ \ \ 	 \dfrac{\ell}{r^3}x^2(3x-2)\cdot  p_{\ell}(x) &\geq - \dfrac{5}{16r^3}x^2(3x-2).
				\end{align*}
				Hence, using $ T(h) $ we now have \begin{align*}
					\frac{3}{10r^3}x^2(4-5x)  - \dfrac{5}{16r^3}x^2(3x-2) = \dfrac{x^2}{80r^3}\left(146-195x\right) 
				\end{align*}
				which is uniformly positive for $ \frac{1}{2}< x  < x_{\ell=2} = \frac{8}{11}   $ for all $ \ell \geq 2. $
			\end{itemize}
			Once again, in the region where $ T(h) $ is negative, the Poincare inequality used earlier is sufficient to make it positive definite, and thus there exists a positive constant depending on $ M $ such that \[ T(h) + z_2^{(\ell)} > C \frac{\sqrt{D}}{r^3}, \hspace{1cm} \forall \ell\geq 2,\ \forall\ r\geq M. \]

		\end{proof}
	\end{proposition}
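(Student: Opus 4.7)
The plan is to start from the pointwise formula (\ref{fK^{X,h}}) for $K^{X,G,h}[\p]$, integrate over $S^2(r)$, and handle each of the four groups of terms in turn. The non-negative complete square is dropped at the outset. The derivative term $(1-\nu)f'(\partial_{r^*}\p)^2$ directly yields the required $\tfrac{1}{r^3}(\partial_{r^*}\p)^2$ bound: with $\nu=15/16$ and $f'$ as in (\ref{f'}), and using $1-\sqrt D = M/r$ on ERN, its coefficient becomes $M^2/(2r^3)$. The angular term is handled by factoring $fP/r = (3-2\sqrt D)(r-M)(r-2M)^2 r^{-4}$ and using $3-2\sqrt D \geq 1$ on $\{r\geq M\}$ to extract the claimed weight. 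All the difficulty is thus concentrated in showing that the zero-order coefficient $T(h) + z_i^{(\ell)}$ dominates $\sqrt D/r^3$, after being augmented by a multiple of $\ell(\ell+1)fP/r^3$ borrowed from the angular term via Poincar\'e (Proposition \ref{poincare}).

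My strategy for the zero-order term is to split the borrowed angular mass into two fractions $a+b<1$ of $fP/r$. The first fraction $a$ is used only to counteract the non-sign-definite piece of $z_i^{(\ell)}$ that is proportional to $fP$; writing $z_i^{(\ell)}$ in the variable $x:=\sqrt D$ and using the identity $-18x^2 + 14x - 2 + \ell(5x-2) = (2(\ell+1) - 9x)(2x-1) + (\ell+1)x$ (with its sign-flipped counterpart for $i=2$) isolates exactly this piece. The positivity requirement then reduces to a quadratic in $x$ parametrized by $\ell$ having negative discriminant, which forces $a$ to exceed a quantity of the form $(2\ell \pm c)^2/(36\ell(\ell+1))$. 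The supremum of this quantity over $\ell\geq i$ is attained at $\ell=i$, giving $a=0.35$ for $i=1$ and $a=0.79$ for $i=2$. The second fraction $b$, together with the explicit formula $T(h) = \tfrac{0.3}{r^3}D(4-5\sqrt D)$ derived above, is used to dominate the residual piece of $z_i^{(\ell)}$, which is proportional to $x^2(1-x)f$. I would pick $b$ just large enough to handle the worst case $\ell=i$ --- candidate values are $b=1/2$ for $i=1,\ell\geq 2$ (with a small readjustment to $b=5/8$ for $\ell=1$) and $b=0.2$ for $i=2$ --- and then verify by explicit polynomial minimization in $x$ that $T(h)$ supplies enough surplus near the photon sphere to close the inequality on the remaining subinterval of $[0,1)$.

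The main obstacle is the double degeneracy at the photon sphere, where $fP$ vanishes to second order while $z_i^{(\ell)}$ simultaneously carries the factor $(2x-1)$: a purely Poincar\'e-based argument cannot reach positivity on its own, and the modification by $h$ is essential, since only a narrow window of positivity of $T(h)$ around $r=2M$ can compensate for the vanishing of all other contributions there. A secondary concern is the asymptotic region $r\geq 5M$, where $T(h)$ itself becomes negative; there the borrowed Poincar\'e contribution $\alpha\,\ell(\ell+1)fP/r^3 \sim \ell(\ell+1)/r^4$ takes over, using $\ell(\ell+1)\geq i(i+1)\geq 2$. The bookkeeping thus reduces to checking a family of quadratic-in-$x$ inequalities whose coefficients depend on $\ell$ and verifying that the extremal constants occur at $\ell=i$; once that is done, the three lower bounds combine to give the stated estimate with a constant $C$ depending only on $M$.
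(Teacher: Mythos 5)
Your proposal reproduces the paper's argument essentially verbatim: the same pointwise formula from (\ref{fK^{X,h}}), the same discarding of the complete square, the same identity $-18x^2+14x-2+\ell(5x-2)=(2(\ell+1)-9x)(2x-1)+(\ell+1)x$ (and its $i=2$ counterpart), the same $a+b$ splitting of the Poincar\'e borrowing with the same numerical choices $a=0.35,\,b=1/2$ (resp.\ $5/8$) for $i=1$ and $a=0.79,\,b=0.2$ for $i=2$, and the same treatment of the region $r\geq 5M$ where $T(h)$ turns negative. This matches the paper's proof in both structure and detail.
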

	
	\subsubsection{Retrieving the \texorpdfstring{$ (\partial_t\p)^2 $}{PDFstring} term.}
	Note that estimate (\ref{morawetz}) does not include any $ \partial_{t}\Psi_i $ term. To retrieve the $ \partial_{t}-$derivative  we introduce the current $ L_{\mu}^z = z(r) \Psi \nabla_{\mu}\Psi_i $ for an appropriate function $ z(r). $ In particular, we have the following proposition. 
	
	\begin{proposition}\label{Morawetz final}
		There exists a positive constant C depending only on $ M, $ such that for all solutions $\Psi_i^{^{(\ell)}} $ to (\ref{waveeq}), $ i\in \lbrace 1,2 \rbrace$, $ \ell\geq i$, we have \begin{align}
			\begin{aligned}
				& \int_{S^2(r)} \left(\dfrac{1}{r^3}\abs{\partial_{r^{*}}\Psi_i}^2  +\dfrac{(r-M)(r-2M)^2}{r^4}\left(\abs{\slashed{\nabla}\Psi_i}^2 + \dfrac{1}{r^2}(\partial_{t} \Psi_i)^2 \right)  +\dfrac{(r-M)}{r^4}\abs{\Psi_i}^2\right) \\ 
				& \hspace{4cm} \leq C 	\int_{S^2(r)} \left ( Div(L_{\mu}^z) + K^{X,G,h}[\Psi_i]\right )
			\end{aligned}
		\end{align}
	\end{proposition}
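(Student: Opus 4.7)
My strategy is to exploit the divergence of the current $L^z_\mu = z(r)\,\Psi_i\,\nabla_\mu \Psi_i$, tuning the scalar $z(r)$ so that its contribution produces the missing positive bulk for $(\partial_t \Psi_i)^2$ with exactly the stated weight, while the remaining bulk terms it generates are dominated by the positive bulk of $K^{X,G,h}[\Psi_i]$ supplied by Proposition~\ref{Positivity of K^Xh}.

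First, using the wave equation $\square_g \Psi_i = V_i^{(\ell)}\Psi_i$ together with the standard Leibniz identity for $\mathrm{Div}(L^z_\mu)$, one computes
\begin{align*}
\mathrm{Div}(L^z_\mu) = z'(r)\,\Psi_i\,\partial_{r^{*}}\Psi_i + z(r)\Bigl[-\tfrac{1}{D}(\partial_t\Psi_i)^2 + \tfrac{1}{D}(\partial_{r^{*}}\Psi_i)^2 + |\slashed{\nabla}\Psi_i|^2\Bigr] + z(r)\,V_i^{(\ell)}\,\Psi_i^2 .
\end{align*}
Choosing $z(r)<0$ converts the first bracketed summand into a positive bulk for $(\partial_t\Psi_i)^2$ with weight $|z(r)|/D$. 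Guided by the target weight, I would select
\begin{align*}
z(r) := -c\, D(r)\cdot\frac{(r-M)(r-2M)^2}{r^6} = -c\,\frac{(r-M)^3(r-2M)^2}{r^8},
\end{align*}
with a small constant $c>0$ to be fixed later. By construction $|z|/D$ reproduces exactly the stated weight on $(\partial_t\Psi_i)^2$, and the triple vanishing at the horizon and double vanishing at the photon sphere are precisely what is needed to absorb the other bulk terms produced by $L^z$ into the bulk of $K^{X,G,h}$.

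I would then check the absorption term by term. The negative contributions $-|z|D^{-1}(\partial_{r^{*}}\Psi_i)^2$ and $-|z|\,|\slashed{\nabla}\Psi_i|^2$ are dominated by the respective positive bulks $r^{-3}(\partial_{r^{*}}\Psi_i)^2$ and $(r-M)(r-2M)^2 r^{-4}|\slashed{\nabla}\Psi_i|^2$ in $K^{X,G,h}$, since the ratios $|z|r^3/D = c(r-M)(r-2M)^2/r^3$ and $|z|\,r^4/((r-M)(r-2M)^2) = cD/r^2$ are uniformly bounded on $r\geq M$ and become arbitrarily small by reducing $c$. The zeroth order term $-|z|V_i^{(\ell)}\Psi_i^2$ has weight $\lesssim |z|/r^3$, which vanishes to higher order at $r=M$ than the positive bulk $(r-M)/r^4$ available in $K^{X,G,h}$, and is thus absorbable after scaling. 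Finally, the cross term $z'\Psi_i\partial_{r^{*}}\Psi_i$ is handled by a weighted Cauchy--Schwarz,
\begin{align*}
|z'\Psi_i\partial_{r^{*}}\Psi_i| \leq \frac{\epsilon}{r^3}(\partial_{r^{*}}\Psi_i)^2 + \frac{r^3(z')^2}{4\epsilon}\Psi_i^2,
\end{align*}
with $\epsilon>0$ chosen small enough that the first summand is absorbed in the $r^{-3}(\partial_{r^{*}}\Psi_i)^2$ bulk of $K^{X,G,h}$; a direct estimate of $z'$ (vanishing like $(r-M)^2$ at $\mathcal{H}^+$, like $(r-2M)$ at the photon sphere, and like $r^{-4}$ at infinity) yields $r^3(z')^2 \lesssim (r-M)/r^4$ on $r\geq M$, so the residual is absorbed by the $(r-M)/r^4$ zeroth order bulk.

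The main obstacle is the precise bookkeeping of the relative degeneracies at $r=M$ and $r=2M$, since $z$, $z'$, $D$, $V_i^{(\ell)}$ and the various weights in $K^{X,G,h}$ all vanish or change sign at these points with different orders. As in the proof of Proposition~\ref{Positivity of K^Xh}, this is cleanest to carry out by re-expressing everything in the variable $x=\sqrt{D}\in[0,1)$, extracting the leading behavior at $x=0$ (horizon) and $x=1/2$ (photon sphere), and calibrating $c$ and $\epsilon$ so that a single constant $C=C(M)$ governs the entire absorption. Once calibrated, integrating $\mathrm{Div}(L^z_\mu)+K^{X,G,h}[\Psi_i]$ over $S^2(r)$ yields the stated inequality.
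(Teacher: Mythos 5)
Your approach matches the paper's almost exactly: same current $L^z_\mu$, essentially the same choice of $z(r)$ (yours equals the paper's $z(r) = -\frac{c}{r}x^3(2x-1)^2(1-x)^2$ up to the constant $M^2$ coming from $(1-x)^2 = M^2/r^2$), the same weighted Cauchy--Schwarz on the cross term $z'\Psi_i\partial_{r^*}\Psi_i$, and the same endgame of absorbing all leftover bulk into $K^{X,G,h}$ by shrinking $c$.

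There is one genuine gap, however, in your treatment of the zeroth-order bulk $z\,V_i^{(\ell)}\Psi_i^2$. You assert it has weight $\lesssim |z|/r^3$ and ``is thus absorbable after scaling,'' but $V_i^{(\ell)}(r) = -\frac{2M}{r^3}\bigl(3\sqrt{D}+(-1)^i(\ell+2-i)\bigr)$ grows linearly in $\ell$, so the coefficient is actually $\lesssim \ell\,|z|/r^3$. Absorbing this into the available $\frac{r-M}{r^4}\Psi_i^2$ bulk would force $c \lesssim 1/\ell$, which destroys the requirement that the final constant depend only on $M$. The paper handles this by explicitly invoking the Poincar\'e inequality: a fixed fraction of the angular bulk $\frac{(r-M)(r-2M)^2}{r^4}|\slashed{\nabla}\Psi_i|^2$ from Proposition~\ref{Positivity of K^Xh}, after integrating over $S^2(r)$ and applying Poincar\'e, gives $\frac{\ell(\ell+1)(r-M)(r-2M)^2}{r^6}\Psi_i^2$; the ratio of this to $\ell\,|z|/r^3 = c\,\ell\,\frac{(r-M)^3(r-2M)^2}{r^{11}}$ is $\sim \frac{c}{\ell+1}\frac{(r-M)^2}{r^5}$, uniformly bounded in both $\ell$ and $r$. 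That is what makes the absorption $\ell$-uniform and $c$ a fixed constant. Without this Poincar\'e step, the argument fails for large $\ell$.
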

	\begin{proof}
		We compute \begin{align}
			\begin{aligned}
				Div(L_{\mu}^z) = - \dfrac{z(r)}{D}(\partial_{t}\Psi_i)^2 + \dfrac{z(r)}{D}(\partial_{r*}\Psi_i)^2 + z(r) \abs{\slashed{\nabla}\Psi_i}^2 + z(r)V_i\Psi_i^2 + z'(r) \Psi_i \partial_{r^{*}}\Psi_i. \label{divL}
			\end{aligned}
		\end{align}
		Consider \begin{align}
			z(r) :=& -\dfrac{c}{r}\cdot x^3(2x-1)^2(1-x)^2, \ \ \ \text{with} \\
			z'(r)= & \ \ c\dfrac{(2x-1)x^2(1-x)^2}{r^2}(16x^2-16x+3).
		\end{align}
		where again, $ x= \sqrt{D} $ and $ c>0 $ a scaling to be determined in the end. 
		By Cauchy–Schwarz inequality, (\ref{divL}) yields 
		\begin{align}\begin{aligned} \label{divLin}
				& Div(L_{\mu}^z) \geq \\ 
				&	\geq - \dfrac{z(r)}{D}(\partial_{t}\Psi_i)^2 + \left( \dfrac{z(r)}{D}- \dfrac{\abs{z'}}{2}\right)(\partial_{r*}\Psi_i)^2 + z(r) \abs{\slashed{\nabla}\Psi_i}^2 + \left(z(r)V_i\right ) \p^2-  \dfrac{\abs{z'}}{2} \Psi_i^2.
			\end{aligned}
		\end{align}
		Note that $ z(r),\ z'(r),\ \dfrac{z}{D} $ are bounded functions everywhere including the horizon and all terms in $ (\ref{divLin}) $ have the same degeneracy as $ K^{X,G,h} $ . The coefficient $ z(r)\cdot V_i(r) $ depends on $ \ell $ and thus can be controlled using Poincare inequality. Using  Proposition \ref{Positivity of K^Xh} and choosing  a small enough scaling $ c>0 $ of $ z(r) $  allows us to control the remaining negative terms. \\
	\end{proof}
	
	\paragraph{Non-degenerate Morawetz Estimate.} The estimate of Proposition \ref{Morawetz final} is degenerate with respect to the angular derivatives and $ \partial_t \Psi_i $ at the photon sphere. Below, we remove this degeneracy which will prove useful later on, however, at the cost of losing one derivative at the level of initial data.
	
	First, by commuting equation (\ref{waveeq}) with the killing vectorfield $ T $ and using Proposition \ref{Positivity of K^Xh} we obtain \begin{equation}
		\int_{S^2(r)} \left(\dfrac{\sqrt{D}}{r^3}(\partial_t\Psi_i)^2 + \dfrac{\sqrt{D}}{r^3}(\Psi_i)^2\right) \leq C \int_{S^2(r)}\sum_{k=0}^{1}K^{X,G,h}[T^k\Psi_i], \label{t0_non_deg}
	\end{equation}
	where C depends only on $ M. $ 
	
	The remaining derivatives are obtained using the current \begin{align*}
		L_{\mu}^{w}=w(r) \Psi_i\cdot \nabla_{\mu}\Psi_i,
	\end{align*}
	for $ w(r)= \dfrac{1}{r^3}D^{\frac{3}{2}} $. Indeed, taking the divergence of that current we obtain \begin{align}
		\begin{aligned}
			Div(L_{\mu}^{w}) =  &\dfrac{1}{r^3}\sqrt{D}(\partial_{r^{*}}\Psi_i)^2 + \dfrac{1}{r^3}D^{\frac{3}{2}}\left(\abs{\slashed{\nabla}\Psi_i}^2+V_i \Psi_i^2\right)\\ -& \dfrac{1}{r^3}\sqrt{D}(\partial_t\Psi_i)^2+\dfrac{3}{r^4}D(1-2\sqrt{D})\Psi_i\cdot \partial_{r^{*}}\Psi_i.
		\end{aligned}
	\end{align}
	By Cauchy Schwartz, \begin{align*}
		\dfrac{3}{r^4}D(1-2\sqrt{D})\Psi_i\cdot \partial_{r^{*}}\Psi_i \geq - \dfrac{3}{2 \varepsilon}\dfrac{D}{r^4}\Psi_i^2 -\dfrac{3\varepsilon}{2}\dfrac{D}{r^4}(\partial_{r^{*}}\Psi_i)^2,
	\end{align*}
	and choosing $ \varepsilon>0  $ small enough such that $ r/3\geq \varepsilon \sqrt{D}$, the last term above can be absorbed in the first term of $ Div(L_{\mu}^{w}) $. Thus, using the above and 
	relation (\ref{t0_non_deg}) we prove the following proposition \begin{proposition} \label{nd_mor} There exits a positive constant $ C $ depending only on $ M $ such that for all solutions to (\ref{waveeq}), $ i\in \lbrace 1,2 \rbrace$ we have\begin{align*}
			\int_{S^2(r)}\left(\dfrac{\sqrt{D}}{r^3}(\partial_t\Psi_i)^2 + \dfrac{\sqrt{D}}{2r^3}(\partial_{r^{*}}\Psi_i)^2+ \dfrac{\sqrt{D}}{r}\abs{\slashed{\nabla}\Psi_i}^2+ \dfrac{\sqrt{D}}{r^3}\Psi_i^2 \right) \leq \int_{S^2(r)}\left (Div(L_{\mu}^{w})+ C \sum_{k=0}^{1}K^{X,G,h}[T^k\Psi_i]\right ).
		\end{align*}
	\end{proposition}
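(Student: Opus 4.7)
My plan is to combine two ingredients, (i) a commutation of the wave equation with the Killing field $T$, which upgrades the Morawetz estimate of Proposition \ref{Morawetz final} so it becomes non-degenerate in the $\partial_t$ direction, and (ii) a zero-order current $L^{w}_\mu$ with a carefully chosen weight that, via integration by parts, transforms the degenerate angular coefficient $P(r)\cdot f(r)/r$ into a non-degenerate $\sqrt{D}/r$ factor.

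First, since $T=\partial_v$ is Killing and the potential $V_i^{(\ell)}$ is $r$-dependent only, the wave equation (\ref{waveeq}) commutes cleanly with $T$, so $T\Psi_i$ solves the same equation. Applying Proposition \ref{zeroth order positivity} to both $\Psi_i$ and $T\Psi_i$ and adding gives the non-degenerate bound
\begin{equation*}
\int_{S^2(r)}\Bigl(\dfrac{\sqrt{D}}{r^3}(\partial_t\Psi_i)^2+\dfrac{\sqrt{D}}{r^3}\Psi_i^2\Bigr)\ \lesssim\ \int_{S^2(r)}\sum_{k=0}^{1} K^{X,G,h}[T^k\Psi_i],
\end{equation*}
which is exactly (\ref{t0_non_deg}). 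This already takes care of the $(\partial_t\Psi_i)^2$ and the zero-order terms of the target inequality at the cost of one derivative.

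Next, I introduce $L^{w}_\mu=w(r)\,\Psi_i\,\nabla_\mu\Psi_i$ with $w(r)=r^{-3}D^{3/2}$. Computing $\operatorname{Div}(L^w_\mu)$ and using $\square_g\Psi_i=V_i\Psi_i$ yields
\begin{equation*}
\operatorname{Div}(L^{w}_\mu)\ =\ \tfrac{\sqrt{D}}{r^3}(\partial_{r^*}\Psi_i)^2+\tfrac{D^{3/2}}{r^3}\bigl(|\slashed{\nabla}\Psi_i|^2+V_i\Psi_i^2\bigr)-\tfrac{\sqrt{D}}{r^3}(\partial_t\Psi_i)^2+\tfrac{3D(1-2\sqrt{D})}{r^4}\Psi_i\,\partial_{r^*}\Psi_i.
\end{equation*}
The key observation is that the angular coefficient $D^{3/2}/r^3$ vanishes only at $\h$ (not at the photon sphere), so this is precisely the non-degenerate weight we want for $|\slashed{\nabla}\Psi_i|^2$; in particular, after rescaling by $r^2$ it gives the coefficient $\sqrt{D}/r$ appearing in the statement.

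The main obstacle is controlling the two negative contributions on the right-hand side of $\operatorname{Div}(L^{w}_\mu)$: the cross term $\tfrac{3D(1-2\sqrt{D})}{r^4}\Psi_i\,\partial_{r^*}\Psi_i$ and the bad $-\tfrac{\sqrt{D}}{r^3}(\partial_t\Psi_i)^2$, together with the sign-indefinite zero-order term $\tfrac{D^{3/2}}{r^3}V_i\Psi_i^2$. For the cross term I use Cauchy--Schwarz,
\begin{equation*}
\tfrac{3D(1-2\sqrt{D})}{r^4}\Psi_i\,\partial_{r^*}\Psi_i\ \geq\ -\tfrac{3\varepsilon}{2}\tfrac{D}{r^4}(\partial_{r^*}\Psi_i)^2-\tfrac{3}{2\varepsilon}\tfrac{D}{r^4}\Psi_i^2,
\end{equation*}
and choosing $\varepsilon>0$ small enough that $\varepsilon\sqrt{D}\leq r/3$ ensures the first error is absorbed into the good $\tfrac{\sqrt{D}}{r^3}(\partial_{r^*}\Psi_i)^2$ term, leaving at most a $\tfrac{1}{2}\tfrac{\sqrt{D}}{r^3}(\partial_{r^*}\Psi_i)^2$ contribution, consistent with the factor $1/2$ in the statement. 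The remaining $-\tfrac{\sqrt{D}}{r^3}(\partial_t\Psi_i)^2$ term, the leftover negative zero-order piece from the Cauchy--Schwarz estimate, and the sign-indefinite $\tfrac{D^{3/2}}{r^3}V_i\Psi_i^2$ piece (after projecting on fixed frequency $\ell$ and using Poincar\'e to convert any leftover indefinite part into an angular contribution bounded by a small fraction of $\tfrac{D^{3/2}}{r^3}|\slashed\nabla\Psi_i|^2$) are all pointwise bounded by $C\,\tfrac{\sqrt{D}}{r^3}\bigl((\partial_t\Psi_i)^2+\Psi_i^2\bigr)$, which is exactly what is controlled by (\ref{t0_non_deg}). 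Adding $C\sum_{k=0}^1 K^{X,G,h}[T^k\Psi_i]$ to $\operatorname{Div}(L^{w}_\mu)$ with $C$ sufficiently large completes the proof.
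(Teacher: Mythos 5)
Your proposal follows the same route as the paper: commute with $T$ to get estimate (\ref{t0_non_deg}), introduce the zero-order current $L^{w}_{\mu}=w(r)\Psi_i\nabla_\mu\Psi_i$ with $w=r^{-3}D^{3/2}$, compute its divergence, and absorb the cross term with Cauchy--Schwarz using $\varepsilon\sqrt{D}\leq r/3$. Your divergence computation, the treatment of the cross term, and the handling of the $V_i$ piece via Poincar\'e are all correct and match the paper.

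However, there is one incorrect claim that you should fix. You assert that the angular coefficient $D^{3/2}/r^{3}$ produced by $\operatorname{Div}(L^{w}_{\mu})$ is ``precisely the non-degenerate weight we want'' and that ``after rescaling by $r^{2}$ it gives the coefficient $\sqrt{D}/r$.'' This is false: algebraically $D^{3/2}/r^{3}=(\sqrt{D}/r)\cdot(D/r^{2})$, and $D/r^{2}\sim r^{-2}\to 0$ as $r\to\infty$, so the $L^{w}$ angular contribution is strictly weaker than $\sqrt{D}/r$ at large $r$ (and near $r=M$, though both vanish there). The role of $\operatorname{Div}(L^{w}_{\mu})$ is only to remove the photon-sphere degeneracy at $r=2M$. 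The $r^{-1}$ decay rate of the angular coefficient at large $r$ must instead come from the Morawetz bulk in $K^{X,G,h}[\Psi_i]$, whose angular piece scales like $\tfrac{(r-M)(r-2M)^{2}}{r^{4}}\sim r^{-1}$; this is exactly what the paper's closing remark after the proposition points out. Since you already have the term $C\sum_{k=0}^{1}K^{X,G,h}[T^{k}\Psi_i]$ on the right-hand side, the ingredient is available to you; you should simply invoke the $|\slashed{\nabla}\Psi_i|^{2}$ part of Proposition \ref{Positivity of K^Xh} (equivalently Proposition \ref{Morawetz final}) to supply the correct $r$-decay, and note that the sum of the two nonnegative angular coefficients is bounded below by $c\sqrt{D}/r$ because the first is uniformly positive near $r=2M$ and the second away from it.
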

	\begin{flushleft}
		Note, the decay rate for the angular derivative coefficient comes from Proposition \ref{Morawetz final}.
		
	\end{flushleft}

	\subsubsection{Degenerate and non-degenerate \texorpdfstring{$ X $}{PDFstring}--estimate.}
	We saw above that the modified scalar currents are positive definite, however, we also need to control the boundary terms that arise when applying the divergence identity in $ R(0,\tau) $. 
	
	\begin{proposition} \label{X,h_bountary} 
		Let $ X=f\partial_{r^{*}} $, where $ f(r) $ is bounded, then there exists a uniform  positive constant $C $ depending on $ M,\Sigma_0$ and $ \norm{f}_{L^{\infty}(\mathcal{R}(0,\tau))} $ such that for all solutions $ \Psi_i^{^{(\ell)}}, \ \ell\geq i, \ i\in\br{1,2} $ to (\ref{waveeq}) 
		\begin{align}
			\abs{\int_S J_{\mu}^{X,G,h}[\Psi_i] n_{S}^{\mu}} \leq C \int_S J_{\mu}^T[\Psi_i] n_{S}^{\mu},
		\end{align}
		for $ S $ being $ \Sigma_{\tau}  $ or $ \mathcal{H}^{+} $, for all $ \tau >0. $
	\end{proposition}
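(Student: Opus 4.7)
The plan is to reduce the estimate to a pointwise (fiber-wise) comparison of quadratic forms on the hypersurface $S$, and then integrate, exploiting that every coefficient appearing in $J^{X,G,h}_\mu$ is bounded on $\mathcal{R}(0,\tau)$ and in fact degenerates favorably on $\mathcal{H}^+$.

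First I would expand the definition
\[ J_\mu^{X,G,h}[\Psi_i] \;=\; Q_{\mu\nu}[\Psi_i]X^\nu + \tfrac{1}{2}G\Psi_i \nabla_\mu \Psi_i - \tfrac{1}{4}(\nabla_\mu G)\Psi_i^2 + \tfrac{h}{2}\Psi_i^2 (\partial_{r^*})_\mu \]
with $X=f\partial_{r^*}$, $G=2D/r$, $h=3\sqrt{D}(1-\sqrt{D})/r^2$, and rewrite $\partial_{r^*} = \partial_v + D\partial_r$ in ingoing Eddington--Finkelstein coordinates so that every object extends smoothly to $\mathcal{H}^+$. A direct computation then shows that $f$, $G$, $h$, $\partial_r G$ and $\partial_r h$ are all bounded on $\mathcal{R}(0,\tau)$; moreover $G$, $h$, and $\partial_r G$ all vanish at $r=M$ since $D$ has a double zero there.

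Second, I would handle the horizon term. On $\mathcal{H}^+$ one may take $n_{\mathcal{H}^+}^\mu = T^\mu = \partial_v$, and the vanishing noted above together with $X|_{\mathcal{H}^+} = f(M)(\partial_v + D\partial_r)|_{r=M} = -3\,T$ collapses the current to
\[ J_\mu^{X,G,h}[\Psi_i] n_{\mathcal{H}^+}^\mu \;=\; Q(T,X) \;=\; -3\,(\partial_v \Psi_i)^2 \;=\; -3\, J_\mu^{T}[\Psi_i] n_{\mathcal{H}^+}^\mu. \]
The horizon bound then follows with $C = 3$ independently of $f$.

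Third, on $\Sigma_\tau$ I would write $n_{\Sigma_\tau}= n^v \partial_v + n^r \partial_r$ with $n^v, n^r$ bounded (and $n^v$ uniformly bounded below, as was already established in the proof of Proposition \ref{posT}). Expanding $Q_{\mu\nu}X^\nu n_{\Sigma_\tau}^\mu$ in this basis and applying Cauchy--Schwarz to the cross term $\tfrac{1}{2}G\Psi_i \nabla_\mu \Psi_i \cdot n_{\Sigma_\tau}^\mu$ (splitting $\Psi_i \cdot \partial \Psi_i \leq \tfrac{1}{2}(\Psi_i^2 + (\partial \Psi_i)^2)$) bounds $|J^{X,G,h}_\mu n_{\Sigma_\tau}^\mu|$ by a constant multiple, depending on $\|f\|_{L^\infty}$ and $M$, of the integrand
\[ (\partial_v \Psi_i)^2 + D(\partial_r \Psi_i)^2 + |\slashed{\nabla}\Psi_i|^2 + \Psi_i^2. \]

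Finally, after integrating over $\Sigma_\tau$, Proposition \ref{posT} already controls the first three terms by $\int_{\Sigma_0} J^T_\mu[\Psi_i] n^\mu_{\Sigma_0}$ and also produces the coercive zero-order contribution $\int \tfrac{\ell(\ell+1)}{r^2}\Psi_i^2$; hence the Poincar\'e inequality of Proposition \ref{poincare} (applicable since $\Psi_i$ is supported on $\ell \ge i\ge 1$) absorbs the remaining $\Psi_i^2$ term, yielding the desired estimate with a constant depending only on $M$, $\Sigma_0$ and $\|f\|_{L^\infty(\mathcal{R}(0,\tau))}$. The main technical point to verify carefully is the uniformity of this comparison across the photon sphere and up to $\mathcal{H}^+$; this is precisely where the double zero of $D$ at $r=M$ (making $G$ and $h$ vanish there) prevents any degeneration.
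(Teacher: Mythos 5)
Your proof follows the same strategy as the paper: rewrite everything in ingoing Eddington--Finkelstein coordinates so that the current extends to $\mathcal{H}^+$, expand $J_\mu^{X,G,h}n_S^\mu$ against $n_S = n^v\partial_v + n^r\partial_r$, use Cauchy--Schwarz on the cross terms, and close with the coercivity of the $T$-flux from Proposition \ref{posT}. The observation that on $\mathcal{H}^+$ every $G$- and $h$-dependent contribution vanishes (because $G(M)=h(M)=0$, $n_{\mathcal{H}^+}^r=0$, and $(\partial_{r^*})_\mu n_{\mathcal{H}^+}^\mu = n_{\mathcal{H}^+}^r = 0$) so that $J_\mu^{X,G,h}n_{\mathcal{H}^+}^\mu = f(M)\,(\partial_v\Psi_i)^2 = f(M)\, J_\mu^T n_{\mathcal{H}^+}^\mu$ is a nice, clean sharpening of the paper's uniform Cauchy--Schwarz treatment of $S$; just note that for general bounded $f$ the constant is $\lvert f(M)\rvert \le \norm{f}_\infty$, not ``$3$ independently of $f$.''

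Two small imprecisions are worth tightening. First, the zeroth-order coefficient in $J_\mu^{X,G,h}n_S^\mu$ is $\left(\frac{h}{2}-\frac{V_i}{2}f\right)n_S^r$, and since $V_i^{(\ell)}$ grows linearly in $\ell$, the pointwise bound you quote --- $(\partial_v\Psi_i)^2 + D(\partial_r\Psi_i)^2 + \abs{\slashed{\nabla}\Psi_i}^2 + \Psi_i^2$ with an $\ell$-independent constant --- is not literally correct; you need the form $\frac{\ell(\ell+1)}{r^2}\Psi_i^2$ (or an explicit $\ell$-weight), which is precisely the quantity Proposition \ref{posT} makes coercive and which the paper flags by noting that ``$V_i$ is linear with respect to $\ell$.'' Second, the proposition asserts $\lvert\int_{\Sigma_\tau} J_\mu^{X,G,h}n_{\Sigma_\tau}^\mu\rvert \lesssim \int_{\Sigma_\tau} J_\mu^T n_{\Sigma_\tau}^\mu$, a same-slice comparison, whereas invoking the \emph{statement} of Proposition \ref{posT} directly gives a bound by $\int_{\Sigma_0}J_\mu^T n_{\Sigma_0}^\mu$. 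What you actually want is the intermediate pointwise estimate (\ref{T-flux dpsi}) in its proof, together with the Poincar\'e inequality applied on the same $\Sigma_\tau$; that yields $\int_{\Sigma_\tau}\bigl[(\partial_v\Psi_i)^2 + D(\partial_r\Psi_i)^2 + \abs{\slashed{\nabla}\Psi_i}^2 + \tfrac{\ell(\ell+1)}{r^2}\Psi_i^2\bigr] \lesssim \int_{\Sigma_\tau}J_\mu^T n_{\Sigma_\tau}^\mu$, after which your argument closes exactly as intended.
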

	\begin{proof}
		Since our estimates involve the horizon as well, let us express everything in terms of the ingoing coordinate system $ (v,r,\vartheta,\varphi) $. In particular, we have $ \frac{\partial}{\partial{r^{\star}}}(r^{\star},t)=\frac{\partial}{\partial v}(v,r)+D\frac{\partial}{\partial r}(v,r) $, then
		\begin{align}
			\begin{aligned}
				J_{\mu}^{X}n_S^{\mu}&= Q(X,n_S)=fQ(\partial_v,n_S)+fDQ(\partial_{r},n_S)=\\
				= & f n_S^{v} \left(\partial_v \Psi \right)^2+ f n_S^{v}D \left(\partial_v \Psi \right)(\partial_{r}\Psi) + \dfrac{1}{2}f n_S^{r} D \left(\partial_r \Psi \right)^2 \\ &+\left(-\dfrac{f}{2}n_S^{r} \right)\abs{\slashed{\nabla}\Psi}^2 + \left(-\dfrac{V_i}{2}f n_S^{r}\right) \Psi^2.  
			\end{aligned}
		\end{align}
		In addition, we have 
		\begin{align}
			\begin{aligned}
				J_{\mu}^{X,G,h}n_S^{\mu}=&J_{\mu}^{X}n_S^{\mu} + \dfrac{h^2}{2}\Psi^2 (\partial_{r^{*}})_{\mu}n_S^{\mu} \\
				=&   f n_S^{v} \left(\partial_v \Psi \right)^2+ f n_S^{v}D \left(\partial_v \Psi \right)(\partial_{r}\Psi) + \dfrac{1}{2}f n_S^{r} D \left(\partial_r \Psi \right)^2 \\ &+\left(-\dfrac{f}{2}n_S^{r} \right)\abs{\slashed{\nabla}\Psi}^2 + \left(  \dfrac{h}{2}n_S^r-\dfrac{V_i}{2}f n_S^{r}\right) \Psi^2 
			\end{aligned}.
		\end{align}
		Now, using Cauchy-Schwarz, the fact that $ f,n^u,n^r, h $ are bounded functions and that $ V_i $ is linear with respect to $ \ell $, Proposition \ref{posT} concludes the proof. \\
	\end{proof}
	\begin{flushleft}
		Now, we are in the position of proving the general Morawetz estimates of this main section. First, we proceed with the proof of a degenerate estimate that captures the \textit{trapping} effect on both the photon sphere $ \br{r=2M} $ and the horizon $ \h $. 
	\end{flushleft}
	\begin{theorem} \label{Morawetz-Spacetime}
		There exists a constant $ C>0 $ depending on  $ M,\Sigma_0 $ such that for all solutions $\Psi_i^{^{(\ell)}} $ to (\ref{waveeq}), $ i\in \lbrace 1,2 \rbrace$, $ \ell\geq i$ , we have \begin{align}
			\begin{aligned}
				\int_{\mathcal{R}(0,\tau) } \left(\dfrac{1}{r^3}\abs{\partial_{r^{*}}\Psi_i}^2  +\dfrac{(r-M)(r-2M)^2}{r^4}\left(\abs{\slashed{\nabla}\Psi_i}^2 + \dfrac{1}{r^2}(\partial_{t} \Psi_i)^2 \right)  +\dfrac{(r-M)}{r^4}\abs{\Psi_i}^2\right) \\ 
				\leq C \int_{\Sigma_0} J_{\mu}^T[\Psi_i]n_{\Sigma_0}^{\mu}
			\end{aligned}
		\end{align}
	\end{theorem}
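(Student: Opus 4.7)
The plan is to combine the positive bulk from the modified current $J^{X,G,h}$ together with the auxiliary current $L^z$ from Proposition \ref{Morawetz final}, and then apply the divergence identity \eqref{Stokes'} in the region $\mathcal{R}(0,\tau)$. By Proposition \ref{Morawetz final}, the integrand on the left-hand side of the theorem is bounded pointwise (after integration on $S^2_{v,r}$) by $\operatorname{Div}(L^z_\mu) + K^{X,G,h}[\Psi_i]$, which equals $\nabla^\mu(J^{X,G,h}_\mu[\Psi_i] + L^z_\mu)$. So the spacetime bulk appearing on the left is controlled by the boundary integrals of $J^{X,G,h}_\mu[\Psi_i] + L^z_\mu$ on $\Sigma_0$, $\Sigma_\tau$ and $\mathcal{H}^+(0,\tau)$.

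Next, I would handle each boundary term. Proposition \ref{X,h_bountary} shows that on any $S\in\{\Sigma_\tau,\mathcal{H}^+\}$ the flux of $J^{X,G,h}_\mu[\Psi_i]$ is bounded in absolute value by the $T$-flux of $\Psi_i$ through $S$, with a constant depending only on $M$, $\Sigma_0$ and $\|f\|_\infty$ (and here $f(r)=(2\sqrt{D}-1)(3-2\sqrt{D})$ is bounded on $\mathcal{M}$). The flux of $L^z_\mu$ is handled analogously, since $z(r)$ and $z'(r)$ are bounded and the resulting boundary quadratic in $\Psi_i$ and $\partial\Psi_i$ is dominated by $J^T_\mu n_S^\mu$ using Proposition \ref{posT} (to absorb the zeroth-order contribution via Poincaré). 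Hence the total boundary contribution on $\Sigma_\tau$ and $\mathcal{H}^+$ is $\lesssim \int_{\Sigma_\tau} J^T_\mu[\Psi_i] n^\mu_{\Sigma_\tau} + \int_{\mathcal{H}^+(0,\tau)} J^T_\mu[\Psi_i] n^\mu_{\mathcal{H}^+}$, and on $\Sigma_0$ by the same expression on $\Sigma_0$.

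To absorb the $\Sigma_\tau$- and $\mathcal{H}^+$-fluxes on the left into the initial data, I invoke the conservation of the $T$-energy: by Proposition \ref{T estimate}, both $\int_{\Sigma_\tau} J^T_\mu n^\mu_{\Sigma_\tau}$ and $\int_{\mathcal{H}^+(0,\tau)} J^T_\mu n^\mu_{\mathcal{H}^+}$ are non-negative and bounded above by $\int_{\Sigma_0} J^T_\mu[\Psi_i] n^\mu_{\Sigma_0}$. This closes the estimate and yields the claimed bound with a constant $C=C(M,\Sigma_0)$.

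The main subtlety I anticipate is bookkeeping of signs and of the direction of the unit normals, especially on the horizon $\mathcal{H}^+$ where $n_{\mathcal{H}^+}$ is null and must be taken future-directed consistently with the divergence identity: because $f(M)=-1<0$, the flux of $J^{X,G,h}$ through $\mathcal{H}^+$ is not of a definite sign, so one really does need Proposition \ref{X,h_bountary} to absorb it by the non-negative $T$-flux on $\mathcal{H}^+$ rather than trying to drop it. Apart from this, the argument is a direct assembly of Propositions \ref{T estimate}, \ref{posT}, \ref{zeroth order positivity}, \ref{Morawetz final} and \ref{X,h_bountary}, with no new analytic input beyond the elliptic/Poincaré inequalities already established in Section \ref{Geometry}.
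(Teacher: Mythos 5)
Your proposal is correct and is essentially the paper's argument: the paper also applies the divergence identity to $J^{X,G,h}_\mu[\Psi_i]+L^z_\mu$ over $\mathcal{R}(0,\tau)$, invokes Proposition \ref{Morawetz final} for the bulk positivity, and uses Propositions \ref{X,h_bountary} and \ref{T estimate} to control the boundary contributions by the $T$-flux on $\Sigma_0$. The only cosmetic difference is that the paper adds $A\cdot J^T_\mu[\Psi_i]$ with $A$ large so that the total $\Sigma_\tau$ and $\mathcal{H}^+$ fluxes become nonnegative and can be discarded outright, whereas you bound their magnitude directly and then invoke $T$-energy boundedness; these are the same mechanism.
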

	\begin{proof}
		We apply Stoke's Theorem in the region $ R(0,\tau) $ for the the currents \begin{align}
			J_{\mu}^{X,G,h}[\Psi_i] + L_{\mu}^{z} + A \cdot J_{\mu}^T[\Psi_i]
		\end{align}
		for a big enough  constant $ A>0 $. Note, there will be no boundary terms on the horizon $ \mathcal{H}^+ $ since all quantities vanish there, thus applying Propositions \ref{X,h_bountary},  \ref{Morawetz final} and  \ref{T estimate} we conclude the proof. 
		
	\end{proof}
	
	Similarly, using Proposition \ref{nd_mor} we prove an estimate that does not degenerate at the photon sphere, however, requires higher regularity on the initial data.
	\begin{theorem} \label{Spacetime non degenerate photon estimate}
		There exists a positive constant $ C $ depending on $ M,\Sigma_{0} $ alone such that for all solutions $ \Psi_i^{^{(\ell)}} $ to (\ref{waveeq}), $ i\in \lbrace 1,2 \rbrace$, $ \ell\geq i$, we have \begin{equation}
			\int_{\mathcal{R}(0,\tau)}\left(\dfrac{\sqrt{D}}{r^3}(\partial_t\Psi_i)^2 + \dfrac{\sqrt{D}}{2r^3}(\partial_{r^{*}}\Psi_i)^2+ \dfrac{\sqrt{D}}{r}\abs{\slashed{\nabla}\Psi_i}^2+ \dfrac{\sqrt{D}}{r^3} \Psi_i^2 \right)\leq C\left( \int_{\Sigma_0} J_{\mu}^T[\Psi_i]n_{\Sigma_0}^{\mu}+J_{\mu}^T[T\Psi_i]n_{\Sigma_0}^{\mu}\right)
		\end{equation}
	\end{theorem}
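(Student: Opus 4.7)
The plan is to bootstrap from the degenerate estimate of Theorem \ref{Morawetz-Spacetime} by commuting with the Killing vector field $T=\partial_{v}$. Since $T$ commutes with $\square_{g}$ and with the potential $V_{i}^{(\ell)}$ (both are $t$-independent), any derivative $T\Psi_{i}$ again solves the wave equation (\ref{waveeq}), and it is also supported on the same angular frequency $\ell\geq i$. Thus Proposition \ref{nd_mor} may be applied to $\Psi_{i}$ with $k=0,1$, producing a pointwise-in-$r$ coercive expression for the desired non-degenerate integrand in terms of $\mathrm{Div}(L^{w}_{\mu})$ and $K^{X,G,h}[T^{k}\Psi_{i}]$.

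First I would integrate the inequality of Proposition \ref{nd_mor} over $\mathcal{R}(0,\tau)$ and apply the divergence theorem (Stokes' identity (\ref{Stokes'})) to each of the currents $L^{w}_{\mu}$ and $J^{X,G,h}_{\mu}[T^{k}\Psi_{i}]$, $k=0,1$, together with a large multiple $A\bigl(J^{T}_{\mu}[\Psi_{i}]+J^{T}_{\mu}[T\Psi_{i}]\bigr)$ as in the proof of Theorem \ref{Morawetz-Spacetime}. The bulk on the left-hand side is bounded below, up to constants depending only on $M$ and $\Sigma_{0}$, by the non-degenerate integrand appearing in the statement of the theorem (this is precisely the content of Proposition \ref{nd_mor}, with the remaining indefinite terms from $\mathrm{Div}(L^{w}_{\mu})$ absorbed into $K^{X,G,h}$ and $AJ^{T}$ for $A$ large). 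Next, the boundary terms on $\Sigma_{\tau}$ and $\Sigma_{0}$ coming from $J^{X,G,h}_{\mu}[T^{k}\Psi_{i}]$ are controlled by $J^{T}_{\mu}[T^{k}\Psi_{i}]n^{\mu}_{\Sigma}$ thanks to Proposition \ref{X,h_bountary}; the analogous boundary term from $L^{w}_{\mu}$, of the schematic form $w(r)\Psi_{i}\,n_{\Sigma}^{\mu}\partial_{\mu}\Psi_{i}$, is handled by Cauchy--Schwarz together with the coercivity of $J^{T}_{\mu}[\Psi_{i}]n^{\mu}_{\Sigma}$ over both $(\partial\Psi_{i})^{2}$ and $\tfrac{\ell(\ell+1)}{r^{2}}\Psi_{i}^{2}$ provided by Proposition \ref{posT}, while the factor $w(r)=D^{3/2}/r^{3}$ ensures integrability at infinity.

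The horizon boundary is the least subtle point: on $\mathcal{H}^{+}$ one may take $n_{\mathcal{H}^{+}}=T=\partial_{v}$, so $L^{w}_{\mu}n_{\mathcal{H}^{+}}^{\mu}=w(r)\Psi_{i}\,T\Psi_{i}$ vanishes because $w\sim D^{3/2}$ degenerates at $r=M$, and similarly the bulk weights $\sqrt{D}$ on the left-hand side degenerate there; for $J^{X,G,h}_{\mu}[T^{k}\Psi_{i}]$ the contribution on $\mathcal{H}^{+}$ has a definite sign up to lower-order terms controlled again by Proposition \ref{X,h_bountary}. Combining everything with the $T$-flux bound of Proposition \ref{T estimate} applied to both $\Psi_{i}$ and $T\Psi_{i}$ yields the stated estimate with right-hand side $J^{T}_{\mu}[\Psi_{i}]n^{\mu}_{\Sigma_{0}}+J^{T}_{\mu}[T\Psi_{i}]n^{\mu}_{\Sigma_{0}}$.

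The only genuinely non-routine point is verifying that, after adding the large $T$-current multiple $A\bigl(J^{T}[\Psi_{i}]+J^{T}[T\Psi_{i}]\bigr)$, the \emph{indefinite} terms coming from $\mathrm{Div}(L^{w}_{\mu})$ (the $-\tfrac{\sqrt{D}}{r^{3}}(\partial_{t}\Psi_{i})^{2}$ term and the $\Psi_{i}\partial_{r^{\star}}\Psi_{i}$ cross term) are dominated by the coercive expression furnished by $\sum_{k\leq 1}K^{X,G,h}[T^{k}\Psi_{i}]$; this was effectively already carried out in Proposition \ref{nd_mor} itself, so no additional work beyond a careful bookkeeping of constants is required.
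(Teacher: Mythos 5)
Your proposal is correct and reconstructs the paper's intended argument, which is only sketched (``similarly, using Proposition~\ref{nd_mor}''): apply Stokes' theorem over $\mathcal{R}(0,\tau)$ to the current $L_{\mu}^{w}+C\sum_{k\le 1}J^{X,G,h}_{\mu}[T^{k}\Psi_{i}]+A\sum_{k\le 1}J^{T}_{\mu}[T^{k}\Psi_{i}]$, bound the bulk from below by Proposition~\ref{nd_mor}, and control the $\Sigma_{0}$, $\Sigma_{\tau}$, and $\mathcal{H}^{+}$ boundary terms via Propositions~\ref{X,h_bountary}, \ref{posT}, \ref{T estimate} and the vanishing of $w(r)$ at $r=M$. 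One small imprecision: the $J^{X,G,h}_{\mu}[T^{k}\Psi_{i}]$ contribution on $\mathcal{H}^{+}$ is not sign-definite; Proposition~\ref{X,h_bountary} merely bounds its absolute value by the $T$-flux across $\mathcal{H}^{+}$, which is then absorbed by taking $A$ large and invoking Proposition~\ref{T estimate}.
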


	\subsection{The Vector Field N} \label{The Vector Field N}
	In this section, we are looking for a timelike vectorfield $ N $ that captures the non-degenerate energy of a local observer near the horizon. However, in the extremal Reissner-Nordstr{\"o}m case, the absence of redshift effect poses a difficulty since there is no analogous redshift vectorfield to Dafermos--Rodnianski's; for any  timelike vectorfield $ N $, the corresponding scalar current won't be non-negative definite. Instead, we  go around this by modifying appropriately the current $ J^N_{\mu}[\p]$ which allows us to obtain a non-negative local integrated estimate, however, still degenerate in the transversal invariant direction due to trapping on the horizon $ \h $.
	First, let's take a closer look at why $ J^N_{\mu}[\p] $ alone won't work. In this subsection, we work with respect to the coordinate system $ (v,r,\vartheta,\varphi) $ which is regular on the event horizon $ \h. $
	
	\paragraph{Absence of redshift effect.}
	
	Let $ N= N^v(r)\partial_{v}+ N^r (r)\partial_r $ be a future-directed timelike $ \phi_{\tau}^T- $ invariant vector field. Then, if we consider $ J_{\mu}^N[\Psi_i] $ for a solution $ \Psi_i^{^{(\ell)}} $ to (\ref{waveeq}), $ i\in \lbrace 1,2 \rbrace$, $ \ell\geq i$, we obtain \begin{equation}\label{K^N}
		K^N[\Psi_i]= F_{vv}(\partial_v \Psi_i)^2+ F_{rr}(\partial_r \Psi_i)^2+ F_{\slashed{\nabla}}\abs{\slashed{\nabla} \Psi_i}^2+ F_{vr}(\partial_{v}\Psi_i)(\partial_r \Psi_i) +  F_{00}^{(i)} \Psi_i^2,
	\end{equation}
	where the coefficients are given by \begin{align}
		\begin{aligned} \label{F coefficients}
			&F_{vv}= N^v_r, \hspace{2cm} F_{rr}= D\left(\dfrac{N_r^r}{2}-\dfrac{N^r}{r}\right) - \dfrac{N^r D'}{2}, \\
			&F_{\slashed{\nabla}}= -\dfrac{1}{2}N^r_r, \hspace{1.5cm} F_{vr}= DN^v_r- \dfrac{2N^r}{r}, \\
			& F_{00}^{(i)}= -\dfrac{1}{2}N(V_i)-V_i\left(\dfrac{N_r^r}{2}+\dfrac{N^r}{r}\right). 
		\end{aligned}
	\end{align}
	Here we denote by $ N^i_r:= \frac{\partial N^i}{\partial_r}$ and $ D'= \frac{\partial D}{\partial r}. $ In hope of proving $ K^N $ is non-negative definite, we would like to control the term $ F_{vr}(\partial_{v}\Psi_i)(\partial_r \Psi_i) $ using the positivity of $ F_{rr}, F_{vv} $. However, note that $ F_{rr} $ vanishes on the horizon whereas $ F_{vr} $ doesn't. Indeed, recall the relations (\ref{C1}), (\ref{C2}), so if we are looking for $ N $ timelike everywhere, it's necessary that $ N^r(M) \neq 0 $ on the horizon $ \h $. In particular, $ K^N[\p] $ is linear with respect to $ \partial_{r} \Psi_i $ on the horizon.
	In addition, the coefficient $ F_{00} $ becomes negative for low frequencies for both $ i=1,2. $
	Therefore, no choice of timeline vectorfield $ N $ can make $ K^N[\Psi_i] $ non-negative definite.

	\paragraph{A Locally Non-Negative Spacetime Current.} 
	In order to remedy the situation above, we need to introduce extra terms to our initial current. Consider $ 	J_{\mu}^{X,\omega,M} [\Psi] $ as in Proposition \ref{GenCur} for the (0,1)-form  $ M_{\mu}:= h(\partial_r)_{\mu}, $ and  $ \omega=\omega(r) $, $ h=h(r) $  functions of $ r $ alone. In particular, if 
	\begin{align}
		J_{\mu}^{N,\omega,M} [\Psi_i] =  J_{\mu}^N[\Psi_i] + \dfrac{1}{2} \omega \Psi_i \cdot \partial_{\mu}\Psi_i-\dfrac{1}{4}(\partial_{\mu}\omega)\Psi_i^2 + \dfrac{h}{4}\Psi_i^2 (\partial_r)_{\mu}
	\end{align}
	Then, we have \begin{align}
		\begin{aligned}
			K^{N,\omega,h}[\Psi_i]:=Div(J_{\mu}^{X,\omega,M} [\Psi] )= K^N[\Psi_i]+ \dfrac{1}{2}\omega \left(\nabla^{a}\Psi_i\nabla_a\Psi_i + V\Psi_i^2\right) \\
			+ \dfrac{1}{4}\left (h'+\dfrac{2}{r}h\right )\Psi_i^2 +\dfrac{1}{2}h\Psi_i \partial_r\Psi_i.\\
			\Rightarrow K^{N,\omega,h}[\Psi_i]= F_{vv}(\partial_{v}\Psi_i)^2 + \left(F_{rr}+\dfrac{\omega}{2}D\right)(\partial_r \Psi_i)^2 +  \left(F_{\slashed{\nabla}}+ \dfrac{\omega}{2}\right) \abs{\slashed{\nabla} \Psi_i}^2 \\ 
			+ ( F_{vr}+ \omega)(\partial_{v}\Psi_i)(\partial_r \Psi_i) + \left (F_{00}^{(i)}-\dfrac{1}{2}V_i+ \dfrac{1}{4}h'+\dfrac{h}{2r}\right )\Psi_i^2 + 2\dfrac{h}{4}\Psi_i \partial_r\Psi_i. \label{KNwM}
		\end{aligned}
	\end{align}
	
	For simplicity, let us denote the above coefficients of $ (\partial_a\Psi_i \cdot \partial_b \Psi_i) $ by $ G_{ab} $ where $ a,b\in \br{ v,r, \slashed{\nabla}, 0}$ and define the vector field $ N $  in the region $ M\leq r \leq \frac{9M}{8} $ as 
	\begin{align}
		N^u(r)=16r, \  \ N^r(r)=- \dfrac{3}{2}r+M,  \label{N-choice}
	\end{align}
	which is timelike. In view of the discussion of the previous paragraph, we will define $ \omega $ such that $ G_{vr}\Big|_{r=M}=0, $     \ \ i.e. with the choice of $ N $ as above, we need $ \omega= -1   $
	constant.  \\ 
	Finally, we will see in the proposition below, there is an appropriate function $ h $ which makes the coefficient $ G_{00} $ positive definite.
	
	\begin{proposition} \label{N derivatives positivity} Let $  h(r) := 6 \frac{\sqrt{D}(1+3\sqrt{D})}{r}$ and $ N $ as in (\ref{N-choice}), then there exists a positive constant $ C $ depending only on $ M $ such that for all solutions $ \Psi_{i}^{^{(\ell)}} $ to (\ref{waveeq}), supported on the fixed frequency $ \ell \geq i, \ i\in\br{1,2} $, the current $  K^{N,-1,h}[\Psi_i]  $ is non-negative definite in $ \mathcal{A}_{N}:= \br{M\leq r\leq \frac{9M}{8}} $. In particular, we have \begin{align}
			\int_{S^2(r)}	K^{N,-1,h}[\Psi_i] \geq C \int_{S^2(r)} \left ((\partial_{v}\Psi_i)^2+ \sqrt{D}(\partial_{r}\Psi_i)^2  + \abs{\slashed{\nabla}\Psi_i}^2 + \dfrac{1}{r^2}\Psi_i^2\right )
		\end{align}
		\begin{proof}
			First, let's write down the coefficients $ G_{ab} $ adapted to the choice of $ N $ as in (\ref{N-choice}): \begin{align}
				\begin{aligned}
					&G_{vv}=16, \hspace{1cm} G_{vr}=\sqrt{D}(16\sqrt{D}+2), \hspace{1cm} G_{rr}= \dfrac{\sqrt{D}}{4}(2-\sqrt{D}),
					\\ & G_{\slashed{\nabla}}=\dfrac{1}{4}, \hspace{1cm} G_{0r}= 2 \dfrac{h}{4}, \hspace{1cm} G_{00}^{(i)}= F_{00}^{(i)}-\dfrac{1}{2}V_i+\left ( \dfrac{1}{4}h'+\dfrac{h}{2r}\right ). 
				\end{aligned}
			\end{align}
			Using Cauchy-Schwartz inequality we get \begin{align} \begin{aligned}
					G_{vr} (\partial_{v}\Psi_i)(\partial_{r}\Psi_i) = \left (\dfrac{\sqrt{D}}{\sqrt{2}}\cdot\partial_{r}\Psi_i\right )\cdot (\sqrt{2}(16\sqrt{D}+2)\cdot \partial_{v}\Psi_i)\\ \geq  - \dfrac{D}{4}(\partial_{r}\Psi_i)^2 - (16\sqrt{D}+2)^2(\partial_{v}\Psi_i)^2.
				\end{aligned}
			\end{align} 
			Similarly, we obtain \begin{align}
				\begin{aligned}
					G_{0r} (\Psi_i)\cdot (\partial_{r}\Psi_i)= 2 \dfrac{h}{4} \left (\dfrac{\sqrt{5}}{\sqrt{r}}\Psi_i\right )\cdot (\dfrac{\sqrt{r}}{\sqrt{5}}\cdot \partial_{r}\Psi_i) 
					\\ \geq  -\dfrac{5h}{4\cdot r} \Psi_i^2 - \dfrac{h\cdot r}{4\cdot 5} (\partial_{r} \Psi_i)^2.
				\end{aligned}
			\end{align}
			Therefore, going back to (\ref{KNwM}), we need to show that the new coefficients are non-negative definite in the region $ \mathcal{A}_N, $. For simplicity, we denote by $ x:=\sqrt{D} $ which in $ \mathcal{A}_{N} $ it satisfies $ 0\leq x \leq \frac{1}{9} $.\\ \\
			For the coefficient of $ (\partial_{v}\Psi_i)^2 $, which is $ 16-(16\sqrt{D}+2)^2 $, we have \begin{align*}
				16-(16x+2)^2= 4(1-8x)(3+8x) \geq \frac{4}{9}\cdot 3 = \frac{4}{3}, \hspace{1cm} \forall \ x\in \mathcal{A}_{N}.
			\end{align*}
			The coefficient of $  (\partial_{r}\Psi_i)^2 $ is \begin{align}
				\dfrac{x}{4}(2-x) - \dfrac{x^2}{4}- \dfrac{h\cdot r}{4\cdot 5}= \dfrac{1}{2}x(1-x)-\dfrac{6}{4\cdot 5}x(1+3x) = \dfrac{x}{5}(1-7x),
			\end{align}
			which again for $ x\leq \frac{1}{9} $, it is uniformly positive.
			
			We already know that $ G_{\slashed{\nabla}}=\dfrac{1}{4}>0 $, and it only remains to show positivity for the coefficient of $ \Psi_i^2 $. For that, we are going to need Poincare inequality but first, let's examine the zeroth order coefficient closer. We have, \begin{align}
				G_{00}^{(i)}-\dfrac{5h}{4r}= F_{00}^{(i)}- \dfrac{1}{2}V_i+\dfrac{1}{4}\left (h'-\dfrac{3h}{r}\right ).  \label{K^N-zeroth}
			\end{align}
			For $ i=1 $, we write \begin{align}
				F_{00}^{(1)}-\dfrac{1}{2}V_1= \dfrac{1}{r^2}\left( \ell (x^2-  x) -\left( \dfrac{3}{2} + x  - \dfrac{17 }{2}x^2 + 6 x^3  \right) \right)
			\end{align}
			The above expression is mostly negative, so borrowing from the coefficient of \\ $ G_{\slashed{\nabla}}=\frac{1}{4}=\frac{1}{16}+\frac{3}{16} $ by  using Poincare Inequality we obtain the extra term 
			\begin{align}
				F_{00}^{(1)}-\dfrac{1}{2}V_1 + \dfrac{3}{16}\dfrac{\ell(\ell+1)}{r^2}= \dfrac{1}{r^2} \left(\dfrac{3}{16} \ell^2+ \ell \left (x^2-  x+ \dfrac{3}{16}\right ) -\left( \dfrac{3}{2} + x  - \dfrac{17 }{2}x^2 + 6 x^3  \right) \right) 
			\end{align}
			Using the fact $ x\leq \dfrac{1}{9} $, it's easy to see that the above expression is uniformly positive definite for all $ \ell \geq 3  $ in $ \mathcal{A}_N $. However, for $ \ell=1,2 $ we need the help of the extra positive term in (\ref{K^N-zeroth}). In particular, 
			\begin{align}
				\dfrac{1}{4}\left (h'-\dfrac{3h}{r}\right )= \dfrac{3}{2}(1+x-18x^2)
			\end{align}
			Therefore, we obtain \begin{align}
				G_{00}^{(1)} + \dfrac{3}{16}\dfrac{\ell(\ell+1)}{r^2} =\dfrac{1}{r^2} \left(\dfrac{3}{16} \ell^2+ \ell \left (x^2-  x+ \dfrac{3}{16}\right ) +\left( \dfrac{1}{2} x  -\dfrac{37 }{2}x^2 - 6 x^3  \right) \right), 
			\end{align}
			which is positive for both $ \ell=1,2 $ in the region $ \mathcal{A}_N .$
			
			Similarly, we obtain the same result for the potential $ V_2^{(\ell)}(r) $ for every $ \ell\geq 2 $ in $ \mathcal{A}_N $, which concludes the proof. \\
		\end{proof}
	\end{proposition}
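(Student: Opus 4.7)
The plan is to exploit the general formula (\ref{KNwM}) for $K^{N,\omega,h}[\Psi_i]$ with the specific data $N^v=16r$, $N^r=-\tfrac{3}{2}r+M$, $\omega\equiv-1$, and $h=6\sqrt{D}(1+3\sqrt{D})/r$, and show pointwise (in $r\in\mathcal{A}_N$, after integrating on $S^2(r)$) that the resulting quadratic form dominates the claimed coercive integrand. The first step is simply to substitute and read off the coefficients $G_{vv},G_{rr},G_{\slashed{\nabla}},G_{vr},G_{00}^{(i)}$ together with the cross-term coefficient $G_{0r}=h/2$. Writing $x:=\sqrt{D}$, one observes $0\le x\le 1/9$ throughout $\mathcal{A}_N$, which is the numerical input that makes the whole estimate work; the specific choice $\omega=-1$ was designed precisely to kill $G_{vr}$ at $r=M$, while the choice $N^v=16r$ versus $N^r(M)=-M/2$ ensures $G_{vv}$ is large enough to dominate the remaining $G_{vr}$.

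The second step handles the two sign-indefinite cross terms via weighted Cauchy--Schwarz. For $G_{vr}(\partial_v\Psi_i)(\partial_r\Psi_i)$ I would pair $\tfrac{\sqrt{D}}{\sqrt{2}}\partial_r\Psi_i$ against $\sqrt{2}(16\sqrt{D}+2)\partial_v\Psi_i$, yielding a loss of $\tfrac{D}{4}(\partial_r\Psi_i)^2$ plus $(16\sqrt{D}+2)^2(\partial_v\Psi_i)^2$; the surviving coefficient of $(\partial_v\Psi_i)^2$ becomes $4(1-8x)(3+8x)\ge 4/3$ on $\mathcal{A}_N$. For $G_{0r}\Psi_i\partial_r\Psi_i$ I would pair $\tfrac{\sqrt{5}}{\sqrt{r}}\Psi_i$ against $\tfrac{\sqrt{r}}{\sqrt{5}}\partial_r\Psi_i$, leaving a loss of $\tfrac{hr}{20}(\partial_r\Psi_i)^2$ and a zeroth-order penalty $\tfrac{5h}{4r}\Psi_i^2$. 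Summing the losses, the coefficient of $(\partial_r\Psi_i)^2$ reduces to $\tfrac{x}{2}(1-x)-\tfrac{3}{10}x(1+3x)=\tfrac{x}{5}(1-7x)$, manifestly positive on $\mathcal{A}_N$, while $G_{\slashed{\nabla}}=\tfrac{1}{4}$ is untouched and already positive.

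The main obstacle is the zeroth-order coefficient $G_{00}^{(i)}-\tfrac{5h}{4r}=F_{00}^{(i)}-\tfrac{1}{2}V_i+\tfrac{1}{4}(h'-\tfrac{3h}{r})$, which is generically negative because of the sign of $V_i^{(\ell)}$ and the bounded negative part of $F_{00}^{(i)}$ near the horizon. Here I would borrow a fraction, e.g. $3/16$, of $G_{\slashed{\nabla}}$ via the Poincar\'e inequality (Proposition \ref{poincare}), replacing $|\slashed{\nabla}\Psi_i|^2$ by $\tfrac{\ell(\ell+1)}{r^2}\Psi_i^2$, so that the zeroth-order term gains a factor $\tfrac{3}{16}\ell^2/r^2$ that grows quadratically in $\ell$. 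On $\mathcal{A}_N$ the negative polynomial contributions are uniformly bounded, so this quadratic-in-$\ell$ gain dominates for all $\ell\ge 3$ in case $i=1$ and all $\ell\ge 2$ in case $i=2$. The low-frequency exceptions $\ell=1$ (for $V_1$) and $\ell=2$ (for $V_2$) are precisely what motivates the particular $h$: a direct computation gives $\tfrac{1}{4}(h'-\tfrac{3h}{r})=\tfrac{3}{2}(1+x-18x^2)$, which is strictly positive on $\mathcal{A}_N$ and supplies exactly the extra coercivity needed to close these two remaining cases. Verifying this amounts to checking that a handful of explicit quadratics in $x$ are positive on $0\le x\le 1/9$, which is a finite deterministic computation; the key conceptual point is that the combined use of Poincar\'e for high $\ell$ and the $h$-correction for low $\ell$ is what forces the full family of frequencies into one uniform estimate.
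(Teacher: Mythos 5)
Your proposal reproduces the paper's argument essentially verbatim: the same substitution and coefficient table, the same weighted Cauchy--Schwarz pairings (with weights $\sqrt{D}/\sqrt{2}$ versus $\sqrt{2}(16\sqrt{D}+2)$ for the $G_{vr}$ cross-term and $\sqrt{5}/\sqrt{r}$ versus $\sqrt{r}/\sqrt{5}$ for the $G_{0r}$ one), the same borrowing of a $3/16$ fraction of $G_{\slashed{\nabla}}$ through the Poincar\'e inequality, and the same observation that the explicit polynomial $\tfrac{1}{4}(h'-3h/r)=\tfrac{3}{2}(1+x-18x^2)$ rescues the low frequencies. The only wobble is a small internal inconsistency in your description of whether Poincar\'e alone covers $\ell=2$ in the $i=2$ case before you invoke $h$, but since the $h$-correction is used anyway this has no bearing on the correctness of the argument.
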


	\paragraph{The \texorpdfstring{$ J_{\mu}^{N,\delta,\tilde{h}}[\Psi_i] $}{PDFstring} current.}
	Outside the region $ \mathcal{A}_N $, the scalar current $  K^{N,-1,h}[\Psi_i]$  takes negative values as well. We extend the current $ J_{\mu}^{N,-1,h} $ introducing cut-off functions so that the corresponding scalar current will be non-negative definite away from the horizon except  maybe a compact region not including the photon sphere, in which we control it using  Morawetz estimates.
	
	In particular, we extend the vectorfield $ N $ outside $ \mathcal{A}_N $ as \begin{align}
		\begin{aligned}
			N^v(r) >0, \hspace{0.5cm} \forall\ r\geq M \hspace{1cm} \text{and} \hspace{1cm} N^v(r)=1, \ \ \ \ \forall\  r\geq \dfrac{8M}{7}, \\
			N^r(r)\leq 0, \hspace{0.5cm} \forall \ r\geq M \hspace{1cm} \text{and} \hspace{1cm} N^r(r)=0, \ \ \ \ \forall \ r\geq \dfrac{8M}{7},
		\end{aligned}
	\end{align}
	and $ N $ remains an $ \phi_{\tau}^{T} $- invariant timelike vectorfield.   \\
	Away from $ \mathcal{A}_N $, we extend both $ \omega(r)=-1 $ and $ h(r) $ by introducing the smooth cut-off functions $ \delta : [M, +\infty) \to \mathbb{R} $ 
	such that $ \delta=1 $ for $ r\in \left [M, \frac{9M}{8}\right ] $, and $ \delta(r)=0 $ for $ r \geq \frac{8M}{7} $, while also, $ \tilde{h}(r)= h(r)$ for $ r\in \left [M, \frac{9M}{8}\right ]$, and $  \tilde{h}(r)=0 $ for $ r\geq \frac{8M}{7}. $ Now, we consider  the current \begin{align}
		\begin{aligned}
			J_{\mu}^{N,\delta,\tilde{h}}[\Psi_i] : =  J_{\mu}^N[\Psi_i] - \dfrac{1}{2} \delta \Psi_i \cdot \partial_{\mu}\Psi_i + \dfrac{\tilde{h}}{4}\Psi_i^2 (\partial_r)_{\mu},
		\end{aligned}
	\end{align}
	and  we make the following observations, 
	\begin{enumerate}
		\item In the $ \mathcal{A}_N $ region, we have $ J_{\mu}^{N,\delta,\tilde{h}}[\Psi_i]  \equiv J_{\mu}^{N,-1,h}[\Psi_i] $,  and thus $ K^{N,\delta,\tilde{h}}[\Psi_i] =K^{N,-1,h}[\Psi_i] $. \\
		\item For $ r\geq \frac{8M}{7} $, we have $ K^{N,\delta,\tilde{h}}[\Psi_i]= K^{T}= 0$. \\ 
		\item For $ \frac{9M}{8}\leq r \leq \frac{8M}{7} < 2M, $ the scalar current $  K^{N,\delta,\tilde{h}}[\Psi_i]$ can be negative in general, however, it can be controlled by the Morawetz estimates which are non-degenerate away from the photon sphere and the event horizon.
	\end{enumerate}
	
	\subsubsection{N-multiplier boundary terms. }
	In this section we treat the boundary terms that appear when applying the multiplier method for  $ N. $ First, observe that the vectorfield $ N $ captures the non-degenerate energy of a local observer near the horizon. Indeed, since both $ N, n_{\Sigma_{\tau}}^{\mu} $ are timelike everywhere in $ R(0,\tau) $, following exactly the same proof as in proposition \ref{posT}, we obtain \begin{align} \begin{aligned}
			\label{non-deg_energy}
			& \int_{S^2(r)} J_{\mu}^{N}[\Psi_i] n_{\Sigma_{\tau}}^{\mu} \geq C\int_{S^2(r)}  \left(  \left(\partial_{v} \Psi_i\right)^{2} +  \left(\partial_{r} \Psi_i\right)^{2} + |\slashed{\nabla} \Psi_i|^{2} + \dfrac{\ell(\ell+1)}{r^2}\Psi_i^{2}\right), \\
			& \int_{S^2(r)} J_{\mu}^{N}[\Psi_i] n_{\mathcal{H}^+}^{\mu} \geq C\int_{S^2(r)}  \left(  \left(\partial_{v} \Psi_i\right)^{2}  + |\slashed{\nabla} \Psi_i|^{2} + \dfrac{\ell(\ell+1)}{r^2}\Psi_i^{2}\right).
		\end{aligned}
	\end{align}
	where the constant $ C $ depends only on $ M $ and $ \Sigma_0 $ since $ N $ is $ \phi_{\tau}^T -$ invariant. 
	Now, regarding the current $  J_{\mu}^{N,\delta,\tilde{h}}[\Psi_i]   $, we have the following propositions. 
	\begin{proposition} \label{Jnti;da<Jn<JNtilda}
		There exists a constant $ C>0  $ depending only on $ M,\Sigma_0 $ such that for all solutions $\Psi_i^{^{(\ell)}}  $ to (\ref{waveeq}), supported on the fixed frequency $ \ell\geq i, $ $ i\in \lbrace 1,2 \rbrace$ we have \begin{align} 
			\dfrac{1}{C} \int_{\Sigma_{\tau}}J_{\mu}^{N,\delta,\tilde{h}}[\Psi_i] n_{\Sigma_{\tau}}^{\mu}    \leq	\int_{\Sigma_{\tau}} J_{\mu}^{N}[\Psi_i]n_{\Sigma_{\tau}}^{\mu}  \leq 2 \int_{\Sigma_{\tau}}
			J_{\mu}^{N,\delta,\tilde{h}}[\Psi_i] n_{\Sigma_{\tau}}^{\mu} + C\int_{\Sigma_{\tau}} J_{\mu}^{T}[\Psi_i] n_{\Sigma_{\tau}}^{\mu}
		\end{align} 
		\begin{proof}
			To prove the left hand side inequality, notice that \begin{align} \label{J^N,d,ht-JN}
				\begin{aligned}
					J_{\mu}^{N,\delta,\tilde{h}}[\Psi_i] n^{\mu} &=  J_{\mu}^N[\Psi_i]n^{\mu} - \dfrac{1}{2} \delta \Psi_i \cdot \partial_{\mu}\Psi_in^{\mu} + \dfrac{\tilde{h}}{4}\Psi_i^2 (\partial_r)_{\mu}n^{\mu} \\
					&=  J_{\mu}^N[\Psi_i]n^{\mu} - \dfrac{1}{2}\delta \Psi_i \cdot \partial_{v}\Psi_i n^{v} - \dfrac{1}{2}\delta \Psi_i \cdot \partial_{r}\Psi_i n^{r}+ \dfrac{\tilde{h}}{4}\Psi_i^2 n^v 
				\end{aligned}
			\end{align}
			and we use Cauchy-Schwartz inequality along with relation (\ref{non-deg_energy}). 
			
			For the right-hand side inequality, using again Cauchy-Schwartz we can write \begin{align*}
				- \dfrac{1}{2}\delta \Psi_i \cdot \partial_{r}\Psi_i n^{r} = - \dfrac{1}{2}\delta \dfrac{\Psi_i}{\varepsilon} \cdot \varepsilon \partial_{r}\Psi_i n^{r} \geq \\
				\geq - \dfrac{1}{4} \delta \dfrac{(\Psi_i)^2}{\varepsilon}\abs{n^v} -  \dfrac{1}{4} \delta \varepsilon (\partial_{r}\Psi_i)^2\abs{n^v}.
			\end{align*}
			Doing the same for the term $  - \dfrac{1}{2}\delta \Psi_i \cdot \partial_{v}\Psi_i n^{v} $, and the fact that $ n^v,n^r, \tilde{h} $ are bounded, we choose $ \varepsilon>0 $ small enough such that relation (\ref{J^N,d,ht-JN}) yields 
			\begin{align}
				\begin{aligned}
					\int_{S^2(r)} J_{\mu}^{N,\delta,\tilde{h}}[\Psi_i] n^{\mu}  \geq \int_{S^2(r)} \dfrac{1}{2}J_{\mu}^N[\Psi_i]n^{\mu} - \dfrac{\tilde{c}}{\varepsilon} \Psi_i^2
				\end{aligned}
			\end{align}
			Therefore, using Proposition \ref{posT} we obtain \begin{align*}
				\int_{S^2(r)} J_{\mu}^N[\Psi_i]n^{\mu} \leq 2\int_{S^2(r)}  J_{\mu}^{N,\delta,\tilde{h}}[\Psi_i] n^{\mu}  + C J_{\mu}^T [\Psi_i]n^{\mu},
			\end{align*}
			for a big enough constant $ C $, depending only on $ M,\Sigma_0 $. \\
		\end{proof}
		
	\end{proposition}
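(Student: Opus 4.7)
The plan is to exploit the explicit relation between the two currents, controlling the additional modifying terms with Cauchy--Schwarz and Poincar\'e, and then invoking the positivity results already proved for $J^{N}$ and $J^{T}$ fluxes. Concretely, writing $n = n_{\Sigma_\tau}$ and $n^{\mu} = n^v\partial_v + n^r\partial_r$, one has the pointwise identity
\begin{equation*}
J_{\mu}^{N,\delta,\tilde{h}}[\Psi_i] n^{\mu} \;=\; J_{\mu}^{N}[\Psi_i] n^{\mu} \;-\; \tfrac{1}{2}\delta\, \Psi_i\, \partial_v\Psi_i\, n^v \;-\; \tfrac{1}{2}\delta\, \Psi_i\, \partial_r\Psi_i\, n^r \;+\; \tfrac{\tilde{h}}{4}\Psi_i^2\, n^v .
\end{equation*}
Since $\delta,\tilde{h}$ are smooth cut-offs supported in the compact region $\{M\leq r\leq 8M/7\}$ and $n^v,n^r$ are bounded, all the coefficients appearing above are uniformly bounded by a constant depending only on $M,\Sigma_0$.

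For the left-hand inequality, I would apply Cauchy--Schwarz with unit weight to the two cross terms, producing a sum of $(\partial_v\Psi_i)^2$, $(\partial_r\Psi_i)^2$ and $\Psi_i^2$ contributions with bounded coefficients. The derivative quadratic terms are absorbed directly by the non-degenerate lower bound (\ref{non-deg_energy}) for $J_{\mu}^{N}[\Psi_i] n^{\mu}$; the $\Psi_i^2$ contributions, being compactly supported in $r$, are controlled by the angular term $\tfrac{\ell(\ell+1)}{r^2}\Psi_i^2$ that also appears in the lower bound of $J^{N}[\Psi_i] n^{\mu}$ (here using that $\Psi_i$ is supported on a fixed frequency $\ell\geq i\geq 1$, so Poincar\'e is not strictly needed at this step but could be invoked if convenient). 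This yields $J^{N,\delta,\tilde{h}}[\Psi_i] n^{\mu}\leq C\, J^{N}[\Psi_i] n^{\mu}$ pointwise on $\Sigma_\tau$, hence after integration.

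For the right-hand inequality, I would rearrange the identity as
\begin{equation*}
J_{\mu}^{N}[\Psi_i] n^{\mu} \;=\; J_{\mu}^{N,\delta,\tilde{h}}[\Psi_i] n^{\mu} \;+\; \tfrac{1}{2}\delta\, \Psi_i\, \partial_v\Psi_i\, n^v \;+\; \tfrac{1}{2}\delta\, \Psi_i\, \partial_r\Psi_i\, n^r \;-\; \tfrac{\tilde{h}}{4}\Psi_i^2\, n^v ,
\end{equation*}
and apply Cauchy--Schwarz with a small parameter $\varepsilon>0$ to each cross term, so that the resulting derivative squared contributions $\varepsilon(\partial_v\Psi_i)^2$ and $\varepsilon(\partial_r\Psi_i)^2$ can be absorbed into at most half of the non-degenerate $J^{N}[\Psi_i] n^{\mu}$ on the left, via (\ref{non-deg_energy}). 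After this absorption one is left with a bound of the form $J^{N}[\Psi_i] n^{\mu} \leq 2\, J^{N,\delta,\tilde{h}}[\Psi_i] n^{\mu} + C_\varepsilon \, \Psi_i^2\cdot\mathbf{1}_{\{M\leq r\leq 8M/7\}}$.

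The last step, which I regard as the only delicate one, is to dispose of the zeroth-order term $\int_{\Sigma_\tau}\Psi_i^2$ localized near the horizon by the $T$-flux. Here I would invoke Proposition \ref{posT}: since $\Psi_i$ is supported on $\ell\geq i\geq 1$, Poincar\'e on each sphere $S^2_{v,r}$ gives $\int_{S^2_{v,r}}\Psi_i^2\leq \tfrac{r^2}{\ell(\ell+1)}\int_{S^2_{v,r}}|\slashed{\nabla}\Psi_i|^2$, and the coercive lower bound of Proposition \ref{posT} shows $\int_{\Sigma_\tau}|\slashed{\nabla}\Psi_i|^2\leq C\int_{\Sigma_\tau}J_\mu^{T}[\Psi_i]n_{\Sigma_\tau}^\mu$. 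Combining these yields the $+\,C\int J^{T}[\Psi_i]n^\mu$ correction on the right-hand side and completes the argument. The main obstacle is precisely this last step: one must be careful that the coercivity in Proposition \ref{posT} holds uniformly in $\tau$ (ensured by $T$-invariance of the foliation) and that the localization in $r$ near the horizon is compatible with the global Poincar\'e estimate needed.
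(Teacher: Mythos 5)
Your proposal is correct and follows essentially the same route as the paper's own proof: you start from the same pointwise identity relating $J^{N,\delta,\tilde h}$ to $J^N$, apply Cauchy--Schwarz with unit weight for the upper bound and with a small parameter $\varepsilon$ for the lower bound, absorb the derivative terms into the coercive lower bound for $J^N_\mu n^\mu$, and dispose of the residual compactly supported $\Psi_i^2$ term by Proposition \ref{posT}. The only difference is that you spell out the Poincar\'e step and the $T$-invariance of the foliation explicitly, which the paper invokes implicitly via Proposition \ref{posT}.
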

	
	We now control the boundary term over $ \mathcal{H}^+ $. In particular, we have the following proposition. 
	\begin{proposition} \label{Jtilda<Jn horizon}
		For all solutions $\Psi_i^{^{(\ell)}} $ to (\ref{waveeq}), supported on the fixed frequency $ \ell\geq i,$ $ i\in \lbrace 1,2 \rbrace$ we have \begin{align}
			\int_{\mathcal{H}^+}J_{\mu}^{N,\delta,\tilde{h}}[\Psi_i] n^{\mu}_{\mathcal{H}^+} \geq \dfrac{1}{2}  \int_{\mathcal{H}^+}J_{\mu}^{N}[\Psi_i] n^{\mu}_{\mathcal{H}^+} - C \int_{\mathcal{H}^+}J_{\mu}^{T}[\Psi_i] n^{\mu}_{\mathcal{H}^+}
		\end{align}
		for a positive constant $ C $ depending only on $ M,\Sigma_0 $.
	\end{proposition}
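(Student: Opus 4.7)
The plan is a short Cauchy--Schwarz argument, structurally parallel to Proposition \ref{Jnti;da<Jn<JNtilda} but simpler, since on the horizon the null normal $n^\mu_{\h}$ can be chosen to be $T=\partial_v$ and therefore carries no $\partial_r$ component. On $\h\equiv\{r=M\}$ we have $D(M)=0$, so by the explicit choice $h(r)=6\sqrt{D}(1+3\sqrt D)/r$ made in Proposition \ref{N derivatives positivity} (together with the compatible cut-off construction) $\tilde h(M)=h(M)=0$, while $\delta(M)=1$. Taking $n^\mu_{\h}=T^\mu$, this reduces the modified flux on $\h$ to
\begin{align*}
J_\mu^{N,\delta,\tilde h}[\Psi_i]\,n^\mu_{\h} \;=\; J_\mu^N[\Psi_i]\,n^\mu_{\h} \;-\; \tfrac{1}{2}\,\Psi_i\,\partial_v\Psi_i,
\end{align*}
so the only indefinite contribution to absorb is the cross-term $-\tfrac{1}{2}\Psi_i\partial_v\Psi_i$.

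For any $\varepsilon>0$, Cauchy--Schwarz yields pointwise
\[
-\tfrac{1}{2}\,\Psi_i\,\partial_v\Psi_i \;\geq\; -\tfrac{\varepsilon}{4}\,\Psi_i^{\,2} \;-\; \tfrac{1}{4\varepsilon}\,(\partial_v\Psi_i)^2.
\]
Integrating over $\h(0,\tau)$, I control each piece using (\ref{non-deg_energy}) on $\h$: the hypothesis $\ell\geq i\geq 1$ guarantees the zeroth-order lower bound $\tfrac{\ell(\ell+1)}{M^2}\Psi_i^2\geq \tfrac{2}{M^2}\Psi_i^2$ is available there, so $\int_{\h}\Psi_i^2 \leq C_1\int_{\h}J_\mu^N[\Psi_i]n^\mu_{\h}$ for a constant $C_1=C_1(M,\Sigma_0)$, while $\int_{\h}(\partial_v\Psi_i)^2 = \int_{\h}J_\mu^T[\Psi_i]n^\mu_{\h}$ trivially. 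Combining,
\[
\int_{\h}J_\mu^{N,\delta,\tilde h}[\Psi_i]n^\mu_{\h} \;\geq\; \Bigl(1-\tfrac{\varepsilon C_1}{4}\Bigr)\int_{\h}J_\mu^N[\Psi_i]n^\mu_{\h} \;-\; \tfrac{1}{4\varepsilon}\int_{\h}J_\mu^T[\Psi_i]n^\mu_{\h},
\]
and choosing $\varepsilon\leq 2/C_1$ makes the first coefficient $\geq 1/2$, yielding the claim with $C=C(M,\Sigma_0)$.

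I do not expect any genuine obstacle. The essential structural input is the vanishing $\tilde h(M)=0$, built into the choice of $h$ in Section \ref{The Vector Field N} through the $\sqrt D$ prefactor: had $\tilde h(M)$ been nonzero, an indefinite $\tilde h(M)\Psi_i^2$ term would survive on $\h$ with no companion derivative term to absorb it against (since $n^\mu_{\h}$ has no $\partial_r$ component), and the stated inequality could fail at the lowest admissible angular frequency.
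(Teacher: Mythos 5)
Your argument is correct and coincides with the paper's own proof: on $\h$ the flux reduces to $J^N_\mu n^\mu - \tfrac12\Psi_i\partial_v\Psi_i$ because $\tilde h(M)=0$ and $\delta(M)=1$, and the indefinite cross-term is absorbed by Cauchy--Schwarz into $J^N$ (via the zeroth-order coercivity of (\ref{non-deg_energy})) and $J^T$. The only difference is cosmetic: you spell out why $\tilde h(M)=0$ (via the $\sqrt D$ prefactor in $h$) and make the absorption constant explicit, whereas the paper leaves the last step implicit.
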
 
	\begin{proof}
		With the following convention in mind, $ n_{\mathcal{H}^+} \equiv T $, we write down equation (\ref{J^N,d,ht-JN}) on the horizon $ \mathcal{H}^+ $ and we obtain 
		\begin{equation}
			J_{\mu}^{N,\delta,\tilde{h}}[\Psi_i] n^{\mu}_{\mathcal{H}^+} = J_{\mu}^{N}[\Psi_i] n^{\mu}_{\mathcal{H}^+} - \dfrac{1}{2}\Psi_i \cdot \partial_{v} \Psi_i,
		\end{equation}
		since $ n^r_{\mathcal{H}^+}=\tilde{h}(M)=0 $ and $ \delta(M)=1. $
		Using Cauchy Schwartz we write 
		\begin{align*}
			\int_{S^2(r)}J_{\mu}^{N,\delta,\tilde{h}}[\Psi_i] n^{\mu}_{\mathcal{H}^+}&=	\int_{S^2(r)}J_{\mu}^{N}[\Psi_i] n^{\mu}_{\mathcal{H}^+}-\dfrac{1}{2}  \cdot \Psi_i \cdot \partial_{v} \Psi_i \geq \int_{S^2(r)} J_{\mu}^{N}[\Psi_i] n^{\mu}_{\mathcal{H}^+} - \dfrac{\varepsilon}{4} (\Psi_i)^2 - \dfrac{1}{4\varepsilon} (\partial_{v}\Psi_i)^2\\&  \geq \int_{S^2(r)} \dfrac{1}{2}J_{\mu}^{N}[\Psi_i] n^{\mu}_{\mathcal{H}^+} - \dfrac{1}{4\varepsilon} J_{\mu}^{T}[\Psi_i] n^{\mu}_{\mathcal{H}^+},
		\end{align*}
		for $ \varepsilon>0 $ sufficiently small, which concludes the proof. \\
	\end{proof}
	
	\paragraph{Uniform Boundedness of Local Observer's Energy and Integrated Local Energy.}
	We combine our results above to obtain an integrated local energy estimate which captures the trapping effect at the horizon $ \h $ due to the degeneracy of the redshift effect.
	
	\begin{theorem}\label{Nuniform }
		There exists a constant $ C>0 $ which depends on $ M,\Sigma_0 $ such that for all solutions $\Psi_i^{^{(\ell)}} $ to (\ref{waveeq}), supported on the fixed frequency $ \ell\geq i,$ $ i\in \lbrace 1,2 \rbrace$ \begin{align} 
			\begin{aligned}
				\int_{\Sigma_{\tau}} J_{\mu}^{N}[\Psi_i] &n^{\mu}_{\Sigma_{\tau}} +  \int_{\mathcal{H}^+}J_{\mu}^{N}[\Psi_i] n^{\mu}_{\mathcal{H}^+} 
				\\ + &\int_{\mathcal{A}_{N}} (\partial_{v}\Psi_i)^2+ \sqrt{D}(\partial_{r}\Psi_i)^2  + \abs{\slashed{\nabla}\Psi_i}^2 + \dfrac{1}{r^2}\Psi_i^2\
				\leq \  C 	\int_{\Sigma_{0}} J_{\mu}^{N}[\Psi_i] n^{\mu}_{\Sigma_{0}}
			\end{aligned}
		\end{align}
	\end{theorem}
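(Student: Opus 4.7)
The plan is to apply the divergence theorem (Stokes' identity \eqref{Stokes'}) to the modified current $J_\mu^{N,\delta,\tilde h}[\Psi_i]$ on the region $\mathcal{R}(0,\tau)$, exploit the three-region decomposition built into the cut-offs $\delta,\tilde h$, and finally convert the modified fluxes on $\Sigma_\tau$ and $\mathcal{H}^+$ into genuine $J^N$-fluxes using Propositions \ref{Jnti;da<Jn<JNtilda} and \ref{Jtilda<Jn horizon}. Concretely, Stokes yields
\begin{align*}
\int_{\Sigma_0} J_\mu^{N,\delta,\tilde h}[\Psi_i]\, n_{\Sigma_0}^\mu \;=\; \int_{\Sigma_\tau} J_\mu^{N,\delta,\tilde h}[\Psi_i]\, n_{\Sigma_\tau}^\mu + \int_{\mathcal{H}^+(0,\tau)} J_\mu^{N,\delta,\tilde h}[\Psi_i]\, n_{\mathcal{H}^+}^\mu + \int_{\mathcal{R}(0,\tau)} K^{N,\delta,\tilde h}[\Psi_i].
\end{align*}
By construction, $K^{N,\delta,\tilde h}$ splits as follows: on $\mathcal{A}_N=\{M\le r\le 9M/8\}$ the cut-offs are inactive and $K^{N,\delta,\tilde h}=K^{N,-1,h}$ satisfies the lower bound of Proposition \ref{N derivatives positivity}; on $\{r\ge 8M/7\}$ the cut-offs have reduced the current to $K^T\equiv 0$; and on the transition strip $\mathcal{B}:=\{9M/8\le r\le 8M/7\}$ the current is uniformly bounded pointwise by a finite sum of quadratic expressions in the $1$-jet of $\Psi_i$ with coefficients in $L^\infty(\mathcal B)$.

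The transition strip $\mathcal{B}$ is compactly contained away from both the horizon $\{r=M\}$ and the photon sphere $\{r=2M\}$, so the non-degenerate Morawetz estimate of Theorem \ref{Spacetime non degenerate photon estimate} (or equally Theorem \ref{Morawetz-Spacetime}, whose weights $(r-M)(r-2M)^2/r^4$ are bounded below on $\mathcal{B}$) produces a uniform constant $C(M,\Sigma_0)$ with
\begin{align*}
\int_{\mathcal{R}(0,\tau)\cap \mathcal{B}} \bigl| K^{N,\delta,\tilde h}[\Psi_i]\bigr| \;\le\; C\sum_{k=0}^{1}\int_{\Sigma_0} J_\mu^{T}[T^k\Psi_i]\, n_{\Sigma_0}^\mu \;\le\; C\int_{\Sigma_0} J_\mu^N[\Psi_i]\, n_{\Sigma_0}^\mu,
\end{align*}
where the last inequality uses that $N$ is timelike and dominates $T$ plus the commutation $[T,\square_g-V_i]=0$ (so $T\Psi_i$ satisfies \eqref{waveeq}, and Proposition \ref{T estimate} applies at the level of $T\Psi_i$ as well); the $T$-flux of $T\Psi_i$ is controlled by the $N$-flux of $\Psi_i$ via the coercivity computations already used in Proposition \ref{posT}. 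Substituting the Morawetz bound and the Proposition \ref{N derivatives positivity} lower bound into the divergence identity, then invoking Proposition \ref{Jnti;da<Jn<JNtilda} to replace $\int_{\Sigma_\tau}J^{N,\delta,\tilde h}\cdot n$ from below by $\tfrac12\int_{\Sigma_\tau}J^N\cdot n - C\int_{\Sigma_\tau}J^T\cdot n$ and Proposition \ref{Jtilda<Jn horizon} to do the same on $\mathcal{H}^+$, and using Proposition \ref{T estimate} to dominate the resulting $\int_{\Sigma_\tau}J^T\cdot n$ and $\int_{\mathcal{H}^+}J^T\cdot n$ terms by $\int_{\Sigma_0}J^T\cdot n\le \int_{\Sigma_0}J^N\cdot n$, together with the upper bound $\int_{\Sigma_0}J^{N,\delta,\tilde h}\cdot n\le C\int_{\Sigma_0}J^N\cdot n$ (second half of Proposition \ref{Jnti;da<Jn<JNtilda}) for the initial-data side, yields the claimed estimate.

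The main obstacle is the transition strip $\mathcal{B}$, where the modified bulk current is sign-indefinite; the key observation that makes the argument work is that $\mathcal{B}$ is separated from both the horizon trapping at $\{r=M\}$ (where the redshift degenerates) and the photon sphere trapping at $\{r=2M\}$, so the Morawetz weights are uniformly positive there and a single application of Theorem \ref{Spacetime non degenerate photon estimate} absorbs that contribution at the price of one extra derivative on the data; this loss is harmless because, by commuting with $T$, it can be reinstated as a bound in terms of the $N$-energy of $\Psi_i$ on $\Sigma_0$ alone. The degeneracy $\sqrt D(\partial_r\Psi_i)^2$ in the $\mathcal{A}_N$ bulk, inherited from Proposition \ref{N derivatives positivity}, is precisely the sharp manifestation of horizon trapping for the extremal problem and is not improvable at this stage.
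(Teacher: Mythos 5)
Your overall architecture is the paper's: apply the divergence identity to the cut-off current $J_\mu^{N,\delta,\tilde h}[\Psi_i]$, split $K^{N,\delta,\tilde h}$ over $\mathcal{A}_N$, the transition strip, and $\{r\ge 8M/7\}$, absorb the transition-strip bulk by a Morawetz estimate, convert the modified boundary fluxes to $J^N$-fluxes via Propositions \ref{Jnti;da<Jn<JNtilda} and \ref{Jtilda<Jn horizon}, and close with $T$-flux conservation (Proposition \ref{T estimate}) and the domination $J^T\lesssim J^N$. That part is fine and matches the paper's proof step by step.

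However, there is a genuine gap in the step where you absorb the transition-strip bulk. Your default choice, Theorem \ref{Spacetime non degenerate photon estimate}, costs one $T$-commutation, and you then assert
\[
\sum_{k=0}^{1}\int_{\Sigma_0} J_\mu^{T}[T^k\Psi_i]\, n_{\Sigma_0}^\mu \;\le\; C\int_{\Sigma_0} J_\mu^N[\Psi_i]\, n_{\Sigma_0}^\mu,
\]
claiming that the loss of a derivative ``can be reinstated as a bound in terms of the $N$-energy of $\Psi_i$ on $\Sigma_0$ alone.'' This is false: $J^T[T\Psi_i]$ restricted to $\Sigma_0$ involves genuine second-order quantities $(\partial_v^2\Psi_i)^2,\ D(\partial_v\partial_r\Psi_i)^2,\ |\slashed\nabla\,\partial_v\Psi_i|^2$, none of which is pointwise controlled by the $1$-jet appearing in $J^N[\Psi_i]$ on the initial slice. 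The commutation $[T,\square_g-V_i]=0$ lets $T\Psi_i$ solve the same equation, but it does not collapse a second-order Cauchy norm to a first-order one; the ``coercivity computations of Proposition \ref{posT}'' only relate $J^T$-flux to the $1$-jet of its argument, they do not reduce derivative order. Proposition \ref{T estimate} is a boundedness statement, not a smoothing statement. With that choice, the final estimate would end up reading
\[
\int_{\Sigma_\tau} J^N[\Psi_i]\cdot n \ \le\ C\left(\int_{\Sigma_0} J^N[\Psi_i]\cdot n + \int_{\Sigma_0} J^T[T\Psi_i]\cdot n\right),
\]
which is not the statement of Theorem \ref{Nuniform }.

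The correct choice, which you mention parenthetically but do not use, is the degenerate Morawetz estimate of Theorem \ref{Morawetz-Spacetime}. Its weights $(r-M)(r-2M)^2/r^4$ and $(r-M)/r^4$ are uniformly bounded below on the compact strip $\mathcal{B}\subset(M,2M)$, and since $D$ is also uniformly bounded below there, the change of coordinates $(t,r^\star)\leftrightarrow(v,r)$ converts $(\partial_t\Psi_i)^2$, $(\partial_{r^\star}\Psi_i)^2$, $|\slashed\nabla\Psi_i|^2$, $\Psi_i^2$ into everything that appears in $K^{N,\delta,\tilde h}$ on $\mathcal{B}$, with constants depending only on $M$. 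This gives
\[
\int_{\mathcal{R}(0,\tau)\cap\mathcal{B}} \bigl|K^{N,\delta,\tilde h}[\Psi_i]\bigr| \ \le\ C\int_{\Sigma_0} J_\mu^T[\Psi_i]\,n_{\Sigma_0}^\mu\ \le\ C\int_{\Sigma_0} J_\mu^N[\Psi_i]\,n_{\Sigma_0}^\mu
\]
with no commutation, and the rest of your argument then closes exactly as written.
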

	\begin{proof}
		We apply Stoke's theorem for the current $  J_{\mu}^{N,\delta,\tilde{h}}[\Psi_i]  $ in the region $ \mathcal{R}(0,\tau) $ to obtain
		\begin{align}
			\int_{\Sigma_{\tau}}J_{\mu}^{N,\delta,\tilde{h}}[\Psi_i] n^{\mu}_{\Sigma_{\tau}}+ \int_{\mathcal{R}(0,\tau) } K^{N,\delta,\tilde{h}} +  \int_{\mathcal{H}^+}J_{\mu}^{N,\delta,\tilde{h}}[\Psi_i] n^{\mu}_{\mathcal{H}^+}= \int_{\Sigma_{0}}J_{\mu}^{N,\delta,\tilde{h}}[\Psi_i] n^{\mu}_{\Sigma_{0}}
		\end{align}
		We have seen before that the spacetime term $ K^{N,\delta,\tilde{h}} $ is non-negative definite everywhere, except in the region $   \frac{9M}{8}\leq r \leq \frac{8M}{7} < 2M. $ 
		In that region, we control the spacetime term using Theorem \ref{Morawetz-Spacetime} and we make it non-negative definite in terms of the $ 1- $jet of $ \Psi_{i}^{^{(\ell)}},$ for all $ r\geq M $. 
		Therefore, using also the left-hand side inequality of Proposition \ref{Jnti;da<Jn<JNtilda}, we obtain \begin{align}
			\int_{\Sigma_{\tau}}J_{\mu}^{N,\delta,\tilde{h}}[\Psi_i] n^{\mu}_{\Sigma_{\tau}} +  \int_{\mathcal{H}^+}J_{\mu}^{N,\delta,\tilde{h}}[\Psi_i] n^{\mu}_{\mathcal{H}^+} \leq C \int_{\Sigma_{0}}J_{\mu}^{N}[\Psi_i] n^{\mu}_{\Sigma_{0}} 
		\end{align}
		On the other hand, using the right-hand side inequality of Proposition \ref{Jnti;da<Jn<JNtilda} and Proposition \ref{Jtilda<Jn horizon} the above relation yields \begin{align}
			\begin{aligned}
				&\dfrac{1}{2}\int_{\Sigma_{\tau}}J_{\mu}^{N}[\Psi_i] n^{\mu}_{\Sigma_{\tau}} + \dfrac{1}{2} \int_{\mathcal{H}^+}J_{\mu}^{N}[\Psi_i] n^{\mu}_{\mathcal{H}^+}\\ \leq & \ C \int_{\Sigma_{0}}J_{\mu}^{N}[\Psi_i] n^{\mu}_{\Sigma_{0}}  + C' \left(\int_{\Sigma_{\tau}} J_{\mu}^T[\Psi_i]n_{\Sigma_{\tau}}^{\mu} + \int_{\mathcal{H}^+(0,\tau)} J_{\mu}^T[\Psi_i]n_{\mathcal{H}^+}^{\mu} \right), 
			\end{aligned}
		\end{align}
		for a sufficiently big constant $ C'>0 $ depending only on $ M, \Sigma_{0} $. However, note \begin{align*}
			\int_{\Sigma_{\tau}} J_{\mu}^T[\Psi_i]n_{S}^{\mu} + \int_{\mathcal{H}^+(0,\tau)} J_{\mu}^T[\Psi_i]n_{\mathcal{H}^+}^{\mu} =\int_{\Sigma_{0}} J_{\mu}^T[\Psi_i]n_{\Sigma_{0}}^{\mu},
		\end{align*}
		which concludes the proof.

	\end{proof}

	\section{Energy Estimates for \texorpdfstring{$ \partial_r\Psi_i $}{PDFstring}. \label{EE first}} 
	In the previous sections, we derived integrated local energy estimates that are degenerate with respect to $ (\partial_{r}\Psi_i^{^{(\ell)}})^2 $ on the horizon $ \mathcal{H}^+. $ In this section, we remove the aforementioned degeneracy near the horizon $ \mathcal{H}^+ $, at the cost of losing one derivative at the level of initial data.  For that, we need to commute our equation (\ref{waveeq}) with the vectorfield $ \partial_r. $ 
	 In order to derive the instability that occurs along the horizon $ \mathcal{H}^+ $, we need to control $ \partial_r^k \Psi_i^{^{(\ell)}} $, for $ k\in \mathbb{N} $ depending on $ \ell. $ 
	
	\paragraph{The \texorpdfstring{$\partial_{r}- $}{PDFstring}Commutator.}
	For any scalar function $ \phi $ we have \begin{align}
		[\square_g,\partial_r]\phi= -D' \partial_r\partial_r\phi + \dfrac{2}{r^2}\partial_{v}\phi -R'\partial_r\phi + \dfrac{2}{r}\slashed{\Delta}\phi.
	\end{align}
	The above relation can be justified by direct computation and using the fact that \\ $ [\slashed{\Delta},\partial_r]\phi= \frac{2}{r}\slashed{\Delta}\phi. $ Therefore,  taking a $ \partial_r -$ derivative of equation (\ref{waveeq}) yields \begin{align}
		\begin{aligned}
			&\square_{g} (\partial_r\Psi_i) - [\square_g,\partial_r]\Psi_i- V_i(\partial_r\Psi_i)-V_i'(r)\Psi_i=0 \\
			\Rightarrow \hspace{0.5cm}&  \square_{g} (\partial_r\Psi_i) - V(\partial_r\Psi_i)=[\square_g,\partial_r]\Psi_i+V_i'(r)\Psi_i.
		\end{aligned}
	\end{align}
	Since $ \Psi_i^{^{(\ell)}} $ is supported on a fixed frequency $ \ell\geq i, \ i\in\br{1,2} $, we can write $$ V_i'\Psi_i=-\dfrac{V_i'\cdot r^2}{\ell(\ell+1)}\slashed{\Delta}\Psi_i. $$Now,  let's define 
	\begin{align*}
		M[\Psi_i]&:= [\square_g,\partial_r]\Psi_i+V_i'(r)\Psi_i = \\
		&= 	-D' \partial_r\partial_r\Psi_i+ \dfrac{2}{r^2}\partial_{v}\Psi_i-R'\partial_r\Psi_i+ \left(\dfrac{2}{r} -\dfrac{V_i'\cdot r^2}{\ell(\ell+1)}\right) \slashed{\Delta}\Psi_i
	\end{align*}
	Then, we obtain the following equations for $ \partial_r\Psi_i^{^{(\ell)}} $
	\begin{align}
		\left (\square_{g}-V_i^{^{(\ell)}}\right )\left (\partial_r\Psi_i^{^{(\ell)}}\right )= M[\Psi_i^{^{(\ell)}}], \hspace{1cm} \ell \geq i, \ i \in \br{1,2}. \label{non homogeneuos for derivative of psi}
	\end{align}
	For simplicity, we will drop the $ (\ell) $ superscript since  each $\Psi_i,V_i  $ is defined in terms of $ \ell\geq i $ in the first place.
	
	\paragraph{Control far from the horizon.}
	
	Away from the horizon $ \mathcal{H}^+ $, the spacetime term $ \int_{R\cap \br{r\geq r_0}} (\partial_r\Psi_i)^2 $, $ r_0> M $, is already controlled in the previous section.
	In addition, commuting our equation (\ref{waveeq}) with $ T $, using local elliptic estimates,  and $ L^2 $ bounds obtained earlier we control all second order derivatives away from $ \mathcal{H}^+ $, for all solutions $ \Psi_i^{^{(\ell)}} $ to (\ref{waveeq}), $ \ell\geq i,  $ $ i\in \lbrace 1,2 \rbrace$.  In particular, \begin{align} \label{H^2(Sigma)}
		\sum_{|\alpha|=2} \left( \int_{\Sigma_{\tau}\cap \lbrace r\geq r_0>M \rbrace} \abs{\partial^{\alpha}\Psi_i}^2\right) \leq C \left(\int_{\Sigma_{0}}J_{\mu}^T[\Psi_i] n_{\Sigma_0}^{\mu} + \int_{\Sigma_{0}}J_{\mu}^T[T\Psi_i] n_{\Sigma_0}^{\mu} \right), 
	\end{align}
	for a constant $ C>0 $ depending only on $ M,r_0, \Sigma_0. $  The same result holds for the bulk integrals where,  however, we must exclude the photon sphere $ \lbrace r = 2M \rbrace $ due to trapping. Specifically, since $ T $ is timelike for $ r\geq r_0 $, using local elliptic estimates and Theorem \ref{Morawetz-Spacetime} yield 
	\begin{align} \label{H^2(A)}
		\sum_{|\alpha|=2} \left( \int_{R(0,\tau)\cap \lbrace  r_0\leq r \leq r_1<  2M \rbrace} \abs{\partial^{\alpha}\Psi_i}^2\right) \leq C \left(\int_{\Sigma_{0}}J_{\mu}^T[\Psi_i] n_{\Sigma_0}^{\mu} + \int_{\Sigma_{0}}J_{\mu}^T[T\Psi_i] n_{\Sigma_0}^{\mu} \right). 
	\end{align}
	
	\subsection{Higher order control near the horizon.} 
	In view of the above, we now restrict our attention close to the horizon $ \mathcal{H}^+ $. In particular, we are looking for an $ \phi_{\tau}^T - $ invariant timelike vectorfield  $ \x=\x^{v} \partial_v + \x ^ r \partial_r $ that will act as a multiplier for the function $ \partial_{r}\Psi_i. $ 
	
	Applying Stokes Theorem in the region $ R(0,\tau) $ for the energy current $ J^{\x}_{\mu}[\partial_{r}\Psi_i]$ yields 
	\begin{align}
		\begin{aligned} \label{Div partial_rPsi}
			\int_{\Sigma_{\tau}}J_{\mu}^{\x}[\partial_r\Psi_i] n^{\mu}_{\Sigma_{\tau}}+ \int_{\mathcal{R}(0,\tau) } \nabla^{\mu} J_{\mu}^{\x}[\partial_r\Psi_i] +  \int_{\mathcal{H}^+}J_{\mu}^{\x}[\partial_r\Psi_i] n^{\mu}_{\mathcal{H}^+}= \int_{\Sigma_{0}}J_{\mu}^{\x}[\partial_r\Psi_i] n^{\mu}_{\Sigma_{0}}.
		\end{aligned}
	\end{align}
	Since $ \x $ is timelike in a region close to the horizon $\mathcal{H}^+ $, we have \begin{align}
		\begin{aligned}
			\int_{S^2(r)}J_{\mu}^{\x}[\partial_r\Psi_i] n^{\mu}_{\Sigma_{\tau}} \geq C \left(\int_{S^2(r)} (\partial_v\partial_r\Psi_i)^2 + (\partial_r\partial_r\Psi_i)^2 + \abs{\slashed{\nabla}\partial_r\Psi_i}^2 + \dfrac{1}{r^2}(\partial_r\Psi_i)^2 \right), 
		\end{aligned}
	\end{align}
	for a constant $ C>0 $ depending only on $ M,\Sigma_0,\x $. In addition, \textbf{on the horizon} we have \begin{align}
		\begin{aligned} \label{x horizon flux}
			\int_{S^2(r)}J_{\mu}^{\x}[\partial_r\Psi_i] n^{\mu}_{\mathcal{H}^+}= \int_{S^2(r)} \x^v(M)(\partial_v\partial_r\Psi_i)^2-\dfrac{\x^r(M)}{2}\left(1+ \dfrac{2\cdot (-1)^{i-1}}{(\ell+i-1)}\right)\abs{\slashed{\nabla}\partial_r\Psi_i}^2 
		\end{aligned}
	\end{align}
	where $ \ell $ is the fixed frequency support of $ \Psi_i^{^{(\ell)}}. $
	
	In what follows, we are studying the bulk term that appears in (\ref{Div partial_rPsi}) and we define accordingly an appropriate vectorfield $ \x $ that will allow for control of the second order derivatives of $ \Psi_i $, while also for $ (\partial_r\Psi_i)^2 $ near the horizon $ \mathcal{H}^+. $ Specifically, from Proposition \ref{GenCur} we obtain \begin{align}
		\begin{aligned}
			\nabla^{\mu} J_{\mu}^{\x}[\partial_r\Psi_i]&= \dfrac{1}{2}Q[\partial_r\Psi_i]\cdot ^{(\x)}\pi - \dfrac{1}{2}\x(V_i)(\partial_r\Psi_i)^2 + \x(\partial_r\Psi_i)\cdot M[\Psi_i]= \\
			&= K^{\x}[\partial_r\Psi_i] + \left (\x^v (\partial_v\partial_r\Psi_i) + \x^r (\partial_r\partial_r\Psi_i) \right ) M[\Psi_i].
		\end{aligned}
	\end{align}
	Note above, we have the extra third term since $ \partial_r\Psi_i $ satisfies the non-homogeneous wave equation (\ref{non homogeneuos for derivative of psi}).
	Now, similarly to Section \ref{The Vector Field N}, we have \begin{align} \begin{aligned}
			K^{\x}[\partial_r\Psi_i] =& F_{vv}(\partial_v \partial_r\Psi_i)^2+ F_{rr}(\partial_r \partial_r\Psi_i)^2+ F_{\slashed{\nabla}}\abs{\slashed{\nabla} \partial_r\Psi_i}^2\\ &+ F_{vr}(\partial_{v}\partial_r\Psi_i)(\partial_r \partial_r\Psi_i) + F_{00}(\partial_r\Psi_i)^2,
		\end{aligned}
	\end{align}
	where the coefficients $ F_{ij} $ are defined as in relation (\ref{F coefficients}) for the vectorfield $ \x. $
	Expanding also $ M_i[\Psi_i] $ we obtain altogether \begin{equation}
		\begin{aligned} \label{E_i coefficients}
			\nabla^{\mu} J_{\mu}^{\x}[\partial_r\Psi_i] & = 
			E_{1}\left(\partial_{v} \partial_{r} \Psi_i\right)^{2}+E_{2}\left(\partial_{r} \partial_{r} \Psi_i\right)^{2}+E_{3}\left|\slashed{\nabla} \partial_{r} \Psi_i\right|^{2}+E_{4}\left(\partial_{v} \partial_{r} \Psi_i\right)\left(\partial_{v} \Psi_i\right)\\ &+ E_{5}\left(\partial_{v} \partial_{r} \Psi_i\right)\left(\partial_{r} \Psi_i\right) 
			+E_{6}\left(\partial_{r} \partial_{r} \Psi_i\right)\left(\partial_{v} \Psi_i\right)+E_{7}\left(\partial_{v} \partial_{r} \Psi_i\right) \slashed{\Delta} \Psi_i\\ & +E_{8}\left(\partial_{r} \partial_{r} \Psi_i\right) \slashed{\Delta} \Psi_i+ E_{9}\left(\partial_{v} \partial_{r} \Psi_i\right)\left(\partial_{r} \partial_{r} \Psi_i\right) 
			+E_{10}\left(\partial_{r} \partial_{r} \Psi_i\right)\left(\partial_{r} \Psi_i\right) \\
			& + E_{11}(\partial_r\Psi_i)^2,
		\end{aligned}
	\end{equation}
	with coefficients $ E_j $, $ j \in \br{1,..,11} $ 
	\begin{align}
		\begin{aligned}
			&E_{1}=\x^{v}_r,\hspace{1cm} E_{2}=D\left[\frac{\x^r_r}{2}-\frac{\x^{r}}{r}\right]-\frac{3 D^{\prime}}{2} \x^{r}, \hspace{1cm} E_{3}=-\frac{1}{2}\x_r^r,& \\
			&E_{4}=2 \frac{\x^{v}}{r^{2}},\hspace{1cm} E_{5}=-\x^{v} R^{\prime},\hspace{1cm} E_{6}=2 \frac{\x^{r}}{r^{2}},\hspace{1cm} E_{7}= \x^v\left(\dfrac{2}{r} -\dfrac{V_i'\cdot r^2}{\ell(\ell+1)}\right),\hspace{1cm}& \\ &E_{8}=\x^r\left(\dfrac{2}{r} -\dfrac{V_i'\cdot r^2}{\ell(\ell+1)}\right), \hspace{1cm} 
			E_{9}=D\x^v_r-D^{\prime} \x^{v}-2 \frac{\x^{r}}{r}, \hspace{1cm}  E_{10}=-\x^{r} R^{\prime}, &\\
			& E_{11} =-\dfrac{1}{2}\x(V_i)-V_i\left (\dfrac{\x_r^r}{2}+\dfrac{\x^r}{r}\right ).&	
		\end{aligned}
	\end{align}
	
	First, let's make some observations concerning the choice of $ \x $ and its consequence on the coefficients $ E_j. $ If $ \x $ is timelike everywhere, then $ \x^v(M)>0 $, $ \x^r(M)<0 $ and thus the same holds in a neighborhood of the horizon $ \mathcal{H}^+. $ This already ensures that $ E_2\geq 0 $, near the horizon, and it vanishes to first order on it. Indeed, 
	\begin{align}
		E_2=D\left[\frac{\x^r_r}{2}-\frac{\x^{r}}{r}\right]-\frac{3 D^{\prime}}{2} \x^{r}= \sqrt{D}\left(-\dfrac{\x^r}{r}(3-\sqrt{D})+\dfrac{\x_r^r}{2}\sqrt{D}\right),
	\end{align}
while the first term in the parenthesis is the dominant one and positive near $ \h. $

	Moreover, we have the freedom of choosing $ \x_r^v  $ and $ -\x_r^r $ positive and sufficiently large such that $ E_1>0, \ E_3>0 $ near $ \mathcal{H}^+ $. \\When it comes to controlling the coefficients $ E_j $ not corresponding to quadratic terms, part of the idea is to absorb them  in the first three terms $ E_1,E_2,E_3, $ and for that we need $ -\x_r^r(M) \gg -\frac{\x_r(M)}{M} $, as well as $ \x^v_r(M) \gg \frac{\x^v(M)}{M}$.
	
	For the reasons above, we restrict our attention to a region close to $ \mathcal{H}^+ $, $ \mathcal{A}_c := \br{M \leq r \leq r_c} $ (Figure \ref{Near horizon regions}) for an appropriate $ r_c\in (M,2M) $, to be determined at the end, and choose $ \x $ with the above requirements. Outside $ \mathcal{A}_c $ we extend $ \x $ as  \begin{align}
		\x^v >0, \ -\x^r>0 \ \ \ \text{for} \ r\leq r_d \ \ \text{and} \ \ \x\equiv 0, \ \ \text{for} \ \ r \geq r_d,
	\end{align}
	for an $ r_d $ satisfying $ M<r_c<r_d<2M. $  
	
	\begin{figure}[hbt!]
		\centering
		\def\svgscale{0.7}
			\includegraphics[scale=0.7]{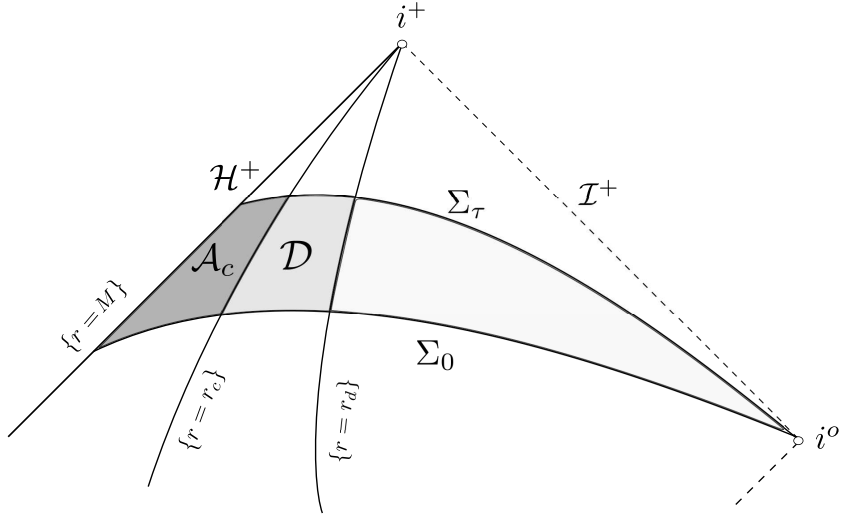}
		\caption{Regions near the event horizon $ \h $.}
		\label{Near horizon regions}
	\end{figure}
	
	\subsection{Hardy Inequalities and Lemmas}
	Before we treat the bulk terms in (\ref{E_i coefficients}), we present a few lemmas that will prove useful later on when controlling boundary terms.

	\begin{lemma}(\textbf{First Hardy Inequality})\label{hardy}
		Consider $ r_c\in (M,2M) $, then for any scalar function $ \psi $ and $ \varepsilon>0 $ we have \begin{align}
			\int_{\mathcal{H}^+\cap \Sigma_{\tau}} \psi^2 \leq \varepsilon \int_{\Sigma_{\tau}\cap \br{r\leq r_c}}(\partial_v\psi)^2 + (\partial_r\psi)^2 + C_{\varepsilon}\int_{\Sigma_{\tau}\cap \br{r\leq r_c}}\psi^2,
		\end{align}
		for a constant $ C_{\varepsilon} $ depending only on $ M,\varepsilon,r_c $ and $ \Sigma_{0} $.
	\end{lemma}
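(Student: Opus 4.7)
The plan is to reduce the boundary integral on $\mathcal{H}^+ \cap \Sigma_\tau$ to a bulk integral over $\Sigma_\tau \cap \{r \leq r_c\}$ via the fundamental theorem of calculus along the radial-like direction on $\Sigma_\tau$, and then split the resulting cross term $\psi\cdot(\text{derivative of }\psi)$ by Young's inequality so that the derivative portion picks up the small factor $\varepsilon$ while the lower order portion picks up the large constant $C_\varepsilon$. No redshift or potential is involved, so the argument is purely a one-dimensional Hardy/trace estimate, extended over spheres.

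First, I would use the Lie-propagated coordinate system $(\rho,\omega)$ on $\Sigma_\tau$ described in Section \ref{Geometry}, in which $\partial_\rho = k(r)\partial_v + \partial_r$ for a bounded scalar $k(r)$. On the compact $r$-range $[M, r_c]$ the induced volume element on $\Sigma_\tau$ is uniformly comparable to $d\rho\, d\slashed{g}_{\mathbb{S}^2}$, and similarly $d\slashed{g}_{\mathbb{S}^2}$ is uniformly comparable to the induced volume element on $\mathcal{H}^+\cap\Sigma_\tau$, with constants depending only on $M$, $r_c$, and $\Sigma_0$.

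Second, for each fixed $\omega\in\mathbb{S}^2$ and each $\rho'\in[M,r_c]$ the fundamental theorem of calculus applied along the $\rho$-direction on $\Sigma_\tau$ gives
\[
\psi^2(M,\omega) \;=\; \psi^2(\rho',\omega) \;-\; 2\int_M^{\rho'} \psi(\rho,\omega)\, \partial_\rho\psi(\rho,\omega)\, d\rho .
\]
Averaging this identity over $\rho'\in[M,r_c]$ and integrating over $\mathbb{S}^2$, the measure-comparability from the previous step yields a constant $C=C(M,r_c,\Sigma_0)>0$ such that
\[
\int_{\mathcal{H}^+\cap \Sigma_\tau}\psi^2 \;\leq\; C \int_{\Sigma_\tau\cap\{r\leq r_c\}}\psi^2 \;+\; C \int_{\Sigma_\tau\cap\{r\leq r_c\}} |\psi|\,|\partial_\rho \psi|.
\]

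Third, I would apply Young's inequality to the cross term: for any $\delta>0$,
\[
|\psi\,\partial_\rho\psi| \;\leq\; \delta\,(\partial_\rho\psi)^2 + \tfrac{1}{4\delta}\,\psi^2,
\]
and then use $(\partial_\rho\psi)^2 = (k(r)\partial_v\psi + \partial_r\psi)^2 \leq C'\bigl((\partial_v\psi)^2 + (\partial_r\psi)^2\bigr)$ with $C'=C'(M,r_c)$ from the boundedness of $k$. Choosing $\delta$ so that $C\cdot C'\,\delta = \varepsilon$ absorbs the derivatives with the right factor, and the remaining $\psi^2$ contributions combine into the $C_\varepsilon$-term, with $C_\varepsilon$ depending on $\varepsilon$, $M$, $r_c$, and $\Sigma_0$. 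There is no real obstacle beyond bookkeeping of the measure comparison on the bounded radial interval $[M,r_c]$; once the Lie-propagated $\rho$-coordinate is in hand, the estimate reduces to a classical trace-type Hardy inequality on a slab.
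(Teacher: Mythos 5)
Your proposal is correct and is essentially the paper's proof: averaging the fundamental theorem of calculus identity over $\rho'\in[M,r_c]$ produces precisely the weight $(\rho-r_c)$ that the paper introduces via the auxiliary function $\phi=r-r_c$ in its integration-by-parts identity, after which both arguments split the cross term by Young/Cauchy--Schwarz and use boundedness of $k(r)$ and measure comparability on the compact slab $\{M\leq r\leq r_c\}$. No substantive difference.
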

	\begin{proof}
		Note that for any scalar $ \phi $ we have \begin{align} \label{Hardy identity}
			\int_M^{r_c}(\partial_{\rho}\phi)\psi^2 d\rho = \phi\cdot \psi^2\Big|_M^{r_c} - 2 \int _M^{r_c}\phi\cdot  \psi(\partial_{\rho}\psi)d\rho 
		\end{align} 
		In particular, if we choose $ \phi= r-r_c $ we obtain \begin{align*} 
			\begin{aligned}
				(r_c-M)\psi^2(M)= & \int_M^{r_c}\psi^2 + 2 (r-r_c) \psi\cdot (\partial_{\rho}\psi)d\rho\\ \leq &\   \varepsilon \int _M^{r_c}(\partial_{\rho}\psi)^2d\rho + \int_M^{r_c}\left (1+\dfrac{(r-r_c)^2}{\varepsilon}\right )\psi^2 d\rho
			\end{aligned}
		\end{align*}
		Note $ \rho $ is bounded in the region $ \br{M\leq r \leq r_c} $, thus integrating over $ \mathbb{S}^2 $ we obtain \begin{align*}
			\begin{aligned}
				&\dfrac{(r_c-M)}{M^2}\int_{\mathbb{S}^2}\psi^2(M)M^2 d\omega\leq \varepsilon \int_{M}^{r_c}\int_{\mathbb{S}^2} (g_1\partial_v\psi + \partial_r\psi)^2\dfrac{1}{\rho^2}\cdot  \rho^2 d\rho d\omega  \\  &\hspace{4cm}  + \int_M^{r_c} \int_{\mathbb{S}^2}\left (1+\dfrac{(r-r_c)^2}{\varepsilon}\right )\psi^2 \dfrac{1}{\rho^2}\cdot  \rho^2  d\rho d\omega 
				\\ \Rightarrow \hspace{1cm} &	\int_{\mathcal{H}^+\cap \Sigma_{\tau}} \psi^2 \leq \varepsilon \left (\int_{\Sigma_{\tau}\cap \br{r\leq r_c}}(\partial_v\psi)^2 + (\partial_r\psi)^2 \right )+ C_{\varepsilon}\int_{\Sigma_{\tau}\cap \br{r\leq r_c}}\psi^2.
			\end{aligned}
		\end{align*}
	\end{proof}
	
	\begin{lemma}\label{second hardy} (\textbf{Second Hardy Inequality})
		Let $ M< r_c<r_d $, and consider the regions $\mathcal{C}:=R(0,\tau)\cap \br{M\leq r\leq r_c}   $ and  $ \mathcal{D}=R(0,\tau)\cap \br{r_c\leq r\leq r_d}    $, then for any scalar function $ \psi $ we have \begin{align}
			\int_{\mathcal{C}}\psi^2 \leq C \int_{\mathcal{D}} \psi^2 + C \int_{\mathcal{C}\cup\mathcal{D}}D(r)\left[(\partial_v\psi)^2+(\partial_r\psi)^2\right]   
		\end{align}
		for a positive constant $ C $ depending on $ M, r_c, r_d, $ and $ \Sigma_0. $
		\begin{proof}
			We apply relation (\ref{Hardy identity}) for $ \phi := \sqrt{D(r)} = \left(1-\frac{M}{r}\right) $ and we obtain \begin{align*}
				\int_{M}^{r_c} \left(\dfrac{M}{r^2}\right) \psi^{2} d\rho = \left(1-\dfrac{M}{r_c}\right) \psi^{2}(r_c) - 2 \int_{M}^{r_c} \sqrt{D} \psi \cdot \partial_{\rho} \psi d\rho
			\end{align*}
			However, by repeating the steps of the proof of the first Hardy Inequality, we can also show \begin{align*}
				(r_d-r_c) \psi^{2}(r_c)  \leq \epsilon' \int_{r_c}^{r_d} (\partial_{\rho}\psi)^{2} d\rho + C_{\epsilon'} \int_{r_c}^{r_d} \psi^{2}d\rho
			\end{align*}
			for any $ \epsilon'>0 $ and a positive constant $ C_{\epsilon'} $ depending on $ \epsilon', r_c, r_d. $ Let $ \epsilon'=1 $, then going back to the first relation and applying Cauchy-Schwarz we get \begin{align*}
				\int_{M}^{r_c} \left(\dfrac{M}{r^2}-\epsilon\right)\psi^{2}d\rho \leq C_1 \int_{r_c}^{r_d} \psi^{2} d\rho + C_1 \int_{r_c}^{r_d} (\partial_{\rho}\psi)^{2} d\rho + \dfrac{1}{\epsilon} \int_{M}^{r_c} D (\partial_{\rho}\psi)^{2} d\rho
			\end{align*}
			We choose $ \epsilon >0  $ small enough so that the coefficient of the left-hand side above is uniformly positive in $ \br{r_c\leq r\leq r_d} $ and we obtain \begin{align*}
				\int_{M}^{r_c} \psi^{2} d\rho \leq C \int_{r_c}^{r_d} \psi^{2}d\rho +  C \int_{M}^{r_d} D(r) \left[(\partial_v\psi)^{2} + (\partial_r\psi)^{2}\right],
			\end{align*}
			for a positive constant depending on $ M,r_c, r_d. $ Integrating on the spheres $ \mathbb{S}^{2} $ and in time $ \tilde{\tau} \in [0,\tau] $  while also using coarea formula yields the estimate of the assumption.
		\end{proof}
	\end{lemma}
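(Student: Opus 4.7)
The plan is to derive this by a weighted integration-by-parts argument in the radial direction, paralleling the proof of the First Hardy Inequality but with a weight tailored to produce the factor $D(r)$ on the derivative terms in the conclusion. The key observation is that $D(r)=(1-M/r)^{2}$ is a perfect square vanishing to second order at the horizon, so I would choose the weight $\phi(r):=\sqrt{D(r)}=1-M/r$. Starting from the one-dimensional identity $\int_{M}^{r_{c}}(\partial_{\rho}\phi)\psi^{2}\,d\rho=[\phi\psi^{2}]_{M}^{r_{c}}-2\int_{M}^{r_{c}}\phi\psi\,\partial_{\rho}\psi\,d\rho$ used in the First Hardy Inequality, the choice $\phi=\sqrt{D}$ has three compatible features: $\partial_{r}\phi=M/r^{2}$ is uniformly positive on $[M,r_{c}]$ (giving the coercive bulk term $\int\psi^{2}$), $\phi(M)=0$ kills the boundary contribution on the horizon, and $\phi^{2}=D$ reproduces exactly the degenerate weight appearing in the statement.

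Applying Cauchy--Schwarz to the cross term with a small parameter $\epsilon>0$ produces a contribution of the form $\epsilon^{-1}\int D(r)(\partial_{\rho}\psi)^{2}+\epsilon\int\psi^{2}$; the $\epsilon\psi^{2}$ piece is absorbed into the coercive left-hand side for $\epsilon$ sufficiently small. The only remaining boundary contribution is the term $(1-M/r_{c})\psi^{2}(r_{c})$ at the interface with $\mathcal{D}$. To convert this into an integrated quantity over $\mathcal{D}$, I would run a second, simpler Hardy-type argument in $[r_{c},r_{d}]$ with the weight $\tilde{\phi}(r):=r-r_{d}$ (vanishing at the right endpoint), yielding $\psi^{2}(r_{c})\lesssim\int_{r_{c}}^{r_{d}}[\psi^{2}+(\partial_{\rho}\psi)^{2}]\,d\rho$. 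Since $D(r)$ is bounded below on $\mathcal{D}$, the unweighted derivative term here can be freely upgraded to a $D$-weighted one at the cost of a larger constant, which is harmless.

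To finish, I would integrate the resulting pointwise (in $\tau$) one-dimensional inequality over the spheres $\mathbb{S}^{2}$ and over $\tilde{\tau}\in[0,\tau]$, using the coarea formula together with the relation $\partial_{\rho}=k(r)\partial_{v}+\partial_{r}$ from the Lie-propagated coordinate system introduced in Section \ref{Geometry} (with $k$ bounded). This replaces $(\partial_{\rho}\psi)^{2}$ by a bounded multiple of $(\partial_{v}\psi)^{2}+(\partial_{r}\psi)^{2}$ and produces the stated estimate with a constant depending only on $M$, $r_{c}$, $r_{d}$ and $\Sigma_{0}$.

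The main (and only) subtle point is the matching of the two roles played by $\phi=\sqrt{D}$: it must simultaneously have a positive-definite radial derivative and a square equal to the prescribed weight $D$. In extremal Reissner--Nordstr\"om this is automatic because $\sqrt{D}=1-M/r$ is affine in $1/r$, so both requirements hold for the same explicit choice; no redshift structure is invoked. The remainder of the argument is routine Cauchy--Schwarz bookkeeping plus the auxiliary trace estimate at $r=r_{c}$.
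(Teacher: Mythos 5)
Your proposal is correct and follows essentially the same approach as the paper: the same choice of weight $\phi=\sqrt{D}=1-M/r$ exploiting $\phi(M)=0$, $\partial_r\phi=M/r^2>0$, and $\phi^2=D$; the same auxiliary Hardy-type trace estimate at $r=r_c$ (the paper invokes the First Hardy Inequality argument in $[r_c,r_d]$, which amounts exactly to your weight $\tilde\phi=r-r_d$); the same Cauchy--Schwarz absorption of the $\epsilon\psi^2$ term and upgrade of the unweighted derivative on $\mathcal{D}$ to a $D$-weighted one; and the same final pass to the stated spacetime estimate via spheres, time integration, coarea, and $\partial_\rho=k(r)\partial_v+\partial_r$.
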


	\begin{remark}
		Note, $ \psi $ need not be a solution to a wave equation, i.e. the result above holds for any scalar in $H^1_{loc}(\mathcal{R}(0,\tau)).  $
	\end{remark}

	\begin{lemma}\label{psi_r Control}
		For any solution $ \Psi_i^{^{(\ell)}},\ \ell\geq i, \ i\in\br{1,2} $ to $ (\ref{waveeq}) $, we have \begin{align}
			\int_{\mathcal{A}_c}(\partial_r\Psi_i)^2\leq \varepsilon\left(\int_{\mathcal{A}_c}E_1(\partial_v\partial_r\Psi_i)^2+E_2(\partial_r\partial_r\Psi_i)^2 \right) + C_{\varepsilon} \left(\int_{\Sigma_{0}}J_{\mu}^{N}[\Psi_i]n_{\Sigma_{0}}^{\mu}+\int_{\Sigma_{0}}J_{\mu}^{N}[T\Psi_i]n_{\Sigma_{0}}^{\mu} \right) 
		\end{align} 
		where the constant $ C_{\varepsilon}>0 $ depends on $ M, r_c, r_d , \Sigma_0,\varepsilon. $
	\end{lemma}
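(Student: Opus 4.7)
The plan is to combine the Second Hardy Inequality (Lemma \ref{second hardy}) applied to $\partial_r \Psi_i$ with the fact that, near the horizon, the weight $D(r)$ degenerates strictly faster than the coefficients $E_1,E_2$. First I would apply Lemma \ref{second hardy} with $\psi = \partial_r\Psi_i$, $\mathcal{C} = \mathcal{A}_c = \mathcal{R}(0,\tau)\cap\{M\leq r\leq r_c\}$, and $\mathcal{D} = \mathcal{R}(0,\tau)\cap\{r_c\leq r\leq r_d\}$, with $M<r_c<r_d<2M$ to be fixed later. This yields
\begin{align*}
\int_{\mathcal{A}_c}(\partial_r\Psi_i)^2 \ \leq\ C\int_{\mathcal{D}}(\partial_r\Psi_i)^2 \ +\ C\int_{\mathcal{A}_c\cup\mathcal{D}} D(r)\Bigl[(\partial_v\partial_r\Psi_i)^2 + (\partial_r\partial_r\Psi_i)^2\Bigr].
\end{align*}
The first term on the right is controlled by Theorem \ref{Nuniform } since $\mathcal{D}$ sits strictly between the horizon and the photon sphere, where the $N$--flux is coercive over all first-order derivatives.

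For the bulk term carrying $D(r)$, I would split the region at $r=r_c$. On $\mathcal{A}_c$ I exploit the asymptotic behavior of the coefficients: as noted in the excerpt, $E_1 = \x^v_r \geq c_1 > 0$ by our choice of $\x$, while $E_2 = \sqrt{D}\bigl(-\frac{\x^r}{r}(3-\sqrt{D})+\frac{\x^r_r}{2}\sqrt{D}\bigr)$ satisfies $E_2 \geq c_2\sqrt{D}$ near $\h$ because $\x^r(M)<0$. Since $D(r) = (1-M/r)^2 = (\sqrt{D})^2$, we have the pointwise bounds
\begin{align*}
\frac{D(r)}{E_1}\ \lesssim\ D(r), \qquad \frac{D(r)}{E_2}\ \lesssim\ \sqrt{D(r)},
\end{align*}
both of which tend to $0$ as $r\to M$. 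Given $\varepsilon>0$, I choose $r_c$ close enough to $M$ so that $C\cdot D(r)\leq \frac{\varepsilon}{2}E_1$ and $C\cdot D(r)\leq \frac{\varepsilon}{2}E_2$ hold throughout $\mathcal{A}_c$, which absorbs this contribution into the $\varepsilon$--bulk on the right-hand side of the claim.

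On the remaining annulus $\mathcal{D}\setminus\mathcal{A}_c$ the weight $D(r)$ is uniformly bounded, and we are strictly away from both the horizon and the photon sphere. Commuting (\ref{waveeq}) with the Killing field $T$ and invoking the local elliptic estimate (\ref{H^2(A)}) (so as to trade the tangential second derivatives for $T\Psi_i$ and $\Psi_i$), we obtain
\begin{align*}
\int_{\mathcal{D}\setminus\mathcal{A}_c}\Bigl[(\partial_v\partial_r\Psi_i)^2 + (\partial_r\partial_r\Psi_i)^2\Bigr] \ \leq\  C_{\varepsilon}\Bigl(\int_{\Sigma_0}J^T_\mu[\Psi_i]n_{\Sigma_0}^{\mu} + \int_{\Sigma_0}J^T_\mu[T\Psi_i]n_{\Sigma_0}^{\mu}\Bigr).
\end{align*}
Collecting the three contributions and using $J^T_\mu n^\mu \lesssim J^N_\mu n^\mu$ pointwise gives the claimed estimate.

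The only real obstruction is the absorption argument: it depends on the precise matching of the degeneracies $E_1\gtrsim 1$, $E_2\gtrsim \sqrt{D}$ against $D(r)\sim (r-M)^2$, which is what allows us to shrink $r_c$ and make the weight loss quantitative. This is also where the loss of one derivative appears, since the bulk we absorb on the left requires second-order control on $\Psi_i$, which translates into first-order $T$--energy of $T\Psi_i$ on the data side.
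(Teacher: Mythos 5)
Your overall strategy matches the paper's: apply the Second Hardy Inequality (Lemma \ref{second hardy}) to $\partial_r\Psi_i$, control the $\mathcal{D}$-term by Theorem \ref{Nuniform }, and use the degeneracy matching $D\sim(\sqrt{D})^2$ against $E_1\gtrsim 1$, $E_2\gtrsim\sqrt{D}$ near $\h$ to absorb the $D$-weighted bulk into the $\varepsilon$--fraction of the positive-definite bulk. The pointwise inequalities $D/E_1\lesssim D$ and $D/E_2\lesssim\sqrt{D}$ are exactly the right observation.

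However, there is a logical inconsistency in the absorption step. You write ``Given $\varepsilon>0$, I choose $r_c$ close enough to $M$ so that $CD\leq\tfrac{\varepsilon}{2}E_1$ and $CD\leq\tfrac{\varepsilon}{2}E_2$ hold throughout $\mathcal{A}_c$.'' But $r_c$ was already fixed in the first step when you applied Lemma \ref{second hardy} with $\mathcal{C}=\mathcal{A}_c=\{M\leq r\leq r_c\}$; the left-hand side $\int_{\mathcal{A}_c}(\partial_r\Psi_i)^2$ of the claimed estimate refers to this fixed region, and $r_c$ appears explicitly in the dependency list of $C_\varepsilon$ precisely because it is \emph{not} allowed to vary with $\varepsilon$. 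If you shrink $r_c$ to $r_c(\varepsilon)$, you are proving an estimate for a smaller region, not the stated one, and the downstream proof of Theorem \ref{remove_degeneracy} — where the energy identity for $J_\mu^{\x}[\partial_r\Psi_i]$, the cut-off vector field $\x$, and all the $E_j$-bulk estimates are tied to a single fixed $\mathcal{A}_c$ — would no longer assemble. The fix is small: keep $r_c$ fixed, and instead introduce a separate $\varepsilon$-dependent threshold $r_\varepsilon\in(M,r_c]$. On $\{M\leq r\leq r_\varepsilon\}$ the smallness $CD\leq\varepsilon E_j$ holds and that portion is absorbed; on the complementary region $\{r_\varepsilon\leq r\leq r_d\}$ you are uniformly away from $\h$ and the photon sphere, so the local elliptic estimate (\ref{H^2(A)}) (applied with inner radius $r_\varepsilon$) bounds the second-order bulk by $C_\varepsilon\bigl(\int_{\Sigma_0}J_\mu^{N}[\Psi_i]+J_\mu^{N}[T\Psi_i]\bigr)$, where the $\varepsilon$-dependence of $C_\varepsilon$ is now carried by $r_\varepsilon$. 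With that adjustment your proof coincides with the paper's.
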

	\begin{proof}
		We apply Lemma \ref{second hardy} for $ \partial_r\Psi $ and regions $ \mathcal{C}\equiv \mathcal{A}_c, \ \mathcal{D} $ and we get \begin{align}
			\int_{\mathcal{A}_c}(\partial_r\Psi_i)^2\leq C \int_{D}(\partial_r\Psi_i)^2+ C\int_{\mathcal{A}_c\cup \mathcal{D}}D\left[(\partial_v\partial_r\Psi_i)^2+(\partial_r\partial_r\Psi_i)^2 \right] 
		\end{align}
		Now, note that $ E_1>0 $ and $ E_2\geq 0 $ close to the horizon, with $ E_2 $  vanishing to first order on $ \mathcal{H}^+. $ On the other hand, $ C\cdot D $ vanish to second order on $ \mathcal{H}^+ $, thus for any $ \varepsilon>0 $ there exists a region close to the horizon $ A_{\varepsilon}:=\br{M\leq r\leq r_{\varepsilon}\leq r_c} $ such that \begin{align}
			\begin{aligned}
				&C\int_{\mathcal{A}_c\cup \mathcal{D}}D\left[(\partial_v\partial_r\Psi_i)^2+(\partial_r\partial_r\Psi_i)^2 \right]
				\\ \leq&  \varepsilon \left (\int_{A_{\varepsilon}\cap A_c} E_1(\partial_v\partial_r\Psi_i)^2+E_2(\partial_r\partial_r\Psi_i)^2\right ) 
				+ C\int_{\mathcal{A}_c\cup\mathcal{D}\cap\br{r\geq r_{\varepsilon}}}D\left[(\partial_v\partial_r\Psi_i)^2+(\partial_r\partial_r\Psi_i)^2 \right] \\
				\leq &\varepsilon \left (\int_{A_{\varepsilon}\cap A_c} E_1(\partial_v\partial_r\Psi_i)^2+E_2(\partial_r\partial_r\Psi_i)^2\right ) 
				+ C_{\varepsilon}\int _{\Sigma_{0}}J_{\mu}^N[\Psi_i]+ C_{\varepsilon} \int _{\Sigma_{0}}J_{\mu}^N[T\Psi_i],
			\end{aligned}
		\end{align} 
		since all second-order derivatives are controlled away from the horizon in terms of the fluxes $ J^N_{\mu}[\Psi], J^N_{\mu}[T\Psi]  $. On the other hand,
		since the region $ \mathcal{D} $ is away from the horizon, Theorem \ref{Nuniform } yields  $$  C \int_{D}(\partial_r\Psi_i)^2 \leq \ \tilde{C} \int_{\Sigma_{0}}J^N_{\mu}[\Psi] n_{\Sigma_{0}}^{\mu},$$
		which concludes the proof.
	\end{proof}

	We now state two more lemmas that allow us to treat boundary terms at the horizon.

	\begin{lemma} \label{Lemma-psi_vpsi_r-control}
		For all solutions  $ \Psi_i^{^{(\ell)}},\ \ell\geq i, \ i\in\br{1,2} $ to $ (\ref{waveeq}) $, we have \begin{align}
			\abs{\int_{\mathcal{H}^+}(\partial_v\Psi_i)(\partial_r\Psi_i)}\leq C_{\varepsilon} \int_{\Sigma_{0}}J^{N}_{\mu}[\Psi_i]n_{\Sigma_{0}}^{\mu} + \varepsilon \int_{\Sigma_{0}\cup \Sigma_{\tau}} J_{\mu}^{\x}[\partial_r\Psi_i] n_{\Sigma}^{\mu},
		\end{align}
		for any $ \varepsilon>0 $ where the constant $ C_{\varepsilon}>0 $ depends only on $ M,\Sigma_0,\varepsilon. $
	\end{lemma}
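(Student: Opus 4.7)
The plan is to exploit the fact that $T=\partial_v$ is tangent to $\h$, combined with the extremal degeneracy $D(M)=D'(M)=0$, in order to convert the horizon integral into boundary pieces on $\h\cap\Sigma_0$ and $\h\cap\Sigma_\tau$ that can be absorbed by $\x$- and $N$-fluxes. First I would write $(\partial_v\Psi_i)(\partial_r\Psi_i) = \partial_v(\Psi_i\partial_r\Psi_i) - \Psi_i\partial_v\partial_r\Psi_i$ and integrate by parts in $v$ along $\h(0,\tau)$. This reduces the horizon integral to the two boundary pieces $\int_{\h\cap\Sigma}\Psi_i\partial_r\Psi_i$ for $\Sigma\in\br{\Sigma_0,\Sigma_\tau}$ plus a residual horizon bulk $-\int_{\h}\Psi_i\,\partial_v\partial_r\Psi_i$.

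The key step is to eliminate this residual bulk. Expanding $\square_g$ in the ingoing coordinates $(v,r,\vartheta,\varphi)$, the transversal piece $\tfrac{1}{r^2}\partial_r(r^2 D\,\partial_r\Psi_i)$ vanishes identically on $\h$ because $D(M)=D'(M)=0$. Using also $\slashed{\Delta}\Psi_i = -\ell(\ell+1)M^{-2}\Psi_i$ from the spherical harmonic support, the wave equation $(\square_g - V_i^{(\ell)})\Psi_i = 0$ reduces on $\h$ to the algebraic identity
\[ \partial_v\partial_r\Psi_i \ = \ -\tfrac{1}{M}\,\partial_v\Psi_i \ + \ \lambda_i^{(\ell)}\,\Psi_i, \qquad \lambda_i^{(\ell)} \ = \ \tfrac{1}{2}\bigl(V_i^{(\ell)}(M)+\ell(\ell+1)/M^{2}\bigr). \]
Substituting into the residual bulk and integrating the resulting $\Psi_i\partial_v\Psi_i = \tfrac{1}{2}\partial_v(\Psi_i^{2})$ piece once more by parts in $v$, the only surviving contributions are boundary integrals of $\Psi_i\partial_r\Psi_i$ and $\Psi_i^{2}$ on $\h\cap\Sigma_0,\h\cap\Sigma_\tau$ together with a single horizon bulk of $\Psi_i^{2}$.

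Each of these is then absorbed by standard arguments. For $\int_{\h\cap\Sigma}\Psi_i\partial_r\Psi_i$ I would apply Cauchy--Schwarz with a small parameter $\eta$ and then Lemma \ref{hardy} with $\psi=\Psi_i$ and with $\psi=\partial_r\Psi_i$. Since $N$ is timelike everywhere and $\x$ is timelike on $\br{r\le r_c}$, the resulting slice integrals over $\Sigma\cap\br{r\le r_c}$ are dominated respectively by $\int_\Sigma J_\mu^N[\Psi_i]n^\mu_\Sigma$ and by $\int_\Sigma J_\mu^{\x}[\partial_r\Psi_i]n^\mu_\Sigma$; the $N$-flux on $\Sigma_\tau$ is bounded by the initial $N$-flux via Theorem \ref{Nuniform }. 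The pure boundary integral $\int_{\h\cap\Sigma}\Psi_i^{2}$ is handled the same way. Finally, the horizon bulk $\int_{\h}\Psi_i^{2}$ is dominated by $\int_{\h}J_\mu^N[\Psi_i]n^\mu_{\h}$ thanks to the Poincar\'e-weighted zeroth-order term in the non-degenerate horizon $N$-energy \ref{non-deg_energy} (strictly positive because $\Psi_i$ is supported on $\ell\ge 1$), and this horizon $N$-flux is again controlled by the initial $N$-flux through Theorem \ref{Nuniform }. Choosing $\eta$ proportional to $\varepsilon$ then distributes the constants correctly between $C_\varepsilon$ and the small $\varepsilon$ multiplying the $\x$-fluxes.

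The main difficulty, in my view, is precisely the horizon-reduction step. Without the algebraic identity for $\partial_v\partial_r\Psi_i$ on $\h$, one would be forced to dominate $\int_{\h}(\partial_v\partial_r\Psi_i)^{2}$ by an $\x$-flux on $\h$, which requires a separate energy identity for $\partial_r\Psi_i$ with controlled sign on the horizon and would inevitably introduce $J_\mu^N[T\Psi_i]$ terms that are not permitted on the right-hand side of the stated inequality. The extremal degeneracy $D(M)=D'(M)=0$ is exactly what allows us to trade the transversal second derivative on $\h$ for a tangential $T$-derivative plus a zeroth-order piece, so this lemma is genuinely an extremal-case phenomenon.
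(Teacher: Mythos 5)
Your argument is correct and follows essentially the same route as the paper: integrate by parts along the null generators of $\h$, use the reduced wave equation on the horizon (exploiting $D(M)=D'(M)=0$) to eliminate $\partial_v\partial_r\Psi_i$, and absorb the boundary pieces via Lemma~\ref{hardy} and the horizon bulk via the coercive horizon $N$-energy~\eqref{non-deg_energy} together with Theorem~\ref{Nuniform }. The only cosmetic difference is that you perform a second integration by parts to convert the $\Psi_i\,\partial_v\Psi_i$ horizon bulk into slice boundary terms of $\Psi_i^2$, whereas the paper retains it and bounds it directly by Cauchy--Schwarz against $\int_{\h}J^N_\mu[\Psi_i]n^\mu_{\h}$; one should also note, as the paper does, that the $\ell$-growth of $\lambda_i^{(\ell)}$ (equivalently of $V_i$) is matched by the Poincar\'e weight, so the constants stay uniform in $\ell$.
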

	\begin{proof}
		Using integration by parts we write \begin{align}
			\int_{\mathcal{H}^{+} (0,\tau)} (\partial_v\Psi)(\partial_r\Psi_r) = - \int_{\mathcal{H}^{+} (0,\tau)} \Psi_i(\partial_v\partial_r\Psi_i)+ \int_{\mathcal{H}^{+}\cap \Sigma_{\tau}}\Psi_i(\partial_r\Psi_i) - \int_{\mathcal{H}^{+}\cap \Sigma_{0}}\Psi_i(\partial_r\Psi_i)  
		\end{align}
		Now, since $ \square \Psi_i -V_i \Psi_i=0 $, on the horizon  $ \br{r=M} $ we have \begin{align}
			2\partial_v\partial_r\Psi_i+ \dfrac{2}{M}\partial_v\Psi_i+ \slashed{\Delta}\Psi_i- V_i(M)\Psi_i=0 
		\end{align}
		Thus, we can write \begin{align}
			\begin{aligned}
				- \int_{\mathcal{H}^{+} (0,\tau)} \Psi_i(\partial_v\partial_r\Psi_i) =& - \int_{\mathcal{H}^{+} (0,\tau)} \Psi_i \left(-\dfrac{1}{M}\partial_v\Psi_i+\dfrac{V_i}{2}\Psi_i - \dfrac{\slashed{\Delta}\Psi_i}{2}\right)\\ =
				& \int_{\mathcal{H}^{+} (0,\tau)}\dfrac{1}{M}\Psi_i\cdot (\partial_v\Psi_i)-\dfrac{V_i}{2}\Psi_i^2 - \dfrac{1}{2}\abs{\slashed{\nabla}\Psi_i}^2
			\end{aligned}
		\end{align} 
		However, using Cauchy Schwartz and the fact that $ V_i $ is linear with respect to $ \ell $, relation (\ref{non-deg_energy})  and Theorem \ref{Nuniform } allow us to control the integral above in terms of the flux $ \int_{\Sigma_{0}}J_{\mu}^{N}[\Psi_i]n^{\mu}_{\Sigma_{0}} $.
		
		Finally, we have \begin{align}
			\int_{\mathcal{H}^+\cap \Sigma}\Psi_i \cdot (\partial_r\Psi_i)\leq \int_{\mathcal{H}^+\cap \Sigma} \Psi_i^2 + \int_{\mathcal{H}^+\cap \Sigma}(\partial_r\Psi_i)^2, 
		\end{align}
		and using Lemma \ref{hardy} for $ \psi\equiv \Psi_i, \partial_r\Psi_i $ we obtain \begin{align}
			\int_{\mathcal{H}^+\cap \Sigma}\Psi_i \cdot (\partial_r\Psi_i) \leq  \varepsilon \int_{\Sigma}J_{\mu}^{\x}[\partial_r\Psi_i]n_{\Sigma}^{\mu} + C_{\varepsilon}\int_{\Sigma}J_{\mu}^{N}[\Psi_i]n_{\Sigma}^{\mu}
		\end{align}
		which concludes the proof.
	\end{proof}
	
	\begin{lemma} \label{Lemma-Control Delta,r Psi on H}
		For all solutions  $ \Psi_i^{^{(\ell)}},\ \ell\geq i, \ i\in\br{1,2} $ to $ (\ref{waveeq}) $, we have 
		\begin{align}
			\abs{\int_{\mathcal{H}^+}	(\slashed{\Delta}\Psi_i)(\partial_r\Psi_i)}\leq C_{\varepsilon} \int_{\Sigma_{0}}J^{N}_{\mu}[\Psi_i]n_{\Sigma_{0}}^{\mu} + \varepsilon \int_{\Sigma_{0}\cup \Sigma_{\tau}} J_{\mu}^{\x}[\partial_r\Psi_i] n_{\Sigma}^{\mu}
		\end{align}
		for any $ \varepsilon>0 $ where the constant $ C_{\varepsilon}>0 $ depends only on $ M,\Sigma_0,\varepsilon. $
	\end{lemma}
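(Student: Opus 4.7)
The plan is to proceed in parallel with the proof of Lemma \ref{Lemma-psi_vpsi_r-control}: substitute the wave equation on $\h$, integrate by parts in $v$, and invoke the preceding lemmas together with the first Hardy inequality. Since $D(M)=D'(M)=0$ in the extremal case, the equation $\square_g\Psi_i=V_i\Psi_i$ restricted to $\h$ reduces to $\slashed{\Delta}\Psi_i=-2\partial_v\partial_r\Psi_i-\frac{2}{M}\partial_v\Psi_i+V_i(M)\Psi_i$. Multiplying by $\partial_r\Psi_i$ and integrating over $\h$ decomposes the left-hand side of the lemma as $I_1+I_2+I_3$, where $I_1=-\int_{\h}\partial_v\!\left((\partial_r\Psi_i)^2\right)$, $I_2=-\tfrac{2}{M}\int_{\h}(\partial_v\Psi_i)(\partial_r\Psi_i)$, and $I_3=V_i(M)\int_{\h}\Psi_i\,\partial_r\Psi_i$.

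For $I_1$, integrating in $v$ produces the sphere contributions $\int_{\h\cap\Sigma_0}(\partial_r\Psi_i)^2-\int_{\h\cap\Sigma_\tau}(\partial_r\Psi_i)^2$; Lemma \ref{hardy} applied with $\psi=\partial_r\Psi_i$ controls each such sphere integral by $\varepsilon\int_\Sigma J^{\x}_\mu[\partial_r\Psi_i]n^\mu_\Sigma$ (the two second-derivative terms are absorbed into the timelike $\x$-flux) plus $C_\varepsilon\int_{\Sigma_0}J^N_\mu[\Psi_i]n^\mu_{\Sigma_0}$ (bounding $\int_\Sigma(\partial_r\Psi_i)^2$ through the $N$-flux and then using Theorem \ref{Nuniform }). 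The term $I_2$ is estimated directly by Lemma \ref{Lemma-psi_vpsi_r-control}. The delicate piece is $I_3$: a naive Cauchy--Schwarz would demand a bound on $\int_{\h}(\partial_r\Psi_i)^2$ as a \emph{horizon spacetime} integral, which is not accessible from $\Sigma$-fluxes of $\partial_r\Psi_i$ alone.

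To handle $I_3$ I exploit that $\Psi_i^{(\ell)}$ is supported on a single angular frequency $\ell\geq i$, so on $\h$ the elliptic identity gives $\Psi_i=-\tfrac{M^2}{\ell(\ell+1)}\slashed{\Delta}\Psi_i$. Substituting and setting $\lambda_i^{(\ell)}:=\frac{V_i(M)M^2}{\ell(\ell+1)}$ turns $I_3$ into $-\lambda_i^{(\ell)}\int_{\h}(\slashed{\Delta}\Psi_i)(\partial_r\Psi_i)$. Moving this self-referential term back to the left-hand side yields
\begin{align*}
\bigl(1+\lambda_i^{(\ell)}\bigr)\int_{\h}(\slashed{\Delta}\Psi_i)(\partial_r\Psi_i)=I_1+I_2.
\end{align*}
A direct calculation with $V_1(M)=2(\ell+1)/M^2$ and $V_2(M)=-2\ell/M^2$ gives $1+\lambda_1^{(\ell)}=(\ell+2)/\ell\geq 3$ for $\ell\geq 1$ and $1+\lambda_2^{(\ell)}=(\ell-1)/(\ell+1)\geq 1/3$ for $\ell\geq 2$, so this coefficient is bounded uniformly above and away from zero in $\ell$. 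Dividing produces the claimed estimate.

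The main obstacle is the $I_3$ term: without the fixed-frequency substitution one is forced into bounding the horizon spacetime integral $\int_{\h}(\partial_r\Psi_i)^2$, which is beyond what the $\Sigma$-fluxes control. The success of the self-referential manoeuvre hinges on the uniform positivity of $1+\lambda_i^{(\ell)}$, a feature specific to the extremal potentials $V_1,V_2$ at $r=M$ combined with the constraint $\ell\geq i$.
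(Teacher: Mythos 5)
Your proof is correct and follows essentially the same route as the paper: substitute the wave equation on $\h$, use the fixed-frequency relation $\Psi_i=-\tfrac{M^2}{\ell(\ell+1)}\slashed{\Delta}\Psi_i$ to rewrite the $V_i$-term as a multiple of the target integral, move it to the left-hand side, observe the resulting coefficient $1+\lambda_i^{(\ell)}$ is bounded away from zero, and handle the remaining pieces via integration by parts in $v$, Lemma \ref{Lemma-psi_vpsi_r-control}, and the first Hardy inequality. One small arithmetic slip: with $V_1(M)=2(\ell+1)/M^2$ you get $1+\lambda_1^{(\ell)}=1+\tfrac{2}{\ell}=\tfrac{\ell+2}{\ell}$, which is a \emph{decreasing} function of $\ell$ and therefore satisfies $1+\lambda_1^{(\ell)}\geq 1$ (not $\geq 3$) for all $\ell\geq 1$; the value $3$ is the maximum at $\ell=1$, not a lower bound. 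The uniform lower bound over both families is $1/3$ (attained for $i=2$, $\ell=2$), which is what the paper uses and is all that is needed to divide through, so the error is cosmetic and the argument stands.
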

	\begin{proof}
		Once again, we use the wave equation (\ref{waveeq}) evaluated on the horizon,\\  $  	2\partial_v\partial_r\Psi_i+ \dfrac{2}{M}\partial_v\Psi_i+ \slashed{\Delta}\Psi_i- V_i(M)\Psi_i=0  $, and we write \begin{align}
			\int_{\mathcal{H}^+}	(\slashed{\Delta}\Psi_i)(\partial_r\Psi_i)= -\dfrac{2}{M}\int_{\mathcal{H}^+}(\partial_v\Psi_i)(\partial_r\Psi_i)-2\int_{\mathcal{H}^+}(\partial_v\partial_r\Psi_i)(\partial_r\Psi_i) + \int_{\mathcal{H}^+}V_i\cdot (\partial_r\Psi_i) \Psi_i \label{R.Lemma5.3}
		\end{align}
		Since $ \Psi_i $ is supported on the $ \ell $-frequency we have $ \slashed{\Delta}\Psi_i=- \frac{\ell(\ell+1)}{r^2}\Psi_i $, so \begin{align}
			\int_{\mathcal{H}^+}	(\slashed{\Delta}\Psi_i)(\partial_r\Psi_i)-V_i(\partial_r\Psi_i)\Psi_i= \int_{\mathcal{H}^+}	\left (1+V_i \dfrac{M^2}{\ell(\ell+1)}\right )(\slashed{\Delta}\Psi_i)(\partial_r\Psi_i) 
		\end{align}
		For $ i=1 $, on the horizon $ \mathcal{H}^+$ we get
		\begin{align}
			1+V_i \cdot \dfrac{M^2}{\ell(\ell+1)}=1-2 \dfrac{(-\ell-1)}{\ell(\ell+1)}=1+\dfrac{2}{\ell}\geq 1, \ \ \forall \ell\geq 1.
		\end{align}
		Similarly, for $ i=2 $, 
		\begin{align}
			1+V_i \cdot \dfrac{M^2}{\ell(\ell+1)}=\dfrac{\ell}{\ell(\ell+1)}=1-\dfrac{2}{\ell+1}\geq 1-\dfrac{2}{3}=\dfrac{1}{3}, \hspace{1cm} \forall \ell\geq 2,
		\end{align}
		Thus, going back to (\ref{R.Lemma5.3}) we have \begin{align}\begin{aligned}
				\dfrac{1}{3}\abs{\int_{\mathcal{H}^+(0,\tau)}	(\slashed{\Delta}\Psi_i)(\partial_r\Psi_i)}\leq& \abs{
					-\dfrac{2}{M}\int_{\mathcal{H}^+(0,\tau)}(\partial_v\Psi_i)(\partial_r\Psi_i) -\int_{\mathcal{H}^+(0,\tau)}\partial_v\left [(\partial_r\Psi_i)^2\right ] }  \\
				=&  \abs{-\dfrac{2}{M}\int_{\mathcal{H}^+(0,\tau)}(\partial_v\Psi_i)(\partial_r\Psi_i) + \int_{\mathcal{H}^+\cap \Sigma_{\tau}}(\partial_r\Psi_i)^2 - \int_{\mathcal{H}^+\cap \Sigma_{0}}(\partial_r\Psi_i)^2}, 
			\end{aligned}
		\end{align}
		so using  Lemma \ref{Lemma-psi_vpsi_r-control} and the first Hardy inequality (Lemma \ref{hardy}) we  conclude the proof.\\
	\end{proof}
	\subsection{Estimates for the Spacetime Terms.}
	We are now ready to control the bulk terms that appear in  (\ref{E_i coefficients}) in terms of the fluxes $ J^{N}_{\mu}[\Psi_i], J^{N}_{\mu}[T\Psi_i]\ $, and $J^{\x}_{\mu}[\partial_r\Psi_i] $ or absorb them in known positive definite terms. Note, since the vectorfield $ \x(r)=0 $ for $ r\geq r_d $, then $ \nabla^{\mu} J_{\mu}^{\x}[\partial_r\Psi_i]=0 $ for $ r\geq r_d $, so we focus our estimates in the region $ \mathcal{R}(0,\tau)\cap \br{r\leq r_d} $ only.
	
	\paragraph{  $$\boxed{\textbf{The term}\hspace{0.5cm} \int_{R} E_{4}\left(\partial_{v} \partial_{r} \Psi_i\right)\left(\partial_{v} \Psi_i\right)}$$}
	
	For any $ \varepsilon>0, $ Cauchy-Schwartz and Theorem \ref{Nuniform } yield
	\begin{align*}
		\begin{aligned}
			\abs{\int_{\mathcal{A}_c}E_4\left(\partial_{v} \partial_{r} \Psi_i\right)\left(\partial_{v} \Psi_i\right)}\leq \varepsilon \int_{\mathcal{A}_c}(\partial_v\partial_r\Psi_i)^2 + \dfrac{(E_4)^2}{\varepsilon}\int_{\mathcal{A}_c}(\partial_v\Psi_i)^2\\
			\leq \varepsilon \int_{\mathcal{A}_c}(\partial_v\partial_r\Psi_i)^2 + C_{\varepsilon}\int_{\Sigma_{0}}J_{\mu}^{N}[\Psi_i]n_{\Sigma_{0}}^{\mu},
		\end{aligned}
	\end{align*}
	for a constant $ C_{\varepsilon} $ depending on $ \varepsilon,M,\Sigma_{0}. $
	
	\paragraph{  $$\boxed{\textbf{The term}\hspace{0.5cm} \int_{R} E_{5}\left(\partial_{v} \partial_{r} \Psi_i\right)\left(\partial_{r} \Psi_i\right)}$$}
	Note,  $ \abs{E_5}= \abs{ \x^v(r)R'(r) } $ is uniformly positive and bounded in the region $ \mathcal{A}_c $, thus we have 
	\begin{align*}
		\abs{\int_{\mathcal{A}_c} E_{5}\left(\partial_{v} \partial_{r} \Psi_i\right)\left(\partial_{r} \Psi_i\right)} \leq \varepsilon \int_{\mathcal{A}_c}(\partial_{v} \partial_{r} \Psi_i)^2 + \dfrac{\max_{\mathcal{A}_c} (E_5)^2}{\varepsilon}\int_{\mathcal{A}_c} (\partial_r\Psi_i)^2
	\end{align*}
	for any $ \varepsilon>0 $. So, applying Lemma \ref{psi_r Control} for the second term above, for $ \tilde{\varepsilon}= \dfrac{\varepsilon^2}{\max_{\mathcal{A}_c} (E_5)^2}>0 $
	we obtain \begin{align}
		\begin{aligned} \label{E_5 spacetime control}
			\abs{\int_{\mathcal{A}_c} E_{5}\left(\partial_{v} \partial_{r} \Psi_i\right)\left(\partial_{r} \Psi_i\right)} \leq & \varepsilon \int_{\mathcal{A}_c}(\partial_{v} \partial_{r} \Psi_i)^2 + \varepsilon\left(\int_{\mathcal{A}_c}E_1(\partial_v\partial_r\Psi_i)^2+E_2(\partial_r\partial_r\Psi_i)^2 \right) \\
			& +  C_{\varepsilon} \left(\int_{\Sigma_{0}}J_{\mu}^{N}[\Psi_i]n_{\Sigma_{0}}^{\mu}+\int_{\Sigma_{0}}J_{\mu}^{N}[T\Psi_i]n_{\Sigma_{0}}^{\mu} \right),
		\end{aligned}
	\end{align}
	for $ C_{\varepsilon}>0 $ depending  on $  M, r_c, r_d , \Sigma_0,\varepsilon. $

	\paragraph{  $$\boxed{\textbf{The term}\hspace{0.5cm} \int_{R} E_{6}\left(\partial_{r} \partial_{r} \Psi_i\right)\left(\partial_{v} \Psi_i\right)}$$}
	Here, we apply divergence theorem for $ P_{\mu}= E_6 (\partial_r\Psi_i)(\partial_v\Psi)(\partial_r)_{\mu} $, in the region $ R(0,\tau). $ 
	\begin{align*}
		\begin{aligned}
			\int_{R}\nabla^{\mu} P_{\mu} = \int_{\Sigma_{0}}P_{\mu}n_{\Sigma_{0}}^{\mu}- \int_{\Sigma_{\tau}}P_{\mu}n_{\Sigma_{\tau}}^{\mu}- \int_{\mathcal{H}^+}P_{\mu}n_{\mathcal{H}^+}^{\mu}.
		\end{aligned}
	\end{align*}
	First, we expand the left-hand side term and using the fact that $ Div(\partial_r)=\frac{2}{r} $ we get \begin{align*}
		\begin{aligned}
			\int_{R}\nabla^{\mu} P_{\mu}= \int_{R} E_6 (\partial_v\Psi_i)(\partial_r\partial_r \Psi_i) + \int_{R}\left (\partial_rE_6+ \dfrac{2E_6}{r}\right )(\partial_v\Psi_i)(\partial_r\Psi_i)\\ + \int_{R}E_6(\partial_v\partial_r\Psi_i)(\partial_r\Psi_i)
		\end{aligned}
	\end{align*}
	Thus, we write \begin{align*}
		\begin{aligned}
			\int_{R} E_6 (\partial_r\partial_r \Psi_i)(\partial_v\Psi_i)=  \int_{\Sigma_{0}}P_{\mu}n_{\Sigma_{0}}^{\mu}- \int_{\Sigma_{\tau}}P_{\mu}n_{\Sigma_{\tau}}^{\mu}- \int_{\mathcal{H}^+}P_{\mu}n_{\mathcal{H}^+}^{\mu} \\ -
			\int_{R}\left (\partial_rE_6+ \dfrac{2E_6}{r}\right )(\partial_v\Psi_i)(\partial_r\Psi_i) -\int_{R}E_6(\partial_v\partial_r\Psi_i)(\partial_r\Psi_i).
		\end{aligned}
	\end{align*}
	The boundary term over the horizon $\mathcal{H}^+  $ can be estimated using Lemma \ref{Lemma-psi_vpsi_r-control}, and the term over $ \Sigma_{\tau} $, using Cauchy Schwartz and Theorem \ref{Nuniform } to estimate it in terms of the N-flux on $ \Sigma_{0}. $
	
	Now, since $ \abs{E_6}  $ is uniformly positive in $ \mathcal{A}_c $, we estimate the spacetime term $ \int_{R}E_6(\partial_v\partial_r\Psi_i)(\partial_r\Psi_i) $ via exactly the same relation as in  (\ref{E_5 spacetime control}).
	
	Finally, we control the remaining spacetime term $ \int_{R}\left (\partial_rE_6+ \frac{2E_6}{r}\right )(\partial_v\Psi_i)(\partial_r\Psi_i) $ by using Cauchy Schwartz, Lemma \ref{psi_r Control} and Theorem \ref{Nuniform }.

	\paragraph{  $$\boxed{\textbf{The term}\hspace{0.5cm} \int_{R} E_{7}\left(\partial_{v} \partial_{r} \Psi_i\right)\slashed{\Delta}\Psi_i }$$}
	Divergence theorem for $ P_{\mu}=E_7(\partial_r\Psi_i)\slashed{\Delta}\Psi_i \cdot (\partial_v)_{\mu} $ in the region $ \mathcal{R}(0,\tau) $ yields 
	\begin{align*}
		\begin{aligned}
			\int_R \nabla^{\mu}P_{\mu}=\int_{\Sigma_{0}}P_{\mu}n_{\Sigma_{0}}^{\mu}- \int_{\Sigma_{\tau}}P_{\mu}n_{\Sigma_{\tau}}^{\mu},
		\end{aligned}
	\end{align*}
	and note there is no boundary term on the horizon $ \mathcal{H}^+ $, since $ (\partial_v)_{\mu}\cdot n_{\mathcal{H}^+}^{\mu}= 0 $
	In particular, we obtain \begin{align*}
		\begin{aligned}
			\int_{R} E_{7}\left(\partial_{v} \partial_{r} \Psi_i\right)\slashed{\Delta}\Psi_i= \int_{\Sigma_{0}}E_7(\partial_r\Psi_i)\slashed{\Delta}\Psi_i \cdot (\partial_v)_{\mu}n_{\Sigma_{0}}^{\mu}- \int_{\Sigma_{\tau}}E_7(\partial_r\Psi_i)\slashed{\Delta}\Psi_i \cdot (\partial_v)_{\mu}n_{\Sigma_{\tau}}^{\mu}\\
			- \int_{R} E_{7}\left( \partial_{r} \Psi_i\right)\slashed{\Delta}(\partial_v\Psi_i)
		\end{aligned}
	\end{align*}
	First, note that since $ V_i'(r) $ is linear with respect to $ \ell $,$\  E_7 $ is uniformly bounded with respect to $ \ell $ in the region $ \mathcal{A}_{c}. $
	
	Away from the horizon, the boundary terms are controlled by the fluxes $ J^N[\Psi_i], J^T[T\Psi_i] $. On the other hand, near the horizon, we have \begin{align*}
		\begin{aligned}
			\abs{\int_{\Sigma\cap \mathcal{A}_c}E_7(\partial_r\Psi_i)\slashed{\Delta}\Psi_i} = \abs{\int_{\Sigma\cap \mathcal{A}_c}E_7\slashed{\nabla}(\partial_r\Psi_i)\slashed{\nabla}\Psi_i} 	 \leq \varepsilon \int_{\Sigma\cap \mathcal{A}_c}J_{\mu}^{\x}[\partial_r\Psi_i]n^{\mu}_{\Sigma}+ C_{\varepsilon}\int_{\Sigma\cap \mathcal{A}_c}J_{\mu}^{N}[\Psi_i]n^{\mu}_{\Sigma}
		\end{aligned},
	\end{align*}
	with $ C_{\varepsilon}>0 $ depending on $ \Sigma_0,M,\varepsilon. $ As far as the last spacetime term, we estimate it as \begin{align*}
		\begin{aligned}
			&\abs{\int_{\mathcal{A}_c} E_{7}\left( \partial_{r} \Psi_i\right)\slashed{\Delta}(\partial_v\Psi_i)} = \abs{\int_{\mathcal{A}_c} E_{7}\slashed{\nabla}\left( \partial_{r} \Psi_i\right)\slashed{\nabla}(\partial_v\Psi_i)} \\
			& \leq \varepsilon \cdot C\int_{\mathcal{A}_c}\left(\slashed{\nabla} \partial_{r} \Psi_i\right)^2 + C_{\varepsilon}\int_{\mathcal{A}_c} \abs{\slashed{\nabla} \partial_{v} \Psi_i}^2 \\ &\leq  \varepsilon \cdot C\int_{\mathcal{A}_c}\left(\slashed{\nabla} \partial_{r} \Psi_i\right)^2 + C_{\varepsilon}	\int_{\Sigma_{0}}J_{\mu}^{N}[T\Psi_i]n_{\Sigma_{0}}^{\mu}.
		\end{aligned}
	\end{align*}

	\paragraph{  $$\boxed{\textbf{The term}\hspace{0.5cm} \int_{R} E_{8}\left(\partial_{r} \partial_{r} \Psi_i\right)\slashed{\Delta}\Psi_i} $$ }
	
	Applying divergence theorem for $ P_{\mu} = E_8\cdot \slashed{\Delta}\Psi_i (\partial_r\Psi_i)(\partial_r)_{\mu} $ in the region $ \mathcal{R}(0,\tau) $ yields 
	\begin{align*}
		\begin{aligned}
			&\int_{R}E_{8}\left(\partial_{r} \partial_{r} \Psi_i\right)\slashed{\Delta}\Psi_i = \int_{\Sigma_{0}} E_8\cdot \slashed{\Delta}\Psi_i (\partial_r\Psi_i)(\partial_r)_{\mu}n_{\Sigma_{0}}^{\mu}- \int_{\Sigma_{\tau}} E_8\cdot \slashed{\Delta}\Psi_i (\partial_r\Psi_i)(\partial_r)_{\mu} n_{\Sigma_{\tau}}^{\mu} \\ -& \int_{\mathcal{H}^+} E_8\cdot \slashed{\Delta}\Psi_i (\partial_r\Psi_i)(\partial_r)_{\mu}n_{\mathcal{H}^+}^{\mu} - \int_{R}\partial_r(E_8\slashed{\Delta}\Psi_i)\partial_r\Psi_i- \int_{R}\dfrac{2E_8}{r}\slashed{\Delta}\Psi_i \partial_r\Psi_i
		\end{aligned}
	\end{align*}
	Again, we first show that $ E_8 $ is uniformly bounded in $ \ell $, since $ V_i'\cdot r^3 $ is linear in $ \ell $ and  \begin{align*}
		\begin{aligned}
			\abs{E_{8}}=\abs{\dfrac{2\x^r}{r}\left(1-\dfrac{V_i'\cdot r^3}{2\ell(\ell+1)}\right)}\leq 2 C\dfrac{\abs{\x^r}}{r},
		\end{aligned}
	\end{align*}
	for a positive constant $ C $.
	
	The boundary  term over $ \mathcal{H}^+ $ is then controlled by Lemma \ref{Lemma-Control Delta,r Psi on H}  and using Cauchy Schwartz we also have in the region $ \mathcal{A}_c $\begin{align*}
		\begin{aligned}
			\int_{\Sigma\cap \mathcal{A}_c}E_8\cdot \slashed{\Delta}\Psi_i (\partial_r\Psi_i) = -\int_{\Sigma\cap \mathcal{A}_c}E_8\cdot \slashed{\nabla}\Psi_i \slashed{\nabla}(\partial_r\Psi_i) \leq \varepsilon \int_{\Sigma\cap \mathcal{A}_c} J_{\mu}^{\x}[\partial_r\Psi_i]n_{\Sigma}^{\mu} + C_{\varepsilon} \int_{\Sigma\cap \mathcal{A}_c} J_{\mu}^{N}[\Psi_i]n_{\Sigma}^{\mu} 
		\end{aligned}
	\end{align*}
	Finally, for the remaining two spacetime integrals, using the fact that $ [\slashed{\Delta},\partial_r]\Psi_i=\frac{2}{r}\Psi_i $, we obtain \begin{align*}
		\begin{aligned}
			-\left (\int_{\mathcal{A}_c}\partial_r(E_8\slashed{\Delta}\Psi_i)\partial_r\Psi_i+ \int_{\mathcal{A}_c}\dfrac{2E_8}{r}\slashed{\Delta}\Psi_i \partial_r\Psi_i\right )=\int_{\mathcal{A}_c}E_8 (\slashed{\nabla}\partial_r\Psi_i)^2+ \int_{\mathcal{A}_c}(\partial_rE_8) \slashed{\nabla}\Psi_i \cdot \slashed{\nabla}\partial_r\Psi_i.
		\end{aligned}
	\end{align*}
	The first integral in the right-hand side, is estimated as long as $ \abs{E_8}\leq E_3,  $ however $ \x $ can be chosen such that \begin{align*}
		\abs{E_8}\leq 2 C \dfrac{\abs{\x^r}}{r}\leq -\dfrac{\x^r_r}{100}<E_3.
	\end{align*}

	For the second integral on the right-hand side, Cauchy Schwartz yields \begin{align*}
		\begin{aligned}
			\int_{\mathcal{A}_c}(\partial_rE_8) \slashed{\nabla}\Psi_i \cdot \slashed{\nabla}\partial_r\Psi_i\leq \varepsilon \int_{\mathcal{A}_c}\abs{\partial_r\slashed{\nabla}\Psi_i}^2 + C_{\varepsilon} \int_{\mathcal{A}_c}(\partial_rE_8)^2  \abs{\slashed{\nabla}\Psi_i}^2,
		\end{aligned}
	\end{align*}
	and $ \partial_rE_8 $ is uniformly bounded in $ \ell $ for the same reason $ E_8 $ was. So, $ J^N[\Psi_i] $ flux controls the latter integral.
	
	\paragraph{  $$\boxed{\textbf{The term}\hspace{0.5cm} \int_{R} E_{9}\left(\partial_{v} \partial_{r} \Psi_i\right)\left(\partial_{r} \partial_{r} \Psi_i\right)} $$ }
	
	Using the wave equations for $ \partial_v\partial_r\Psi_i $ we obtain \begin{align*}
		\begin{aligned}
			\int_{R} E_{9}\left(\partial_{v} \partial_{r} \Psi_i\right)\left(\partial_{r} \partial_{r} \Psi_i\right)= \int_R E_9(\partial_r\partial_r\Psi_i)\left(-\dfrac{1}{2}D(\partial_r\partial_r\Psi_i) -\dfrac{1}{r}\partial_v\Psi_i- \dfrac{1}{2}R(r)\partial_r\Psi_i-\dfrac{1}{2}\slashed{\Delta}\Psi_i +\dfrac{1}{2}V_i\Psi_i\right) 
		\end{aligned}
	\end{align*}
	
	We use  $ \Psi_i= -\frac{r^2}{\ell(\ell+1)}\slashed{\Delta}\Psi_i $ and we rewrite the equation	 above as 
	
	\begin{align} \label{misc E_9 }
		\begin{aligned}
			\int_{R} E_{9}\left(\partial_{v} \partial_{r} \Psi_i\right)\left(\partial_{r} \partial_{r} \Psi_i\right) =& - \int_{R}E_9\dfrac{D}{2}(\partial_r\partial_r\Psi_i)^2-\int_{R}\dfrac{E_9}{r}(\partial_r\partial_r\Psi_i)(\partial_v\Psi_i) \\ &- \int_{R} E_9\dfrac{R(r)}{2}(\partial_r\partial_r\Psi_i)(\partial_r\Psi_i)- \int_{R}\dfrac{E_9}{2}\left(1+\dfrac{V_i\cdot r^2}{\ell(\ell+1)}\right)(\partial_r\partial_r\Psi_i)\slashed{\Delta}\Psi_i 
		\end{aligned}
	\end{align}
	
	The first integral on the right-hand is estimated in $ \mathcal{A}_c $ because of the $ D $ factor that degenerates to second order on the horizon, whereas $ E_2 $ degenerates to first order.
	Similarly, by Cauchy Schwartz, we have\begin{align*}
		\int_{\mathcal{A}_c}E_9 \dfrac{R}{2}(\partial_r\partial_r\Psi_i)(\partial_r\Psi_i)\leq \int_{\mathcal{A}_c}(\partial_r\Psi_i)^2 + \int_{\mathcal{A}_c}(E_9 R(r))^2(\partial_r\partial_r\Psi_i)^2
	\end{align*}
	and since $ R(r)^2 \sim D(r) $ the second integral is estimated, whereas the first integral is estimated by Lemma \ref{psi_r Control}.
	
	Next, note that $ -\dfrac{E_9}{r} \sim D + E_6 $, therefore the second integral at (\ref{misc E_9 }) is already estimated above.
	
	Finally, in $ \mathcal{A}_c $ we have \begin{align*}
		\left(1+\dfrac{V_i\cdot r^2}{\ell(\ell+1)}\right) \leq 2, \ \ \forall \ell\geq i, \ i\in\br{1,2} \\
	\end{align*}
	and $ \abs{E_9} \ll E_3 $ , so  the last term at (\ref{misc E_9 }) can be estimated as in the $ E_8 $ case above.

	\paragraph{  $$\boxed{\textbf{The term}\hspace{0.5cm} \int_{R} E_{10}\left(\partial_{r} \partial_{r} \Psi_i\right)\left(\partial_{r} \Psi_i\right)} $$ }
	Applying the divergence theorem for the current $ P_{\mu}: =E_{10} (\partial_r\Psi_i)^2(\partial_{r})_{\mu}$ in the region $ \mathcal{R}(0,\tau), $ we obtain
	\begin{align}
		\begin{aligned}\label{misc E_10}
			2\int_{R}E_{10}(\partial_r\partial_r\Psi_i)(\partial_r\Psi_i) =& \int_{\Sigma_{0}} E_{10} (\partial_r\Psi_i)^2(\partial_{r})_{\mu}n_{\Sigma_{0}}^{\mu} - \int_{\Sigma_{\tau}} E_{10} (\partial_r\Psi_i)^2(\partial_{r})_{\mu}n_{\Sigma_{\tau}}^{\mu} \\
			- &\int_{R}\left((\partial_rE_{10})+ \dfrac{2}{r}E_{10}\right)(\partial_r\Psi_i)^2 - \int_{\mathcal{H}^+(0,\tau)}E_{10}(\partial_r\Psi_i)^2. 
		\end{aligned}
	\end{align}
	The boundary terms over $ \Sigma_{\tau},\Sigma_{0} $ are estimated by the $ J^N[\Psi_i] $ flux, and the spacetime integral on the right is estimated by Lemma \ref{psi_r Control}. Last, we need to deal with the boundary term on the horizon $ \mathcal{H}^+. $
	In view of $ E_{10}(M)=-\x^r(M)\cdot R'(M)= -\x^r(M)\dfrac{2}{M^2}>0 $, the only way to control this term is to borrow from the horizon boundary term of (\ref{Div partial_rPsi}). \\
	In particular, we use Poincare Inequality \begin{align}
		\abs{\int_{\mathcal{H}^+}\dfrac{E_{10}}{2}(\partial_r\Psi_i^2)}\leq \int_{\mathcal{H}^+}\dfrac{E_{10}}{2}\cdot \dfrac{M^2}{\ell(\ell+1)}\abs{\slashed{\nabla}\partial_r\Psi_i}^2=\int_{\mathcal{H}^+}- \dfrac{\x^r(M)}{\ell(\ell+1)}\abs{\slashed{\nabla}\partial_r\Psi_i}^2
	\end{align}
	Thus, using the relation (\ref{x horizon flux}), it suffices to have \begin{align*}
		\begin{aligned}
			-\dfrac{\x^r(M)}{2}\left(1+ \dfrac{2\cdot (-1)^{i-1}}{(\ell+i-1)}\right)\geq - \dfrac{\x^r(M)}{\ell(\ell+1)} \hspace{0.5cm} \Leftrightarrow \hspace{0.5cm} \ell(\ell+1)\left(1+ \dfrac{2\cdot (-1)^{i-1}}{(\ell+i-1)}\right)\geq 2.
		\end{aligned}
	\end{align*} 
	For $ i=1, $ we have $ \ell(\ell+1)\left(1+\dfrac{2}{\ell}\right)=(\ell+1)(\ell+2)\geq 6 >2  $ for all $ \ell\geq 1. $ \\
	For $ i=2 $, we get $ \ell(\ell+1)\left(1-\dfrac{2}{\ell+1}\right)=\ell(\ell-1)\geq 2  $ for all $ \ell\geq 2. $ 
	
	Note, in the case $ i=2 $, $ \ell=2, $ we need to use the whole quantity (\ref{x horizon flux}) to control the last term of (\ref{misc E_10}). \textbf{Thus, there will be no $ \abs{\slashed{\nabla}\left (\partial_r\Psi_2^{(2)}\right )}^2 $ term in the boundary integral of $ \mathcal{H}^+ $ at (\ref{Div partial_rPsi}).}
	This is a manifestation of a conservation law that holds on the horizon $ \mathcal{H}^+ $ for $ \Psi_2^{(2)} $, which is supported on the $ \ell=2 $ frequency; see Section \ref{Conservation Section}. We see later on that an analogous situation occurs when controlling higher-order estimates.

	\paragraph{  $$\boxed{\textbf{The term}\hspace{0.5cm} \int_{R} E_{11}\left(\partial_{r}  \Psi_i\right)^2 }$$ }
	Let us first examine the coefficient $ E_{11} $. It's straightforward computations to check that \begin{align*}
		V_i'(r)=-\dfrac{3}{r}V_i(r)- \dfrac{6M^2}{r^5}, \hspace{1cm} \text{for both }\ i=1,2.
	\end{align*} 
	Thus, we have \begin{align}
		\begin{aligned}
			E_{11}(r)=-\dfrac{1}{2}\x(V_i)-V_i\left (\dfrac{\x_r^r}{2}+\dfrac{\x^r}{r}\right)=&-\dfrac{1}{2}\x^r\cdot V_i'(r)-V_i\left (\dfrac{\x_r^r}{2}+\dfrac{\x^r}{r}\right) \\
			= &\ \ \dfrac{1}{2}V_i\left(\dfrac{\x^r}{r}-\x^r_{r}\right) + 3\dfrac{M^2}{r^5}\x^r. 
		\end{aligned}
	\end{align}
	Note $ V_i $ depends on $ \ell, $ thus we cannot use Lemma \ref{psi_r Control} for an estimate since the constants will depend on $ \ell $ as well. Instead, we will show that  $ \int_R E_{11}(\partial_r\Psi_i)^2 $ is positive definite for $ i=1, $ and while non-positive for $ i=2 $, Poincare inequality suffices to estimate it.
	In particular, \begin{itemize}
		\item For $ i=1 $, we have \begin{align*}\begin{aligned}
				E_{11} &= \dfrac{M}{r^3}\left (\ell+1-3\sqrt{D}\right )\left (\dfrac{\x^r}{r}-\x^r_{r}\right )+ \dfrac{3M^2}{r^5}\x^r \\ 
				& = \dfrac{M}{r^3}\left(-\x_r^r\big(\ell+1-3\sqrt{D}\big)+ \dfrac{\x^r}{r}\big (4-6\sqrt{D}+\ell\big )\right) 
			\end{aligned}
		\end{align*}
		In $ \mathcal{A}_c, $ we chose $ -\x_r^r(r)\gg -\dfrac{\x^r(r)}{r} $, thus  \begin{align*}
			\begin{aligned}
				E_{11}>& -\dfrac{\x^r}{r}\dfrac{M}{r^3}\left(25(\ell+1-3\sqrt{D})-4+6\sqrt{D}-\ell\right)\\
				&=- \dfrac{\x^r}{r}\dfrac{M}{r^3}\big( 24 \ell +21 - 69\sqrt{D} \big)   
			\end{aligned}
		\end{align*}
		which is positive definite for all $ \ell\geq 1 $ in $\mathcal{A}_c.$
		\item For $ i=2, $ \begin{align}
			\begin{aligned}
				E_{11} =&  -\dfrac{M}{r^3}(3\sqrt{D}+\ell)\left(\dfrac{\x^r}{r}-\x_r^r\right)+ 3\dfrac{M^2}{r^4}\dfrac{\x^r}{r} \\
				=& \dfrac{M}{r^3}\left((3\sqrt{D}+\ell)\x_r^r + \dfrac{\x^r}{r}(3-6\sqrt{D}-\ell)\right). 
			\end{aligned}
		\end{align}
		The dominant term $ \x_r^r $ has the wrong sign, so using Poincare inequality and borrowing a fraction $ \dfrac{2}{3}<\alpha<1 $ from $ E_3 \abs{\slashed{\nabla}(\partial_r\Psi_2)}^2 $ we obtain \begin{align}
			\begin{aligned}
				\int_{S^2(r)} E_{11} (\partial_r\Psi_2)^2 + \alpha E_3 \abs{\slashed{\nabla}(\partial_r\Psi_2)}^2 \geq  \left(E_{11} + \dfrac{\alpha\cdot \ell(\ell+1)}{r^2}E_3 \right) \int_{S^2(r)} (\partial_r\Psi_2)^2	\end{aligned}
		\end{align}
		We will show that the coefficient of the right-hand side above, is positive definite uniformly in $ \ell $ in the region $\mathcal{A}_c.  $ Indeed, for $ \ell\geq L $ where $ L $ is a fixed large enough integer, 
		\begin{align} \label{misc E11}
			\begin{aligned}
				&E_{11} + \dfrac{\alpha\cdot \ell(\ell+1)}{r^2}E_3= \\ &=  \dfrac{1}{r^2}\left((1-x)\left((3\sqrt{D}+\ell)\x_r^r + \dfrac{\x^r}{r}(3-6\sqrt{D}-\ell)\right) -\alpha \dfrac{\ell(\ell+1)}{2}\x^r_r \right) 
			\end{aligned}
		\end{align}
		and the $ \x_r^r $ term's coefficient has the right sign everywhere in $ R(0,\tau) $, due to the large value of $ -\ell(\ell+1). $ Thus, for $ \ell\geq L $, the term above is positive definite everywhere. For, $ 2\leq \ell \leq L-1 $, we evaluate the expression (\ref{misc E11}) on the horizon $ \mathcal{H}^+ $ and we obtain
		\begin{align}
			\begin{aligned}
				E_{11} + \dfrac{\alpha\cdot \ell(\ell+1)}{r^2}E_3 \Big|_{\br{r=M}}=\dfrac{1}{M^2}\left( \dfrac{\ell}{2}\left( 2 -\alpha(\ell+1)\right) \x_r^r(M)+\dfrac{\x^r(M)}{M}(3-\ell)\right), 
			\end{aligned}
		\end{align}
		and for any $ 2/3 < \alpha<1 $ the above expression is positive for all $ 2\leq \ell\leq L-1 $ in view of our choice  $ -\x_r^r(M)\gg -\frac{\x^r(M)}{M} . $ 
		Thus, for each $ 2\leq \ell\leq L-1 $ there exists a neighborhood $ A_\ell $ of the horizon where (\ref{misc E11}) is positive definite. By, choosing the intersection of all these neighborhoods we obtain a region including $ \mathcal{A}_c $ where (\ref{misc E11}) is positive definite for every $ \ell \geq 2 $ in the case $ i=2.  $
	\end{itemize}
	
	\paragraph*{Conclusion.} As we can see from the analysis above, it suffices to consider $ -\x^r_r(M) $, $ \x^v_r(M) $ sufficiently large, $ r_d<2M $ and $ r_c $  close enough to the horizon $ \mathcal{H}^+ \equiv\lbrace r=M\rbrace $ in order to obtain the following  estimate \begin{theorem} \label{remove_degeneracy}
		There exists a positive constant $ C >0 $ depending only on $ \Sigma_0, M $ such that for all solutions $ \Psi_i^{^{(\ell)}}$ to (\ref{waveeq}), supported on the fixed frequency $ \ell\geq i, \ i\in \lbrace 1,2 \rbrace $,  the following estimate hold \begin{align}
			\begin{aligned}\label{H^2-control}
				& \int_{\Sigma_{\tau} \cap \mathcal{A}_c}\left(\partial_{v} \partial_{r} \Psi_i\right)^{2}+\left(\partial_{r} \partial_{r} \Psi_i\right)^{2}+\left| \slashed{\nabla} \partial_{r} \Psi_i\right|^{2} +(\partial_r\p)^2 + \int_{\mathcal{H}^{+}}\left(\partial_{v} \partial_{r} \Psi_i\right)^{2}+ \bm{\chi}_{i,2}\left|\slashed{\nabla}  \partial_{r} \Psi_i\right|^{2} \\
				&+\int_{\mathcal{A}_c}\left(\partial_{v} \partial_{r} \Psi_i\right)^{2}+\sqrt{D}\left(\partial_{r} \partial_{r} \Psi_i\right)^{2}+\left|\slashed{\nabla} \partial_{r} \Psi_i\right|^{2} + (\partial_r\Psi_i)^2 \\
				\leq & C \left(  \int_{\Sigma_{0}} J_{\mu}^{n_{\Sigma_0}}[\Psi_i] n_{\Sigma_{0}}^{\mu}+ \int_{\Sigma_{0}} J_{\mu}^{n_{\Sigma_0}}[T \Psi_i] n_{\Sigma_{0}}^{\mu}+ \int_{\Sigma_{0}\cap \mathcal{A}_c}  J_{\mu}^{n_{\Sigma_0}}\left[\partial_{r} \Psi_i\right] n_{\Sigma_0}^{\mu}\right),
			\end{aligned}
		\end{align}
		where $ \bm{\chi}_{i,2}  = \begin{cases}
			0, \ \text{if}\ \ i=\ell =2 \\
			1,\ \text{otherwise}.
		\end{cases}. $ 
	\end{theorem}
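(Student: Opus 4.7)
The plan is to apply Stokes' theorem to the current $J^{\mathcal{X}}_{\mu}[\partial_{r}\Psi_i]$ in the region $\mathcal{R}(0,\tau)$, as in (\ref{Div partial_rPsi}), and then combine this with the $L^{2}$ control of $(\partial_{r}\Psi_i)^{2}$ provided by Lemma \ref{psi_r Control}. Since $\mathcal{X}$ is timelike in $\mathcal{A}_{c}$ and vanishes for $r\geq r_{d}<2M$, the $\Sigma_{\tau}$-flux controls $(\partial_{v}\partial_{r}\Psi_i)^{2}$, $(\partial_{r}\partial_{r}\Psi_i)^{2}$, $|\slashed{\nabla}\partial_{r}\Psi_i|^{2}$, and $(\partial_{r}\Psi_i)^{2}$ in $\mathcal{A}_{c}$, while outside $\mathcal{A}_{c}$ all second-order derivatives are already controlled by the $T$-commuted $N$-flux via the elliptic estimates (\ref{H^2(Sigma)})-(\ref{H^2(A)}) and Theorem \ref{Nuniform }. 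On the horizon, by (\ref{x horizon flux}) the flux contributes $\mathcal{X}^{v}(M)(\partial_{v}\partial_{r}\Psi_i)^{2}$ together with an angular derivative term whose coefficient is strictly positive except in the borderline case $i=2,\ \ell=2$; this dichotomy is precisely what is recorded by the indicator $\bm{\chi}_{i,2}$.

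The heart of the argument is to show that, after integrating (\ref{E_i coefficients}) in $\mathcal{R}(0,\tau)$, the $E_{1}, E_{2}, E_{3}$ terms dominate the spacetime integrand. For each remaining coefficient $E_{4},\dots,E_{10}$, I would invoke exactly the case-by-case estimates carried out above: Cauchy--Schwarz with a small parameter $\varepsilon>0$ to absorb part of the error into $\int_{\mathcal{A}_{c}} E_{1}(\partial_{v}\partial_{r}\Psi_i)^{2}+E_{2}(\partial_{r}\partial_{r}\Psi_i)^{2}+E_{3}|\slashed{\nabla}\partial_{r}\Psi_i|^{2}$, and Lemma \ref{psi_r Control} to dispose of the $(\partial_{r}\Psi_i)^{2}$ residue into the $N$-fluxes of $\Psi_i$ and $T\Psi_i$. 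The terms $E_{6},E_{7},E_{8},E_{10}$ require preliminary integration by parts producing horizon boundary contributions, which are themselves controlled by Lemmas \ref{Lemma-psi_vpsi_r-control}, \ref{Lemma-Control Delta,r Psi on H} and the first Hardy inequality; note that $V_{i}'\cdot r^{3}$ and $V_{i}\cdot r^{2}/\ell(\ell+1)$ are uniformly bounded in $\ell$, which is crucial to keep constants $\ell$-independent.

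The zeroth-order coefficient $E_{11}$ requires separate treatment because $V_{i}$ and hence $\mathcal{X}(V_{i})$ grows linearly in $\ell$: for $i=1$ the analysis in the $E_{11}$ paragraph shows pointwise positivity in $\mathcal{A}_{c}$ once $-\mathcal{X}^{r}_{r}\gg -\mathcal{X}^{r}/r$, while for $i=2$ I would combine $E_{11}(\partial_{r}\Psi_{2})^{2}$ with a fraction $\alpha\in(2/3,1)$ of $E_{3}|\slashed{\nabla}\partial_{r}\Psi_{2}|^{2}$ via Poincar\'e's inequality; the $\alpha\ell(\ell+1)/r^{2}\cdot E_{3}$ contribution then dominates the $\ell$-linear negative part of $E_{11}$ for all $\ell\geq 2$ on $\mathcal{H}^{+}$, and a neighborhood argument extends this uniformly to $\mathcal{A}_{c}$.

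Putting everything together, after choosing the parameters $-\mathcal{X}^{r}_{r}(M)$ and $\mathcal{X}^{v}_{r}(M)$ sufficiently large, $r_{d}<2M$, and $r_{c}$ sufficiently close to $M$, the positive spacetime integrand controls the left-hand side of (\ref{H^2-control}), while the error terms are absorbed either by smallness of $\varepsilon$ or by the already-established $N$-fluxes of $\Psi_i$ and $T\Psi_i$ and the pointwise $N$-flux of $\partial_{r}\Psi_i$ on the initial slice $\Sigma_{0}\cap\mathcal{A}_{c}$. The main obstacle I anticipate is the borderline horizon term in $E_{10}$ for $(i,\ell)=(2,2)$: there the available Poincar\'e gain $\ell(\ell+1)(1-2/(\ell+1))=\ell(\ell-1)=2$ exactly saturates the coefficient, forcing us to spend the \emph{entire} angular portion of the horizon flux (\ref{x horizon flux}) and thereby losing the $|\slashed{\nabla}\partial_{r}\Psi_{2}^{(2)}|^{2}$ term on $\mathcal{H}^{+}$. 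This saturation is not a technical defect but a shadow of the conservation law on $\mathcal{H}^{+}$ to be established in Section \ref{Conservation Section}, and it is precisely the origin of the indicator $\bm{\chi}_{i,2}$ in the statement.
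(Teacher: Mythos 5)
Your proposal reproduces the paper's own strategy step by step: the energy identity (\ref{Div partial_rPsi}) for the current $J_{\mu}^{\mathcal{X}}[\partial_r\Psi_i]$, the case-by-case absorption of the $E_4,\dots,E_{11}$ bulk terms via Cauchy--Schwarz, Lemma \ref{psi_r Control}, and the Hardy/horizon lemmas, the special positivity argument for $E_{11}$ with Poincar\'e, and the identification of the $E_{10}$ horizon boundary term as the source of the indicator $\bm{\chi}_{i,2}$ when $(i,\ell)=(2,2)$. This is essentially the paper's proof, correctly summarized, and no substantive deviation or gap is present.
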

	\begin{proof}
		With $ \x  $ chosen as described in the conclusion
		, we apply the energy identity (\ref{Div partial_rPsi}) and using estimates (\ref{H^2(Sigma)}, \ref{H^2(A)}) while also all bulk estimates controlling $ E_{4-11} $  terms for an $ \varepsilon >0 $ small enough, we obtain a constant $ C>0 $ depending on $ M,\Sigma_0,r_c, r_{d}, \x $ satisfying (\ref{H^2-control}). However, $ \x,r_c $ and $ r_d $ depend solely on $ M $ and fixing specific values satisfying the restrictions of the conclusion we obtain a constant $ C $ that depends only on $ M, \Sigma_0. $
		
		Note, the right-hand side of (\ref{H^2-control}) is justified since $ \x \sim n_{\Sigma_{\tau}} $ in the region $ \mathcal{A}_c $. In addition, we observe the degeneracy $ \bm{\chi}_{2,2} $ at the horizon for the angular term of $ \Psi_2^{^{(\ell=2)}} $ only. We will see a similar degeneracy holds for higher order transversal derivatives of $ \Psi_i^{^{(\ell)}} $, $ \ell \geq i, \ i\in \br{1,2} , $ close to $ \mathcal{H}^+ $ . \\
	\end{proof}

	\section{Conservation Laws Along the Event Horizon.} \label{Conservation Section} In this section we show that solutions $ \Psi_i^{^{(\ell)}} $ to (\ref{waveeq}) satisfy conservation laws along the event horizon $ \mathcal{H}^+. $ The results developed here are used later on not only to produce higher order $ L^2 $ estimates but also to prove non-decay and growth of solutions to (\ref{waveeq}) asymptotically along the event horizon. 
	
	We start with a low-frequency example that will motivate the idea for the proof of the general case. Let us consider the solution to (\ref{waveeq}), $\psi:= \Psi_2^{(2)} $, i.e. $ i=2, \ell =2 ,$ and with respect to the ingoing  coordinates $ (v,r,\vartheta,\varphi) $ the wave equation for $ \psi $ reads
	\[ D \partial_r\partial_r\psi + 2 \partial_v\partial_r\psi + \dfrac{2}{r}\partial_v\psi +R(r)\partial_r\psi + \slashed{\Delta}\psi - V_2 \psi =0,\]
	where $  R = D' + \frac{2D}{r} $ and $ V_2 (r)= -\frac{2M}{r^3}(3\sqrt{D}-2) $.
	Since $ \psi $ is supported on the fixed frequency $ \ell=2 $ we have $ \slashed{\Delta}\psi= -\frac{6}{r^2}\psi $, and in view of $ D(M)=R(M)=0 $, the wave
	equation on the horizon $ \mathcal{H}^+ $ yields \begin{align}
		T\left (M^2\partial_r\psi+M\psi \right ) = \psi.	\label{con-misc1}
	\end{align}
	Now, we take a $ \partial_r-$derivative of the wave equation (\ref{waveeq}), i.e. $ \partial_{r}\left(\square_{g} \psi-V_2\psi\right)=0  $ and we obtain \begin{align*}
		\begin{aligned}
			D \partial_{r} \partial_{r} \partial_{r} \psi+2 T \partial_{r} \partial_{r} \psi+\frac{2}{r} \partial_{r} T \psi+R \partial_{r} \partial_{r} \psi+\partial_{r} \slashed{\Delta} \psi+D^{\prime} \partial_{r} \partial_{r} \psi-\frac{2}{r^{2}} T \psi\\ +R^{\prime} \partial_{r} \psi - (\partial_rV_2)\psi - V_2\partial_r\psi\  =\  0. 
		\end{aligned}
	\end{align*}
	Evaluating the equation above on the horizon $ \mathcal{H}^+ $ yields 
	\begin{align}
		T\left(\partial_r\partial_r\psi + \dfrac{1}{M}\partial_r\psi-\dfrac{1}{M^2}\psi\right)  + \dfrac{3}{M^3}\psi=0.    \label{con_misc2}
	\end{align} 
	In particular, note that the coefficient of $ \partial_r\psi $ is zero since \[ \left (R'(M)-V_2(M) - \dfrac{6}{M^2}\right )\partial_r\psi = \left(\dfrac{2}{M^2}+ \dfrac{4}{M^2}-\dfrac{6}{M^2}\right) \partial_r\psi =0.\]
	Now, using both (\ref{con-misc1},\ref{con_misc2}) we obtain \begin{align}
		\begin{aligned}
			T\left(\partial_r\partial_r\psi + \dfrac{1}{M}\partial_r\psi-\dfrac{1}{M^2}\psi\right)  + \dfrac{3}{M^3}T\left (M^2\partial_r\psi+M\psi \right ) &= 0 \\
			\Rightarrow\ \  T\left(\partial_r\partial_r\psi + \dfrac{4}{M}\partial_r\psi+\dfrac{2}{M^2}\psi  \right) \ &= \ 0.
		\end{aligned}
	\end{align}
	Since the vectorfield $ T $ is tangent to the null generators of $ \mathcal{H}^+ $, we deduce from above that the expression \[ H_{2}[\Psi_2]:= \partial_r\partial_r\Psi_2^{(2)} + \dfrac{4}{M}\partial_r\Psi_2^{(2)}+\dfrac{2}{M^2}\Psi_2^{(2)} \]
	is \textbf{conserved} along the null geodesics of $ \mathcal{H}^+. $
	
	Similarly, one can show that  for $ i=1, \ell=1 $ the corresponding conserved quantity along the horizon $ \h $ is \begin{align*}
		H_1[\Psi_1]:=\partial_{r}^{3} \Psi_1^{(1)}+\frac{11}{M} \partial_{r}^{2} \Psi_1^{(1)}+\frac{29}{3 M^{2}} \partial_{r} \Psi_1^{(1)}-\frac{19}{3 M^{3}} \Psi_1^{(1)},
	\end{align*}  \\
	In fact, we show below that for any $ \ell\geq i, \ i \in\br{1,2} $, there is a similar expression in terms of $ \Psi_i^{^{(\ell)}} $ denoted by $ H_{\ell}[\Psi_i] $ which is conserved along the null geodesics of $ \mathcal{H}^+. $  
	\begin{remark}
		The notation $ H_{\ell}[\Psi_i] $ comes from \cite{aretakis2011stability} where such conserved quantities were discovered by Aretakis in the case of homogeneous wave equations $ \square_g\psi=0 $, and they usually go by the name of Aretakis constants.
	\end{remark}
	Before we proceed to the general case, we first need the following proposition.
	\begin{proposition} \label{Span-prop}
		For any solution $ \Psi_i^{^{(\ell)}}$ to (\ref{waveeq}), supported on the fixed frequency $ \ell\geq i, \ i\in\br{1,2}  $ and  for any $ 0\leq s \leq (\ell-1) +(-1)^{i+1} $ we have \begin{align}
			\partial_r^{s}\Psi_i  \in span \Big\lbrace  T \Big(M^{j+1}\cdot\partial_r^j\Psi_i\Big) \ \ \big| \ \ 0\leq j \leq s + 1 \Big\rbrace
		\end{align} 
	\end{proposition}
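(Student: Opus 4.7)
The plan is to induct on $s$, using $\partial_r^s$-derivatives of the wave equation (\ref{waveeq}) restricted to the horizon $\h$. In ingoing coordinates $(v,r,\vartheta,\varphi)$ the equation reads
\[D\partial_r^2\Psi_i + 2T\partial_r\Psi_i + \tfrac{2}{r}T\Psi_i + R\partial_r\Psi_i + \slashed{\Delta}\Psi_i - V_i\Psi_i = 0,\]
with $R=D'+2D/r$. Three structural facts drive the argument: on $\h$ we have $D(M)=D'(M)=R(M)=0$ while $D''(M)=R'(M)=2/M^2$; $T$ commutes with $\partial_r$; and since $\Psi_i$ is supported on a single angular frequency, $\slashed{\Delta}\Psi_i=-\ell(\ell+1)r^{-2}\Psi_i$ is purely algebraic, so no genuine angular derivatives appear at any order. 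Applying $\partial_r^s$ via Leibniz and evaluating at $r=M$ therefore yields a relation of the shape
\[c_s^{(i,\ell)}\,\partial_r^s\Psi_i + T\bigl(2\partial_r^{s+1}\Psi_i + A_s\bigr) = B_s,\]
where $A_s$ and $B_s$ are constant linear combinations of $\partial_r^j\Psi_i$ for $0\leq j\leq s$ and $0\leq j\leq s-1$ respectively, and $c_s^{(i,\ell)}$ is a scalar depending only on $s,\ell,i,M$.

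The key computation is the explicit form of $c_s^{(i,\ell)}$. Collecting the coefficient of $\partial_r^s\Psi_i$ coming from $\binom{s}{2}D''(M)$ via $\partial_r^s(D\partial_r^2\Psi_i)$, from $sR'(M)$ via $\partial_r^s(R\partial_r\Psi_i)$, from $-\ell(\ell+1)/M^2$ via $\partial_r^s(\slashed{\Delta}\Psi_i)$, and from $-V_i(M)$ via $-\partial_r^s(V_i\Psi_i)$, summing and factoring yields
\[c_s^{(1,\ell)} = \frac{(s-\ell-1)(s+\ell+2)}{M^2}, \qquad c_s^{(2,\ell)} = \frac{(s-\ell+1)(s+\ell)}{M^2}.\]
These vanish only at $s=\ell+1$ and $s=\ell-1$ respectively — one step beyond the permitted ranges $0\leq s\leq\ell$ (for $i=1$) and $0\leq s\leq\ell-2$ (for $i=2$) — so $c_s^{(i,\ell)}\neq 0$ throughout the range of interest. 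Dividing through gives
\[\partial_r^s\Psi_i \;=\; \frac{1}{c_s^{(i,\ell)}}\Bigl(B_s - T\bigl(2\partial_r^{s+1}\Psi_i + A_s\bigr)\Bigr).\]
The $T$-term is already in the required form: its argument is a scalar combination of $\partial_r^j\Psi_i$ for $0\leq j\leq s+1$, which after absorbing constants into the normalising factors $M^{j+1}$ matches the span description verbatim. The base case $s=0$ is read off directly as $\Psi_i = (c_0^{(i,\ell)})^{-1}T\bigl(2\partial_r\Psi_i + (2/M)\Psi_i\bigr)$, reproducing the worked example (\ref{con-misc1}) in the case $i=2$, $\ell=2$.

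For the inductive step, the remaining term $B_s/c_s^{(i,\ell)}$ involves only $\partial_r^j\Psi_i$ with $j\leq s-1$, and the inductive hypothesis places each such $\partial_r^j\Psi_i$ in the $T$-span up to level $j+1\leq s\leq s+1$; linearity of $T$ therefore puts the whole of $\partial_r^s\Psi_i$ in the claimed span. The main obstacle is purely arithmetic: confirming the factorisation of $c_s^{(i,\ell)}$ with the correct signs for both spin cases and verifying that its first zero indeed occurs at $s=(\ell-1)+(-1)^{i+1}+1$ rather than earlier. That this vanishing occurs at exactly the first $s$ outside the permitted range is the algebraic mechanism underlying the Aretakis-type conserved quantities $H_\ell[\Psi_i]$ recalled before the proposition: at that threshold, the same identity can no longer be solved for $\partial_r^s\Psi_i$ and instead asserts that a specific $s$-th order combination of transversal derivatives is annihilated by $T$ along the null generators of $\h$.
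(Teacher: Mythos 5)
Your proposal is correct and follows essentially the same route as the paper: take $\partial_r^s$ of the wave equation, restrict to $\mathcal{H}^+$ where $D(M)=D'(M)=R(M)=0$ and $\slashed{\Delta}$ acts algebraically on a fixed frequency, isolate $\partial_r^s\Psi_i$ via the non-vanishing coefficient, and iterate downward on the remaining lower-order $\partial_r^j\Psi_i$ terms (which you formalise as an induction). Your explicit factorisation $c_s^{(1,\ell)} = (s-\ell-1)(s+\ell+2)/M^2$, $c_s^{(2,\ell)} = (s-\ell+1)(s+\ell)/M^2$ is equivalent to the paper's expression $\tfrac{1}{M^2}\big(s(s+1)-(\ell+1)(\ell+2)\big)$ resp. $\tfrac{1}{M^2}\big(s(s+1)-\ell(\ell-1)\big)$ and makes the vanishing threshold transparent.
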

	\begin{proof}
		Let us fix a frequency $ \ell\geq i. $ Taking $ s- $many $ \partial_r $ derivatives of (\ref{waveeq}), i.e. $ \partial_r^{s}(\square_g\Psi_i-V_i\Psi_i)=0 $,  yields the equation \begin{align}
			\begin{aligned}
				&D\left(\partial_{r}^{s+2} \Psi_i\right)+2 \partial_{r}^{s+1} T \Psi_i+\frac{2}{r} \partial_{r}^{s} T \Psi_i+R \partial_{r}^{s+1} \Psi_i+\partial_{r}^{s} (\slashed{\Delta}\Psi_i) +\sum_{j=1}^{s}\binom{s}{j} \partial_{r}^{j} D \cdot \partial_{r}^{s-j+2} \Psi_i  \\
				&+\sum_{j=1}^{s}\binom{s}{j} \partial_{r}^{j} \left( \frac{2}{r} \right)  \cdot \partial_{r}^{s-j} T \Psi_i+\sum_{j=1}^{s}\binom{s}{j} \partial_{r}^{j} R \cdot \partial_{r}^{s-j+1} \Psi_i -  \sum_{j=0}^{s}\binom{s}{j} \partial_{r}^{j} V_i \cdot \partial_{r}^{s-j} \Psi_i \ = \ 0. \label{k-wave}
			\end{aligned}
		\end{align}
		In view of $ D(M)=R(M)=0 $ and $  \slashed{\Delta}\Psi_i =- \frac{\ell(\ell+1)}{r^2}\Psi_i\  $, evaluating the above equation on the horizon $ \mathcal{H}^+ $ allows us to express the top order $ \partial_r- $term, $ \partial_r^s \Psi_i$, in terms of lower order $ \partial_r-$ derivatives and $ T(\partial_r^m\Psi_i) $, for $ 0\leq m\leq s+1, $ as long as the coefficient of $ \partial_r^s\Psi_i $ does not vanish. The coefficient of $ \partial_r^s\Psi_i $ on the horizon $ \mathcal{H}^+ $ is given by \begin{align}
			\begin{aligned}
				\binom{s}{2}D''(r)+\binom{s}{1}R'(r)- \dfrac{\ell(\ell+1)}{M^2} + \dfrac{2}{M^2}(-1)^{i}(\ell+2-i)= \\ = \begin{cases}
					\dfrac{1}{M^2}\Big(s(s+1)-(\ell+1)(\ell+2) \Big), \hspace{1cm} i=1 \\
					\\
					\dfrac{1}{M^2}\Big(s(s+1)-\ell(\ell-1) \Big), \hspace{1.9cm} i=2 \label{top-order coefficient}
				\end{cases}
			\end{aligned}
		\end{align}
		By assumption we have $ s\leq (\ell-1)+(-1)^{i+1} $, thus we see that the coefficient of $ \partial_r^s\Psi_i $ is not zero. Repeating the same procedure consecutively
		for all lower order terms $ \partial_r^k\Psi_i $, $ 0\leq k< s, $ we finally express $ \partial_r^s\Psi_i $ only in terms of $ T(\partial_r^j\Psi_i) $, $ 0\leq j \leq s+1 $.

	\end{proof}
	
	\begin{theorem}\label{conservation horizon}
		Let $ \ell \in \mathbb{N} $, with $ \ell\geq i, \ i\in\br{1,2} $, then there exist constants $ c_i^j$,  $\ j=0,1,\dots,\ell+(-1)^{i+1} $ depending on $ M,\ell, $ such that for all solution $ \Psi_i^{^{(\ell)}} $ to (\ref{waveeq}), supported on the fixed frequency $ \ell, $  the quantities
		\begin{align}
			H_{\ell}[\Psi_1]&\ = \ \partial_r^{\ell+2}\Psi_1^{^{(\ell)}} + \sum_{j=0}^{\ell+1} c_1^j \cdot \partial_r^j\Psi_1^{^{(\ell)}}  \\
			H_{\ell}[\Psi_2]&\ =\  \partial_r^{\ell}\Psi_2^{^{(\ell)}} + \sum_{j=0}^{\ell-1} c_2^j \cdot \partial_r^j\Psi_2^{^{(\ell)}}  
		\end{align}
		are conserved along the null generators of $ \mathcal{H}^+. $
	\end{theorem}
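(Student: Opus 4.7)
The plan is to mimic the low frequency examples, taking a precisely chosen number of transversal derivatives of the wave equation, evaluating on $\h$, and then using Proposition \ref{Span-prop} to rewrite every term as a $T$-derivative. Set $s^{\ast} := \ell+1$ when $i=1$ and $s^{\ast} := \ell-1$ when $i=2$; these are exactly the borderline values of $s$ in the computation (\ref{top-order coefficient}). I would begin by applying $\partial_r^{s^{\ast}}$ to (\ref{waveeq}) as in (\ref{k-wave}) and restricting to $\mathcal{H}^{+}\equiv\{r=M\}$. Since $D(M)=R(M)=0$, the terms with coefficients $D\cdot \partial_{r}^{s^{\ast}+2}\Psi_{i}$ and $R\cdot \partial_{r}^{s^{\ast}+1}\Psi_{i}$ drop out. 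The crucial observation is that for these particular values of $s^{\ast}$, the coefficient of $\partial_{r}^{s^{\ast}}\Psi_{i}$ on $\h$, given by (\ref{top-order coefficient}), vanishes identically: for $i=1$ we have $s^{\ast}(s^{\ast}+1)=(\ell+1)(\ell+2)$, and for $i=2$ we have $s^{\ast}(s^{\ast}+1)=\ell(\ell-1)$.

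Consequently, the restricted equation takes the schematic form
\begin{equation*}
2\,\partial_{r}^{s^{\ast}+1}T\Psi_{i}+\sum_{j=0}^{s^{\ast}}\alpha_{j}\,\partial_{r}^{j}T\Psi_{i}+\sum_{j=0}^{s^{\ast}-1}\beta_{j}\,\partial_{r}^{j}\Psi_{i}\ =\ 0\qquad \text{on }\h,
\end{equation*}
with coefficients $\alpha_{j},\beta_{j}$ depending only on $M$ and $\ell$. The first two sums are already manifestly $T$-derivatives of the expression $2\,\partial_{r}^{s^{\ast}+1}\Psi_{i}+\sum_{j}\alpha_{j}\partial_{r}^{j}\Psi_{i}$, but the third sum is not. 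The key point is that the range $0\le j\le s^{\ast}-1$ satisfies exactly the hypothesis $j\le (\ell-1)+(-1)^{i+1}$ of Proposition \ref{Span-prop}, so each $\partial_{r}^{j}\Psi_{i}$ can be rewritten on $\h$ as a linear combination $\sum_{m\le j+1}\gamma_{j,m}\,T(M^{m+1}\partial_{r}^{m}\Psi_{i})$. Substituting these expressions into the last sum, every term becomes a $T$-derivative; collecting yields
\begin{equation*}
T\!\left(\partial_{r}^{s^{\ast}+1}\Psi_{i}+\sum_{j=0}^{s^{\ast}}c_{i}^{j}\,\partial_{r}^{j}\Psi_{i}\right)\ =\ 0\qquad \text{on }\h,
\end{equation*}
with $c_{i}^{j}$ depending only on $M$ and $\ell$. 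Since $s^{\ast}+1=\ell+2$ for $i=1$ and $s^{\ast}+1=\ell$ for $i=2$, and since $T$ is tangent to the null generators of $\h$, this is exactly the claimed conservation law for $H_{\ell}[\Psi_{i}]$.

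The only real delicacy is bookkeeping: one must verify that in the substitution step provided by Proposition \ref{Span-prop}, each new $T$-derivative that appears has order at most $s^{\ast}+1$ in $\partial_{r}$, so that no term in the resulting conserved quantity exceeds the stated order. This is automatic, since Proposition \ref{Span-prop} increases the $\partial_{r}$-count by exactly one when stripping a factor of $T$, and we start from $\partial_{r}^{j}\Psi_{i}$ with $j\le s^{\ast}-1$, giving $T$-derivatives of quantities of order at most $s^{\ast}$. The expected obstacle, really a routine one, is simply to keep track of the iterated substitutions and confirm that the final coefficients $c_{i}^{j}$ are well-defined (no division by zero occurs, which follows from the fact that the intermediate coefficients (\ref{top-order coefficient}) vanish only at the borderline values $s=s^{\ast}$ we have singled out).
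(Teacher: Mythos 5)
Your proof is correct and essentially identical to the paper's: both single out the borderline value $s^{\ast}=\ell+(-1)^{i+1}$, evaluate $\partial_{r}^{s^{\ast}}$ of the wave equation on $\h$ using $D(M)=D'(M)=R(M)=0$ together with the vanishing of the coefficient (\ref{top-order coefficient}), and then invoke Proposition~\ref{Span-prop} to convert the remaining lower-order terms into $T$-derivatives, yielding the conserved quantity of order $s^{\ast}+1$.
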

	\begin{proof}
		We consider (\ref{k-wave}) for $ s=\ell+1 $ when $ i=1 $, and $ s=\ell-1 $ when $ i=2 $ and we evaluate it on the horizon  $ \mathcal{H}^+. $  In view of (\ref{top-order coefficient}) and $ D(M) =R(M)=0$, we have that the terms $ \partial_r^{\ell+3}\Psi_1,\ \partial_r^{\ell+2}\Psi_1, \ \partial_r^{\ell+1}\Psi_1 $, while also  $ \partial_r^{\ell+1}\Psi_2,\ \partial_r^{\ell}\Psi_2, \ \partial_r^{\ell-1}\Psi_2 $ in the respected wave equation, vanish on $ \mathcal{H}^+. $ 
		
		In addition, using Proposition \ref{Span-prop} we can express each $ \partial_r^p\Psi_1, \partial_r^q\Psi_2 $, for $ p\leq \ell, \ q\leq \ell-2 $ as a linear combination of $ T(\partial_r^m\Psi_i) $, $\ m\leq \ell+(-1)^{i+1} $ with coefficients depending on $ M,\ell $. Thus, we obtain altogether \begin{align}
			\begin{aligned}
				T(\partial_r^{\ell+2}\Psi_1 ) + \sum_{j=0}^{\ell+1}c_1^j\cdot T(\partial_r^j\Psi_1) &\ =\ 0, \\
				T(\partial_r^{\ell}\Psi_2 ) + \sum_{j=0}^{\ell-1}c_2^j\cdot T(\partial_r^j\Psi_2) &\ =\ 0,
			\end{aligned}
		\end{align}
		for $ c_i^j $ depending on $ M,\ell $ alone, which concludes the proof. \\ \\
	\end{proof}
	
	\section{Higher order \texorpdfstring{$ L^2 $}{PDFstring} Estimates.}\label{EE higher} Similarly to Section \ref{EE first},  we derive energy estimates for $ \partial^k_r\Psi_i^{^{(\ell)}} $, for all $ k\leq \ell + (-1)^{i+1}$, $\ \ell \geq i, \ i\in \br{1,2}. $ We do so, by repeatedly commuting our wave equation (\ref{waveeq}) with the vector field $ \partial_r $ and using induction. 
	By taking $ \partial_r^k $-derivatives of (\ref{waveeq}) we obtain \begin{align*}
		\partial_r^k(\square \Psi_i - V_i\Psi_i )&=0  \\
		\Rightarrow \ \ \square\left(\partial_r^k \Psi_i\right) + [\partial_r^k,\square]\Psi_i - \partial_r^k \left(V_i\cdot \Psi_i\right)&=0 \\
		\Rightarrow \ \   \square \left(\partial_r^k\Psi_i\right) -V_i \left (\partial_r^k\Psi_i\right ) &= [\square,\partial_r^k]\Psi_i + \sum_{j=1}^{k} \binom{k}{j} \partial_r^{j}V_i \cdot \partial_r^{k-j}\Psi_i =: M^k[\Psi_i].
	\end{align*} 
	where the Laplace Beltrami operator commutes with $ \partial_r^k $ according to \[ \begin{aligned}
		\left[\square_{g}, \partial_{r}^{k}\right] \Psi_i=&-\sum_{j=1}^{k}\binom{k}{j} \partial_{r}^{j} D \cdot \partial_{r}^{k-j+2} \Psi_i-\sum_{j=1}^{k}\binom{k}{j} \partial_{r}^{j} \frac{2}{r} \cdot T \partial_{r}^{k-j} \Psi_i \\
		&-\sum_{j=1}^{k}\binom{k}{j} \partial_{r}^{j} R \cdot \partial_{r}^{k-j+1} \Psi_i-\sum_{j=1}^{k}\binom{k}{j} r^{2} \partial_{r}^{j} r^{-2} \cdot \slashed{\Delta} \partial_{r}^{k-j} \Psi_i
	\end{aligned} \]
	with $ R = \partial_rD + \frac{D}{2r}. $ \\ Since $ \Psi_i^{^{(\ell)}} $ is supported on the fixed frequency $ \ell, $ so is $ \partial_r^s\Psi_i $ for any $ s\geq 0, $ and using $$ \slashed{\Delta}(\partial_r^s\Psi_i)=-\frac{\ell(\ell+1)}{r^2}\partial_r^s\Psi_i, $$ we can absorb the potential term of $ M^k[\Psi_i] $ in the last term of $ 	\left[\square_{g}, \partial_{r}^{k}\right] \Psi_i $. In particular, it's a direct computation to check by induction that for any $ j\geq 0 $ we have \begin{align}
		\begin{aligned}
			\partial_r^j V_i(r) & =\ (-1)^j \dfrac{(j+2)!}{r^j}\left(\dfrac{1}{2}V_i(r)+j\dfrac{M^2}{r^4}\right)\\
			& =\ (j+2)\ r^2\cdot \partial_r^j(r^{-2})\left(\dfrac{1}{2}V_i(r)+j\dfrac{M^2}{r^4}\right),
		\end{aligned}
	\end{align}
	thus, we can write
	\begin{align}
		\begin{aligned}
			M^k[\Psi_i] = &-\sum_{j=1}^{k}\binom{k}{j} \partial_{r}^{j} D \cdot \partial_{r}^{k-j+2} \Psi_i-\sum_{j=1}^{k}\binom{k}{j} \partial_{r}^{j} \frac{2}{r} \cdot T \partial_{r}^{k-j} \Psi_i \\
			&-\sum_{j=1}^{k}\binom{k}{j} \partial_{r}^{j} R \cdot \partial_{r}^{k-j+1} \Psi_i-\sum_{j=1}^{k}\binom{k}{j} r^{2} \partial_{r}^{j} r^{-2} A_{j,\ell}(r)\cdot   \slashed{\Delta} \partial_{r}^{k-j} \Psi_i
		\end{aligned}
	\end{align}
	where \[ A_{j,\ell}(r) := 1+\dfrac{(j+2)}{2\ell(\ell+1)}r^2V_i+ \dfrac{j(j+2)}{\ell(\ell+1)}\left (\dfrac{M}{r}\right )^2 \]
	
	\paragraph{Energy Identity.} Now that we have the wave equation for $ \partial_r^k\Psi_i $,  i.e.  $ (\square - V_i)\partial_r^k\Psi_i = M^k[\Psi_i] $, 
	we can proceed with the energy identity of the current $ J_{\mu}^{\x_k}[\partial_r^k\Psi_i] $ for an appropriate vector field $ \x_k. $
	
	In this section, we are only interested in obtaining higher-order estimates in a region close to the horizon $ \mathcal{H}^+. $ However, if we were to stay away from the horizon and the photon sphere,  we can easily control the $ L^2 $ norm of all k-derivatives by using local elliptic estimates, commuting the wave equation (\ref{waveeq}) with $ T^j, \ j \in \br{1,\dots ,k-1}$ and using the Morawetz estimates from Theorem \ref{Morawetz-Spacetime}, to obtain \begin{align}
		\int_{R(0,\tau)\cap\br{r_c\leq r\leq r_d}} \abs{\partial^{a}\Psi_i}^2 \leq C\int_{\Sigma_0}\left( \sum_{j=0}^{k-1}J_{\mu}^T[T^j\Psi_i]n_{\Sigma_0}^{\mu} \right) 
	\end{align}
	for any $ r_c,r_d $ satisfying  $ M<r_c<r_d<2M, $ and $ \abs{a}= k, $ where the constant $ C $ depends on $ M, \Sigma_0  $ and $ k. $ Thus, we restrict our attention to a region $ R(0,\tau)\cap\br{M\leq r\leq r_c} $ for $ M<r_c<2M. $  In particular, we prove the following theorem 
	\begin{theorem} \label{high-order horizon estimates} There exists $ r_c\in (M,2M) $ and  a positive constant $ C >0 $ depending only on $  M,\ \Sigma_{0}\ \text{and} \ k$ such that in the region $ \mathcal{A}_c = R(0,\tau)\cap \br{M\leq r\leq r_c} $, for all solutions $ \Psi_i^{^{(\ell)}}$  of  (\ref{waveeq}) supported on the fixed frequency  $\ \ell\geq i ,\  i\in \lbrace 1,2 \rbrace$,  we have \begin{align}
			\begin{aligned}\label{H^2- higher control}
				& \int_{\Sigma_{\tau} \cap \mathcal{A}_c}\left(T \partial_{r}^k \Psi_i\right)^{2}+\left( \partial_{r}^{k+1} \Psi_i\right)^{2}+\left| \slashed{\nabla} \partial_{r}^k \Psi_i\right|^{2} + (\partial_r^k\p)^2\ +\ \int_{\mathcal{H}^{+}(0,\tau)}\left(T \partial_{r}^k \Psi_i\right)^{2}+ \bm{\chi}_{i,k}\left|\slashed{\nabla}  \partial_{r}^k \Psi_i\right|^{2} \\
				&\hspace{4cm}\quad +\int_{\mathcal{A}_c}\left(T \partial_{r}^k \Psi_i\right)^{2}+\sqrt{D}\left(\partial_{r}^{k+1}  \Psi_i\right)^{2}+\left|\slashed{\nabla} \partial_{r}^k \Psi_i\right|^{2} + (\partial_r^k\Psi_i)^2 \\
				& \hspace{6cm}\leq  \ C \left(  \sum_{j=0}^{k} \int_{\Sigma_{0}} J_{\mu}^{n_{\Sigma_{0}}}\left[T^{j} \Psi_i\right] n_{\Sigma_{0}}^{\mu}+ \sum_{j=1}^{k} \int_{\Sigma_{0} \cap \mathcal{A}_c} J_{\mu}^{n_{\Sigma_{0}}}\left[\partial_{r}^{j} \Psi_i\right] n_{\Sigma_{0}}^{\mu}\right),
			\end{aligned}
		\end{align}
		for any $ k \leq \ell + (-1)^{i+1} $, where here we define  $ \bm{\chi}_{i,k}  := \begin{cases}
			0, \ \  \text{if}\ \  k=\ell + (-1)^{i+1} \\
			1,\ \ \text{otherwise}.
		\end{cases}. $ 
	\end{theorem}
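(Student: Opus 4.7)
The plan is induction on $k$, with base case $k=1$ furnished by Theorem \ref{remove_degeneracy}. Assume the estimate is established for all orders up to $k-1$; we derive it for order $k$, valid under the arithmetic condition $k \leq \ell + (-1)^{i+1}$. For the inductive step we mimic the multiplier construction of Section \ref{EE first}, but applied to the inhomogeneous equation
\[
(\square_{g} - V_{i})\bigl(\partial_{r}^{k}\Psi_{i}\bigr) \;=\; M^{k}[\Psi_{i}],
\]
with a $\phi_{\tau}^{T}$-invariant timelike vectorfield $\x_{k} = \x_{k}^{v}\partial_{v} + \x_{k}^{r}\partial_{r}$ supported in $\{M \leq r \leq r_{d}\}$, $r_{d} < 2M$, satisfying $\x_{k}^{v}(M)>0$, $\x_{k}^{r}(M)<0$, and $-(\x_{k}^{r})'(M), (\x_{k}^{v})'(M)$ sufficiently large so that the analogues of $E_{1}, E_{2}, E_{3}$ from (\ref{E_i coefficients}) are nonnegative near $\h$ with $E_{2} \sim \sqrt{D}$.

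The energy identity for $J_{\mu}^{\x_{k}}[\partial_{r}^{k}\Psi_{i}]$ in $\mathcal{R}(0,\tau)$ picks up, besides terms completely parallel to those of Section \ref{EE first}, an extra spacetime contribution $\int_{\mathcal{R}(0,\tau)} \x_{k}(\partial_{r}^{k}\Psi_{i}) \cdot M^{k}[\Psi_{i}]$. Its most dangerous piece is $-k\,D'\,\partial_{r}^{k+1}\Psi_{i}$, which is absorbed by $E_{2}(\partial_{r}^{k+1}\Psi_{i})^{2}$ because $D'\sim\sqrt{D}$ near $\h$; the remaining pieces involve $\partial_{r}^{j}\Psi_{i}$, $T\partial_{r}^{j}\Psi_{i}$, $\slashed{\Delta}\partial_{r}^{j}\Psi_{i}$ for $j\leq k-1$ with coefficients uniformly bounded in $\ell$ (the factors $r^{2}\partial_{r}^{j}(r^{-2})A_{j,\ell}(r)$ are bounded thanks to the explicit cancellation in $A_{j,\ell}$), and are controlled by the inductive hypothesis. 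I would then prove higher-order analogues of Lemmas \ref{psi_r Control}, \ref{Lemma-psi_vpsi_r-control}, \ref{Lemma-Control Delta,r Psi on H} for $\partial_{r}^{k}\Psi_{i}$: the bulk Hardy lemma follows verbatim from Lemma \ref{second hardy} applied to $\partial_{r}^{k}\Psi_{i}$ combined with the away-from-horizon elliptic bounds, while the two horizon lemmas are obtained by evaluating the equation for $\partial_{r}^{k-1}\Psi_{i}$ on $\h$ (where $D=R=0$) and integrating by parts along the null generators. These lemmas dispose of the $E_{4}$--$E_{9}$ and $E_{11}$ analogues exactly as in the $k=1$ argument; in particular the $E_{11}$ coefficient remains handleable because the support on a fixed frequency $\ell \geq k - (-1)^{i+1}$ gives enough Poincar\'e room to beat the negative zeroth order piece.

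The heart of the argument, and the main obstacle, is the treatment of the horizon boundary term coming from $E_{10}$, i.e. $-\int_{\h} \frac{E_{10}}{2}(\partial_{r}^{k}\Psi_{i})^{2} \;=\; \int_{\h}\frac{\x_{k}^{r}(M)}{M^{2}}(\partial_{r}^{k}\Psi_{i})^{2}$, which has the wrong sign and must be absorbed into the good angular flux $-\tfrac{\x_{k}^{r}(M)}{2}\bigl(1+\tfrac{2(-1)^{i-1}}{\ell+i-1}+\cdots\bigr)|\slashed{\nabla}\partial_{r}^{k}\Psi_{i}|^{2}$ on $\h$ via Poincar\'e inequality. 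Tracking the full $k$-commuted wave equation on $\h$ (using the explicit form of $[\square_{g},\partial_{r}^{k}]$ and the cancellation of $D, R$ on $\h$), the cross terms $T\partial_{r}^{k}\Psi_{i} \cdot \partial_{r}^{k}\Psi_{i}$ can be re-expressed in terms of $|\slashed{\nabla}\partial_{r}^{k}\Psi_{i}|^{2}$ modulo lower order, so that the feasibility of the absorption reduces to an inequality of the form $k(k+1) \leq (\ell+1)(\ell+2)$ when $i=1$, respectively $k(k+1) \leq \ell(\ell-1)$ when $i=2$. These are precisely the thresholds from (\ref{top-order coefficient}), i.e.\ the condition $k \leq \ell + (-1)^{i+1}$. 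When the inequality is strict we retain a positive multiple of $|\slashed{\nabla}\partial_{r}^{k}\Psi_{i}|^{2}$ on $\h$; at equality $k = \ell + (-1)^{i+1}$ the Poincar\'e inequality consumes the entire available angular flux and nothing is left on the horizon, producing precisely the degeneracy factor $\bm{\chi}_{i,k}$.

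Conceptually, this threshold is the analytic shadow of the conservation laws of Theorem \ref{conservation horizon}: the same algebraic degeneracy that makes the top-order coefficient in Proposition \ref{Span-prop} vanish and renders $H_{\ell}[\Psi_{i}]$ conserved along $\h$ is what makes the Poincar\'e-based absorption barely saturate at $k = \ell + (-1)^{i+1}$. I expect that beyond a careful bookkeeping of $\ell$-dependence through every invocation of Poincar\'e and Cauchy--Schwarz, no genuinely new analytic input is required; the hard part is precisely identifying the critical algebraic identities on $\h$ that allow the cross terms at order $k$ to be absorbed uniformly in $\ell$, and confirming that no hidden loss of derivative occurs in the analogues of Lemmas \ref{Lemma-psi_vpsi_r-control}--\ref{Lemma-Control Delta,r Psi on H}, which is handled by re-applying the induction hypothesis at order $k-1$.
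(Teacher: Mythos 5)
Your proposal is correct and follows essentially the same route as the paper: induction on $k$ with base case from Theorem \ref{remove_degeneracy}, a near-horizon $\phi_\tau^T$-invariant redshift-type multiplier $\x_k$, higher-order analogues of the Hardy/horizon lemmas (the paper packages these into Lemma \ref{horizon-control-r's}), and the critical Poincar\'e absorption of the horizon boundary term yielding the thresholds $k(k+1)\leq(\ell+1)(\ell+2)$ for $i=1$ and $k(k+1)\leq\ell(\ell-1)$ for $i=2$, which saturate at $k=\ell+(-1)^{i+1}$ and produce the degeneracy factor $\bm\chi_{i,k}$. The only cosmetic deviation is labeling --- the paper isolates this critical coefficient as $E_1^1$ within a taxonomy of bulk families $E_j^1,\dots,E_j^6$ rather than as a direct $E_{10}$ analogue --- but your identification of the obstruction and its link to the conservation-law threshold of Proposition \ref{Span-prop} is exactly the paper's mechanism.
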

	\begin{proof}
		We prove this by induction on the number of derivatives $ k\leq \ell + (-1)^{i+1} $. In particular, assume that for any  $ 0\leq s\leq (\ell-1)+(-1)^{i+1}   $  the estimate
		(\ref{H^2- higher control}) holds for all $ k\leq s $, then we will prove that it also holds for $ s= \ell + (-1)^{i+1}. $ 
		
		Note,  both base cases $ s=0,\ s= 1 $ correspond to the results of Theorem  \ref{Nuniform } and Theorem  \ref{remove_degeneracy} respectively.
		
		Let us work with an $ \phi_{\tau}^T $-invariant timelike vector field $ \x_k = \x_k^v \partial_v + \x_k^r \partial_r$ with constraints that will be apparent in the end, such that it vanishes in  $ r\geq r_d   $, for a choice of $ r_d,r_c $ with $ 2M>r_d>r_c>M $ determined later. Then, the energy identity for the current $ J_{\mu}^{\x_k}[\partial_r^k\Psi_i] $ is \begin{align}
			\int_{\Sigma_{\tau}}J_{\mu}^{\x_k}[\partial_r^k\Psi_i] n^{\mu}_{\Sigma_{\tau}}+ \int_{\mathcal{R}(0,\tau) } \nabla^{\mu} J_{\mu}^{\x_k}[\partial_r^k\Psi_i] +  \int_{\mathcal{H}^+}J_{\mu}^{\x_k}[\partial_r^k\Psi_i] n^{\mu}_{\mathcal{H}^+}= \int_{\Sigma_{0}}J_{\mu}^{\x_k}[\partial_r^k\Psi_i] n^{\mu}_{\Sigma_{0}}, \label{k-order identity}
		\end{align}
		where we use the abbreviated notation $ \mathcal{H}^+ \equiv \mathcal{H}^+(0,\tau) .\  $
		Regarding the energy fluxes we have
		\begin{align}
			\begin{aligned}
				\int_{\Sigma}J_{\mu}^{\x_k}[\partial_r^k\Psi_i] n^{\mu}_{\Sigma_{\tau}} \geq C \left(\int_{\Sigma} (\partial_v\partial_r^k\Psi_i)^2 + (\partial_r\partial_r^k\Psi_i)^2 + \abs{\slashed{\nabla}\partial_r^k\Psi_i}^2 + \dfrac{1}{r^2}(\partial_r^k\Psi_i)^2 \right), 
			\end{aligned}
		\end{align}
		for a constant $ C>0 $ depending only on $ M,\Sigma_0,\x_k $. In addition, \textbf{on the horizon} $ \mathcal{H}^+ $ we have \begin{align}
			\begin{aligned} \label{x horizon flux k order}
				\int_{\mathcal{H}^+}J_{\mu}^{\x_k}[\partial_r^k\Psi_i] n^{\mu}_{\mathcal{H}^+}= \int_{\mathcal{H}^+} \x_k^v(M)(\partial_v\partial_r^k\Psi_i)^2-\dfrac{\x_k^r(M)}{2}\left(1+ \dfrac{2\cdot (-1)^{i-1}}{(\ell+i-1)}\right)\abs{\slashed{\nabla}\partial_r^k\Psi_i}^2 
			\end{aligned}
		\end{align}
		where $ \ell $ is the frequency support of $ \Psi_i^{^{(\ell)}}. $ Note the expression above is positive definite since $  \x_k $ is timelike, i.e. $ \x_k^r<0. $
		
		We now focus on the bulk term of (\ref{k-order identity}) and similarly to Section \ref{EE first} we have
		\begin{align}
			\begin{aligned}
				\nabla^{\mu} J_{\mu}^{\x_k}[\partial_{r}^k\Psi_i]&= \dfrac{1}{2}Q[\partial_{r}^k \Psi_i]\cdot ^{(\x_k)}\pi - \dfrac{1}{2}\x_k(V_i)(\partial_{r}^k\Psi_i)^2 + \x_k(\partial_{r}^k\Psi_i)\cdot M^k[\Psi_i]= \\
				&= K^{\x_k}[\partial_{r}^k\Psi_i] + \left (\x_k^v \cdot T\partial_{r}^k\Psi_i + \x_k^r\cdot \partial_{r}^{k+1}\Psi_i \right ) M^k[\Psi_i].
			\end{aligned}
		\end{align}
		Again, the first term is given by  \begin{align} \begin{aligned}
				K^{\x_k}[\partial_{r}^k\Psi_i] =& F^k_{vv}(T \partial_{r}^k\Psi_i)^2+ F^k_{rr}(\partial_{r}^{k+1}\Psi_i)^2+ F^k_{\slashed{\nabla}}\abs{\slashed{\nabla} \partial_{r}^k\Psi_i}^2\\ &+ F^k_{vr}(T\partial_{r}^k\Psi_i)( \partial_{r}^{k+1}\Psi_i) + F^k_{00}(V_i)(\partial_{r}^k\Psi_i)^2, \label{K^k-terms}
			\end{aligned}
		\end{align}
		where the coefficients $ F^k_{ab}, \ a,b \in \br{0,v,r, \slas{\nabla}} $ are defined as in (\ref{F coefficients}) for the vectorfield $ \x_k, $ i.e. \begin{align}
			\begin{aligned}
				&F^k_{vv}= \partial_r\x_k^v, \hspace{2cm} F^k_{rr}= D\left(\frac{\partial_r\x_k^r}{2}-\frac{\x_k^r}{r}\right) - \frac{\x_k^r D'}{2}, \\
				&F^k_{\slashed{\nabla}}= -\frac{1}{2}\partial_r\x_k^r, \hspace{1.5cm} F^k_{vr}= D\partial_r\x_k^v- \frac{2\x_k^r}{r}, \\
				& F^k_{00}(V_i)= -\frac{1}{2}\x_k(V_i)-V_i\left(\frac{\partial_r\x_k^r}{2}+\frac{\x_k^r}{r}\right).
			\end{aligned}
		\end{align}
		Expanding also $ M^k[\Psi_i] $ yields \begin{align}
			\begin{aligned}
				&\left (\x_k^v \cdot T\partial_{r}^k\Psi_i + \x_k^r\cdot \partial_{r}^{k+1}\Psi_i \right ) M^k[\Psi_i]\ \   = \\ 
				&-\sum_{j=1}^{k}\binom{k}{j} \x_k^v\ \partial_{r}^{j} D \cdot (\partial_{r}^{k-j+2} \Psi_i) (T\partial_r^k\Psi_i)\ - \ \sum_{j=1}^{k}\binom{k}{j} \x_k^r\ \partial_{r}^{j} D \cdot (\partial_{r}^{k-j+2} \Psi_i) (\partial_r^{k+1}\Psi_i) \\ &-\sum_{j = 1}^{k}\binom{k}{j}\x_k^v\  \partial_{r}^{j} \frac{2}{r} \cdot (T \partial_{r}^{k-j} \Psi_i)(T\partial_r^k\Psi_i)\ - \ \sum_{j = 1}^{k}\binom{k}{j}\x_k^r\  \partial_{r}^{j} \frac{2}{r} \cdot (T \partial_{r}^{k-j} \Psi_i)(\partial_r^{k+1}\Psi_i)\\
				&-\sum_{j=1}^{k}\binom{k}{j}\x_k^v\ \partial_{r}^{j} R \cdot (\partial_{r}^{k-j+1} \Psi_i)(T\partial_r^k\Psi_i)\ - \ \sum_{j=1}^{k}\binom{k}{j}\x_k^r\ \partial_{r}^{j} R \cdot (\partial_{r}^{k-j+1} \Psi_i)(\partial_r^{k+1}\Psi_i)\\ &-\sum_{j=1}^{k}\binom{k}{j}\x_k^v\ r^{2} \partial_{r}^{j} r^{-2} A_{j,\ell}(r)\cdot   (\slashed{\Delta} \partial_{r}^{k-j} \Psi_i)(T\partial_r^k\Psi_i)\\ &-\ \sum_{j=1}^{k}\binom{k}{j}\x_k^r\ r^{2} \partial_{r}^{j} r^{-2} A_{j,\ell}(r)\cdot   (\slashed{\Delta} \partial_{r}^{k-j} \Psi_i)(\partial_r^{k+1}\Psi_i) \label{k-bulk}
			\end{aligned}
		\end{align} 
		In what follows, we show that the bulk terms of (\ref{K^k-terms}, \ref{k-bulk}) are controlled by terms of the right-hand side of (\ref{H^2- higher control}) for $ k\leq (\ell-1)+(-1)^{i+1} $ of the inductive hypothesis, or they can be absorbed as a small $ \varepsilon>0 $ fraction in the positive definite terms of the left-hand side of (\ref{H^2- higher control}).
		
		First, we are going to prove the following lemma which will be frequently used in the estimates below. 
		\begin{lemma} \label{horizon-control-r's}
			For solutions $ \Psi_i^{^{(\ell)}},\ \ell\geq i, \ i\in\br{1,2} $ to (\ref{waveeq}), and  for any $ 0\leq j \leq (l-1) - (-1)^{i} $, we have \begin{align}\begin{aligned}
					\abs{\int_{\mathcal{H}^+}(\partial_r^j\Psi_i)(\partial_r^{^{\ell - (-1)^i}}\Psi_i)} \leq C_{\varepsilon} & \left(\sum_{j=0}^{(\ell-1) - (-1)^i} \int_{\Sigma_{0}} J_{\mu}^{n_{\Sigma_{0}}}\left[T^{j} \Psi_i\right] n_{\Sigma_{0}}^{\mu}+ \sum_{j=1}^{\ell - (-1)^i} \int_{\Sigma_{0} } J_{\mu}^{n_{\Sigma_{0}}}\left[\partial_{r}^{j} \Psi_i\right] n_{\Sigma_{0}}^{\mu}\right) \\
					& \ \ +\varepsilon \cdot \int_{\Sigma_{\tau}\cap A_c} J_{\mu}^{\x_k}[\partial_r^{^{\ell-(-1)^i}}\Psi_i]\cdot n _{\Sigma}^{\mu}\ +\  \varepsilon \cdot\int_{\mathcal{H}^+}\left(T\partial_r^{^{\ell-(-1)^i}}\Psi_i\right)^2,  \label{misc higher 1}
				\end{aligned}
			\end{align}
			
			for a positive constant $ C_{\varepsilon} $ depending on $ \varepsilon, M,\Sigma_0  $ and $ j. $
			\begin{flushleft}
				\textit{Proof.} In view of proposition \ref{Span-prop}, we can work instead with the integrals
			\end{flushleft}
			 \begin{align*}
					\int_{\mathcal{H}^+}(T\partial_r^s\Psi_i)(\partial_r^{^{\ell - (-1)^i}}\Psi_i),
				\end{align*}
				for $ 0\leq s\leq \ell - (-1)^i. $ \\ \\
				$ \bullet\  $ For $ s=\ell-(-1)^i $ we have \begin{align*}
					\int_{\mathcal{H}^+} (T\partial_r^{^{\ell - (-1)^i}}\Psi_i)(\partial_r^{^{\ell - (-1)^i}}\Psi_i) &= \dfrac{1}{2}\int_{\mathcal{H}^+} T\left(\partial_r^{^{\ell - (-1)^i}}\Psi_i\right)^2  \\
					&= \dfrac{1}{2}\int_{\Sigma_{\tau}\cap\mathcal{H}^+} \left (\partial_r^{^{\ell - (-1)^i}}\Psi_i\right )^2  - \dfrac{1}{2}\int_{\Sigma_{0}\cap\mathcal{H}^+}\left (\partial_r^{^{\ell - (-1)^i}}\Psi_i\right )^2
				\end{align*}
				Thus, using Lemma \ref{hardy} for each term above we obtain \begin{align*}
					\int_{\Sigma\cap\mathcal{H}^+}\left (\partial_r^{^{\ell - (-1)^i}}\Psi_i\right )^2 \leq &\  \varepsilon \int_{\Sigma\cap A_c} \left[ \left (T\partial_r^{^{\ell - (-1)^i}}\Psi_i\right  )^2 + \left (\partial_r^{^{\ell+1 - (-1)^i}}\Psi_i\right )^2\right] + C_{\varepsilon}\int_{\Sigma\cap A_c}\left (\partial_r^{^{\ell - (-1)^i}}\Psi_i\right )^2
					\\ \leq &\   \varepsilon \int_{\Sigma\cap A_c} J_{\mu}^{\x_k}[\partial_r^{^{\ell-(-1)^i}}\Psi_i]\cdot n _{\Sigma}^{\mu} + C_{\varepsilon} \int_{\Sigma\cap A_c} J_{\mu}^{\x_k}[\partial_r^{^{(\ell-1)-(-1)^i}}\Psi_i]\cdot n _{\Sigma}^{\mu},
				\end{align*}
				where the last term above is estimated from the first term of the right-hand side of  (\ref{misc higher 1}), due to the inductive hypothesis.
				\\
				\\ $ \bullet $ Now let $ 0\leq s < \ell - (-1)^i, $ then integration by parts yields \begin{align*}
					\int_{\mathcal{H}^+}(T\partial_r^s\Psi_i)(\partial_r^{^{\ell - (-1)^i}}\Psi_i) =&  \int_{\h\cap\Sigma_{\tau}}(\partial_r^s\Psi_i)(\partial_r^{^{\ell - (-1)^i}}\Psi_i)  - \int_{\h\cap \Sigma_0}	(\partial_r^s\Psi_i)(\partial_r^{^{\ell - (-1)^i}}\Psi_i)   \\
					-& \int_{\h}\partial_r^s\Psi_i \left(T\partial_r^{^{\ell - (-1)^i}}\Psi_i  \right) 
				\end{align*} 
				After applying a Cauchy-Schwartz, the first two boundary terms are treated similarly to the $ s=\ell-(-1)^i $ case, using Lemma \ref{hardy} and the inductive hypothesis. 
				
				For the last term, we have \begin{align*}
					\abs{\int_{\h}\partial_r^s\Psi_i\left(T\partial_r^{^{\ell - (-1)^i}}\Psi_i  \right)} \leq \dfrac{1}{2  \varepsilon}\int_{\h} (\partial_r^s\Psi_i)^2 +  \dfrac{\varepsilon}{2}  \int_{\h}\left(T\partial_r^{^{\ell - (-1)^i}}\Psi_i  \right)^2 
				\end{align*}
				and the first term on the right is estimated by the estimate
				(\ref{H^2- higher control}), for $ 0\leq s\leq (\ell-1)+(-1)^{i+1}   $. \\
			\begin{flushright}
			$ 	\blacksquare $
			\end{flushright}
		\end{lemma}
		
		\paragraph{Controlling the bulk terms.} Now we are in the position of controlling the spacetime terms that appear in the energy identity. 
		\paragraph{\begin{align}
				\boxed{\text{Estimate for the terms \quad  }  E_j^1 (\partial_r^{k-j+1}\Psi_i)(\partial_r^{k+1}\Psi_i), \ j\geq 0.  \label{higher misc 1}}
		\end{align}}

		$ \bullet\ $ For $ j=0, $ the coefficient  of the bulk term $ (\partial_r^{k+1}\Psi_i)^2 $ is $ E_0^1 = F_{rr}^k - k\x_k^rD' $ and since $ \x_k $ is timelike, $ E_0^1 \geq 0 $ in $ A_c. $ 
		\\ $ \bullet\  $ For $ j\geq 1, $ using integration by parts we obtain 
		\begin{align*}
			\begin{aligned}
				& \int_{\mathcal{R}} E_{j}^{1}\left(\partial_{r}^{k-j+1} \Psi_i\right)\left(\partial_{r}^{k+1} \Psi_i\right) \ = \ \\
				&= \int_{\Sigma_{0}} E_{j}^{1}\left(\partial_{r}^{k-j+1} \Psi_i\right)\left(\partial_{r}^{k} \Psi_i\right) \partial_{r} \cdot n_{\Sigma_{0}}-\int_{\Sigma_{\tau}} E_{j}^{1}\left(\partial_{r}^{k-j+1} \Psi_i\right)\left(\partial_{r}^{k} \Psi_i\right) \partial_{r} \cdot n_{\Sigma_{\tau}} \\ &-\int_{\mathcal{H}^{+}} E_{j}^{1}\left(\partial_{r}^{k-j+1} \Psi_i\right)\left(\partial_{r}^{k} \Psi_i\right)  
				-\int_{\mathcal{R}}\left(\partial_{r} E_{j}^{1}+\frac{2}{r} E_{j}^{1}\right)\left(\partial_{r}^{k-j+1} \Psi_i\right)\left(\partial_{r}^{k} \Psi_i\right)-\int_{\mathcal{R}} E_{j}^{1}\left(\partial_{r}^{k-j+2} \Psi_i\right)\left(\partial_{r}^{k} \Psi_i\right) 
			\end{aligned}
		\end{align*}
		Lemma \ref{psi_r Control} holds also for $ \partial_r^k\Psi_i $ with the corresponding right-hand side estimate, and the proof follows similarly. Thus, for any $ j\geq 1 $ the boundary terms over $ \Sigma_0,\Sigma_{\tau} $ and the last two spacetime terms are estimated using this Lemma, Cauchy-Schwartz, and the inductive hypothesis.
		
		The boundary term over the horizon $ \h $ is estimated for $ j\geq 2 $ using Lemma \ref{horizon-control-r's}   and it only remains to estimate it for $ j=1, $ in which case the coefficient is \[ E_1^1=- \x_k^r\left(\dfrac{k(k+1)}{2}D''+k R' \right)  \]
		Using Poincare inequality for the angular term of the event horizon $ \h $ integral in the energy identity (\ref{k-order identity}), it suffices to have \[ -\dfrac{\x_k^r(M)}{2M^2} k(k+1) \leq  -\dfrac{\x_k^r(M)}{2M^2} \left(\ell(\ell+1) + V_i(M)M^2\right).  \]
		For $ i=1 $ the above is equivalent to $ k(k+1)\leq (\ell+1)(\ell+2) \ \Leftrightarrow \  k\leq\ell+1 $ and for $ i=2 $, $ k(k+1)\leq \ell(\ell-1)\ \Leftrightarrow \  k\leq \ell-1 $.  
		
		Thus, we see that when $ k=\ell+(-1)^{i+1} $ we have to use the \textbf{entire} coefficient of the angular term $ \abs{\slashed{\nabla}\partial_r^k\Psi_i}^2 $ on horizon $ \h $ in order to close the above estimate, hence and the appearance of the factor $ \bm{\chi}_{i,k} $ in (\ref{H^2- higher control}).
		\paragraph{\begin{align}
				\boxed{\text{Estimates for the terms \quad}  E_j^2 (T\partial_r^{k-j+1}\Psi_i)(\partial_r^{k+1}\Psi_i), \ j\geq 1. \label{higher misc 2}}
		\end{align}}
		Let us first deal with the $ j\geq 2 $ case. Using integration by parts we obtain 
		
		\begin{align*}
			\begin{aligned}
				& \int_{\mathcal{R}} E_{j}^{2}\left(T \partial_{r}^{k-j+1} \Psi_i\right)\left(\partial_{r}^{k+1} \Psi_i\right)=  \\
				& =\int_{\Sigma_{0}} E_{j}^{2}\left(T \partial_{r}^{k-j+1} \Psi_i\right)\left(\partial_{r}^{k} \Psi_i\right) \partial_{r} \cdot n_{\Sigma_{0}} -\int_{\Sigma_{\tau}} E_{j}^{2}\left(T \partial_{r}^{k-j+1} \Psi_i\right)\left(\partial_{r}^{k} \Psi_i\right) \partial_{r} \cdot n_{\Sigma_{\tau}}\\ &-\int_{\mathcal{H}^{+}} E_{j}^{2}\left(T \partial_{r}^{k-j+1} \Psi_i\right)\left(\partial_{r}^{k} \Psi_i\right) 
				-\int_{\mathcal{R}}\left(\partial_{r} E_{j}^{2}+\frac{2}{r} E_{j}^{2}\right)\left(T \partial_{r}^{k-j+1} \Psi_i\right)\left(\partial_{r}^{k} \Psi_i\right)-\int_{\mathcal{R}} E_{j}^{2}\left(T \partial_{r}^{k-j+2} \Psi_i\right)\left(\partial_{r}^{k} \Psi_i\right)
			\end{aligned}
		\end{align*}
		Now, the boundary terms over $ \Sigma_0,\Sigma_{\tau} $ are estimated using Cauchy-Schwartz and the inductive hypothesis while the horizon $ \h $ term is treated using Lemma \ref{horizon-control-r's}. The second last spacetime term is estimated by Cauchy-Schwartz, Lemma \ref{psi_r Control} (generalized) and the inductive hypothesis while the remaining last spacetime term is treated similarly where we use instead the $ \varepsilon $ variation of Cauchy-Schwartz and Lemma \ref{psi_r Control} for $ \varepsilon' = \varepsilon^2. $
		
		For the $ j=1 $ case, using the wave equation (\ref{waveeq}) for $ \Psi_i $ and $ \slashed{\Delta}\p = -\frac{\ell(\ell+1)}{r^2}\p $ we have \[ T\partial_r\Psi_i = -\frac{D}{2}\partial_r^2\Psi_i -\dfrac{1}{r}T\Psi_i-\dfrac{R(r)}{2}\partial_r\Psi_i - \dfrac{1}{2} \left (1+\dfrac{r^2\cdot V_i(r)}{\ell(\ell+1)}\right ) \slashed{\Delta}\p.\]
		Thus, we write \begin{align*}
			\int_{\mathcal{R}}&E_1^2(T\partial_r^k\p)(\partial_r^{k+1}\p)= \int_{\mathcal{R}}E_1^2(\partial_r^{k-1}T\partial_r\p)(\partial_r^{k+1}\p) = \\
			=  &\int_{\mathcal{R}}\dfrac{E_1^2}{2}\ \partial_r^{k-1}\left(-D\partial_r^2\Psi_i -\dfrac{2}{r}T\Psi_i-R(r)\partial_r\Psi_i -  \left (1+\dfrac{r^2\cdot V_i(r)}{\ell(\ell+1)}\right ) \slashed{\Delta}\p \right) \partial_r^{k+1}\p \ = \\
			= &\hspace{0.5cm}  \int_{\mathcal{R}}- \dfrac{E_1^2}{2}\  \sum_{s=0}^{k-1}\binom{k-1}{s}\partial_r^{s}D \cdot \partial_r^{k-s+1}\p \cdot \partial_r^{k+1}\p \\
			&+  \int_{\mathcal{R}}(- E_1^2)\  \sum_{s=0}^{k-1}\binom{k-1}{s}\partial_r^{s}r^{-1} \cdot T\partial_r^{k-s-1}\p\cdot \partial_r^{k+1}\p \\
			&+  \int_{\mathcal{R}}\dfrac{ -E_1^2}{2}\  \sum_{s=0}^{k-1}\binom{k-1}{s}\partial_r^{s}R(r)\cdot  \partial_r^{k-s}\p \cdot \partial_r^{k+1}\p \\
			&+ \int_{\mathcal{R}}\dfrac{- E_1^2}{2}\  \sum_{s=0}^{k-1}\binom{k-1}{s}\partial_r^{s} B_{\ell}(r) \cdot \partial_r^{k-s-1}(\slashed{\Delta}\p) \cdot \partial_r^{k+1}\p,
		\end{align*}
		where $ B_{\ell}(r) : =  1+\frac{r^2\cdot V_i(r)}{\ell(\ell+1)}. $ \\
		$ \bullet\  $The terms of first-line sum are estimated for $ s=0,1 $ in view of the degenerate coefficient $ D(r),D'(r) $ on the horizon $ \h $. The terms for $ s\geq 2 $ are treated similarly to estimates \ref{higher misc 1}. \\
		$ \bullet\  $The terms of second-line are estimated as the $ j\geq 2 $ case studied above.
		\\ $ \bullet\  $ On the third line, the $ s=0 $ case is estimated in view of the degenerate coefficient $ R(r) $ on the horizon $ \h, $ and for $ s\geq 1 $ we work similarly to \ref{higher misc 1} estimates.\\
		$ \bullet\  $ For the last line terms, it's direct a computation to check the following commutation for any  $ m\in\mathbb{N} $ \[ \partial_r^{m}(\slashed{\Delta}\p)= \slashed{\Delta}\partial_r^m\p + \left [\partial_r^m,\slashed{\Delta}\right ]\p = \slashed{\Delta}(\partial_r^m\p) + \sum_{j=1}^m\binom{m}{j}r^2\partial_r^j(r^{-2})\cdot \slashed{\Delta}(\partial_r^{m-j}\p) \]
		Thus, terms of the fourth line can be expressed in the form of $ (\slashed{\Delta}\partial_r^{k-j}\p)(\partial_r^{k+1}\p), \ j\geq 1, $ and such type of terms are estimated in the \ref{higher misc 6} paragraph below, as long as the coefficients of the corresponding terms are comparable. However, note that $ E_1^2=-k\x_k^vD'+D\partial_r(\x_k^v)- \dfrac{2\x_k^r}{r} $  and $ \partial_r^sB_{\ell}(r) $ is uniformly bounded in $ A_c $ with respect to $ \ell, $ while $ E_j^6 \sim_k \left (- \dfrac{2\x_k^r}{r} \right ) $ and uniformly bounded with respect to $ \ell $ for every $ j\geq 1 $. Thus, the estimates of paragraph \ref{higher misc 6} yield estimates for the terms of the fourth line with uniform coefficients with respect to $ \ell. $

		\paragraph{\begin{align}
				\boxed{\text{Estimates for the terms \quad}  E_j^3 (\partial_r^{k-j+2}\p)(T\partial_r^k\p), \ j\geq 1. }
		\end{align}}
		The $ j=1 $ case is treated already in the \ref{higher misc 2}-estimates. For $ j\geq 2 ,$ we use Cauchy-Schwartz with $ \varepsilon>0 $, Lemma \ref{psi_r Control} (generalized)  for $ \varepsilon'= \epsilon^2>0 $ and the inductive hypothesis. 
		
		\paragraph{\begin{align}
				\boxed{\text{Estimates for the terms \quad}  E_j^4 (T\partial_r^{k-j}\p)(T\partial_r^k\p), \ j\geq 1.  }
		\end{align}}
		We simply use Cauchy-Schwartz with $ \varepsilon>0 $ and the inductive hypothesis.

		\paragraph{\begin{align}
				\boxed{\text{Estimates for the terms \quad}  E_j^5 (\slashed{\Delta}\partial_r^j\p)(T\partial_r^k\p), \ 0\leq j\leq k-1.   }
		\end{align}}
		We use induction on $ j\leq k-1 $. For $ j=0 $, we use the wave equation (\ref{waveeq}) for $ \p $ and we express \[ \left (1+\dfrac{r^2\cdot V_i(r)}{\ell(\ell+1)}\right ) \slashed{\Delta}\p = - 2T\partial_r\Psi_i  +D(r)\partial_r^2\Psi_i +\dfrac{2}{r}T\Psi_i+R(r)\partial_r\Psi_i  \]
		Note, the coefficient of $ \slashed{\Delta}\p $ satisfies \[ C_1\leq \abs{1+\dfrac{r^2\cdot V_i(r)}{\ell(\ell+1)}}\leq C_2 \] for $ C_1,C_2 $ positive constants depending only on $ M, $
		for all $ \ell\geq i,\ i\in\br{1,2}, $ thus we can solve for $ \slashed{\Delta}\p $.  Then, using Cauchy-Schwartz and  the result obtained above we estimate each term. 
		
		Now, assume we can estimate all terms of the form $ (\slashed{\Delta}\partial_r^j\p)(T\partial_r^k\p), $ for any  $ s\leq j-1 $, then we have \begin{align*}
			(\slashed{\Delta}\partial_r^j\p) \cdot T\partial_r^k\p =& (\partial_r^j\slashed{\Delta}\p )\cdot T\partial_r^k\p + \left [ \slashed{\Delta}, \partial_r^j \right ]\p  \cdot T\partial_r^k\p. 
		\end{align*}
		However, after solving for $ \slashed{\Delta} \p$ in the wave equation we take $ \partial_r^j $-many derivatives of the equation and we use the results obtained previously to estimate the first term of the right-hand side above. Next, we have \[ \left [ \slashed{\Delta}, \partial_r^j \right ]\p  \cdot T\partial_r^k\p =  \sum_{s=1}^j\binom{j}{s}r^2\partial_r^s(r^{-2})\cdot \slashed{\Delta}(\partial_r^{j-s}\p) \cdot T\partial_r^k\p  \]
		and we use the inductive hypothesis to estimate it, since $ j-s\leq j-1 $ for all $ s\geq 1.   $
		
		\paragraph{\begin{align}
				\boxed{\text{Estimates for the terms \quad} E_j^6 (\slashed{\Delta}\partial_r^{k-j}\p)(\partial_r^{k+1}\p),\ j\geq 1. \label{higher misc 6}}
		\end{align}}
		Once again, using integration by parts we obtain 
		\begin{align*}
			\begin{aligned}
				& \int_{\mathcal{R}} E_j^6\left(\slashed{\Delta}\partial_r^{k-j}\p\right)\left(\partial_{r}^{k+1} \Psi_i\right)=  \\
				& =\int_{\Sigma_{0}} E_j^6\left(\slashed{\Delta}\partial_r^{k-j}\p\right)\left(\partial_{r}^{k} \Psi_i\right) \partial_{r} \cdot n_{\Sigma_{0}} -\int_{\Sigma_{\tau}} E_j^6\left(\slashed{\Delta}\partial_r^{k-j}\p\right)\left(\partial_{r}^{k} \Psi_i\right) \partial_{r} \cdot n_{\Sigma_{\tau}}\\ &-\int_{\mathcal{H}^{+}} E_j^6\left(\slashed{\Delta}\partial_r^{k-j}\p\right)\left(\partial_{r}^{k} \Psi_i\right) 
				-\int_{\mathcal{R}}\left(\partial_{r} E_j^6+\frac{2}{r} E_j^6\right)\left(\slashed{\Delta}\partial_r^{k-j}\p\right)\left(\partial_{r}^{k} \Psi_i\right)-\int_{\mathcal{R}} E_j^6\left(\partial_r\slashed{\Delta}\partial_r^{k-j}\right)\left(\partial_{r}^{k} \Psi_i\right)
			\end{aligned}
		\end{align*}
		Now, the first two boundary terms $ \Sigma_0,\Sigma_{\tau} $ and the second last spacetime term are estimated using Stokes' theorem for the angular derivative on the sphere $ S^2(r) $, applying Cauchy-Schwartz and using the  inductive hypothesis. For the last spacetime term, we write \[ \left (\partial_r\slashed{\Delta}\partial_r^{k-j}\p\right )\left(\partial_r^k\p\right) = \left(\slashed{\Delta}\partial_r^{k-j+1}\p  \right) \left(\partial_r^k\p\right)+ \sum_{s=0}^{k-j}\binom{k-j}{s}r^2\partial_r^s(r^{-2}) \left (\slashed{\Delta}\partial_r^{k-j-s}\p\right ) \left(\partial_r^k\p\right)  \]
		so all the sum terms are estimated using Cauchy-Schwartz and the inductive hypothesis, for any $ j\geq 1 $. The first term on the left-hand side is estimated the same for $ j\geq 2 $, while for $ j=1 $ we use Stokes's theorem and we obtain \[ \abs{\int_{\mathcal{R}}E_1^6\ \slashed{\Delta}\partial_r^k\p \cdot \partial_r^k\p} \leq \int_{\mathcal{R}}\abs{E_1^6}\cdot \abs{\slashed{\nabla}\partial_r^k\p}^2, \]
		which can be absorbed in the  $ F^k_{\slashed{\nabla}}\abs{\slashed{\nabla}\partial_r^k\p}^2 $ term of  (\ref{K^k-terms}) as long as $ \abs{E_1^6}< -\frac{1}{2}\partial_r\x_k^r. $ However, in view of  $ A_{j,\ell}(r)  $ being uniformly bounded in $ A_c $ with respect to $ \ell $, it suffices to choose $ -\x_k^r \ll -\frac{1}{2}\partial_r\x_k^r. $
		
		Finally, we will estimate the horizon $ \h $ boundary term by induction. In particular, it suffices to estimate \[ \int_{\h}\left (\slashed{\Delta}\partial_r^m\p\right )\left(\partial_r^k\p\right),  \] 
		for any $ 0\leq m \leq k-1. $ For $ m=0, $ using the wave equation (\ref{waveeq}) for $ \p $ we have \[  \left (1+\dfrac{r^2\cdot V_i(r)}{\ell(\ell+1)}\right ) \slashed{\Delta}\p= -2T\partial_r\Psi_i  -D(r)\partial_r^2\Psi_i -\dfrac{2}{r}T\Psi_i-R(r)\partial_r\Psi_i  \]
		where $ \frac{1}{3}\leq \left (1+\frac{M^2\cdot V_i(M)}{\ell(\ell+1)}\right )\leq 3 $ for all $ \ell\geq i. $ Thus, using Lemma \ref{horizon-control-r's} we obtain an estimate for the $ m=0 $ case. Now, assume we also have an estimate for any $ m\leq k-2 $, then we write \begin{align*}
			\int_{\h}\left (\slashed{\Delta}\partial_r^{k-1}\p\right )\left(\partial_r^k\p\right) = \int_{\h}(\partial_r^{k-1} \slashed{\Delta}\p)\partial_r^k\p - \int_{\h} \sum_{s=1}^{k-1}\binom{k-1}{s}r^2\partial_r^s(r^{-2})\slashed{\Delta}\partial_r^{k-1-s}\p \cdot \partial_r^k\p
		\end{align*}
		The terms of the sum are all estimated by the inductive hypothesis for $ m\leq k-2. $
		For the first term, using the wave equation and taking $ \partial_r^{k-1} $-derivatives we obtain \begin{align}
			\begin{aligned}
				B_{\ell}(r)\cdot  & \partial_r^{k-1} \slashed{\Delta}\p \cdot  \partial_r^k\p=\\
				=\  & -   \sum_{s=0}^{k-1}\binom{k-1}{s}\partial_r^{s}D \cdot \partial_r^{k-s+1}\p \cdot   \partial_r^k\p\\
				&- 2\sum_{s=0}^{k-1}\binom{k-1}{s}\partial_r^{s}r^{-1} \cdot T\partial_r^{k-s-1}\p \cdot  \partial_r^k\p\\
				&-   \sum_{s=0}^{k-1}\binom{k-1}{s}\partial_r^{s}R(r)\cdot  \partial_r^{k-s}\p \cdot   \partial_r^k\p\\
				&-  \sum_{s=1}^{k-1}\binom{k-1}{s}\partial_r^{s} B_{\ell}(r) \cdot \partial_r^{k-s-1}(\slashed{\Delta}\p)  \cdot  \partial_r^k\p \\
				& - 2T\partial_r^{k}\p \cdot  \partial_r^k\p ,
			\end{aligned}
		\end{align}
		In view of $ D(M)=D'(M)=R(M)=0,$ the non-vanishing terms on the horizon $ \h $ of the above sums are estimated using Lemma \ref{horizon-control-r's} and the inductive hypothesis. All coefficients in the estimates are independent of $ \ell $ since $ \partial_r^sB_{\ell}(r) $ is uniformly bounded with respect to $ \ell $ for all $ s\geq 0. $
		
		\paragraph{\begin{align}
				\boxed{\text{Estimates for the terms \quad} F_{00}^k(V_i)(\partial_r^k\p) ^2. }
		\end{align}}
		This term is estimated precisely in the same way as $ E_{11}(\partial_r\p) $ in Section \ref{EE first}.
		\\
		\\
		\paragraph{Conclusion.}
		Thus, in order to conclude the proof of Theorem \ref{high-order horizon estimates}, it suffices to consider  a timelike vectorfield $ \x_k $, i.e. $ \x_k^v(M)>0 , \x_k^r(M)<0,$ such that  $ -\partial_r\x_k^r(M)\gg -\x_k^r(M) $ and $ \partial_r\x_k^v(M) $ to be sufficiently large.  Then, choosing
		$ r_c $ close enough to the horizon $ \h $ and $ \varepsilon>0 $ sufficiently small, allows us to obtain the estimate (\ref{H^2- higher control}) for $ k=\ell+(-1)^{i+1} $, and thus closing the induction argument. \\
	\end{proof}

	\newpage
	
	\section{The \texorpdfstring{$r^p$}{PDFstring}--hierarchy estimates}\label{r-p} In this section, we aim to derive decay for both the degenerate and non-degenerate energy flux leading to pointwise decay estimates for $ \Psi_{i}^{^{(\ell)}}, \ \forall \ \ell \geq i, \ i\in\br{1,2
	} $. The key part is to produce the so-called $ r^p- $hierarchy estimates by applying the vector field method for vectorfields of the form $ Z=r^{q}  \partial_v $, written with respect to the double null coordinate system $ (u,v) $. All results in this section are expressed in the double null coordinate system and we work with the foliation $ \check{\Sigma}_{\tau} $ introduced in Section \ref{Geometry}.
	
	Let us make some notation remarks regarding regions where these estimates take place. For any $ \tau_1< \tau_2 $ we define the  following regions, as seen in the Penrose diagram below, \begin{align*}
		\check{M}_R(\tau_1,\tau_2) := \cup_{\tau\in[\tau_1,\tau_2]}\check{\Sigma}_{\tau}, \ \ \check{I}_{\tau_1}^{\tau_2}:= \check{M}_R(\tau_1,\tau_2) \cap \lbrace r\geq R\rbrace, \ \ \check{O}_{\tau}:= \check{\Sigma}_{\tau} \cap\lbrace r\geq R\rbrace.
	\end{align*}

	\begin{figure}[hbt!]
		\centering
		\def\svgscale{0.7}
			\includegraphics[scale=0.7]{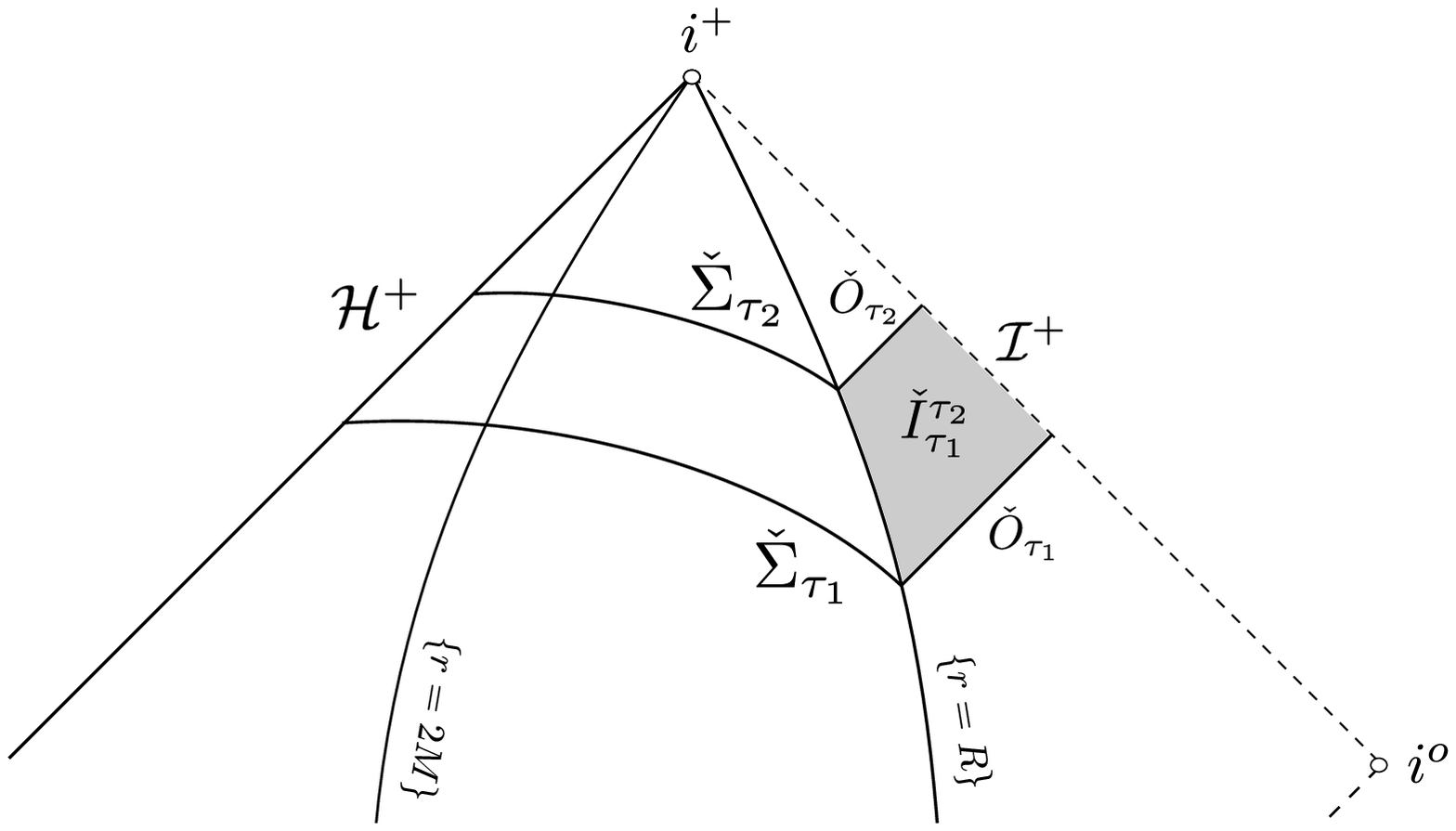}
		\caption{Regions near future null infinity $ \mathcal{I}^{+} $.}
		\label{Foliation}
	\end{figure}
	
	\begin{flushleft}	
		We proceed with the following proposition. 
	\end{flushleft}
	\begin{proposition} \label{r_p}
		Let $ 0 < p \leq 2 $, then there exists a positive constant $ C>0 $ depending only on $ M,\check{\Sigma}_0 $, such that for any solution $ \Psi_{i}^{^{(\ell)}} $ to (\ref{waveeq}), supported on the fixed frequency $ \ell \geq i, \ i\in \br{1,2} $, the rescaled quantity $ \Phi_i := r \Psi_i $ satisfies the following estimate
		\begin{align}
			\begin{aligned}
				\int_{\check{O}_{\tau_2}}r^p \dfrac{(\partial_v \Phi_i)^2}{r^2} +  \int_{\check{I}_{\tau_1}^{\tau_2}} p r^{p-1} \dfrac{(\partial_v\Phi_i)^2}{r^2} +  \int_{\check{I}_{\tau_1}^{\tau_2}}  \dfrac{r^{p-1}}{4}\sqrt{D}&\left((2-p)  + (p-4)\dfrac{M}{r}\right)\left(\abs{\slashed{\nabla}\Psi_i}^2 + V_i \Psi_i^2\right) \\
				+  \int_{\check{I}_{\tau_1}^{\tau_2}} \dfrac{r^{p-1}}{4}DV_i  \Psi_i^2\ \leq \ & C \int_{\check{\Sigma}_{\tau_1}}J_{\mu}^{T}[\Psi_i]n_{\check{\Sigma}_{\tau_1}}^{\mu}+
				\int_{\check{O}_{\tau_1}}r^p \dfrac{(\partial_v \Phi_i)^2}{r^2}.
			\end{aligned}
		\end{align}
	\end{proposition}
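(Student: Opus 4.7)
The plan is to implement the classical Dafermos--Rodnianski $r^p$--method, applied to the rescaled quantity $\Phi_i := r\Psi_i$ in the far region $\check{I}_{\tau_1}^{\tau_2}\subset \{r\geq R>2M\}$, where the double null coordinates $(u,v)$ are regular. First I would derive the equation satisfied by $\Phi_i$ in double null coordinates. A direct computation starting from $(\Box_g - V_i)\Psi_i = 0$, together with the identity $r\cdot \Box_g(r^{-1}\Phi_i) = r^{-1}\bigl(-4\partial_u\partial_v\Phi_i + D\slashed{\Delta}\Phi_i\bigr) - \frac{DD'}{r^2}\Phi_i$, yields schematically
\begin{align*}
-4\,\partial_u\partial_v\Phi_i + D\,\slashed{\Delta}\Phi_i - \frac{DD'}{r}\Phi_i - DV_i\,\Phi_i \;=\; 0.
\end{align*}

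Next, multiply this equation by $\tfrac{1}{2}r^{p-2}\,\partial_v\Phi_i$ and integrate over $\check{I}_{\tau_1}^{\tau_2}$ with respect to the natural volume form. The principal term $-4\,\partial_u\partial_v\Phi_i \cdot \tfrac{1}{2}r^{p-2}\partial_v\Phi_i = -\partial_u\bigl(r^{p-2}(\partial_v\Phi_i)^2\bigr) + (\partial_u r^{p-2})(\partial_v\Phi_i)^2$ produces, after using $\partial_u r = -D/2$, the desired positive bulk term $p r^{p-1}(\partial_v\Phi_i)^2/r^2$ together with boundary contributions on $\check{O}_{\tau_1}, \check{O}_{\tau_2}$, and at future null infinity $\mathcal{I}^+$. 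The $\mathcal{I}^+$ flux is non-negative (it is the radiation flux) and hence can be discarded when passing to the inequality. The boundary terms on $\check{O}_{\tau_1}, \check{O}_{\tau_2}$ give precisely the weighted quantities $\int_{\check{O}_\tau} r^p(\partial_v\Phi_i)^2/r^2$ appearing in the statement.

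For the angular term, I would integrate $D\slashed{\Delta}\Phi_i \cdot \tfrac{1}{2}r^{p-2}\partial_v\Phi_i$ by parts on each sphere $S^2_{u,v}$ to produce $-\tfrac{D r^{p-2}}{4}\partial_v|\slashed{\nabla}\Phi_i|^2$, then integrate by parts in $v$ using $\partial_v r = D/2$. The resulting bulk involves $\partial_v(D\,r^{p-2}) \sim D\cdot r^{p-3}\cdot \bigl[(2-p)\sqrt{D} - 4M/r\cdot (\ldots)\bigr]$; after rewriting $|\slashed{\nabla}\Phi_i|^2 = r^2|\slashed{\nabla}\Psi_i|^2 + (\text{lower order})$ this yields precisely the stated coefficient $\tfrac{r^{p-1}}{4}\sqrt{D}\bigl((2-p)+(p-4)M/r\bigr)$ on $|\slashed{\nabla}\Psi_i|^2$. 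The potential term $-DV_i\Phi_i \cdot \tfrac{1}{2}r^{p-2}\partial_v\Phi_i$ is treated by the same scheme: writing the product as a $v$-derivative of $\Phi_i^2$ and integrating by parts in $v$ produces the two zeroth-order bulk contributions with coefficients $\tfrac{r^{p-1}}{4}\sqrt{D}\bigl((2-p)+(p-4)M/r\bigr) V_i$ and $\tfrac{r^{p-1}}{4}DV_i$. The remaining boundary contributions on $\check{\Sigma}_{\tau_1}\cap\{r\leq R\}$ are controlled by the $T$--flux via Proposition \ref{posT}, which is coercive thanks to Poincar\'e's inequality (Proposition \ref{poincare}).

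The main technical obstacle is the bookkeeping of lower-order terms generated when rewriting $|\slashed{\nabla}\Phi_i|^2$ and $\Phi_i^2$ in terms of $\Psi_i$, where cross terms of the form $r^{p-2}\Psi_i\partial_v\Phi_i$ arise and must either be integrated by parts again or absorbed using Cauchy--Schwarz against $r^{p-1}(\partial_v\Phi_i)^2/r^2$. A related subtlety is that for low frequencies $\ell$ the potential $V_i$ is negative, so the second bulk term in the statement is not of definite sign; nevertheless, since the proposition only claims an inequality between a sum of bulk integrals and the right-hand side (rather than positivity of each summand), the proof goes through as a straightforward identity modulo the $\mathcal{I}^+$--flux discarded step and the $T$--flux boundary bound.
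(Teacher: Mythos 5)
Your proposal is essentially the paper's proof. The paper phrases the computation via the vectorfield-method current $J_\mu^{Z}[\zeta_R\Phi_i]$ with $Z=r^{q}\partial_v$, $q=p-2$; multiplying the $\Phi_i$-equation by $\tfrac{1}{2}r^{p-2}\partial_v\Phi_i$ and integrating by parts, as you do, produces exactly the same identity, since applying the divergence theorem to $J_\mu^{Z}$ is the multiplier method in disguise. Your derived equation $-4\partial_u\partial_v\Phi_i + D\slashed{\Delta}\Phi_i - \tfrac{DD'}{r}\Phi_i - DV_i\Phi_i = 0$ is correct (it is the double-null form of the paper's $\Box\Phi_i - V_i\Phi_i = M[\Phi_i]$), your coefficient computation for the angular and potential bulk matches $-\tfrac{1}{4}r^{p-1}(D'r+Dq)=\tfrac{1}{4}r^{p-1}\sqrt{D}\bigl((2-p)+(p-4)\tfrac{M}{r}\bigr)$, and your identification of the cross term as the item requiring an extra integration by parts (the paper's treatment of $r^{q-1}D'\,\Phi_i\,\partial_v\Phi_i$ via Stokes) is the right one. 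The one place where you are slightly loose is the treatment of the inner boundary: the boundary of $\check{I}_{\tau_1}^{\tau_2}$ at $\{r=R\}$ is timelike, and the paper eliminates it by inserting the cutoff $\zeta_R$ (vanishing for $r\le R$, equal to one for $r\ge R+1$), trading the timelike boundary term for bulk errors supported in $\{R\le r\le R+1\}$, which are then absorbed into $C\int_{\check\Sigma_{\tau_1}}J^T_\mu n^\mu$ via the Morawetz estimate of Theorem \ref{Morawetz-Spacetime}; your phrase about ``boundary contributions on $\check\Sigma_{\tau_1}\cap\{r\le R\}$'' misidentifies where the extra boundary actually sits, but the conclusion (controlled by the $T$-flux) is the same.
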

	\begin{proof}
		As described in the introduction, we apply the vectorfield method in the $ \check{I}_{\tau_1}^{\tau_2} $ region, for the multiplier $ Z = r^q \partial_v $, where $ q=p-2. $ In order to avoid timelike boundary terms we also fix  a smooth cut-off function of $ r $ alone, such that $ \zeta_R(r) \equiv 0  $ for $ r\leq R $ and $ \zeta_R(r)\equiv1 $ for $ r\geq R+1 $.  Then, the energy identity associated to $ \zeta_R\cdot \Phi_i $ reads \begin{align} \label{Misc r_p 1}
			\int_{\check{I}_{\tau_1}^{\tau_2}} Div\left(J_{\mu}^{Z}[\zeta_R\cdot\Phi_i]\right)=		\int_{\partial\check{I}_{\tau_1}^{\tau_2}} J_{\mu}^{Z}[\zeta_R\cdot\Phi_i].
		\end{align}
		However, for $ r\geq R+1 $ we have $ \zeta_R\cdot \Phi_i = \Phi_i $ and for solutions $ \Psi_i $ to equation (\ref{waveeq}), it's immediate computations to check that $ \Phi_i = r\Psi_i$ satisfies the following non-homogeneous equation \begin{align}
			\square\Phi_i-V_i\Phi_i = \dfrac{D'}{r}\Phi_i - \dfrac{2}{r}\Big(\partial_u \Phi_i-\partial_v\Phi_i\Big)=: M[\Phi_i] . 
		\end{align}
		Therefore, from proposition \ref{GenCur} we have \begin{align}
			\begin{aligned}
				Div\left(J_{\mu}^{Z}[\Phi_i]\right) =& \ \ \dfrac{1}{2}Q[\Phi_i]\cdot ^{(Z)}\pi - \dfrac{1}{2}Z(V_i)\Phi_i^2 + Z(\Phi_i)M[\Phi_i]= \\
				=& \ \ p r^{q-1} (\partial_v\Phi_i)^2 + r^{q-1}D' \cdot \Phi_i\cdot (\partial_v\Phi_i) - \dfrac{r^{q-1}}{4}\left(D'r+Dq\right) \left(\abs{\slashed{\nabla}\Phi_i}^2+V_i \Phi_i^2\right) 
				\\ & + \dfrac{r^{q-1}}{4}D \left[V_i + \dfrac{6}{r^2}\left(\dfrac{M}{r}\right)^2 \right]\Phi_i^2.
			\end{aligned}
		\end{align}
		To treat the non-quadratic term we use Stoke's theorem and we write 
		\begin{align}
			\begin{aligned} \label{Misc r_p 2}
				\int_{\check{I}_{\tau_1}^{\tau_2}} r^{q-1}D'(\zeta_R\cdot\Phi_i) \partial_v(\zeta_R\cdot \Phi_i) = & \  \int_{\mathcal{I}^+} \dfrac{D'\cdot D}{4}r^{q-1}(\zeta_R\cdot\Phi_i)^2 - \int_{\check{I}_{\tau_1}^{\tau_2}\cap\lbrace r=R \rbrace} \dfrac{r^{q-1}}{4}D'(\zeta_R\cdot\Phi_i)^2 \\
				&\ + 	\int_{\check{I}_{\tau_1}^{\tau_2}} \dfrac{r^{q-1}}{2}D\left(\dfrac{M}{r}\right)\left[\sqrt{D}(6-p)-1\right]  (\zeta_R\cdot\Phi_i)^2 
			\end{aligned}
		\end{align}
		Note, the boundary integral over $ \check{I}_{\tau_1}^{\tau_2}\cap\lbrace r=R \rbrace $ vanishes since $ \zeta_R \equiv 0 $ on $ \lbrace r=R\rbrace $, whereas  the remaining terms are positive definite in $ \check{I}_{\tau_1}^{\tau_2} $ for $ p  \leq 2 $ and $ R $ large enough.
		
		Now, when it comes to the error terms in equations (\ref{Misc r_p 1}, \ref{Misc r_p 2}) coming from the region $ R\leq r\leq R+1 $, as quadratic terms of the 1st-order jet of $ \Psi_i $,  we control them in terms of the T-flux by Theorem \ref{Morawetz-Spacetime}, as long as $ R> 2M $.
		
		Finally, for the boundary terms on the right-hand side of (\ref{Misc r_p 1}) we have \begin{align}
			\begin{aligned}
				\int_{\partial\check{I}_{\tau_1}^{\tau_2}} J_{\mu}^{Z}[\zeta_R\cdot\Phi_i] = \int_{\check{O}_{\tau_1}}r^p  \dfrac{\left [\partial_v (\zeta_R\Phi_i)\right ]^2}{r^2} - \int_{\check{O}_{\tau_2}}r^p  \dfrac{\left [\partial_v (\zeta_R\Phi_i)\right ]^2}{r^2} - \int_{\mathcal{I}^+} \dfrac{D}{4}\left(\abs{\slashed{\nabla}\Psi_i}^2+V_i\Psi_i^2\right) 
			\end{aligned}
		\end{align}
		and seeing that the last two terms have the right sign, we conclude the proof.
		
	\end{proof}
	
	\paragraph{Estimates for the non-degenerate energy near null infinity $ \mathcal{I}^+ $.} Using the proposition above, we obtain local integrated energy decay estimates near future null infinity $ \mathcal{I}^{+}.$ 
	\begin{proposition} \label{control N flux at null infinity}
		There exists a positive constant $ C $ depending on $ \check{\Sigma}_{\tau_0}, \ M $, such that for all solution $ \Psi_i^{^{(\ell)}} $ to (\ref{waveeq}), supported on the fixed frequency $\ell\geq i, \  i\in \lbrace 1,2 \rbrace$, we have \begin{align}
			\int_{\tau_1}^{\tau_2} \left(\int_{\check{O}_{\tau}}J_{\mu}^{N}[\Psi_i]n^{\mu}_{\check{O}_{\tau}}\right) d\tau\ \leq\ &C \left(  \int_{\check{\Sigma}_{\tau_1}}J_{\mu}^{T}[\Psi_i]n_{\check{\Sigma}_{\tau_1}}^{\mu} + \int_{\check{O}_{\tau_1}} \dfrac{1}{r}(\partial_v\Phi_i)^2 \right), \label{misc r_p 3} \\
			\int_{\tau_1}^{\tau_2} \left(\int_{\check{O}_{\tau}} \dfrac{1}{r}(\partial_v\Phi_i)^2\right) d\tau\ \leq\ &C \left(  \int_{\check{\Sigma}_{\tau_1}}J_{\mu}^{T}[\Psi_i]n_{\check{\Sigma}_{\tau_1}}^{\mu} + \int_{\check{O}_{\tau_1}} (\partial_v\Phi_i)^2 \right).		\label{misc r_p 4}	
		\end{align}
	\end{proposition}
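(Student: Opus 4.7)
The two estimates are consequences of Proposition \ref{r_p} applied with $p=2$ and $p=1$ respectively, together with the coarea formula in the region $\check{I}_{\tau_1}^{\tau_2}$. In both cases the boundary term on $\check O_{\tau_1}$ produced by Proposition \ref{r_p} with the chosen $p$ matches exactly the desired weighted null flux on the right-hand side of the target estimate (with $p=2$ giving $\int_{\check O_{\tau_1}}(\partial_v\Phi_i)^2$, and $p=1$ giving $\int_{\check O_{\tau_1}}\frac{1}{r}(\partial_v\Phi_i)^2$). The substantive work is therefore to convert the interior bulk term produced by Proposition \ref{r_p} into the $\tau$-integrated flux on the left.

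For estimate (\ref{misc r_p 4}), I would apply Proposition \ref{r_p} with $p=2$. The principal bulk term $\int_{\check{I}} 2(\partial_v\Phi_i)^2/r^2$, combined with the foliation of $\check{I}_{\tau_1}^{\tau_2}$ by $\{\check O_\tau\}$ and a coarea formula in the retarded time $\tau$, produces the desired quantity $\int_{\tau_1}^{\tau_2}\int_{\check O_\tau}\frac{1}{r}(\partial_v\Phi_i)^2\,d\tau$ up to a multiplicative constant. For $p=2$ the coefficient of the angular/potential bulk block in Proposition \ref{r_p} equals $-\frac{M\sqrt{D}}{2}$, which is sign-indefinite; since it is bounded uniformly in $r$, it can be transferred to the right-hand side and absorbed by Theorem \ref{Morawetz-Spacetime} applied to $\Psi_i$ and $T\Psi_i$ (bounding the angular piece by the non-degenerate spacetime Morawetz estimate in the region $\{R\leq r\leq R+1\}$, and the $V_i\Psi_i^2$ contribution by Poincar\'e against the angular term, using that $\Psi_i$ is supported on the fixed frequency $\ell$).

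For estimate (\ref{misc r_p 3}), I would apply Proposition \ref{r_p} with $p=1$. This is the cleaner case, since the coefficient of the angular/potential block, $\frac{\sqrt{D}}{4}(1-3M/r)$, is positive for $r\geq R>3M$. The resulting bulk term therefore controls the full density $(\partial_v\Phi_i)^2/r^2 + \frac{1}{4}\sqrt{D}(1-3M/r)\bigl(|\slashed{\nabla}\Psi_i|^2 + V_i\Psi_i^2\bigr)$. Since the vector field $N$ coincides with $T$ in the region $\{r\geq 8M/7\}$, this density is pointwise comparable to $J^N_\mu[\Psi_i] n^\mu_{\check O_\tau}$, after using the relation $r\,\partial_v\Psi_i = \partial_v\Phi_i - \tfrac{D}{2}\Psi_i$ and a Hardy-type inequality (in the spirit of Lemma \ref{hardy}, but applied on the outgoing null hypersurface) to eliminate the lower-order $\Psi_i^2$ cross term. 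The coarea formula then yields the $\tau$-integrated left-hand side.

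The main obstacle is the $p=2$ case: the sign-indefinite angular bulk term needs to be controlled without losing a derivative at the level of initial data, and this requires a careful combination with the Morawetz estimates in a compact $r$-region just outside $\{r=R\}$. A secondary subtlety is tracking volume-form conventions so that the conversion between integrals on the null hypersurface $\check O_\tau$ and the spacetime bulk $\check{I}_{\tau_1}^{\tau_2}$ produces the correct power of $r$; this is standard but must be done consistently in both the $p=1$ and $p=2$ applications.
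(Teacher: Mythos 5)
Your assignment of the $r^p$ exponents is correct ($p=1$ for (\ref{misc r_p 3}), $p=2$ for (\ref{misc r_p 4})), and the treatment of (\ref{misc r_p 3}) is broadly aligned with the paper's (the paper handles the $\frac{D}{r}\Psi_i\partial_v\Psi_i$ cross term by an integration by parts in $v$ rather than a null Hardy inequality, but either route should close, since the $\partial_v(D/r)$ coefficient acquires a good sign for $R$ large). The genuine problem is in your proposed treatment of the $p=2$ bulk term.

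For $p=2$, Proposition \ref{r_p} produces the sign-indefinite bulk contribution $\int_{\check{I}_{\tau_1}^{\tau_2}}\frac{r}{4}\sqrt{D}\,(-2M/r)\bigl(|\slashed{\nabla}\Psi_i|^2+V_i\Psi_i^2\bigr)=-\int_{\check{I}_{\tau_1}^{\tau_2}}\frac{M\sqrt{D}}{2}\bigl(|\slashed{\nabla}\Psi_i|^2+V_i\Psi_i^2\bigr)$, whose coefficient is $O(1)$ as $r\to\infty$ and whose support is the \emph{entire} far region $\check{I}_{\tau_1}^{\tau_2}=\{r\geq R\}$, not just a neighborhood of $\{r=R\}$. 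Theorem \ref{Morawetz-Spacetime} (and its non-degenerate version Theorem \ref{Spacetime non degenerate photon estimate}) bounds the spacetime integral of $|\slashed{\nabla}\Psi_i|^2$ only against a coefficient that decays like $1/r$ at infinity — namely $(r-M)(r-2M)^2/r^4\sim r^{-1}$, respectively $\sqrt{D}/r\sim r^{-1}$. Such an estimate cannot dominate a term with an $O(1)$ coefficient on a non-compact region, so "absorbing by Theorem \ref{Morawetz-Spacetime}" does not close. Your parenthetical about "the region $\{R\leq r\leq R+1\}$" appears to conflate the compactly supported cutoff error terms (those \emph{are} controlled by the Morawetz estimate) with the principal angular bulk term, which is not compactly supported.

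The paper closes this step differently: it first proves (\ref{misc r_p 3}) from $p=1$, for which the angular coefficient $\frac{\sqrt{D}}{4}(1-3M/r)$ is favorable, yielding control of $\int_{\check{I}_{\tau_1}^{\tau_2}}\bigl(|\slashed{\nabla}\Psi_i|^2+V_i\Psi_i^2\bigr)$ with an $O(1)$ coefficient (estimate (\ref{misc r_p 2}) in the paper's proof), and \emph{then} uses that bound — together with a fixed-$\ell$ Poincar\'e inequality to move the $-rV_i\Psi_i^2$ piece onto the angular term — to absorb the sign-indefinite $p=2$ bulk contribution. The boundary flux on the right of (\ref{misc r_p 3}) is $\int_{\check{O}_{\tau_1}}\frac{1}{r}(\partial_v\Phi_i)^2$, which is dominated by the $\int_{\check{O}_{\tau_1}}(\partial_v\Phi_i)^2$ appearing on the right of (\ref{misc r_p 4}), so this use of the first estimate inside the proof of the second is consistent. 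The hierarchy (prove (\ref{misc r_p 3}) first, then feed it into (\ref{misc r_p 4})) is the missing ingredient in your proposal.
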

	\begin{proof}
		Using proposition \ref{r_p} for $ p=1 $ we obtain \begin{align}
			\begin{aligned}
				\int_{\check{I}_{\tau_1}^{\tau_2}}  \dfrac{(\partial_v\Phi_i)^2}{r^2} +  \int_{\check{I}_{\tau_1}^{\tau_2}}  \dfrac{1}{4}\sqrt{D}\left(1  - 3\dfrac{M}{r}\right)&\left(\abs{\slashed{\nabla}\Psi_i}^2 + V_i \Psi_i^2\right) 
				+  \int_{\check{I}_{\tau_1}^{\tau_2}} \dfrac{1}{4}DV_i  \Psi_i^2\\ \leq \ & C \int_{\check{\Sigma}_{\tau_1}}J_{\mu}^{T}[\Psi_i]n_{\check{\Sigma}_{\tau_1}}^{\mu}+
				\int_{\check{O}_{\tau_1}} \dfrac{1}{r^2}(\partial_v \Phi_i)^2.
			\end{aligned}
		\end{align}
		However, in view of $ V_i = \mathcal{O}\left (\frac{1}{r^3}\right ) $, choosing $ R  $ large enough and using Poincare inequality we can absorb the last term of the left-hand side in the angular derivative term. In addition, the first term expands as \begin{align}
			\dfrac{1}{r^2} (\partial_v\Phi_i)^2 = \dfrac{1}{r^2}\left [\partial_v(r\Psi_i)\right ]^2 =     (\partial_v\Psi_i)^2 + \dfrac{D^2}{4r^2}\Psi_i^2 + \dfrac{D}{r}\Psi_i \partial_v\Psi_i  
		\end{align}
		and in order to treat the last term above, using the cut-off smooth function introduced earlier, we write \begin{align}
			\begin{aligned}
				\int_{\check{I}_{\tau_1}^{\tau_2}} \dfrac{D}{r}(\zeta_R \Psi_i) \partial_v (\zeta_R \Psi_i) = \dfrac{1}{2}	\int_{\check{I}_{\tau_1}^{\tau_2}} \dfrac{D}{r}\partial_v \left[(\zeta_R \Psi_i)^2\right]  &= \\
				= - \dfrac{1}{2}\int_{\check{I}_{\tau_1}^{\tau_2}} \partial_v\left(\dfrac{D}{r}\right) (\zeta_R \Psi_i)^2 - \int_{\check{I}_{\tau_1}^{\tau_2} \cap \lbrace r= R\rbrace} & \dfrac{\sqrt{D}D}{4r}(\zeta_R \Psi_i)^2 + \int_{\mathcal{I}^+} \dfrac{D^2}{4r}(\zeta_R \Psi_i)^2 
			\end{aligned}
		\end{align} 
		The middle term in the last line vanishes, and since $ \partial_v (D/r) = D\sqrt{D}(2-3\sqrt{D})/2r^2 <0 $ for large values of $ R $, the remaining two terms above are positive definite. 
		
		All error terms that occur in the region $ \br{R\leq r\leq R+1} $ are controlled in terms of the $ T-$flux along $ \check{\Sigma}_{\tau_{1}} $ using theorem \ref{Morawetz-Spacetime}, and combining all estimates above we show 
		\begin{align}
			\begin{aligned} \label{misc r_p 2}
				\int_{\check{I}_{\tau_1}^{\tau_2}} (\partial_v\Psi_i)^2 +  \dfrac{1}{2}&\left(\abs{\slashed{\nabla}\Psi_i}^2 + V_i \Psi_i^2\right) 
				\leq & C\left(  \int_{\check{\Sigma}_{\tau_1}}J_{\mu}^{T}[\Psi_i]n_{\check{\Sigma}_{\tau_1}}^{\mu}+
				\int_{\check{O}_{\tau_1}} \dfrac{1}{r^2}(\partial_v \Phi_i)^2 \right). 
			\end{aligned}
		\end{align}
		On the other hand, since $ n
		_{\check{O}_{\tau}} \equiv \frac{\partial}{\partial v},  $ we have  
		\[ J_{\mu}^{N}[\Psi_i]n^{\mu}_{\check{O}_{\tau}} =(\partial_v \Psi_i)^2 + \dfrac{D}{2}\left(\abs{\slashed{\nabla}\Psi_i}^2 + V_i \Psi_i^2\right),  \]
		thus using the coarea formula for the spacetime term  of (\ref{misc r_p 2}) we conclude the proof of (\ref{misc r_p 3}). 
		\begin{flushright}
			$ \blacksquare$
		\end{flushright}
		In order to prove (\ref{misc r_p 4}), we use proposition \ref{r_p} for $ p=2 $ and we obtain \begin{align}
			\begin{aligned} \label{misc r_p 11}
				\int_{\check{I}_{\tau_1}^{\tau_2}}  2\dfrac{(\partial_v\Phi_i)^2}{r} \leq C  \int_{\check{\Sigma}_{\tau_1}}J_{\mu}^{T}[\Psi_i]n_{\check{\Sigma}_{\tau_1}}^{\mu} + \int_{\check{O}_{\tau_1}} (\partial_v\Phi_i)^2 + \int_{\check{I}_{\tau_1}^{\tau_2}} \sqrt{D}\dfrac{M}{2}\left(\abs{\slashed{\nabla}\Psi_i}^2 + V_i \Psi_i^2\right)  \\+  \int_{\check{I}_{\tau_1}^{\tau_2}}  \dfrac{D}{4} (- r\cdot V_i)\Psi_i^2.	
			\end{aligned}
		\end{align}
		By Poincare's inequality, we write \begin{align}
			\int_{\check{I}_{\tau_1}^{\tau_2}}  \dfrac{D}{4}\abs{r\cdot V_i}\Psi_i^2 =  \int_{\check{I}_{\tau_1}^{\tau_2}}  D\dfrac{M}{2}\dfrac{\Big|3\sqrt{D}+(-1)^i(\ell+2-i)\Big|}{\ell(\ell+1)}\abs{\slashed{\nabla}\Psi_i}^2 \leq 10\int_{\check{I}_{\tau_1}^{\tau_2}} D\dfrac{M}{2} \abs{\slashed{\nabla}\Psi_i}^2,
		\end{align} 
		for all $ \ell\geq i $. Thus, the last two terms of (\ref{misc r_p 11}) are bounded by the right-hand side of  (\ref{misc r_p 2}) which concludes the proof.

	\end{proof}
	\subsection{Decay of Energy} With the above estimates in hand, we are ready to show decay for the non-degenerate energy of $ \Psi_i^{^{(\ell)}} $, for all $ \ell \geq i. $ 
	First,  Theorem \ref{remove_degeneracy} and coarea formula yield near the event horizon $ \mathcal{H}^+ $ \begin{align*}
		\int_{\tau_1}^{\tau_2} \left(\int_{\mathcal{A}_c\cap \check{\Sigma}_{\tau}}J_{\mu}^N[\Psi_i]n_{\check{\Sigma}_{\tau}}^{\mu}\right)  d\tau \leq  C \left(  \int_{\check{\Sigma}_{\tau_1}} J_{\mu}^{N}[\Psi_i] n_{\check{\Sigma}_{\tau_1}}^{\mu}+ \int_{\check{\Sigma}_{\tau_1}} J_{\mu}^{N}[T \Psi_i] n_{\check{\Sigma}_{\tau_1}}^{\mu}+ \int_{\check{\Sigma}_{\tau_1}\cap \mathcal{A}_c}  J_{\mu}^{N}\left[\partial_{r} \Psi_i\right] n_{\check{\Sigma}_{\tau_1}}^{\mu}\right)
	\end{align*} 
	On the other hand, close to future null infinity $ \mathcal{I}^+ $ we have  estimate (\ref{misc r_p 3}) and using 
	Theorem \ref{Spacetime non degenerate photon estimate} to control the energy in the  intermediate region $ \lbrace r_c\leq r \leq R \rbrace \cap \check{\Sigma}_{\tau} $, we obtain 
	\begin{align} \label{NonDEB}
		\int_{\tau_1}^{\tau_2} \left(\int_{\check{\Sigma}_{\tau}}J_{\mu}^N[\Psi_i^{^{(\ell)}}]n_{\check{\Sigma}_{\tau}}^{\mu}\right)  d\tau \leq  C  \cdot \mathcal{E}[\Psi_i^{^{(\ell)}}](\tau_1), \hspace{1cm} \ell\geq i,  
	\end{align}
	for a positive constant $ C = C(M,\check{\Sigma}_0)$, where 
	\begin{align*}
		\mathcal{E}[\Psi_i](\tau) :=  \int_{\check{\Sigma}_{\tau}} J_{\mu}^{N}[\Psi_i] n_{\check{\Sigma}_{\tau}}^{\mu}+ \int_{\check{\Sigma}_{\tau}} J_{\mu}^{N}[T \Psi_i] n_{\check{\Sigma}_{\tau}}^{\mu}+ \int_{\check{\Sigma}_{\tau}\cap \mathcal{A}_c}  J_{\mu}^{N}\left[\partial_{r} \Psi_i\right] n_{\check{\Sigma}_{\tau}}^{\mu}+  \int_{\check{O}_{\tau}}  r^{-1}(\partial_v\Phi_i)^2.
	\end{align*} 
	Last, in order to improve the decay of the non-degenerate energy we also need an estimate for
	\[ \int_{\tau_1}^{\tau_2} 	\mathcal{E}[\Psi_i](\tau) d\tau.\]
	Theorem \ref{high-order horizon estimates} for $ k=2 $ and coarea formula yield \begin{align*}
		\int_{\tau_1}^{\tau_2} \int_{\check{\Sigma}_{\tau}\cap \mathcal{A}_c}  J_{\mu}^{N}\left[\partial_{r} \Psi_i\right] n_{\check{\Sigma}_{\tau}}^{\mu} \leq 	C \left(  \sum_{j=0}^{2} \int_{\check{\Sigma}_{\tau_1}} J_{\mu}^{N}\left[T^{j} \Psi_i\right] n_{\check{\Sigma}_{\tau_1}}^{\mu}+ \sum_{j=1}^{2} \int_{\check{\Sigma}_{\tau_1} \cap \mathcal{A}_c} J_{\mu}^{N}\left[\partial_{r}^{j} \Psi_i\right] n_{\check{\Sigma}_{\tau_1}}^{\mu}\right),
	\end{align*}
	for all $ \ell \geq 2+(-1)^{i} $. 
	Thus, applying (\ref{NonDEB}) for both $ \Psi_i, T \Psi_i $ (note $ T\Psi_i $ also satisfies (\ref{waveeq})), using  (\ref{misc r_p 4})   and the estimate above, we obtain \begin{align} \label{decay misc_1}
		\int_{\tau_1}^{\tau_2} 	\mathcal{E}[\Psi_i](\tau) d\tau \leq C\left(\mathcal{E}[\Psi_i](\tau_1) + \mathcal{E}[T\Psi_i](\tau_1) + \int_{\check{\Sigma}_{\tau}\cap \mathcal{A}_c}  J_{\mu}^{N}\left[\partial_r\partial_{r} \Psi_i\right] n_{\check{\Sigma}_{\tau}}^{\mu} +\int_{\check{O}_{\tau_1}} (\partial_v(r\Psi_i))^2 \right), 
	\end{align}
	for all $ \ell\geq 2+(-1)^{i} $. Note, the above estimate holds for all frequencies $ \ell\geq 1 $ in the $ i=1 $ case, however, in the $ i=2 $ case  it holds only for $ \ell \geq 3.$  Thus, using an effective method introduced by Dafermos-Rodnianski in \cite{dafermos2010new}, we obtain the following inverse polynomial energy decay estimates.
	
	\begin{proposition} \label{non-degen decay}
		There exists a positive constant $ C $ depending only on $\check{\Sigma}_0, M $ such that for all solutions $ \Psi_i^{^{(\ell)}} $ to (\ref{waveeq}), supported on the fixed frequency $ \ell \geq i, \ i \in \br{1,2} $, the following inverse linear decay holds \begin{align} \label{t^-1 decay}
			\int_{\check{\Sigma}_{\tau}}J_{\mu}^N[\Psi_i]n_{\check{\Sigma}_{\tau}}^{\mu} \leq C \mathcal{E}[\Psi_i](0) \cdot \dfrac{1}{\tau}
		\end{align}
		Moreover, if $ \ell \geq 2 + (-1)^i $,  we have the improved inverse polynomial decay 
		\begin{align} \label{t^-2 decay}
			\int_{\check{\Sigma}_{\tau}}J_{\mu}^N[\Psi_i]n_{\check{\Sigma}_{\tau}}^{\mu} \leq C \bar{\mathcal{E}}[\Psi_i](0) \cdot \dfrac{1}{\tau^2},
		\end{align}
		where  $ \bar{\mathcal{E}}[\Psi_i](0)  $ is equal to the right-hand side of (\ref{decay misc_1}) for $ \tau_1=0. $ \\ 
	\end{proposition}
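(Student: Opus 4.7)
The proof will follow the standard mean-value (pigeonhole) argument of Dafermos--Rodnianski, using the two integrated decay estimates already at hand together with the uniform-in-$\tau$ boundedness of each of the energies that appear.

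Write $F(\tau):=\int_{\check{\Sigma}_\tau}J_\mu^N[\Psi_i]\,n^\mu_{\check{\Sigma}_\tau}$. For the inverse linear decay, the key input is (\ref{NonDEB}), which gives $\int_{\tau_1}^{\tau_2}F(\tau)\,d\tau\leq C\,\mathcal{E}[\Psi_i](\tau_1)$ for all $\tau_1<\tau_2$. Combined with the uniform boundedness $\mathcal{E}[\Psi_i](\tau)\leq C\,\mathcal{E}[\Psi_i](0)$ valid for every $\tau\geq 0$ (which follows from Theorem \ref{Nuniform } applied to both $\Psi_i$ and $T\Psi_i$, from Theorem \ref{remove_degeneracy} for the $\partial_r\Psi_i$--horizon piece, and from Proposition \ref{r_p} with $p=1$ for the null-infinity piece $\int_{\check{O}_\tau}r^{-1}(\partial_v\Phi_i)^2$), a mean-value argument on the interval $[\tau/2,\tau]$ produces some $\tau^\star\in[\tau/2,\tau]$ with $F(\tau^\star)\leq 2C\,\mathcal{E}[\Psi_i](0)/\tau$. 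Forward-propagating $F$ from $\tau^\star$ to $\tau$ via Theorem \ref{Nuniform } then delivers (\ref{t^-1 decay}).

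For the improved $\tau^{-2}$ decay, which is only valid when $\ell\geq 2+(-1)^i$ so that Theorem \ref{high-order horizon estimates} may be invoked at the $k=2$ level, one iterates. The integrated estimate (\ref{decay misc_1}), namely $\int_{\tau_1}^{\tau_2}\mathcal{E}[\Psi_i](\tau)\,d\tau\leq C\,\bar{\mathcal{E}}[\Psi_i](\tau_1)$, combined with the uniform bound $\bar{\mathcal{E}}[\Psi_i](\tau)\leq C\,\bar{\mathcal{E}}[\Psi_i](0)$---obtained by the same menu of boundedness theorems now applied to $\Psi_i,\,T\Psi_i,\,T^2\Psi_i$ and to $\partial_r^j\Psi_i$ for $j\leq 2$, together with Proposition \ref{r_p} at $p=2$ for $\int_{\check{O}_\tau}(\partial_v\Phi_i)^2$---produces via a pigeonhole on $[\tau/4,\tau/2]$ some $\tilde{\tau}$ satisfying $\mathcal{E}[\Psi_i](\tilde{\tau})\leq C\,\bar{\mathcal{E}}[\Psi_i](0)/\tau$. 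Feeding this $\tilde{\tau}$ in as the new initial time into the inverse linear decay just established yields $F(\tau)\leq C\,\mathcal{E}[\Psi_i](\tilde{\tau})/(\tau-\tilde{\tau})\leq C\,\bar{\mathcal{E}}[\Psi_i](0)/\tau^2$, which is precisely (\ref{t^-2 decay}).

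The main technical point is the uniform-in-$\tau$ boundedness of the enlarged energies $\mathcal{E}[\Psi_i]$ and $\bar{\mathcal{E}}[\Psi_i]$ by their initial values, since the pigeonhole step gives a bound $\leq C\,\mathcal{E}[\Psi_i](\tau/2)/\tau$ rather than $\leq C\,\mathcal{E}[\Psi_i](0)/\tau$ directly. The non-degenerate horizon fluxes $\int_{\check{\Sigma}_\tau\cap\mathcal{A}_c}J_\mu^N[\partial_r^k\Psi_i]$ are only available through Theorem \ref{high-order horizon estimates} for $k\leq\ell+(-1)^{i+1}$, which is exactly the origin of the $\ell\geq 2+(-1)^i$ restriction for the improved decay: for the lowest frequencies ($\ell=1$ when $i=1$, $\ell=2$ when $i=2$) the commuted energy with $\partial_r^2\Psi_i$ is not at our disposal and only the $\tau^{-1}$ rate can be extracted. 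The null-infinity pieces, on the other hand, are essentially monotone in $\tau$ thanks to the sign structure in Proposition \ref{r_p}, so no additional argument is required on that side.
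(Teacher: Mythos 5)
Your proof is correct and follows essentially the same route as the paper: the standard Dafermos--Rodnianski mean-value argument combining the integrated decay estimates (\ref{NonDEB}), (\ref{decay misc_1}) with quasi-monotonicity of $F(\tau):=\int_{\check{\Sigma}_\tau}J^N_\mu[\Psi_i]n^\mu_{\check{\Sigma}_\tau}$. The paper phrases the first step as a direct FTC bound on $[0,\tau]$ (no pigeonhole needed, since (\ref{NonDEB}) with $\tau_1=0$ already has $\mathcal{E}[\Psi_i](0)$ on the right) and the second step via a dyadic sequence $\{\tau_n\}$, whereas you pigeonhole on the intervals $[\tau/2,\tau]$ and $[\tau/4,\tau/2]$ for each $\tau$ separately. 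Both are sound; your version pays by requiring the uniform-in-$\tau$ boundedness $\mathcal{E}[\Psi_i](\tau)\lesssim\mathcal{E}[\Psi_i](0)$ and $\bar{\mathcal{E}}[\Psi_i](\tau)\lesssim\bar{\mathcal{E}}[\Psi_i](0)$, which the paper's argument sidesteps; these bounds are available (stacking Theorem \ref{Nuniform }, Theorems \ref{remove_degeneracy}, \ref{high-order horizon estimates}, and Proposition \ref{r_p} appropriately, including for $T\Psi_i$), so this is a matter of presentation rather than correctness.

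One factual slip: you state that for the lowest frequencies ``$\ell=1$ when $i=1$, $\ell=2$ when $i=2$'' the $\partial_r^2\Psi_i$ commuted energy is unavailable. For $i=1$ the threshold in Theorem \ref{high-order horizon estimates} is $k\leq\ell+(-1)^{i+1}=\ell+1$, so $k=2$ is available already at $\ell=1$; equivalently, $2+(-1)^1=1\leq\ell$ is automatic. The improved $\tau^{-2}$ decay therefore holds for \emph{all} $\ell\geq 1$ when $i=1$, and the sole exclusion is $i=2$, $\ell=2$. This is exactly what the paper notes after (\ref{decay misc_1}), and it explains why the subsequent interpolation via the $T$-flux is only needed for $\Psi_2^{(2)}$.
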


	\begin{proof}
		Applying (\ref{NonDEB})  for $ \tau_1 = 0 $ and $ \tau_2=\tau $ for any $ \tau\geq 0, $ we obtain \begin{align}
			\int_0^{\tau}\left(\int_{\check{\Sigma}_{\tau}}J_{\mu}^N[\Psi_i]n_{\check{\Sigma}_{\tau}}^{\mu}\right) \leq C \bar{\mathcal{E}}[\Psi_i](0)
		\end{align}
		Using the fundamental theorem of calculus and the uniform boundedness of the non-degenerate energy of Theorem \ref{Nuniform }, i.e. \[ 
		\int_{\check{\Sigma}_{s_2}}J_{\mu}^N[\Psi_i]n_{\check{\Sigma}_{s_2}}^{\mu} \leq C 	\int_{\check{\Sigma}_{s_1}}J_{\mu}^N[\Psi_i]n_{\check{\Sigma}_{s_1}}^{\mu}, \ \ \forall s_1\leq s_2, \]
		we obtain \begin{align}
			(\tau-0)\int_{\check{\Sigma}_{\tau}}J_{\mu}^N[\Psi_i]n_{\check{\Sigma}_{\tau}}^{\mu} \leq C 	\mathcal{E}[\Psi_i](0),  
		\end{align}
		which readily proves (\ref{t^-1 decay}).
		
		Now,  let $ \ell \geq 2 +(-1)^i $ and considering (\ref{decay misc_1}) for $ \tau_1=0 $ while taking $ \tau_2 \rightarrow +\infty $ and we obtain \begin{align}
			\int_0^T \mathcal{E}[\Psi_i](\tau)d\tau \leq C \bar{\mathcal{E}}[\Psi_i](0), \hspace{1cm} \forall \ T\geq 0.
		\end{align}
		This allows us to find a sequence $\lbrace \tau_n \rbrace_{n\in \mathbb{N}}$ with $ \tau_0\geq1 $ and $ 2\tau_n\leq \tau_{n+1}\leq 3\tau_n  $ such that \begin{align} \label{dyadic}
			\mathcal{E}[\Psi_i](\tau_n)\leq \dfrac{1}{\tau_n} C \cdot \bar{\mathcal{E}}[\Psi_i](0).
		\end{align}
		Note, the above sequence has the  property that $ \tau_n \sim \tau_{n+1}\sim (\tau_{n+1}-\tau_n )$, uniformly for any $ n\in\mathbb{N}, $ which also makes it unbounded, i.e. $ \tau_{n} \xrightarrow{n\to \infty} \infty$.
		\\ So, applying (\ref{NonDEB}) for the interval $ [\tau_n,\tau_{n+1}] $  and using relation (\ref{dyadic})
		we get
		\begin{align}
			\int_{\tau_n}^{\tau_{n+1}} \left(\int_{\check{\Sigma}_{\tau}}J_{\mu}^N[\Psi_i]n_{\check{\Sigma}_{\tau}}^{\mu}\right)  d\tau \leq  C	\mathcal{E}[\Psi_i](\tau_n) \leq \dfrac{1}{\tau_n}\bar{C} \bar{\mathcal{E}}[\Psi_i](0).
		\end{align}
		Thus, using the fundamental theorem of calculus and uniform energy boundedness we obtain
		\begin{align}
			\begin{aligned}
				(\tau_{n+1}-\tau_n)	\int_{\check{\Sigma}_{\tau_{n+1}}}J_{\mu}^N[\Psi_i]n_{\check{\Sigma}_{\tau_{n+1}}}^{\mu}   d\tau \leq   \dfrac{1}{\tau_n}\bar{C} \bar{\mathcal{E}}[\Psi_i](0)\\
				\Rightarrow \ \ \int_{\check{\Sigma}_{\tau_{n+1}}}J_{\mu}^N[\Psi_i]n_{\check{\Sigma}_{\tau_{n+1}}}^{\mu}   d\tau \leq \dfrac{1}{\tau_{n+1}^2}C \bar{\mathcal{E}}[\Psi_i](0), \ \ \ \forall \in \mathbb{N},
			\end{aligned}
		\end{align}
		where we used the dyadic property of the sequence above. Now, note that for any $ \tau\geq 1 $, there exists $ n\in\mathbb{N} $ such that $ \tau_n\leq \tau \leq \tau_{n+1} $ and using the above estimate we finally get \begin{align}
			\begin{aligned}
				\int_{\check{\Sigma}_{\tau}}J_{\mu}^N[\Psi_i]n_{\check{\Sigma}_{\tau}}^{\mu}   d\tau  \leq C' \int_{\check{\Sigma}_{\tau_{n}}}J_{\mu}^N[\Psi_i]n_{\check{\Sigma}_{\tau_n}}^{\mu}   d\tau \leq \dfrac{1}{\tau_n^2}\tilde{C} \bar{\mathcal{E}}[\Psi_i](0) \sim \dfrac{1}{\tau_{n+1}^2} \tilde{C} \bar{\mathcal{E}}[\Psi_i](0) \leq \dfrac{1}{\tau^2}C\bar{\mathcal{E}}[\Psi_i](0).
			\end{aligned}
		\end{align}

	\end{proof}
	In order to prove pointwise decay estimates that are $ L^2(\tau_0,\infty) $ integrable in time, we need the non-degenerate energy to decay like $ 1/\tau^2 $. However, we see above that we don't have such decay for the energy of $ \Psi_2^{(2)} $, i.e. $  i=2,\ \ell =2. $ We remedy this situation by working with the \textbf{degenerate} energy, $ J_{\mu}^T[\Psi_i] $, which we show decays like $ \frac{1}{\tau^{2}} $, and using certain interpolation inequalities we manage to show adequate pointwise decay for all frequencies $ \ell\geq i,\ i\in \br{1,2} $.

	\begin{proposition}
		There exists a positive constant $ C $ depending only on $\check{\Sigma}_0, M $ such that for all solutions $ \Psi_i^{^{(\ell)}} $ to (\ref{waveeq}), supported on the fixed frequency $ \ell\geq i $, $ i\in \lbrace 1,2 \rbrace$, the following decay estimate holds \begin{align} \label{T-t^-1 decay}
			\int_{\check{\Sigma}_{\tau}}J_{\mu}^T[\Psi_i]n_{\check{\Sigma}_{\tau}}^{\mu} \leq C  \bar{\mathcal{E}}^T[\Psi_i](0) \cdot \dfrac{1}{\tau^2},
		\end{align}
		where $\bar{\mathcal{E}}^T[\Psi_i](\tau)  $ is defined in the proof below.
	\end{proposition}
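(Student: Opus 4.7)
The plan is a standard Dafermos--Rodnianski hierarchy argument based on the $r^p$--estimates of Proposition \ref{r_p} combined with the Morawetz estimates and the conservation of the $T$--flux (Proposition \ref{T estimate}). Define the data norm
\begin{align*}
\bar{\mathcal{E}}^T[\Psi_i](\tau) := \sum_{k=0}^{1}\Big( \int_{\check{\Sigma}_\tau} J_\mu^T[T^k \Psi_i] n^\mu_{\check{\Sigma}_\tau} + \int_{\check{O}_\tau} \big(\partial_v(r T^k\Psi_i)\big)^2\Big),
\end{align*}
which incorporates the $T$--flux together with the $p=2$ boundary weight at null infinity, commuted with one extra $T$--derivative.

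The first step is to combine the $p=1$ and $p=2$ versions of Proposition \ref{r_p} with the non-degenerate Morawetz estimate of Theorem \ref{Spacetime non degenerate photon estimate} and with Proposition \ref{control N flux at null infinity}. The Morawetz estimate controls the $T$--flux integrand on any compact $r$--slab including the photon sphere (this is the reason one $T$--commutation is necessary, as in Theorem \ref{Spacetime non degenerate photon estimate}), while the $p=1$ estimate controls the contribution from $\{r\geq R\}$. Absorbing the boundary term $\int_{\check O_{\tau_1}} r^{-1}(\partial_v\Phi_i)^2$ into $\bar{\mathcal{E}}^T(\tau_1)$ yields the integrated bound
\begin{align*}
\int_{\tau_1}^{\tau_2}\! \Big(\int_{\check{\Sigma}_\tau}\! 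J_\mu^T[\Psi_i] n^\mu_{\check{\Sigma}_\tau}\Big) d\tau \ \leq\  C\,\bar{\mathcal{E}}^T[\Psi_i](\tau_1). \tag{$\star$}
\end{align*}
Since Proposition \ref{T estimate} gives $\int_{\check{\Sigma}_{\tau}} J_\mu^T[\Psi_i] n^\mu_{\check{\Sigma}_{\tau}}$ is non-increasing in $\tau$, a mean value argument applied to ($\star$) on $[\tau/2,\tau]$ immediately yields the preliminary decay $\int_{\check{\Sigma}_\tau} J_\mu^T[\Psi_i] n^\mu \leq C\,\bar{\mathcal{E}}^T[\Psi_i](0)/\tau$.

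The second step upgrades this to $\tau^{-2}$. Because $T\Psi_i$ solves the same equation (\ref{waveeq}), the previous step applied to $T\Psi_i$ gives $\tau^{-1}$ decay for $J^T_\mu[T\Psi_i]$ as well. One then extracts a dyadic sequence $\{\tau_n\}$ with $2\tau_n\leq \tau_{n+1}\leq 3\tau_n$ along which $\int_{\check{O}_{\tau_n}}(\partial_v\Phi_i)^2$ also decays like $\tau_n^{-1}$: this follows by applying the $p=2$ estimate of Proposition \ref{r_p} on $[\tau_{n-1},\tau_n]$, bounding its right-hand side by $T$--flux$(\tau_{n-1})\lesssim \tau_{n-1}^{-1}\bar{\mathcal{E}}^T(0)$, and using the pigeonhole principle on the spacetime term $\int r^{-1}(\partial_v\Phi_i)^2$, which also decays in aggregate. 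Hence $\bar{\mathcal{E}}^T[\Psi_i](\tau_n)\leq C\,\bar{\mathcal{E}}^T[\Psi_i](0)/\tau_n$. Reinserting this bound into ($\star$) applied on $[\tau_n,\tau_{n+1}]$ and invoking $T$--flux monotonicity yields the desired $\tau^{-2}$ rate at $\tau_{n+1}$; the intermediate $\tau$ are handled by monotonicity again between consecutive dyadic times.

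The main obstacle I anticipate is establishing the positivity required for ($\star$) in the low frequency cases $\ell=1$ or $\ell=2$, where the potentials $V_i^{(\ell)}$ in Proposition \ref{r_p} contribute negatively to the bulk. This is handled in the same spirit as in Proposition \ref{posT} and in the Morawetz analysis of Section \ref{Regge estimates}: borrow a fixed fraction from the positive angular-derivative coefficient $\sqrt{D}\big((2-p)+(p-4)M/r\big)\,|\slashed{\nabla}\Psi_i|^2$ via Poincaré's inequality (using $\slas{\Delta}\Psi_i=-\ell(\ell+1)r^{-2}\Psi_i$) to absorb the $V_i\Psi_i^2$ contributions uniformly for all $\ell\geq i$. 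The second issue is the photon sphere trapping, which forces the commutation with $T$ built into $\bar{\mathcal{E}}^T$, explaining the appearance of $J^T_\mu[T\Psi_i]$ in the data norm.
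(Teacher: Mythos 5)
Your plan is the Dafermos--Rodnianski $r^p$--pigeonhole scheme that the paper also uses, and your second step (dyadic extraction to upgrade $\tau^{-1}$ to $\tau^{-2}$) matches the paper's argument. However, there is a genuine gap in your data norm $\bar{\mathcal{E}}^T$: by building it only out of $T$--fluxes, the near-horizon portion of your integrated estimate ($\star$) cannot be closed.

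Concretely, the $T$--flux density on $\check{\Sigma}_\tau$ near $\mathcal{H}^+$ contains $(\partial_v\Psi_i)^2$, $|\slashed{\nabla}\Psi_i|^2$ and $r^{-2}\Psi_i^2$ with \emph{no} $D(r)$--degeneration (Proposition \ref{posT}), whereas every term in the spacetime integrand of Theorem \ref{Spacetime non degenerate photon estimate} carries an explicit $\sqrt{D}$ factor, which vanishes at $r=M$. Hence that Morawetz estimate, together with the $p=1,2$ hierarchy, only controls the region $\{r\geq r_0>M\}$; the slab $\{M\leq r<r_0\}$ contributes an integrated $T$--flux that nothing in your proposal bounds. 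To close near the horizon one must invoke the near-horizon integrated estimate of Proposition \ref{N derivatives positivity} / Theorem \ref{Nuniform}, whose bulk integrand $(\partial_v\Psi_i)^2 + \sqrt{D}(\partial_r\Psi_i)^2 + |\slashed{\nabla}\Psi_i|^2 + r^{-2}\Psi_i^2$ does dominate the $T$--flux density (since $\sqrt{D}\geq D$), but whose \emph{right-hand side is the nondegenerate $N$--flux} $\int_{\check{\Sigma}_{\tau_1}}J^N_\mu[\Psi_i]n^\mu$. In the extremal setting the $N$--flux cannot be bounded by the $T$--flux --- this is precisely the failure of the redshift --- so your $T$--only $\bar{\mathcal{E}}^T$ is strictly too weak. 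The paper's $\mathcal{E}^T$ and $\bar{\mathcal{E}}^T$ are accordingly built from $N$--fluxes (and $\bar{\mathcal{E}}^T$, through $\mathcal{E}[\cdot]$ and (\ref{NonDEB}), also inherits the $\partial_r$--commuted flux $\int_{\check{\Sigma}_0\cap\mathcal{A}_c}J^N_\mu[\partial_r\Psi_i]n^\mu$, since bounding $\int\mathcal{E}^T(\tau)\,d\tau$ relies on Theorem \ref{remove_degeneracy}). Replacing the $J^T$'s in your data norm by the corresponding $J^N$'s would repair the argument; the rest of your outline, including the Poincar\'e absorption of the $V_i$ terms for low $\ell$ and the dyadic pigeonhole, then goes through as in the paper.
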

	\begin{proof}
		Using Proposition \ref{N derivatives positivity}, Theorem \ref{Nuniform } and coarea formula we obtain for any $ \tau_1<\tau_2 $\begin{align*}
			\int_{\tau_1}^{\tau_2}\left(\int_{\check{\Sigma}_{\tau}\cap A_N}(\partial_v\Psi_i^2)+ \sqrt{D}(\partial_r\Psi_i^2)+\abs{\slashed{\nabla}\Psi_i}^2+V_i\Psi_i^2\right)d\tau\leq C \int_{\check{\Sigma}_{\tau_1}}J_{\mu}^N[\Psi_i] n ^{\mu}_{\check{\Sigma}}
		\end{align*}
		However, since $ \sqrt{D}\geq D $ for all $ r\geq M $, we have \begin{align*}
			\int_{\tau_1}^{\tau_2}\left( \int_{\check{\Sigma}_{\tau}\cap A_N}J_{\mu}^T[\Psi_i]n_{\check{\Sigma}_{\tau}}^{\mu}  \right) \leq C \int_{\check{\Sigma}_{\tau_1}}J_{\mu}^N[\Psi_i] n ^{\mu}_{\check{\Sigma}}. 
		\end{align*}
		Thus, using the above estimate, 
		Theorem \ref{Spacetime non degenerate photon estimate} and (\ref{misc r_p 3}) for $ T\sim N $ readily yields \begin{align}
			\int_{\tau_1}^{\tau_2}\left( \int_{\check{\Sigma}_{\tau}}J_{\mu}^T[\Psi_i]n_{\check{\Sigma}_{\tau}}^{\mu}  \right) \leq C \mathcal{E}^T[\Psi_i](\tau_1),
		\end{align}
		with \[ 	\mathcal{E}^T[\Psi_i](\tau) :=  \int_{\check{\Sigma}_{\tau}} J_{\mu}^{N}[\Psi_i] n_{\check{\Sigma}_{\tau}}^{\mu}+ \int_{\check{\Sigma}_{\tau}} J_{\mu}^{N}[T \Psi_i] n_{\check{\Sigma}_{\tau}}^{\mu}+  \int_{\check{O}_{\tau}}  r^{-1}(\partial_v\Phi_i)^2. \]
		In addition, note \begin{align}
			\int_{\tau_1}^{\tau_2} 	\mathcal{E}^T[\Psi_i](\tau) d\tau \leq C \bar{\mathcal{E}}^T[\Psi_i](\tau_1), \ \ \forall \ell\geq i, \ i \in \br{1,2}
		\end{align}
		where, \[ \bar{\mathcal{E}}^T[\Psi_i](\tau):= \left(\mathcal{E}[\Psi_i](\tau) + \mathcal{E}[T\Psi_i](\tau) +\int_{\check{O}_{\tau}} (\partial_v\Phi_i)^2 \right) \]
		
		Thus, using the uniform boundedness of degenerate energy and following the same idea for a dyadic sequence $ \lbrace\tau_n\rbrace_{n\in\mathbb{N}} $ as the proposition above we prove the corresponding decay result.
		
	\end{proof}

	\section{Decay, Non-Decay and Blow-up for Regge--Wheeler solutions.} \label{Section - Regge Wheeler estimates}
	We have reached the point where we can finally state and prove pointwise estimates for solutions to the Regge--Wheeler system. First, we show how to obtain pointwise decay estimates for solutions $ \Psi_i^{^{(\ell)}}, \ \ell \geq i, \ i \in \br{1,2} $ to (\ref{waveeq}), on the exterior up to and including the event horizon $ \h. $
	We then generalize the energy results of the previous section to obtain higher-order pointwise estimates accordingly.
	Such findings are essential later on when we prove non-decay and growth estimates for translation invariant derivatives of $ \Psi_i^{^{(\ell)}} $,  asymptotically along the event horizon $ \h. $ 
	Finally, using elliptic identities we pass these estimates to the corresponding gauge invariant quantities $ \bm{q}^{F}, \bm{p} $ (and $ \underline{\bm{q}}^{F}, \underline{\bm{p}} $) satisfying the Regge--Wheeler system (\ref{csystem}).

	\subsection{Pointwise estimates} With the energy decay results obtained in the previous section, we can now prove pointwise estimates for $ \Psi_i^{^{(\ell)}}, $  $ i\in \lbrace 1,2\rbrace $, for all $ \ell \geq i. $ The results are based on the Sobolev inequality and the spherical symmetry of our spacetime that allow us to use the angular momentum operators $ \Omega_j $, $ j\in \br{1,2,3} $ as commutators.
	
	Working on the foliation $ \Sigma_{\tau} $ (or $ \check{\Sigma}_{\tau} $) with the associated coordinate system $ (\rho,\omega) $, for any $ r\geq M $ we have \begin{align*}
		\Psi_i^2(r,\omega) = -\int_{r}^{\infty}\partial_{\rho}(\Psi_i^2)d\rho = -2\int_{r}^{\infty}\Psi_i\cdot \partial_{\rho}\Psi_i d\rho = -2\int_{r}^{\infty}k \dfrac{\Psi_i}{\sqrt{\rho}} (\partial_v\Psi_i \cdot \sqrt{\rho})d\rho  - 2\int_{r}^{\infty}\Psi_i\cdot \partial_r\Psi_i d\rho
	\end{align*}
	where in the last equality we used that $ \partial_{\rho} = k(r)\partial_v + \partial_{r} $, for $ k(r) $ bounded; see Section \ref{Geometry}. Then, Cauchy-Schwartz and H\"{o}lder inequality  (for both $ p=\infty, q=1 $ and $ p=q=2 $ cases) yield \begin{align*}
		\Psi_i^2(r,\omega) \leq & C  \left (\int_{r}^{\infty} \dfrac{\Psi_i^2 }{\rho} d\rho + \int_{r}^{\infty} (\partial_v\Psi_i)^2 \cdot \rho d\rho \right ) +  C\left (\int_{r}^{\infty} \Psi_i^2  d\rho \right )^{1/2}\left (\int_{r}^{\infty} (\partial_r\Psi_i)^2  d\rho \right )^{1/2} \\
		\leq & \dfrac{C}{r} \int_{r}^{\infty} \dfrac{\Psi_i^2}{\rho^2} \cdot \rho^2  d\rho  + \dfrac{C}{r}\int_{r}^{\infty} (\partial_v\Psi_i)^2 \cdot \rho^2  d\rho + 
		\dfrac{C}{r}\left (\int_{r}^{\infty} \dfrac{\Psi_i^2}{\rho^2} \cdot \rho^2 d\rho \right )^{1/2}\left (\int_{r}^{\infty} (\partial_r\Psi_i)^2 \cdot \rho^2 d\rho \right )^{1/2}
	\end{align*}
	for a constant C depending only on $ M,\Sigma_0. $
	Thus, integrating upon the sphere $ \mathbb{S}^2 $ and using again H\"{o}lder inequality ($ p=2,q=2 $) for the last term above, we obtain 
	\begin{align*}
		\int_{\mathbb{S}^2}	\Psi_i^2(r,\omega) d\omega\ \leq \  &  \dfrac{C }{r} \int_{\mathbb{S}^2}\int_{r}^{\infty} \dfrac{\Psi_i^2}{\rho^2} \rho^2 d\rho d\omega + \dfrac{C}{r}\int_{\mathbb{S}^2}\int_{r}^{\infty} \partial_v\Psi_i^2 \rho^2 d\rho d\omega  \\ + &\dfrac{C}{r} \left (\int_{\mathbb{S}^2}\int_{r}^{\infty} \dfrac{\Psi_i^2}{\rho^2} \rho^2 d\rho d\omega \right )^{1/2}\left (\int_{\mathbb{S}^2}\int_{r}^{\infty} (\partial_r\Psi_i)^2 \rho^2 d\rho d\omega \right )^{1/2} \\
		\leq & \ \dfrac{C}{r}\int_{\check{\Sigma}_{\tau}} J_{\mu}^T[\Psi_i]n_{\check{\Sigma}}^{\mu} + \dfrac{C}{r}\left(\int_{\check{\Sigma}_{\tau}} J_{\mu}^T[\Psi_i]n_{\check{\Sigma}}^{\mu}\right)^{1/2} \left(\int_{\check{\Sigma}_{\tau}} J_{\mu}^N[\Psi_i]n_{\check{\Sigma}}^{\mu}\right)^{1/2} 
	\end{align*}
	Note, we are using the non-degenerate flux $ J^{N}_{\mu}[\Psi_i] $ in the last term above since $ \partial_r\Psi_i $ appears without a $ D(r) $ factor in front of it.
	Thus, using proposition \ref{non-degen decay} and relation (\ref{T-t^-1 decay}) we show that for all $ r\geq M $,  \begin{align*}
		\int_{\mathbb{S}^2}	\Psi_i^2(r,\omega) d\omega\ \leq \dfrac{C}{r} \bar{\mathcal{E}}^T[\Psi_i](0) \dfrac{1}{\tau^2} +  \dfrac{C}{r}\Big(  \bar{\mathcal{E}}^T[\Psi_i](0) \mathcal{E}[\Psi_i](0) \Big)^{1/2}\dfrac{1}{\tau^{3/2} } 
	\end{align*}
	\begin{align}
		\Rightarrow \ \ \ 	\int_{\mathbb{S}^2}	\Psi_i^2(r,\omega) d\omega\ \leq C \mathcal{E}_1[\Psi_i](0) \dfrac{1}{ r\cdot\tau^{\frac{3}{2}}} \label{L^2 decay 1}
	\end{align}
	where $ \mathcal{E}_1[\Psi_i] $ is an expression involving initial data of $ \Psi_i^{(\ell)}. $ 
	Note the above decay  estimate holds for all $ \ell \geq i, \ i\in \br{1,2}.  $ 
	We now present our first pointwise estimate.
	\begin{theorem}\label{allfreqdecay}
		There exists a positive constant $ C $ depending only on $M,\check{\Sigma}_0 $ such that for all solutions $ \Psi_i^{^{(\ell)}} $ to (\ref{waveeq}), supported on the fixed frequency $ \ell \geq i, \ i \in \br{1,2} $, there are expressions $ \mathcal{E}_1, \mathcal{E}_2 $ in terms of norms of initial data of $ \Psi_i^{^{(\ell)}} $ such that \begin{align}
			\abs{\Psi_i^{^{(\ell)}}}(\tau,r) \leq C \sqrt{\mathcal{E}_1}\dfrac{1}{\sqrt{r}\cdot \tau^{\frac{3}{4}}}, \ \ \ \forall r\geq M, \ \tau\geq 1.
		\end{align}
		Moreover, for any $ \ell \geq 2 + (-1)^{i} $ we have instead \begin{align}
			\abs{\Psi_i^{^{(\ell)}}} (\tau, r) \leq C \sqrt{\mathcal{E}}_2 \dfrac{1}{\sqrt{r}\cdot \tau}, \hspace{1cm} \forall  r\geq M, \ \tau \geq 1.
		\end{align}
	\end{theorem}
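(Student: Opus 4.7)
The plan is to upgrade the spherical $L^{2}$ decay \eqref{L^2 decay 1} to a pointwise bound via Sobolev embedding on the round 2-spheres $S^{2}_{\tau,r}$, exploiting the spherical symmetry of extremal Reissner--Nordstr\"om to commute the equation with the angular momentum Killing fields.

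First I would recall the Sobolev inequality on the sphere: for any smooth function $f$ on $S^{2}_{\tau,r}$ of radius $r\geq M$,
\begin{align*}
|f|^{2}(\tau,r,\omega)\ \leq\ \frac{C}{r^{2}}\sum_{|\alpha|\leq 2}\int_{S^{2}_{\tau,r}}|\Omega^{\alpha}f|^{2},
\end{align*}
where $\Omega_{1},\Omega_{2},\Omega_{3}$ denote the three angular momentum operators generating the $SO(3)$ action. Because the ERN background is spherically symmetric, each $\Omega_{j}$ is Killing and commutes with $\square_{g}$, with multiplication by the radial potential $V_{i}(r)$, and preserves the fixed-frequency support. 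Consequently, if $\Psi_{i}^{(\ell)}$ solves \eqref{waveeq} and is supported on $\ell$, then so is $\Omega^{\alpha}\Psi_{i}^{(\ell)}$ for every multi-index $\alpha$.

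Next I would apply the $L^{2}$ estimate \eqref{L^2 decay 1} directly to $\Omega^{\alpha}\Psi_{i}^{(\ell)}$ for $|\alpha|\leq 2$. Since $\Omega_{j}$ commutes with the Killing field $T$ and with $\partial_{r}$ modulo lower order angular terms (the latter relevant only near $\mathcal{H}^{+}$ where \mbox{$r=M$} is constant), the quantity $\mathcal{E}_{1}[\Omega^{\alpha}\Psi_{i}](0)$ is controlled by a suitable higher-order energy of the initial data; define $\mathcal{E}_{1}$ as the sum of these quantities over $|\alpha|\leq 2$. Combining with Sobolev then yields, for all $r\geq M$ and $\tau\geq 1$,
\begin{align*}
|\Psi_{i}^{(\ell)}|^{2}(\tau,r,\omega)\ \leq\ \frac{C}{r^{2}}\sum_{|\alpha|\leq 2}\int_{S^{2}_{\tau,r}}|\Omega^{\alpha}\Psi_{i}|^{2}\ \leq\ C\,\mathcal{E}_{1}\cdot\frac{1}{r\,\tau^{3/2}},
\end{align*}
which after taking square roots gives the first claimed bound.

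For the improved estimate when $\ell\geq 2+(-1)^{i}$, I would rerun the computation leading to \eqref{L^2 decay 1} but replace the use of \eqref{t^-1 decay} by the stronger \eqref{t^-2 decay} from Proposition \ref{non-degen decay}, which yields $\tau^{-2}$ decay of the \emph{non-degenerate} $N$-flux in this higher-frequency regime. Inserting $\tau^{-2}$ for both factors in the Cauchy--Schwarz/H\"older chain at the start of Section \ref{Section - Regge Wheeler estimates} produces the improved $L^{2}$ bound $\int_{\mathbb{S}^{2}}\Psi_{i}^{2}\leq C\,\bar{\mathcal{E}}_{2}\cdot (r\tau^{2})^{-1}$; the same commutation with $\Omega^{\alpha}$, $|\alpha|\leq 2$, and Sobolev embedding then deliver the pointwise $\tau^{-1}$ rate. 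I expect no serious obstacle here; the only bookkeeping point is defining $\mathcal{E}_{2}$ as a finite sum of the energy quantities $\bar{\mathcal{E}}[\Omega^{\alpha}\Psi_{i}](0)$ for $|\alpha|\leq 2$, and verifying this is controlled by initial data norms in $H^{k}(\check{\Sigma}_{0})$ for $k$ large enough to accommodate the two commutations with $T$, one with $\partial_{r}$, and the two with angular derivatives appearing in Proposition \ref{non-degen decay}.
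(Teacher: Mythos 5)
Your proposal is correct and follows essentially the same approach as the paper: commute with the angular momentum Killing fields, apply the $L^2(S^2)$ decay estimate to $\Omega^{\alpha}\Psi_i$ for $|\alpha|\le 2$, and finish with Sobolev embedding on the sphere. The one small divergence is in the improved ($\ell\ge 2+(-1)^i$) case: you propose to re-run the two-term Cauchy--Schwarz/H\"older chain behind \eqref{L^2 decay 1} with the $\tau^{-2}$ flux decay inserted for both the $T$- and $N$-fluxes, whereas the paper replaces that chain by a single cleaner H\"older step, $\Psi_i^2(r,\omega)\le\left(\int_r^\infty\rho^{-2}d\rho\right)\left(\int_r^\infty(\partial_\rho\Psi_i)^2\rho^2d\rho\right)$, integrating over $\mathbb{S}^2$ to bound everything directly by $r^{-1}\int_{\check\Sigma_\tau}J_\mu^N[\Psi_i]n^\mu$ and then quoting \eqref{t^-2 decay}. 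Both routes are valid — yours needs the quadratic decay of both fluxes, the paper's needs only the $N$-flux — and both produce the stated $\tau^{-1}$ pointwise rate, so the distinction is cosmetic.
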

	\begin{proof}
		Commuting the wave equation (\ref{waveeq}) with the Killing angular momentum operators $ \Omega_j $ we obtain the corresponding estimate (\ref{L^2 decay 1}) and using the Sobolev embedding theorem on the sphere $ \mathbb{S}^2 $, we get \begin{align*}
			\abs{\Psi_i}^2 \leq C \mathcal{E}_1 \dfrac{1}{ r\cdot\tau^{\frac{3}{2}}}, \ \ \ \ \forall r\geq M, \tau\geq 1,
		\end{align*}
		where $ \mathcal{E}_1 \:= \sum_{\abs{a}\leq 2}\sum_{j}\mathcal{E}_1[\Omega_j^a \Psi_i] $. 
		
		Now, consider $ \ell \geq 2 + (-1)^i $ and using H\"{o}lder inequality we write for any $ r\geq M $
		
		\begin{align*}
			\Psi_i^2(r,\omega) = \left(\int_r^{\infty}(\partial_\rho\Psi_i)d\rho\right)^2 = \left(\int_r^{\infty}(\partial_\rho\Psi_i) \dfrac{1}{\rho} \cdot \rho   d\rho\right)^2 \leq \left(\int_r^{\infty}\dfrac{1}{\rho^2}d\rho\right)\left(\int_r^{\infty}(\partial_{\rho}\Psi_i)^2 \rho^2 d\rho\right).  
		\end{align*}
		Integrating upon the sphere $ \mathbb{S}^2 $ we obtain  \begin{align} 
			\int_{\mathbb{S}^2}\Psi_i^2(r,\omega) d\omega \leq \dfrac{1}{r} 	\int_{\mathbb{S}^2} \int_r^{\infty}(\partial_{\rho}\Psi_i)^2 \rho^2 d\rho d\omega \leq \dfrac{1}{r} \int_{\check{\Sigma}_{\tau}\cap\lbrace r'\geq r \rbrace } J_{\mu}^{N}[\Psi_i]n_{\check{\Sigma}}^{\mu} \label{N-decay t}
		\end{align}
		Thus,  using (\ref{t^-2 decay}) and (\ref{N-decay t}) we obtain \begin{align*}
			\int_{\mathbb{S}^2}\Psi_i^2(r,\omega) d\omega \leq  C \bar{\mathcal{E}}[\Psi_i](0) \dfrac{1}{\tau^2 \cdot r}.
		\end{align*}
		Once again, from Sobolev embedding theorem we get \begin{align*}
			\abs{\Psi_i^{^{(\ell)}}} \leq C \mathcal{E}_2 \dfrac{1}{\sqrt{r}\cdot \tau}, \ \ \ \forall\  r\geq M, \ \tau\geq 1, \quad  \ell \geq 2 + (-1)^i
		\end{align*}
		for $  \mathcal{E}_2 := \sum_{\abs{a}\leq 2}\sum_{j}\bar{\mathcal{E}}[\Omega_j^a \Psi_i](0).  $
	\end{proof}

	\subsection{Higher order pointwise estimates.} Using the ideas introduced earlier we show pointwise decay results for the derivatives $ \partial_r^k\Psi_i^{^{(\ell)}} $, for all $ \ell \geq i, \ i\in\br{1,2} $ and  $ k\leq \ell+(-1)^{i+1}, $ near and including the event horizon $ \h. $ Again, pointwise decay will follow from showing energy decay first.  
	In particular, we have the following proposition 
	\begin{proposition} \label{HED Pro}
		Let us fix $ \bar{R}>M, \ \ell\in \mathbb{N},\ \ell\geq i,\ i\in\br{1,2} $ and consider $ \tau\geq 1, $ then there exists a positive constant $ C $ depending on $ M,k,\bar{R} $ and $ \check{\Sigma}_0 $ such that for all solutions $ \p^{^{(\ell)}}$  to (\ref{waveeq}), supported on the fixed frequency $ \ell\geq i, $ there exist norms $ \mathcal{E}^{k}_{i,\ell} $ of the initial data such that \begin{align}
			\int_{\check{\Sigma}_{\tau}\cap \br{M\leq r\leq \bar{R}}}J_{\mu}^{N}[\partial_r^{k}\p]n_{\check{\Sigma}_{\tau}}^{\mu} &\leq C \ \mathcal{E}^{k}_{i,\ell}\ \dfrac{1}{\tau}, \hspace{1cm} \ k={\ell-1-(-1)^{i}} \label{HED l-1} \\
			\int_{\check{\Sigma}_{\tau}\cap \br{M\leq r\leq \bar{R}}}J_{\mu}^{N}[\partial_r^{k}\p]n_{\check{\Sigma}_{\tau}}^{\mu} &\leq C \ \mathcal{E}^{k}_{i,\ell}\ \dfrac{1}{\tau^2}, \hspace{1cm} k\leq (\ell-2)-(-1)^{i} \label{HED <l-1}
		\end{align}
	\end{proposition}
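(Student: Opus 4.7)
\textit{Proof plan.} Set $\mathcal{F}_k(\tau) := \int_{\check{\Sigma}_\tau \cap \br{M\leq r\leq \bar{R}}} J_{\mu}^{N}[\partial_r^{k}\Psi_i] n^{\mu}_{\check{\Sigma}_{\tau}}$. Since $T$ is uniformly timelike for $r>M$, local elliptic regularity together with commutation with $T^j$ and the Morawetz estimates of Theorems \ref{Morawetz-Spacetime}, \ref{Spacetime non degenerate photon estimate} reduces control of $\mathcal{F}_k$ in the intermediate region $\br{r_c\leq r\leq \bar R}$ to control of $T^j\Psi_i$-fluxes governed already by Proposition \ref{non-degen decay}. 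The substantive task is therefore to estimate $\mathcal{F}_k$ restricted to the near-horizon region $\mathcal{A}_c$, and the plan mirrors the scheme of Proposition \ref{non-degen decay}: combine Theorem \ref{high-order horizon estimates} at various derivative orders with the coarea formula and a dyadic-sequence argument.

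For the $\tau^{-1}$-decay case $k=\ell-1-(-1)^i$, apply the coarea formula to the estimate of Theorem \ref{high-order horizon estimates} at the \emph{maximal} allowed derivative order $k+1=\ell-(-1)^i$. This produces
\[
\int_{\tau_1}^{\tau_2}\mathcal{F}_k(\tau)\,d\tau \ \leq\ C\,\mathcal{I}(\tau_1),
\]
where $\mathcal{I}(\tau_1)$ is a finite sum of $J^N$-fluxes of $T^j\Psi_i$ and $\partial_r^j\Psi_i$, $j\leq k+1$, on $\check{\Sigma}_{\tau_1}$. Re-applying Theorem \ref{high-order horizon estimates} now on $\check{\Sigma}_{\tau_1}$ bounds $\mathcal{I}(\tau_1)$ uniformly by a norm $\mathcal{E}^k_{i,\ell}$ of initial data on $\check{\Sigma}_0$. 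Together with the pointwise bound $\mathcal{F}_k(\tau_2)\leq C\,\mathcal{E}^k_{i,\ell}$ from Theorem \ref{high-order horizon estimates} (propagated forward by $T$-invariance), the dyadic sequence argument of Proposition \ref{non-degen decay} yields $\mathcal{F}_k(\tau)\leq C\,\mathcal{E}^k_{i,\ell}/\tau$.

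For the $\tau^{-2}$-decay case $k\leq \ell-2-(-1)^i$, one extra derivative is available. The previous paragraph applied at derivative order $k+1$ gives $\mathcal{F}_{k+1}(\tau)\leq C/\tau$. Re-visit the coarea-integrated estimate of Theorem \ref{high-order horizon estimates} at derivative order $k+1$ and now split the RHS: the $T^j\Psi_i$-fluxes with $j\leq k+1$ obey the improved $\tau_1^{-2}$ decay of Proposition \ref{non-degen decay} (automatic since $\ell\geq k+2+(-1)^i\geq 2+(-1)^i$, as $T^j\Psi_i$ is also a solution of (\ref{waveeq}) supported on frequency $\ell$), while the $\partial_r^j\Psi_i$-fluxes for $j\leq k+1$ obey the $\tau_1^{-1}$ decay just established. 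This upgrades the integrated bound to
\[
\int_{\tau_1}^{\tau_2}\mathcal{F}_k(\tau)\,d\tau \ \leq\ \frac{C\,\mathcal{E}^k_{i,\ell}}{\tau_1},
\]
and a second iteration of the dyadic argument promotes $\tau^{-1}$ to $\tau^{-2}$.

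The chief technical difficulty is the nested bookkeeping: at each coarea-integrated application of Theorem \ref{high-order horizon estimates}, every flux term on the RHS must be shown either to decay from Proposition \ref{non-degen decay} (for $T^j\Psi_i$) or from the preceding inductive step (for $\partial_r^j\Psi_i$, $j\leq k+1$), with none contributing a non-decaying constant. The angular-derivative degeneracy $\bm{\chi}_{i,k}$ on $\mathcal{H}^+$ in Theorem \ref{high-order horizon estimates} -- reflecting the conservation laws of Section \ref{Conservation Section} -- is harmless here, since $\mathcal{F}_k$ only measures fluxes on $\check{\Sigma}_\tau$.
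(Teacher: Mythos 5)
Your proposal is correct and follows essentially the same route as the paper: reduce to the near-horizon region via local elliptic estimates and Morawetz in the intermediate zone, then combine the coarea formula with Theorem \ref{high-order horizon estimates} and a dyadic-sequence argument (once for $\tau^{-1}$, twice for $\tau^{-2}$). The paper spells out the second iteration by constructing the shifted times $\xi_n\in(a_n,b_n)$ via the mean value theorem and verifying they form a new dyadic sequence, while you invoke the pattern by name; and your appeal to the improved $\tau^{-2}$ decay of the $T^j$-fluxes is correct but not actually needed, since the $\partial_r^j$-fluxes (which decay only at rate $\tau^{-1}$ at that stage) dominate the bound — these are cosmetic differences only.
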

	\begin{proof}
		For any $ r_c>M $, in the region $ \check{\Sigma}_{\tau}\cap \br{r_c\leq r\leq\bar{R}} $ we can produce the above decay results by commuting the wave equation (\ref{waveeq}) with the killing vector field $ T $, applying local elliptic estimates and using the decay results obtained in proposition \ref{non-degen decay}.
		Thus, we focus on showing decay in $ \check{\Sigma}_0\cap \mathcal{A} $, for $ \mathcal{A} $ an  $ \phi^T_{\tau}- $invariant neighborhood of $ \h. $  
		
		Using estimate  (\ref{H^2- higher control}) of theorem \ref{high-order horizon estimates} for all $ s \leq \ell-(-1)^{i}$ and coarea formula yields the estimate   \begin{align}
			\sum_{m=0}^{^{\ell-1-(-1)^{i}}}	 \int_{0}^{T}\left(\int_{\check{\Sigma}_{\tilde{\tau}}\cap \mathcal{A}_c} J_{\mu}^{N}[\partial_r^{m}\p] \ n_{\check{\Sigma}_{\tilde{\tau}}}^{\mu}\right) d\tilde{\tau} \leq C \ \mathcal{E}^{^{\ell-(-1)^{i}}}_{_{energy}}[\p](0), 
		\end{align}
		for any $ T\geq0 $,  where \[ \mathcal{E}_{_{energy}}^{q}[\p](\tau) := \sum_{j=0}^{q} \int_{\check{\Sigma}_{\tau}} J_{\mu}^{N}\left[T^{j} \Psi_i\right] n_{\check{\Sigma}_{\tau}}^{\mu}+ \sum_{j=1}^{q} \int_{\check{\Sigma}_{\tau} \cap \mathcal{A}_c} J_{\mu}^{N}\left[\partial_{r}^{j} \Psi_i\right] n_{\check{\Sigma}_{\tau}}^{\mu}.  \]
		This allows us to find a sequence $\lbrace \tau_n \rbrace_{n\in \mathbb{N}}$ with $ \tau_0\geq1 $ and $ 2\tau_n\leq \tau_{n+1}\leq 3\tau_n , \ \forall n\in\mathbb{N}, $ such that \begin{align}
			\sum_{m=0}^{^{\ell-1-(-1)^{i}}}	\int_{\check{\Sigma}_{\tau_n}\cap \mathcal{A}_c} J_{\mu}^{N}[\partial_r^{m}\p] \ n_{\check{\Sigma}_{\tau_n}}^{\mu} \leq  C \ \mathcal{E}^{^{\ell-(-1)^{i}}}_{_{energy}}[\p](0) \cdot \dfrac{1}{\tau_n}, \ \ \  \ \forall n\in\mathbb{N}.\label{Misc 10.1}
		\end{align}
		On the other hand, commuting the wave equation (\ref{waveeq}) with $ T^{s} $ for all $ s < \ell-(-1)^{i} $ and using the corresponding estimate of relation (\ref{t^-1 decay}) we obtain 
		\begin{align}
			\int_{\check{\Sigma}_{\tau}}J_{\mu}^{N}\left[T^{s}\Psi_i\right]n_{\check{\Sigma}_{\tau}}^{\mu} \leq C \ \mathcal{E}[T^{s}\Psi_i](0) \cdot \dfrac{1}{\tau}, \hspace{1cm} \forall \ s < \ell-(-1)^{i}. \label{Misc 10.1.1}
		\end{align}
		Then, for any $ \tau\geq 1 $ we can find $ n\in\mathbb{N} $ such that $ \tau_n\leq \tau\leq \tau_{n+1} $ and applying estimate (\ref{H^2- higher control}) in $ \mathcal{R}(\tau_n,\tau) $ for $k=(\ell-1)-(-1)^i  $ while using (\ref{Misc 10.1}, \ref{Misc 10.1.1})  yields \begin{align*}
			\int_{\check{\Sigma}_{\tau}\cap \mathcal{A}_c} J_{\mu}^{N}[\partial_r^{^{\ell-1-(-1)^i}}\p] \ n_{\check{\Sigma}_{\tau}}^{\mu} \leq C \ \mathcal{E}_{_{energy}}^{^{\ell-1-(-1)^i}}[\p](\tau_n) \leq & C \ \mathcal{E}_{i}^{^{\ell}}\cdot \dfrac{1}{\tau_n} \sim C\ \mathcal{E}_{i}^{^{\ell}}\dfrac{1}{\tau_{n+1}} 
			\leq  C \mathcal{E}_{i}^{^{\ell}} \cdot \dfrac{1}{\tau}
		\end{align*}
		where $ \mathcal{E}_{i}^{^{\ell}} $ involves norms of initial data of $ \Psi_i^{^{(\ell)}}  $ only, and C is a positive constant that depends on $ M, \bar{R}, \check{\Sigma}_{0} $ and $ k=\ell-1-(-1)^{i} $.
		\begin{flushleft}
			\hrulefill
		\end{flushleft}
		Now, fix $ k\leq (\ell-2)-(-1)^i $  and apply (\ref{H^2- higher control}) in $ \mathcal{R}(a_n,b_{n}) $, where $ a_n :=\tau_{n}+\frac{\tau_{n+1}-\tau_n}{3}, \ b_{n}:=\tau_{n+1}- \frac{\tau_{n+1}-\tau_n}{3} $, along with coarea formula to obtain \begin{align*}
			\int_{a_n}^{b_n}\left(	\int_{\check{\Sigma}_{\tilde{\tau}}\cap \mathcal{A}_c} J_{\mu}^{N}[\partial_r^{k}\p] \ n_{\check{\Sigma}_{\tilde{\tau}}}^{\mu} \right) d\tilde{\tau} \leq C  \mathcal{E}_{_{energy}}^{k+1}[\p](\tau_n)\leq C \ \mathcal{E}_{i,\ell}^{k+1} \dfrac{1}{\tau_n}  ,
		\end{align*}
		where the latter estimate comes from the analysis above.
		By fundamental theorem of calculus, there exists $ \xi_n \in (a_n,b_n) $ such that 
		\begin{align}
			(b_n-a_n)	\int_{\check{\Sigma}_{\xi_{n}}\cap \mathcal{A}_c} J_{\mu}^{N}[\partial_r^{k}\p] \ n_{\check{\Sigma}_{\xi_{n}}}^{\mu}  \leq C\  \mathcal{E}_{i,\ell}^{k+1} \dfrac{1}{\tau_n} \ \label{Misc 10.2}  
		\end{align}
		However, $ b_n-a_n= \frac{\tau_{n+1}-\tau_n}{3}, \ \forall n \in \mathbb{N}, $ and using the property of $ \br{\tau_n}_{n\in\mathbb{N}} $, i.e. $ \tau_n \sim \tau_{n+1} \sim (\tau_{n+1}-\tau_{n}) $,  relation 
		(\ref{Misc 10.2}) yields 
		\begin{align}
			\int_{\check{\Sigma}_{\xi_{n}}\cap \mathcal{A}_c} J_{\mu}^{N}[\partial_r^{k}\p] \ n_{\check{\Sigma}_{\xi_{n}}}^{\mu}  \leq 3C \ \mathcal{E}_{i,\ell}^{k+1} \dfrac{1}{\tau_n^2}\leq 27C \ \mathcal{E}_{i,\ell}^{k+1} \dfrac{1}{\tau_{n+1}^2}\leq C' \ \mathcal{E}_{i,\ell}^{k+1} \dfrac{1}{\xi_n ^{2}} \label{Misc 10.3}
		\end{align}
		for all $ n\in\mathbb{N}. $ It's a direct computation to check that the sequence $ \br{\xi_n}_{n\in\mathbb{N}}  $ satisfies $$ \frac{4}{3}\ \xi_n\leq \xi_{n+1}\leq 7\  \xi_{n}, \hspace{1cm} \forall\ n\in\mathbb{N} $$
		which also corresponds to an unbounded dyadic sequence with $ \xi_n \sim \xi_{n+1}\sim (\xi_{n+1}-\xi_{n}), $ uniformly $ \forall n \in \mathbb{N} $. Thus, for any $ \tau\geq 1 $ we can find $ n\in\mathbb{N} $ such that $ \xi_n\leq \tau \leq \xi_{n+1} $ and  applying estimate (\ref{H^2- higher control}) in $ \mathcal{R}(\xi_n,\tau) $ and using estimate (\ref{Misc 10.3}) with the results of proposition \ref{non-degen decay}, we conclude the proof of (\ref{HED <l-1}). \\
	\end{proof}
	
	\begin{theorem} \label{PWD-estimates}
		Let $ \bar{R}>M, \ \tau \geq 1 $ and fix $ \ell\geq i, \ i\in\br{1,2}, $ then for any $ k\leq \ell- (-1)^{i} $ there exists a positive constant $ C $ depending on $ M,\ k,\ \bar{R} $ and $ \check{\Sigma}_0 $ such that for all solution $ \p^{(\ell)} $ to (\ref{waveeq}), supported on the fixed frequency $ \ell, $ the following pointwise decay estimates hold \begin{align}
			\abs{\partial_r^k\p^{^{(\ell)}}}(\tau)\ \leq &\ C \ \mathcal{I}_{k,\ell}[\p]\ \dfrac{1}{\tau}, \hspace{1cm} \text{for}\ \ k \leq (\ell-2) -(-1)^{i} \\
			\abs{\partial_r^{^{\ell-1 -(-1)^i}}\p^{^{(\ell)}}}(\tau)\ \leq &\ C \ \mathcal{I}_{\ell-1,\ell}[\p]\ \dfrac{1}{\tau^{\frac{3}{4}}} , \\
			\abs{\partial_r^{^{\ell -(-1)^i}}\p^{^{(\ell)}}}(\tau)\ \leq &\ C \ \mathcal{I}_{\ell,\ell}[\p]\ \dfrac{1}{\tau^{\frac{1}{4}}},
		\end{align}
		in $ \br{M\leq r \leq \bar{R}}, $ where $ \mathcal{I}_{k,\ell}[\p] $ are norms of initial data of $ \p^{^{(\ell)}}. $
	\end{theorem}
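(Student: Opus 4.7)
The plan is to parallel the proof of Theorem \ref{allfreqdecay} in the near-horizon regime. Commuting (\ref{waveeq}) with the Killing angular-momentum operators $\Omega_j$, $j\in\{1,2,3\}$—which preserve the equation, the frequency support $\ell$, and hence all estimates of Proposition \ref{HED Pro} and Theorem \ref{high-order horizon estimates}—together with Sobolev embedding on $\mathbb{S}^2$ reduces the task to controlling $\|\partial_r^k\Psi_i^{(\ell)}(r,\cdot)\|_{L^2(\mathbb{S}^2)}$ uniformly in $r\in[M,\bar R]$. For this I will fix some $r_0>\bar R$ and invoke the identity
\begin{equation*}
(\partial_r^k\Psi_i)^2(r,\omega) \ = \ (\partial_r^k\Psi_i)^2(r_0,\omega) \ - \ 2\int_r^{r_0}(\partial_r^k\Psi_i)(\partial_\rho\partial_r^k\Psi_i)\,d\rho,
\end{equation*}
with $\partial_\rho = k(r)\partial_v+\partial_r$ along $\check\Sigma_\tau$; integrating in $\omega$ and applying Cauchy--Schwarz reduces matters to controlling the $L^2(\check\Sigma_\tau\cap\{M\leq\rho\leq r_0\})$ norms of $\partial_r^k\Psi_i$, $T\partial_r^k\Psi_i$ and $\partial_r^{k+1}\Psi_i$, plus a boundary contribution at $r_0$.

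The decisive observation is that the $N$-flux of $\partial_r^{j-1}\Psi_i$ coercively controls $\|\partial_r^j\Psi_i\|_{L^2}^2$ through its transversal-derivative term (the weight being bounded away from zero on the compact region $\{M\leq r\leq\bar R\}$ since $N$ is timelike there), so applying Proposition \ref{HED Pro} at level $j-1$ and Theorem \ref{high-order horizon estimates} at the top level yields $\|\partial_r^j\Psi_i\|_{L^2}^2\lesssim \tau^{-2}$ for $j\leq(\ell-1)-(-1)^i$, and $\|\partial_r^j\Psi_i\|_{L^2}^2\lesssim \tau^{-1}$ for $j=\ell-(-1)^i$; the same rates hold for $\|T\partial_r^j\Psi_i\|_{L^2}^2$ since $T\Psi_i^{(\ell)}$ is again a fixed-frequency solution satisfying (\ref{waveeq}). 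Feeding these bounds into the identity above, Cauchy--Schwarz produces the three claimed rates directly: for $k\leq(\ell-2)-(-1)^i$ both factors are $\lesssim\tau^{-1}$ giving pointwise decay $\tau^{-1}$; for $k=(\ell-1)-(-1)^i$ the factors are $\lesssim\tau^{-1}$ and $\lesssim\tau^{-1/2}$ (the latter coming from $\|\partial_r^{k+1}\Psi_i\|_{L^2}\lesssim\tau^{-1/2}$ at the top level) giving $\tau^{-3/4}$; and for $k=\ell-(-1)^i$ they are $\lesssim\tau^{-1/2}$ and $\lesssim 1$ (the latter being the uniform top-order bound of Theorem \ref{high-order horizon estimates}), giving $\tau^{-1/4}$.

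The boundary contribution at $r=r_0>M$ will not pose any serious difficulty: since $D(r_0)>0$, the wave equation (\ref{waveeq}) can be solved algebraically for $\partial_r^2\Psi_i$, so $\partial_r^k\Psi_i|_{r=r_0}$ can be rewritten as a linear combination of $T^m\partial_r^j\Psi_i|_{r=r_0}$ with $j\leq 1$ and $m\leq k$, each controlled pointwise by Theorem \ref{allfreqdecay} applied to the fixed-frequency solutions $T^m\Psi_i^{(\ell)}$; this yields a boundary contribution decaying strictly faster than the interior term in all three cases. The main technical delicacy I anticipate is the bookkeeping needed to verify the coercivity of the $N$-flux at level $j-1$ with respect to $\|\partial_r^j\Psi_i\|_{L^2}^2$ uniformly on $[M,\bar R]$, which reduces to the standard observation that the coefficient of $(\partial_r\phi)^2$ in $J_\mu^N[\phi]n^\mu$ is bounded away from zero everywhere on $\check\Sigma_\tau$ in this compact radial range, together with tracking the dependence of the implicit constants on $k$ and $\ell$ through the chain of commutations with $\Omega_j$.
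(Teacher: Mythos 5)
Your argument follows essentially the same strategy as the paper's proof: reduce the $L^\infty(S^2)$ estimate to an $L^2(\mathbb{S}^2)$ estimate via Sobolev embedding and commutation with the angular momentum operators, then control $\int_{\mathbb{S}^2}(\partial_r^k\Psi_i)^2\,d\omega$ by an integral in $\rho$ that Cauchy--Schwarz reduces to the $N$-fluxes $J^N_\mu[\partial_r^{k-1}\Psi_i]$ and $J^N_\mu[\partial_r^{k}\Psi_i]$, to which Proposition \ref{HED Pro} and Theorem \ref{high-order horizon estimates} are applied; the three arithmetic combinations $\tau^{-2}\cdot\tau^{-2}$, $\tau^{-2}\cdot\tau^{-1}$, $\tau^{-1}\cdot 1$ give the rates $\tau^{-1}$, $\tau^{-3/4}$, $\tau^{-1/4}$ exactly as you describe. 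Where you diverge is in how the far boundary is handled. The paper inserts a smooth cutoff $\chi_{\bar R+1}$ supported in $[M,\bar R+1/2]$ before applying the fundamental theorem of calculus, so the identity $(\partial_r^k\Psi_i)^2(r,\omega)=-\int_r^{\bar R+1}\partial_\rho(\chi_{\bar R+1}\partial_r^k\Psi_i)^2\,d\rho$ produces no boundary term at all; the derivative of the cutoff simply contributes another $L^2$ integral over a compact annulus away from both $\h$ and $\{r=2M\}$, controlled by the same fluxes. You instead keep an explicit anchor term $(\partial_r^k\Psi_i)^2(r_0,\omega)$ at $r_0>\bar R$.

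The gap is in your treatment of that anchor. Solving (\ref{waveeq}) for $\partial_r^2\Psi_i$ and iterating indeed lets you trade $\partial_r^k\Psi_i|_{r_0}$ for a combination of $T^m\partial_r^j\Psi_i|_{r_0}$ with $j\leq1$, but this still leaves genuine first-order terms $\partial_r(T^m\Psi_i)|_{r_0}$ unresolved. Theorem \ref{allfreqdecay}, applied to the fixed-frequency solutions $T^m\Psi_i^{(\ell)}$, only gives the pointwise bound $|T^m\Psi_i|\lesssim \tau^{-1}$; it says nothing about $|\partial_r T^m\Psi_i|$ at a fixed radius. Estimating $\partial_r\Psi_i(r_0)$ pointwise with a matching $\tau^{-1}$ rate requires its own Sobolev-type argument in a neighborhood of $r_0$ (again reducing to $N$-fluxes and Proposition \ref{non-degen decay}), which is essentially what the cutoff device accomplishes without the detour. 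This is fixable, but as written the chain does not close. I would also flag that your claim the boundary contribution decays \emph{strictly} faster than the interior term is not needed and not quite accurate at level $k\leq(\ell-2)-(-1)^i$, where both decay at the same rate $\tau^{-1}$; equality of rates is of course sufficient. Replacing your explicit boundary matching by the cutoff version removes the issue entirely and recovers the paper's argument.
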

	\begin{proof}
		Consider a  smooth-cutoff function  $ \chi_{_{\bar{R}+1}} : [M,	\bar{R}+1]\to[0,1] $ of r alone, such that $ \chi_{_{\bar{R}+1}}(r)=1 $ for $ M\leq r\leq \bar{R} +\frac{1}{4} $ and $ \chi_{_{\bar{R}+1}}(r)=0 $ for $ r\geq \bar{R}+\frac{1}{2}. $ Let $ r \in [M,\bar{R}] $, then we write \begin{align*}
			(\partial_r^k\p)^2(r,\omega) =& - \int_{r}^{\bar{R}+1}\partial_{\rho}\left(\chi_{_{\bar{R}+1}}\cdot \partial_r^k\p\right)^2 d\rho\   \\ \ =& \  -\int_{r}^{\bar{R}+1}\partial_{\rho}(\chi_{_{\bar{R}+1}}^2) (\partial_r^k\p)^2 d\rho - \int_{r}^{\bar{R}+1}2\chi_{_{\bar{R}+1}}\left(\partial_r^k\p\right)\partial_{\rho}\left(\partial_r^k\p\right)  d\rho \\
			\leq  &\  C\int_{r}^{\bar{R}+1}\abs{\partial_{\rho}(\chi_{_{\bar{R}+1}}^2)} (\partial_r^k\p)^2 \rho^2 d\rho \\ & + \ C \left (\int_{r}^{\bar{R}+\frac{1}{2}}\left(\partial_r^k\p\right)^2\rho^2 d\rho\right )^{\frac{1}{2}}\left (\int_{r}^{\bar{R}+\frac{1}{2}}\left [\partial_{\rho}\left(\partial_r^k\p\right)\right ]^2 \rho^2 d\rho \right)^{\frac{1}{2}}.
		\end{align*}
		Now, integrating on the sphere $ \mathbb{S}^2 $ and since $ \chi_{_{\bar{R}+1}} $ is smooth, we obtain \begin{align*}
			\int_{\mathbb{S}^2}(\partial_r^k\p)^2(r,\omega) d\omega\  \leq &\ C \int_{\check{\Sigma}_{\tau}\cap\br{M\leq r\leq \bar{R}+1}}J_{\mu}^N[\partial_r^{k-1}\p]n_{\check{\Sigma}_{\tau}}^{\mu} + \\ + & \ C 
			\left(\int_{\check{\Sigma}_{\tau}\cap\br{M\leq r\leq \bar{R}+\frac{1}{2}}}J_{\mu}^N[\partial_r^{k-1}\p]n_{\check{\Sigma}_{\tau}}^{\mu}\right)^{\frac{1}{2}} \cdot \left(\int_{\check{\Sigma}_{\tau}\cap\br{M\leq r\leq \bar{R}+\frac{1}{2}}}J_{\mu}^N[\partial_r^{k}\p]n_{\check{\Sigma}_{\tau}}^{\mu}\right)^{\frac{1}{2}}
		\end{align*}
		with $ C $ depending on $ M, \check{\Sigma}_0. $ Thus, using the results of Proposition \ref{HED Pro} we have that for $ k\leq (\ell-2)-(-1)^i $  all  integrals of the right-hand side decay at the rate of $ \frac{1}{\tau^2} $. For $ k=\ell-1 -(-1)^i $  the first integral of the right-hand side decays like $ \frac{1}{\tau^2} $, whereas the last integral decays like $ \frac{1}{\tau}. $ Finally, for $ k=\ell-(-1)^i $ the first integral decays like $ \frac{1}{\tau} $ and from Theorem \ref{high-order horizon estimates}, the last integral is bounded.
		
		Thus, repeating the same estimates after we commute with the angular momentum operators $ \Omega $ and using the Sobolev inequality on the sphere $ \mathbb{S}^2 $ we prove the decay rates of the assumption. \\
	\end{proof}

	\paragraph{Non-Decay and Blow-up Estimates.}
	
	\begin{proposition} \label{non-decay}
		For all solutions  $ \p^{^{(\ell)}} $ to (\ref{waveeq}), supported on the fixed frequency $ \ell\geq i, \ i \in \br{1,2},$ we have the following non decay estimates \begin{align}
			\partial_r^{^{\ell+2}}\Psi_1^{^{(\ell)}} (\tau,\vartheta,\varphi) &\xrightarrow{\tau \to \infty} H_{\ell}[\Psi_1](\vartheta,\varphi), \\
			\partial_r^{\hspace{0.05cm}\ell}\Psi_2^{^{(\ell)}}(\tau,\vartheta,\varphi) &\xrightarrow{\tau \to \infty} H_{\ell}[\Psi_2](\vartheta,\varphi) ,		
		\end{align} 
		asymptotically along the event horizon $ \h$, where $ H_{\ell}[\p] $ are the conserved quantities of theorem \ref{conservation horizon} . In addition, for generic initial data of $ \Psi_{i}^{^{(\ell)}} $, $ H_{\ell}[\p](\vartheta,\varphi) $ is almost everywhere non-zero on $ \check{\Sigma}_0\cap \h. $
	\end{proposition}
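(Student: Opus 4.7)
The plan is to combine the conservation laws of Theorem \ref{conservation horizon} with the pointwise decay estimates of Theorem \ref{PWD-estimates} to identify the asymptotic value of the transversal derivative, and then observe that the conserved functional is generically non-trivial.

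First, by Theorem \ref{conservation horizon} the quantities $H_\ell[\Psi_i]$ are invariant under the flow of $T=\partial_v$ along the null generators of $\h$. Therefore, evaluating along $\h$ and using that $\h\cap\check{\Sigma}_\tau$ is mapped to $\h\cap\check{\Sigma}_0$ under the flow $\phi^T_{-\tau}$, we get the identity
\begin{equation*}
\partial_r^{\ell+2}\Psi_1^{(\ell)}\big|_{\h\cap\check{\Sigma}_\tau}(\vartheta,\varphi) \;=\; H_{\ell}[\Psi_1](\vartheta,\varphi)\big|_{\h\cap\check{\Sigma}_0} \;-\; \sum_{j=0}^{\ell+1} c_1^j\, \partial_r^j\Psi_1^{(\ell)}\big|_{\h\cap\check{\Sigma}_\tau}(\vartheta,\varphi),
\end{equation*}
and analogously for $i=2$ with the sum running to $\ell-1$.

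Next I invoke the pointwise decay estimates of Theorem \ref{PWD-estimates} to control each term of the sum. For $i=1$ the indices appearing in the sum run through $j=0,\ldots,\ell+1$, which is exactly the range $j\le\ell-(-1)^{i}$ covered by that theorem (with the slowest rate $\tau^{-1/4}$ occurring at $j=\ell+1$); for $i=2$ the sum runs through $j=0,\ldots,\ell-1\le\ell-(-1)^{i}$, again covered by Theorem \ref{PWD-estimates}. Hence each $\partial_r^j\Psi_i^{(\ell)}$ in the sum decays to zero pointwise on $\h$ as $\tau\to\infty$, and passing to the limit in the identity above yields the claimed convergence to $H_\ell[\Psi_i](\vartheta,\varphi)$.

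For the genericity statement, one needs to show that the set of initial data for which $H_\ell[\Psi_i]$ is a.e.\ non-zero on $\h\cap\check{\Sigma}_0$ is dense (and open) in the relevant topology on initial data. This is the only delicate point: $H_\ell[\Psi_i]$ is a nontrivial linear combination of the $\partial_r^j\Psi_i$ up to order $\ell+2$ (resp.\ $\ell$) restricted to $\h\cap\check{\Sigma}_0$, and these are independent pieces of Cauchy data on $\check{\Sigma}_0$. Given any initial data set, I can add to it a smooth, compactly supported perturbation localized near the horizon whose top-order transversal derivative at $\h$ is prescribed to be any given function on $\mathbb{S}^2$; such perturbations alter $H_\ell[\Psi_i]$ by an arbitrary smooth function on $\h\cap\check{\Sigma}_0$. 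Therefore the complement of the set where $H_\ell[\Psi_i]$ vanishes on a set of positive measure is dense. The main (minor) obstacle is purely a bookkeeping check that all indices $j$ appearing in the conserved quantity fall inside the range where Theorem \ref{PWD-estimates} produces actual decay (rather than merely boundedness); the counting above confirms this, and the genericity argument then closes the proof.
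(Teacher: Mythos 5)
Your reduction of the statement to the conservation law of Theorem \ref{conservation horizon} combined with the pointwise decay of Theorem \ref{PWD-estimates}, including the index check that every term in the sum $\sum_j c_i^j\,\partial_r^j\Psi_i^{(\ell)}$ falls within the admissible range $j\le \ell-(-1)^i$ of that theorem, is exactly the paper's argument and is correct.

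The genericity part, however, contains a real gap. You assert that the $\partial_r^j\Psi_i$ restricted to $\h\cap\check\Sigma_0$ ``are independent pieces of Cauchy data on $\check\Sigma_0$'' and that one can add a perturbation ``whose top-order transversal derivative at $\h$ is prescribed to be any given function.'' This is not the case: the Cauchy data on the spacelike hypersurface $\check\Sigma_0$ consists of $(\Psi|_{\check\Sigma_0}, n_{\check\Sigma_0}\Psi|_{\check\Sigma_0})$, i.e.\ only the $1$-jet is free, and every higher transversal derivative $\partial_r^j\Psi$ for $j\ge 2$ at $\h\cap\check\Sigma_0$ is \emph{constrained} by the wave equation in terms of that $1$-jet. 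So to make your argument work you would need to establish that the linear map from $1$-jet Cauchy data to the tuple $(\partial_r^j\Psi|_{\h\cap\check\Sigma_0})_{j\le\ell+1-(-1)^i}$ is surjective (or at least hits a dense set of values of $H_\ell$), which is not obvious and which you do not prove. The paper avoids this issue entirely, following Aretakis: one introduces an auxiliary hypersurface $\tilde\Sigma_0$ built from \emph{ingoing null} hypersurfaces in a $\phi^T$-invariant neighbourhood of $\h$, on which the characteristic data is the free trace of $\Psi$, and along which $\partial_r$ is precisely the null generator transversal to $\h$. All the $\partial_r^j\Psi$ entering $H_\ell[\Psi_i]$ are then literal derivatives of this free function, so $H_\ell[\Psi_i]$ is a generic $\ell$-eigenfunction of $\slashed\Delta$ — hence a.e.\ non-zero unless identically zero — and one transfers the genericity back to $\check\Sigma_0$ using the already-established energy boundedness. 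You should replace your perturbation-of-Cauchy-data argument with this characteristic-data argument (or, alternatively, supply and prove the surjectivity lemma your version implicitly relies on).
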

	\begin{proof}
		Using theorem \ref{conservation horizon}, we know the expressions  below are conserved along the null generators of $ \h, $
		\begin{align*}
			H_{\ell}[\Psi_1](\vartheta,\varphi)&\ = \ \partial_r^{\ell+2}\Psi_1^{^{(\ell)}} (\tau,\vartheta,\varphi)  + \sum_{j=0}^{\ell+1} c_1^j \cdot \partial_r^j\Psi_1^{^{(\ell)}}  (\tau,\vartheta,\varphi)  \\
			H_{\ell}[\Psi_2] (\vartheta,\varphi) &\ =\  \partial_r^{\ell}\Psi_2^{^{(\ell)}}  (\tau,\vartheta,\varphi) + \sum_{j=0}^{\ell-1} c_2^j \cdot \partial_r^j\Psi_2^{^{(\ell)}} (\tau,\vartheta,\varphi)  . 
		\end{align*} 
		Thus, using the pointwise decay estimates of Theorem \ref{PWD-estimates} along the event horizon $ \h $, we deduce that all terms of the sum above converge to zero asymptotically as $ \tau \to \infty, $ and so the limits of the assumption follow.
		
		In addition, as in \cite{aretakis2011stability}, we may introduce a new hypersurface $ \tilde{\Sigma}_0 $ formed by ingoing null hypersurfaces in an $ \phi^{T}- $invariant neighborhood of the horizon $ \h $, and prescribed initial data for $ \p $ on $ \tilde{\Sigma}_0 $ can be compared to initial data prescribed on $ \check{\Sigma}_0 $ using energy boundedness results obtained already. In view of $ H_{\ell}[\Psi_i] $ involving only transversal derivatives to $ \h $,  we see that it is completely determined on $ \check{\Sigma}_{0}\cap H $ by the initial data prescribed on $ \tilde{\Sigma}_0 $. Hence, generic initial data for $ \p $ on $ \tilde{\Sigma}_0 $ produces generic eigenfunctions $ H_{\ell}[\p] $ of order $ \ell $ of $ \slas{\Delta} $ on $ \check{\Sigma}_0 $, and thus almost everywhere non-zero.\\
	\end{proof}

	\begin{theorem} \label{Scalar blow up} 
		Let $ k,\ell \in \mathbb{N} $, then for all solutions $ \p^{^{(\ell)}} $ to (\ref{waveeq}), supported on the fixed frequency $ \ell\geq i $, $ i\in\br{1,2} $, we have \begin{align}
			\partial_{r}^{k}\left(\partial_{r}^{^{\ell+1-(-1)^{i}}}\p^{^{(\ell)}}\right)(\tau, \vartheta,\varphi) = (-1) ^{k} a_{i,k}^{^{(\ell)}} H_{\ell}[\p](\vartheta,\varphi) \cdot \tau^{k} + \mathcal{O}(\tau^{k-\frac{1}{4}}), 
		\end{align}
		as $ \tau \rightarrow \infty $ along the event horizon $ \h $, where \begin{align*}
			a_{i,k}^{^{(\ell)}}:= \dfrac{1}{(2M^2)^{k}}\dfrac{\left(k+1+2(\ell-(-1)^{i})\right)!}{\left(1+2(\ell-(-1)^{i})\right)!}.
		\end{align*}
	\end{theorem}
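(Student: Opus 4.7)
The plan is to proceed by induction on $k \geq 0$. Set $n := \ell + 1 - (-1)^i$, so that the conservation law of Theorem \ref{conservation horizon} reads
$$H_\ell[\Psi_i] \;=\; \partial_r^{n}\Psi_i^{^{(\ell)}} + \sum_{j=0}^{n-1} c_i^j\, \partial_r^{j}\Psi_i^{^{(\ell)}} ,$$
and the identity $2n - 1 = 1 + 2(\ell - (-1)^i)$ recasts the claimed coefficient in the compact form $a_{i,k}^{(\ell)} = (2n+k-1)! \bigl/ \bigl((2n-1)!\,(2M^2)^k\bigr)$. The target asymptotic then becomes $\partial_r^{n+k}\Psi_i^{^{(\ell)}}(\tau) = (-1)^k a_{i,k}^{(\ell)} H_\ell[\Psi_i]\,\tau^k + \mathcal{O}(\tau^{k-1/4})$.

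The base case $k = 0$ follows at once from $T H_\ell[\Psi_i] \equiv 0$ along $\mathcal{H}^+$: the conserved quantity equals its initial value on each null generator, and solving for $\partial_r^{n}\Psi_i^{^{(\ell)}}$ in the conservation law gives
$\partial_r^{n}\Psi_i^{^{(\ell)}}(\tau) = H_\ell[\Psi_i] - \sum_{j=0}^{n-1} c_i^j\, \partial_r^{j}\Psi_i^{^{(\ell)}}(\tau)$; Theorem \ref{PWD-estimates} then controls each lower-order term, with the slowest decay $\tau^{-1/4}$ coming from $j = n-1$.

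For the inductive step, commute the Regge--Wheeler equation (\ref{waveeq}) with $\partial_r^{n+k}$ to obtain the identity (\ref{k-wave}) with $s = n+k$, and restrict to $\mathcal{H}^+$, where $D(M) = D'(M) = R(M) = 0$ and $\slashed{\Delta}\partial_r^{j}\Psi_i = -\ell(\ell+1)M^{-2}\partial_r^{j}\Psi_i$ by the fixed frequency support. Collecting the contributions to the coefficient of $\partial_r^{n+k}\Psi_i$ from $\binom{n+k}{2}D''(M)$, $(n+k)R'(M)$, the Laplacian, and $V_i(M)$ and performing the computation that underlies (\ref{top-order coefficient}) yields the value $(k+1)(2n+k)/M^2$; the equation on $\mathcal{H}^+$ therefore reduces to the transport equation along the null generators,
\begin{equation*}
2\,T\partial_r^{n+k+1}\Psi_i \;+\; \frac{2}{M}\,T\partial_r^{n+k}\Psi_i \;+\; \frac{(k+1)(2n+k)}{M^2}\,\partial_r^{n+k}\Psi_i \;+\; \mathcal{L}_k \;=\; 0 ,
\end{equation*}
where $\mathcal{L}_k$ is linear in $\partial_r^{j}\Psi_i$ and $T\partial_r^{j}\Psi_i$ for $j \leq n+k-1$, with coefficients that are absolute constants depending only on $M$, $\ell$, and $k$.

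Integrating in $v \in [0,\tau]$ along the null generator, the only term producing $\tau^{k+1}$-growth is $\int_0^\tau \partial_r^{n+k}\Psi_i\,dv$, which by the inductive hypothesis equals $(-1)^k a_{i,k}^{(\ell)} H_\ell[\Psi_i]\,\tau^{k+1}/(k+1) + \mathcal{O}(\tau^{k+3/4})$; dividing by $2$ gives
$$\partial_r^{n+k+1}\Psi_i(\tau) \;=\; (-1)^{k+1}\,\frac{2n+k}{2M^2}\,a_{i,k}^{(\ell)}\,H_\ell[\Psi_i]\,\tau^{k+1} \;+\; \mathcal{O}(\tau^{k+3/4}),$$
and the recursion $a_{i,k+1}^{(\ell)} = \frac{2n+k}{2M^2}\,a_{i,k}^{(\ell)}$ with $a_{i,0}^{(\ell)} = 1$ telescopes into the claimed closed form. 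The main delicate point is the bound $\int_0^\tau \mathcal{L}_k\,dv = \mathcal{O}(\tau^{k+3/4})$: each contribution $\partial_r^{j}\Psi_i$ in $\mathcal{L}_k$ with $j \leq n+k-1$ is at worst $\mathcal{O}(\tau^{k-1})$ (by the inductive hypothesis for $j \geq n$, by Theorem \ref{PWD-estimates} for $j < n$), so its $v$-integral is $\mathcal{O}(\tau^{k})$, while each $T\partial_r^{j}\Psi_i$ integrates to a telescoping boundary term of the same order. The chief obstacle is thus the bookkeeping of precisely which lower-order terms appear in $\mathcal{L}_k$ with which constant coefficients, which is systematized by the observation that on $\mathcal{H}^+$ all background coefficients reduce to pure constants in $M$ and $\ell$, so no uniform-in-$\ell$ control beyond the inductive hypothesis and Theorem \ref{PWD-estimates} is required.
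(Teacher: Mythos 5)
Your proof is correct and follows essentially the same route as the paper's: induction on $k$, with the base case coming from the conservation law of Theorem \ref{conservation horizon} together with the decay estimates of Theorem \ref{PWD-estimates}, and the inductive step coming from evaluating the commuted wave equation (\ref{k-wave}) on $\mathcal{H}^+$, reading off the coefficient of the top-order non-differentiated term via (\ref{top-order coefficient}), and integrating along the null generators. Your indexing is shifted by one relative to the paper's (you take $s=n+k$ derivatives to conclude the case $k+1$; the paper takes $s=n+k-1$ to conclude the case $k$), but the two choices produce the same recursion $a_{i,k+1}^{(\ell)}=\frac{2n+k}{2M^2}a_{i,k}^{(\ell)}$ once the coefficient $(k+1)(2n+k)/M^2$ is extracted, and your explicit telescoping of the recursion into the claimed closed form is a welcome addition that the paper leaves implicit.
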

	\begin{proof}
		We use induction on $ k\in \mathbb{N}.$ To show the $ k=1  $ case, we consider (\ref{k-wave}) for $ s=\ell+1 - (-1) ^{i}$ and we evaluate it on the horizon $ \h $ to obtain
		\begin{align}
			\begin{aligned}
				&T\left( 2\partial_r^{^{\ell+2-(-1)^i}}\p+\frac{2}{M}\partial_r^{^{\ell+1-(-1)^i}}\p+L_{_{\leq\ell-(-1)^i}}[\p]\right) \\ &\quad + \dfrac{2}{M^2}(\ell+1-(-1)^i) \partial_r^{^{\ell+1-(-1)^i}}\p + \tilde{L}_{_{\leq\ell-(-1)^i}}[\p] \ = \ 0, \label{Misc 10.4}
			\end{aligned}
		\end{align}
		where $ L_{\leq q}[\p],\ \tilde{L}_{\leq q}[\p]$ is a linear combination of $ \partial_r-$derivatives of $ \p $ up to order $ q, $ with coefficients depending on $ \ell,q,M. $ However, in view of the pointwise estimates of Theorem \ref{PWD-estimates} we have \begin{align*}
			L_{_{\leq\ell-(-1)^i}}[\p](\tau) &= \mathcal{O}(\tau^{-\frac{1}{4}}), \\
			\int_{\tau_0}^{\tau} \tilde{L}_{_{\leq\ell-(-1)^i}}[\p]( \tilde{\tau})  d\tilde{\tau} &= \mathcal{O}(\tau^{\frac{3}{4}}), \hspace{1cm} \text{as} \quad \tau \xrightarrow{\h} \infty. 
		\end{align*}
		Thus, using the conserved expressions of theorem \ref{conservation horizon} and integrating (\ref{Misc 10.4}) along the null geodesics of $ \h $ we obtain \begin{align}
			\partial_r^{^{\ell+2-(-1)^i}}\p = - \textstyle \frac{\left (\ell+1-(-1)^i\right )}{M^2} H_{\ell}[\p]\cdot \tau + \mathcal{O}\left (\tau^{\frac{3}{4}}\right ), \hspace{1cm} \text{as}\quad \tau \xrightarrow{\h}\infty,
		\end{align}
		which proves the $ k=1 $  case. Now, assume the relation of the assumption holds for all $q \leq k-1$, then evaluating (\ref{k-wave}) for $ s=\ell+k-(-1) ^{i}$ on the horizon $ \h $ yields accordingly 
		\begin{align*}
			\begin{aligned}
				&T\left( 2\partial_r^{^{\ell+k+1-(-1)^i}}\p+\frac{2}{M}\partial_r^{^{\ell+k-(-1)^i}}\p+L_{_{\leq\ell+k-1-(-1)^i}}[\p]\right) \\ &\quad + \dfrac{k}{M^2}\Big(2(\ell-(-1)^i)+k+1\Big) \partial_r^{^{\ell+k-(-1)^i}}\p + \tilde{L}_{_{\leq\ell+k-1-(-1)^i}}[\p] \ = \ 0. \label{Misc 10.4'}
			\end{aligned}
		\end{align*}
		However, using the inductive hypothesis we  obtain \begin{align*}
			\begin{aligned}
				&T\left( 2\partial_r^{^{\ell+k+1-(-1)^i}}\p+\frac{2}{M}\partial_r^{^{\ell+k-(-1)^i}}\p+L_{_{\leq\ell+k-1-(-1)^i}}[\p]\right) \\&  =  - \dfrac{k}{M^2}\Big(2(\ell-(-1)^i)+k+1\Big) (-1) ^{k-1} a_{i,k-1} H_{\ell}[\p](\vartheta,\varphi) \cdot \tau^{k-1} + \mathcal{O}(\tau^{(k-1)-\frac{1}{4}})
			\end{aligned} 
		\end{align*}
		Thus, integrating the above relation along the null generators of $ \h $ and using the inductive hypothesis we conclude the proof. \\
	\end{proof}
	
	\begin{remark}
		The proposition above holds also for $ k=0 $ which corresponds to the non-decaying  result of proposition \ref{non-decay}.
	\end{remark}

	\begin{corollary} \label{psi-phi l study }
		Let $ (\phi,\psi) $ be a solution to the coupled system (\ref{ceq}), then for generic initial data, the following decay estimates hold on the exterior \footnote{If $ f(x)\lesssim_{p} g(x) $ then there exists a constant $ C>0 $ depending on $ p $, such that $ f(x) \leq C \cdot g(x)$.} \begin{align}
			\abs{\phi}(r,\tau) \lesssim_{M} \frac{1}{\sqrt{r}\cdot \tau^{\frac{3}{4}}}, \hspace{1cm} \abs{\psi}(r,\tau)  \lesssim_{M} \frac{1}{\sqrt{r}\cdot \tau^{\frac{3}{4}}},
		\end{align}
		for all $ r\geq M $ and $ \tau\geq 1. $ Moreover, 
		the following decay, non-decay and blow up estimates hold asymptotically on $ \h $, as $ \tau \to \infty $.
		For any $ \ell \geq 2, $ we have
		\begin{itemize}
			\item Decay: \begin{align}
				\abs{\partial_r^{\ell-1}\phi_{\ell}} &\lesssim_{_{\ell,M}} \dfrac{1}{\tau^{\frac{1}{4}}}, \hspace{2cm} \abs{\partial_r^{\ell-1}\psi_{\ell}} \lesssim_{_{\ell,M}} \dfrac{1}{\tau^{\frac{1}{4}}} , \\
				\abs{\partial_r^{\ell-2}\phi_{\ell}} &\lesssim_{_{\ell,M}} \dfrac{1}{\tau^{\frac{3}{4}}}, \hspace{2cm} \abs{\partial_r^{\ell-2}\psi_{\ell}} \lesssim_{_{\ell,M}} \dfrac{1}{\tau^{\frac{3}{4}}},\\ 
				\abs{\partial_r^k \phi_{\ell}} &\lesssim_{_{k,M}} \dfrac{1}{\tau}, \hspace{2cm}  \abs{\partial_r^k \psi_{\ell}} \lesssim_{_{k,M}} \dfrac{1}{\tau} \hspace{1cm}  \forall \ k\leq \ell-3
			\end{align}
			\item Non-decay 
			\begin{align}
				\partial_r^{\ell}\phi_{_{\ell}}(\tau,\omega) &\xrightarrow{\tau \to \infty} \dfrac{1}{2M^2(2\ell+1)}H_{\ell}[\Psi_2](\omega), \\	\partial_r^{\ell}\psi_{_{\ell}}(\tau,\omega) &\xrightarrow{\tau \to \infty} -\dfrac{2}{M(2\ell+1)(\ell+2)}H_{\ell}[\Psi_2](\omega), 
			\end{align}

			\item Blow up: 
			\begin{align}
				&\partial_r^{k+\ell}\phi_{\ell}(\tau,\omega) = (-1)^{k} \dfrac{a_{2,k}^{^{(\ell)}}}{2M^{2}(2\ell+1)}H_{\ell}[\Psi_2](\omega) \cdot\tau ^{k}+ \mathcal{O}(\tau^{k-\frac{1}{4}}), \hspace{1cm} k\geq 0, \\ 	&\partial_r^{k+\ell}\psi_{\ell}(\tau,\omega) = -(-1)^{k}\frac{2a_{2,k}^{^{(\ell)}}}{M(\ell+2)(2\ell+1)} H_{\ell}[\Psi_2](\omega) \cdot\tau ^{k}+ \mathcal{O}(\tau^{k-\frac{1}{4}}), \hspace{1cm} k\geq 0, 
			\end{align}
			where the constants $ a_{i,k}^{^{(\ell)}} $ are given in Theorem \ref{Scalar blow up}.
		\end{itemize}
	\end{corollary}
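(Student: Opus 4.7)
The plan is to reduce everything to the already-established mode-by-mode estimates for the decoupled scalars $\Psi_1^{(\ell)}, \Psi_2^{(\ell)}$ by inverting the diagonalization. Concretely, Corollary \ref{inverse relation} expresses each spherical harmonic $\phi_\ell$ and $\psi_\ell$ as an explicit linear combination of $\Psi_1^{(\ell)}$ and $\Psi_2^{(\ell)}$, so every asymptotic statement about $\phi$ or $\psi$ will follow by plugging the corresponding asymptotic statement for each $\Psi_i^{(\ell)}$ into these relations and tracking which of the two terms dominates.

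For the global exterior bound $|\phi|, |\psi| \lesssim_M (\sqrt{r}\,\tau^{3/4})^{-1}$, I would first apply Theorem \ref{allfreqdecay} to each $\Psi_i^{(\ell)}$, which gives the pointwise $1/(\sqrt{r}\tau^{3/4})$ rate valid for \emph{every} admissible frequency. Inserting this into the inverse relation yields the same rate for each $\phi_\ell, \psi_\ell$ with $\ell$-dependent constants; summing over angular modes (or, equivalently, commuting the whole argument with the angular momentum operators $\Omega_j$ before applying Sobolev embedding on $\mathbb{S}^2$) produces the stated pointwise estimate for the full $\phi$ and $\psi$ assuming enough initial angular regularity.

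All the horizon statements are fixed-frequency and follow the same template: write
\[
2M^2\,\partial_r^{s}\phi_\ell \;=\; \tfrac{\mu}{(\ell+2)(2\ell+1)}\,\partial_r^{s}\Psi_1^{(\ell)} + \tfrac{1}{2\ell+1}\,\partial_r^{s}\Psi_2^{(\ell)},
\]
and analogously for $\psi_\ell$, and then compare the two contributions using Theorem \ref{PWD-estimates}, Proposition \ref{non-decay}, and Theorem \ref{Scalar blow up}. The key observation is that the conservation law for $\Psi_2^{(\ell)}$ involves only $\ell$ transversal derivatives, while the one for $\Psi_1^{(\ell)}$ requires $\ell+2$; thus, at any given order $s$, the $\Psi_2$-piece is always the ``slower'' one and dictates the behavior, while the $\Psi_1$-piece is strictly lower order and can be absorbed into the error. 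Specifically: (i) for $s=\ell-1$ the $\Psi_2$-piece decays like $\tau^{-1/4}$ and the $\Psi_1$-piece like $\tau^{-1}$, giving the $\tau^{-1/4}$ rate; at $s=\ell-2$ the rates are $\tau^{-3/4}$ and $\tau^{-1}$; for $s\leq \ell-3$ both are $\tau^{-1}$. (ii) At $s=\ell$, Proposition \ref{non-decay} gives $\partial_r^\ell\Psi_2^{(\ell)}\to H_\ell[\Psi_2]$ while $\partial_r^\ell\Psi_1^{(\ell)}\lesssim \tau^{-3/4}$ decays, yielding the stated limits $\frac{1}{2M^2(2\ell+1)}H_\ell[\Psi_2]$ and $-\frac{2}{M(\ell+2)(2\ell+1)}H_\ell[\Psi_2]$. (iii) For $s=\ell+k$ with $k\geq 1$, Theorem \ref{Scalar blow up} applied with $i=2$ gives the leading $(-1)^k a_{2,k}^{(\ell)}H_\ell[\Psi_2]\,\tau^k$ behavior, and one checks that the $\Psi_1$-piece grows at worst like $\tau^{k-2}$ (it is bounded for $k=2$, decays for $k<2$, and obeys Theorem \ref{Scalar blow up} with effective index $k-2$ for $k\geq 3$), which is absorbed into the stated $O(\tau^{k-1/4})$ remainder.

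The main technical point to be careful about, and the only place where one has to do more than bookkeeping, is the final step of verifying in each regime that the $\Psi_1^{(\ell)}$-contribution is strictly dominated by the error $O(\tau^{k-1/4})$ coming from the $\Psi_2^{(\ell)}$ asymptotic; this is a mild comparison-of-exponents argument but must be done uniformly in $k$ and in a way compatible with the fixed $\ell$. The genericity statement for the non-decay and blow-up assertions is inherited directly from that of Proposition \ref{non-decay}, since $H_\ell[\Psi_2]$ is the obstruction, and it is exactly this same quantity that appears in the leading coefficients for $\phi_\ell$ and $\psi_\ell$.
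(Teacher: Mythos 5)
Your proposal is correct and takes essentially the same approach as the paper: invert the diagonalization via Corollary \ref{inverse relation} and plug in the mode-by-mode estimates for $\Psi_1^{(\ell)},\Psi_2^{(\ell)}$ from Theorems \ref{allfreqdecay}, \ref{PWD-estimates}, Proposition \ref{non-decay}, and Theorem \ref{Scalar blow up}. The paper states the conclusion as following "immediately" from these; your careful bookkeeping of which of the two contributions dominates at each order $s$ (and the observation that the $\Psi_1$-piece lags by two derivative orders because its conservation law sits at $\ell+2$ rather than $\ell$) is exactly the implicit comparison the paper relies on.
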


	\begin{proof}
		We recall from Corollary \ref{inverse relation} the following expressions for any $ \ell \geq 2 $ \begin{align*}
			\phi_{\ell} & = \dfrac{\mu}{2M^2(\ell+2 )(2\ell+1)}\cdot \Psi_1^{^{(\ell)}}+ \dfrac{1}{2M^2(2\ell+1)}\Psi_2^{^{(\ell)}} \\
			\psi_{\ell} &= \dfrac{2}{M\mu\cdot (2\ell+1 )}\cdot \Psi_1^{^{(\ell)}}- \dfrac{2}{M(\ell+2)(2\ell+1)}\Psi_2^{^{(\ell)}}
		\end{align*}
		Thus, the proof follows immediately from Theorem \ref{PWD-estimates}, \ref{non-decay}, and \ref{Scalar blow up}. Note, $ \psi_{\ell=1} = \Psi_1^{^{(\ell=1)}}$, thus we already have estimates for $ \psi_{\ell=1} $.
	\end{proof}
	
	\subsection{Estimates for \texorpdfstring{$ \mathbf{q^F, p} $}{PDFstring}.}
	Using elliptic identities of Section \ref{Geometry}  we can show that a similar hierarchy of estimates also holds for the gauge--invariant quantities $ \bm{q^F}$ and $\bm{p} $ in $ L^2(S^2_{\tau,M}). $  First, let us prove the following lemma that allows us to do so.
	
	\begin{lemma} 
		Let $ \xi_{A}, \theta_{AB} $ be a one tensor and a symmetric traceless 2-tensor on $ S^2_{v,r} $, respectively, and consider the scalars $ f_{\xi} : = r\dd \xi,\  f_{\theta }:= r^2\dd\DD \theta $. Then, \begin{itemize}
			\item For any $ k\in\mathbb{N} $, we have \begin{align}
				\partial_r^k f_{\xi} = r\dd \left(\slas{\nabla}_{\partial_r}^k\xi\right), \hspace{1cm}  \partial_r^k f_{\theta} = r^2\dd\DD \left(\slas{\nabla}_{\partial_r}^k\theta\right) \label{misc12-a}
			\end{align}
			\item The following elliptic estimates hold for any $ k\in\mathbb{N} $\begin{align}
				\int_{S^2_{v,r}} \abs{\slas{\nabla}_{\partial_r}^k\xi}^2 \leq \int_{S^2_{v,r}}\abs{\partial_r^kf_{\xi}}^2, \hspace{1cm} \int_{S^2_{v,r}} \abs{\slas{\nabla}_{\partial_r}^k\theta}^2 \leq \int_{S^2_{v,r}}\abs{\partial_r^kf_{\theta}}^2\label{misc12-b}
			\end{align}
		\end{itemize}  
		\begin{proof}
			From the definition of $ \dd,\DD $ and the commutations (\ref{cummutation formula}) we have $ [\slas{\nabla}_{\partial_r},r\dd] =  [\slas{\nabla}_{\partial_r},r\DD] = 0 $ which shows (\ref{misc12-a}). For example,   \[ \partial_r^kf_{\theta} = \partial_r^k\left (r\dd(r\DD\theta)\right ) = r\dd \left( \slas{\nabla}_{\partial_r}^k(r\DD\theta)\right) = r^2\dd\DD (\slas{\nabla}_{\partial_r}^k\theta).  \]
			To show (\ref{misc12-b}), we use identities (\ref{elliptic 2.5}),(\ref{elliptic 3}) and for motivation, we only look at the second estimate, i.e. \begin{align*}
				\int_{S^2_{v,r}} \abs{\slas{\nabla}_{\partial_r}^k\theta}^2 =&		\int_{S^2_{v,r}}r^2 \cdot \dfrac{1}{r^2} \abs{\slas{\nabla}_{\partial_r}^k\theta}^2 \leq 	\int_{S^2_{v,r}} r^2\abs{\DD\left (\slas{\nabla}_{\partial_r}^k\theta\right )}^2 \\ =& \int_{S^2_{v,r}} r^4 \dfrac{1}{r^2}\abs{\DD\left (\slas{\nabla}_{\partial_r}^k\theta\right )}^2 
				\leq \int_{S^2_{v,r}} \abs{r^2\dd\DD\left (\slas{\nabla}_{\partial_r}^k\theta\right )}^2  =  \int_{S^2_{v,r}}\abs{\partial_r^kf_{\theta}}^2,
			\end{align*}
			where in the last equation we used relation (\ref{misc12-a}). \\
		\end{proof}
	\end{lemma}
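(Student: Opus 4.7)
The plan is to reduce everything to the commutation formulas \eqref{cummutation formula} and the elliptic identities \eqref{elliptic 2.5}--\eqref{elliptic 3} already established. For part (i), the key observation I would exploit is that the weight $r$ cancels the anomaly in the commutator: since $[\slas{\nabla}_{\partial_r},\dd]\xi = -\tfrac{1}{r}\dd\xi$, a direct application of the Leibniz rule gives
\[
\slas{\nabla}_{\partial_r}(r\,\dd\,\xi) \;=\; \dd\xi + r\,\slas{\nabla}_{\partial_r}(\dd\xi) \;=\; \dd\xi + r\dd(\slas{\nabla}_{\partial_r}\xi) - \dd\xi \;=\; r\,\dd(\slas{\nabla}_{\partial_r}\xi),
\]
so the operators $r\dd$ and $\slas{\nabla}_{\partial_r}$ commute on one-forms. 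Exactly the same computation, using $[\slas{\nabla}_{\partial_r},\DD]= -\tfrac{1}{r}\DD$, shows that $r\DD$ commutes with $\slas{\nabla}_{\partial_r}$ on symmetric traceless 2-tensors. Since $f_\xi$ and $f_\theta$ are scalars, on them $\slas{\nabla}_{\partial_r}$ coincides with $\partial_r$. A straightforward induction on $k$ then yields $\partial_r^k f_\xi = r\dd(\slas{\nabla}_{\partial_r}^k\xi)$ and, by factoring $f_\theta = (r\dd)(r\DD\theta)$ and applying both commutations in turn, $\partial_r^k f_\theta = r^2\dd\DD(\slas{\nabla}_{\partial_r}^k\theta)$.

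For part (ii), I would first reinterpret the elliptic identities \eqref{elliptic 2.5}--\eqref{elliptic 3} as pointwise-in-$v,r$ Poincar\'e-type inequalities on each sphere $S^2_{v,r}$: dropping the non-negative $|\slas{\nabla}\,\cdot\,|^2$ term, they give
\[
\int_{S^2_{v,r}} |\eta|^2 \;\leq\; r^2\!\int_{S^2_{v,r}} |\dd\eta|^2,\qquad \int_{S^2_{v,r}} |\vartheta|^2 \;\leq\; r^2\!\int_{S^2_{v,r}}|\DD\vartheta|^2,
\]
for any one-form $\eta$ and symmetric traceless 2-tensor $\vartheta$ on the sphere. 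Applying the first inequality to $\eta = \slas{\nabla}_{\partial_r}^k\xi$ and invoking part (i) immediately gives $\int|\slas{\nabla}_{\partial_r}^k\xi|^2 \leq \int r^2|\dd(\slas{\nabla}_{\partial_r}^k\xi)|^2 = \int|\partial_r^k f_\xi|^2$. For the tensor case I would chain the two inequalities: first apply the bound with $\vartheta = \slas{\nabla}_{\partial_r}^k\theta$, then apply the one-form bound to $\eta = \DD(\slas{\nabla}_{\partial_r}^k\theta)$, producing the extra $r^2$ factor that matches $f_\theta = r^2\dd\DD\theta$. Again using part (i) to rewrite the right-hand side as $\int|\partial_r^k f_\theta|^2$ closes the estimate.

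There is really no serious obstacle here beyond bookkeeping: the only subtle point is making sure one tracks where $\slas{\nabla}_{\partial_r}$ versus $\partial_r$ is being used, but once $f_\xi$ and $f_\theta$ are recognized as scalars the distinction collapses. Both parts therefore follow from the algebraic identities of Section \ref{Geometry} combined with a clean induction.
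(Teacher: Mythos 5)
Your proposal is correct and follows essentially the same route as the paper: part (i) rests on checking that the $r$ weight cancels the $-\tfrac{1}{r}$ anomaly in the commutators \eqref{cummutation formula} so that $[\slashed{\nabla}_{\partial_r}, r\dd]=[\slashed{\nabla}_{\partial_r}, r\DD]=0$, and part (ii) drops the non-negative $|\slashed{\nabla}\,\cdot\,|^2$ term in the elliptic identities \eqref{elliptic 2.5}--\eqref{elliptic 3} to obtain the needed Poincar\'e-type bounds and then chains them. The only difference is that you spell out the Leibniz-rule computation that the paper states implicitly.
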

	
	In the propositions that follow, we define the $ L^2(S^{2}_{v,r}) $-norm  of any $ S^2_{v,r}- $tensor $ \xi $ as \begin{align*}
		\norm{\xi}_{S^2_{v,r}}^{2} :=  \int_{S^2_{v,r}}r^2 \sin\theta d\theta d\phi \abs{\xi}^2.	
	\end{align*}
	
	\begin{proposition} \label{finished estimates of Regge}
		Let $ (\bm{q^{F}}, \bm{p}) $ and  $ (\underline{\bm{q}}^{F}, \underline{\bm{p}}) $ be a solutions to the coupled  system (\ref{csystem}), then for generic initial data, we have 
		\begin{align}
			\norm{\bm{p}}_{S^2_{\tau,r}} \lesssim_{_{M}} \dfrac{1}{\sqrt{r}\cdot \tau^{\frac{3}{4}}}, \hspace{2cm}           \norm{\bm{q^F}}_{S^2_{\tau,r}} \lesssim_{_{M}} \dfrac{1}{\sqrt{r}\cdot \tau^{\frac{3}{4}}}
		\end{align}	
		Also, the following decay, non-decay and blow-up results hold  asymptotically along the event horizon $ \h $
		\begin{itemize}
			\item Decay: \begin{align}
				&\norm{\bm{p}}_{S^2_{\tau,M}} \lesssim_{_{M}} \tau ^{-\frac{3}{4}}, \hspace{2cm}           \norm{\bm{q^F}}_{S^2_{\tau,M}} \lesssim_{_{M}} \tau ^{-\frac{3}{4}} \\
				&\norm{\slas{\nabla}_{\partial_r}\bm{p}}_{S^2_{\tau,M}} \lesssim_{_{M}} \tau ^{-\frac{1}{4}}, \hspace{1.3cm}           \norm{\slas{\nabla}_{\partial_r}\bm{q^F}}_{S^2_{\tau,M}} \lesssim_{_{M}} \tau ^{-\frac{1}{4}} 
			\end{align}
			\item Non-Decay: \begin{align}
				&\norm{\slas{\nabla}_{\partial_r}^2\bm{p}}_{S^2_{\tau,M}}\quad \xrightarrow{\tau \to \infty} \dfrac{1}{60M}\mathcal{H}_{\ell=2}[\Psi_2] , \\           &\norm{\slas{\nabla}_{\partial_r}^2\bm{q^F}}_{S^2_{\tau,M}}\ \xrightarrow{\tau \to \infty} \dfrac{1}{120 M^2}\mathcal{H}_{\ell=2}[\Psi_2], 
			\end{align}
			\item Blow-up: \begin{align}
				\norm{\slashed{\nabla}_{\partial_{r }}^k \bm{ p}}_{S^2_{\tau,M}}\ =\ & \frac{a_{2,k}^{^{(2)}}}{60 M} \mathcal{H}_{\ell=2}[\Psi_2] \cdot \tau ^{k-2} + \mathcal{O}\left(\tau ^{k-2-\frac{1}{4}}\right), \quad
				\quad k\geq 2. \label{Misc10.6}
				\\ 	\norm{\slashed{\nabla}_{\partial_r}^k \bm{q^F}}_{S^2_{\tau,M}}
				\	=\ & \frac{a_{2,k}^{^{(2)}}}{120 M^2} \mathcal{H}_{\ell=2}[\Psi_2] \cdot \tau ^{k-2} + \mathcal{O}\left(\tau ^{k-2-\frac{1}{4}}\right), \quad
				\quad k\geq 2. \label{Misc10.7} 
			\end{align}
		\end{itemize}
		where $ \mathcal{H}_{\ell=2}[\Psi_2] :=\norm{H_{2}[\Psi_2]}_{S^{2}_{\tau,M}}, \ \forall \ \tau \geq 1, $ and $ a_{2,k}^{^{(2)}} $ is given in Theorem \ref{Scalar blow up}. The same estimates hold for $ (\underline{\bm{q}}^{F}, \underline{\bm{p}}).   $
	\end{proposition}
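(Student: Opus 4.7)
\medskip

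\noindent\textbf{Proof proposal.} The plan is to reduce every estimate for the tensors $\bm{q}^{F}$ and $\bm{p}$ to the scalar estimates for $\phi$ and $\psi$ already established in Corollary \ref{psi-phi l study }, using the lemma stated just above together with the refined elliptic identities \eqref{elliptic 5 } and \eqref{elliptic 6} on fixed spherical harmonic frequencies. First I would verify that $\bm{p}$ (a one-form) and $\bm{q}^{F}$ (a symmetric traceless $2$-tensor) fit the hypotheses of the lemma with $f_{\bm{p}}=r\slas{\mathcal{D}}_{1}\bm{p}=\psi$ and $f_{\bm{q}^{F}}=r^{2}\slas{\mathcal{D}}_{1}\slas{\mathcal{D}}_{2}\bm{q}^{F}=\phi$. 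Applying the commutation identity $\partial_{r}^{k}f_{\bm{p}}=r\slas{\mathcal{D}}_{1}(\slas{\nabla}_{\partial_{r}}^{k}\bm{p})$ and \eqref{misc12-b} immediately yields
\begin{align*}
	\norm{\slas{\nabla}_{\partial_{r}}^{k}\bm{p}}_{S^{2}_{\tau,r}}^{2}\;\leq\;\int_{S^{2}_{\tau,r}}\abs{\partial_{r}^{k}\psi}^{2},\qquad \norm{\slas{\nabla}_{\partial_{r}}^{k}\bm{q}^{F}}_{S^{2}_{\tau,r}}^{2}\;\leq\;\int_{S^{2}_{\tau,r}}\abs{\partial_{r}^{k}\phi}^{2},
\end{align*}
so the exterior decay statement follows directly by squaring the pointwise bound $\abs{\phi},\abs{\psi}\lesssim (r\tau^{3/2})^{-1/2}$ from Corollary \ref{psi-phi l study } and integrating the bounded angular measure.

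For the horizon statements I would decompose $\phi=\sum_{\ell\geq 2}\phi_{\ell}$ and $\psi=\sum_{\ell\geq 1}\psi_{\ell}$ into spherical harmonics and observe that $\bm{p}$ and $\bm{q}^{F}$ inherit matching decompositions. On each fixed frequency $\ell$, the inequalities in \eqref{misc12-b} become \emph{exact} identities after one invokes \eqref{elliptic 5 }, \eqref{elliptic 6}; schematically,
\begin{align*}
	\int_{S^{2}_{\tau,M}}\abs{\slas{\nabla}_{\partial_{r}}^{k}\bm{p}_{\ell}}^{2}\;=\;\frac{1}{\ell(\ell+1)}\int_{S^{2}_{\tau,M}}\abs{\partial_{r}^{k}\psi_{\ell}}^{2},\qquad \int_{S^{2}_{\tau,M}}\abs{\slas{\nabla}_{\partial_{r}}^{k}\bm{q}^{F}_{\ell}}^{2}\;=\;\frac{2}{\ell(\ell+1)\bigl(\ell(\ell+1)-2\bigr)}\int_{S^{2}_{\tau,M}}\abs{\partial_{r}^{k}\phi_{\ell}}^{2}.
\end{align*}
Plugging the asymptotic expansions of $\partial_{r}^{k}\psi_{\ell}$ and $\partial_{r}^{k}\phi_{\ell}$ from Corollary \ref{psi-phi l study } into these identities mode-by-mode, I would then single out the $\ell=2$ contribution. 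For $k=0,1$ every mode decays (by the pointwise estimates) and one sums up the decay. For $k=2$ the $\ell=2$ mode produces the non-zero limit $\sim \mathcal{H}_{\ell=2}[\Psi_{2}]$ while all $\ell\geq 3$ contributions still decay like $\tau^{-1/4}$ or faster. For $k\geq 3$ the $\ell=2$ mode blows up at the rate $\tau^{k-2}$ whereas every higher $\ell$ blows up at the strictly slower rate $\tau^{k-\ell}$, so they are absorbed into the error term $\mathcal{O}(\tau^{k-2-1/4})$ together with the subleading Corollary error.

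The main technical obstacle is the \emph{sum over $\ell$}: although each individual frequency obeys the bounds of Corollary \ref{psi-phi l study }, the constants there depend on $\ell$, so I need to argue that the total contribution of $\sum_{\ell\geq 3}$ is in fact $\mathcal{O}(\tau^{k-2-1/4})$ in $L^{2}(S^{2}_{\tau,M})$. The plan is to commute with the angular momentum operators $\Omega_{j}$ sufficiently many times (as was done in the Sobolev step leading to Theorem \ref{allfreqdecay}) so that the initial-data norms $\mathcal{I}_{k,\ell}[\Psi_{i}]$ appearing in the Corollary carry enough weight in $\ell(\ell+1)$ to dominate the polynomial growth in $\ell$ of the constants $a^{(\ell)}_{i,k}$, and then use Plancherel on $S^{2}$ to convert the pointwise bounds into a convergent $L^{2}$-summation. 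The identical argument applies to $(\underline{\bm{q}}^{F},\underline{\bm{p}})$ since these satisfy the same Regge--Wheeler system and all preceding estimates were derived for both cases simultaneously.
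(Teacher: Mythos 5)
Your proposal follows essentially the same route as the paper's own proof: you relate $\slas{\nabla}_{\partial_r}^{k}\bm{p}$ and $\slas{\nabla}_{\partial_r}^{k}\bm{q}^{F}$ to $\partial_{r}^{k}\psi$ and $\partial_{r}^{k}\phi$ via the commutation lemma, pass to a fixed harmonic frequency where the inequalities \eqref{misc12-b} become identities through \eqref{elliptic 5 } and \eqref{elliptic 6}, decompose by orthogonality, and read off the $\ell=2$ mode as the dominant term from Corollary \ref{psi-phi l study }. The one place you go beyond what the paper makes explicit is the summability over $\ell$ of the error terms: the paper uses only the $\ell^{-3}$ (for $\bm{p}$) and $\ell^{-4}$ (for $\bm{q}^{F}$) suppression coming from the elliptic factors and treats the residual $\ell$-dependence of the constants in Corollary \ref{psi-phi l study } somewhat loosely, whereas you propose to commute with $\Omega_{j}$ to trade additional initial-data regularity for the needed decay in $\ell$; this is a legitimate and more careful way to close that step.
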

	\begin{proof} We recall the definitions $ \psi:= r\dd\bm{p} $ and $ \phi := r^2 \dd\DD \bm{q^{F}} $. Let us motivate the idea of showing decay by working with $ \bm{q^{F}}$ only, and let $ k\leq 1 $, then using the elliptic identity (\ref{elliptic 6}) and relation 
		(\ref{misc12-a}) we write \begin{align} \label{L-2 formula scalar to tensor}
			\int_{S^{2}_{\tau,M}} \abs{\slas{\nabla}_{\partial_r}^{k}\bm{q^{F}}}^{2} = \sum_{\ell\geq 2}\left ( \int_{S^{2}_{\tau,M}}\abs{\slas{\nabla}_{\partial_r}^{k}\bm{q^{F}}_{\ell}}^{2}\right ) = \sum_{\ell\geq 2} \left[\dfrac{2}{\ell(\ell+1)}\cdot \dfrac{1}{\ell(\ell+1)-2}\left ( \int_{S^{2}_{\tau,M}}\abs{\partial_r^{k}\phi_{\ell}}^{2}\right )\right]
		\end{align}
		However, in view of the decay estimates of Corollary \ref{psi-phi l study } we obtain \begin{align*}
			\int_{S^{2}_{\tau,M}} \abs{\slas{\nabla}_{\partial_r}^{k}\bm{q^{F}}}^{2} \leq \sum_{\ell\geq 2} \left (\dfrac{2}{\ell(\ell+1)}\cdot \dfrac{1}{\ell(\ell+1)-2}\right ) (4\pi M^{2})  \norm{\partial_r^{k}\phi_{\ell}}^{2}_{L^{\infty}(S^{2}_{\tau,M})} \leq \begin{cases}
				C \tau ^{-\frac{3}{2}},  \hspace{0.5cm} k=0
				\\
				C \tau^{-\frac{1}{2}}, \hspace{0.5cm} k=1
			\end{cases} 
		\end{align*}
		for a constant $ C $ that depends only on $ M, \check{\Sigma}_{0} $.
		
		Now, fix $ k\geq 2 $ and working with equation (\ref{L-2 formula scalar to tensor}) we have that for any $ \ell > k $ the corresponding term decays in view of Corollary \ref{psi-phi l study }, i.e. \begin{align*}
			\sum_{\ell>k} \left[\dfrac{2}{\ell(\ell+1)}\cdot \dfrac{1}{\ell(\ell+1)-2}\left ( \int_{S^{2}_{\tau,M}}\abs{\partial_r^{k}\phi_{\ell}}^{2}\right )\right] \leq C \tau ^{-\frac{1}{2}},
		\end{align*}
		for a positive constant depending on $ k, M, \check{\Sigma}_0. $
		On the other hand, for the remaining $ k-1 $ first terms of the sum we see from Corollary \ref{psi-phi l study } that the $  \ell=2 $ term is the dominant one and we write \begin{align*}
			\sum_{2\leq \ell \leq k} \left[\dfrac{2}{\ell(\ell+1)}\cdot \dfrac{1}{\ell(\ell+1)-2}\left ( \int_{S^{2}_{\tau,M}}\abs{\partial_r^{k}\phi_{\ell}}^{2}\right )\right] = \dfrac{b_k}{12}\norm{H_{2}[\Psi_2]}_{S^{2}_{\tau,M}}^{2} \left (\tau^{k-2}\right )^{2} + \mathcal{O}\left (\tau^{2k-4-\frac{1}{2}}\right )
		\end{align*}
		for \[ b_k := \left (\dfrac{a_{2,k}^{^{(2)}}}{10M^2}\right )^{2}\]
		Note for $ k=2 $, we simply obtain a non-decaying term while the others decay in the expression above. We work similarly to obtain the decay, non-decay, and blow-up estimates of $ \bm{p}. $ 
		\\
	\end{proof}

	\section{The positive spin Teukolsky equations instability.}\label{Teukolsky positive section} With the results of the previous section in hand, we can now show estimates for the Teukolsky solutions along the event horizon $ \h. $ In particular, in this section we are interested in the positive spin gauge invariant quantities $ \alpha, \mathfrak{f}, \tilde{\beta}  $ which satisfy the relating equation \begin{align}
		^{(F)}\rho \dfrac{1}{\underline{\kappa}_{\star}} \slas{\nabla}_{3^{\star}}\left(r^3 \underline{\kappa}_{\star}^2 {\alpha}_{\star}\right) = - \left(^{(F)}\rho^{2} + 3 \rho\right) r^3 \underline{\kappa}_{\star}{\mathfrak{f}}_{\star} - r^3 \underline{\kappa}_{\star}  \DD^{\star}\left(\tilde{{\beta}}_{\star}\right). \label{relating a,f,b}
	\end{align}
	Note that  $ \rho$ and ${}^{(F)} \rho$ are regular quantities on the horizon $ \h $, while $\alpha, \ \tilde{\beta}, \  \mathfrak{f}, \ \underline{\kappa} $ are not. However, the rescaled quantities $\alpha_{\star}= D^2 \cdot \alpha,\  \mathfrak{f}_{\star} = D \cdot  \mathfrak{f}, \  \tilde{\beta}_{\star} = D \cdot \tilde{\beta}, \  \underline{\kappa}_{\star} = D^{-1}\underline{\kappa} $ extend regularly on $ \h $. With respect to the Ingoing Eddington--Finkelstein coordinates $\br{ v,r }$, the quantities involved in the relating equation (\ref{relating a,f,b}) are given in the ERN spacetime by \begin{align*}
		^{(F)}\rho = \dfrac{M}{r^2}, \hspace{1cm} \rho = -\dfrac{2M}{r^3}\sqrt{D}, \hspace{1cm} \underline{\kappa}_{\star}= -\dfrac{2}{r}, \hspace{1cm} e_{3}^{\star} = - \dfrac{\partial}{\partial r }.
	\end{align*}
	In addition, we recall the transformation identities relating $ \mathfrak{f},\tilde{\beta} $ to the Regge--Wheeler solutions $ \bm{q}^{F}, \bm{p}  $ respectively \begin{align}
		\begin{aligned} \label{Trans + spin}
			\bm{q}^{F} &= \dfrac{1}{\underline{k}_{\star}} \slas{\nabla}_{3^{\star}}\left(r^3 \underline{k}_{\star}\cdot \mathfrak{f}_{\star} \right) = - r \slas{\nabla}_{\partial_r}(r^2\mathfrak{f}_{\star}) \\
			\bm{p} &= \dfrac{1}{\underline{k}_{\star}} \slas{\nabla}_{3^{\star}}\left(r^3 \underline{k}_{\star}\cdot \tilde{\beta}_{\star} \right) = - r \slas{\nabla}_{\partial_r}(r^4\tilde{\beta}_{\star}).	\end{aligned}
	\end{align}
	We first write the induced scalars we will be working with and after we obtain estimates for those, we pass them to the respected tensors using standard elliptic identities. Let \begin{align}
		h_{\alpha}:= r^2 \dd\DD \alpha_{\star}, \hspace{1cm} h_{f}:= r^2\dd\DD \mathfrak{f}_{\star},  \hspace{1cm} h_{\tilde{\beta}}:= r^2\dd\DD \tilde{\beta}_{\star},
	\end{align}
	and after we project to the $ E_{\ell} $ eigenspace and using the commutation formulae (\ref{cummutation formula}), i.e. $ [\slas{\nabla}_{\partial_r},r\dd] =  [\slas{\nabla}_{\partial_r},r\DD] = 0$, the relating equation (\ref{relating a,f,b}) now reads
	\begin{align}
		- \partial_r\left(r \cdot h_{\alpha_{\ell}}\right) = r^2 \dfrac{\mu^2}{2M}\cdot h_{\tilde{\beta}_{\ell}} + 2(1-4\sqrt{D}) \cdot h_{\mathfrak{f}_{\ell}}, \hspace{1.5cm} \mu^2 = (\ell-1)(\ell+2) \label{scalar relating a,f,b}
	\end{align}
	Moreover, the transport equations (\ref{Trans + spin}) yield \begin{align}
		\begin{aligned}
			- \dfrac{\phi}{r} &= \partial_r\left (r^2\cdot h_{\mathfrak{f}}\right ) \\
			- \dfrac{\psi}{r} & = \partial_r\left (r^4\cdot h_{\tilde{\beta}}\right ) 
		\end{aligned} \label{transport + spin scalar}
	\end{align}

	\subsection{Energy and decay estimates for \texorpdfstring{$ \mathfrak{f}_{\star}, \ \tilde{\beta}_{\star}, \ \alpha_{\star} $}{PDFstring} on the exterior. } \label{f,beta estimates}
	In this section, we use the transport equations to produce energy decay estimates that ultimately lead to pointwise control on the exterior up to and including the event horizon $ \h. $ First, we present the following lemma. 
	
	\begin{lemma} \label{transport-3}
		Let $ f,F $ be two scalar functions satisfying \begin{align}
			\partial_{u}\left(r^m f\right) = - \dfrac{D(r)}{2r} F, \hspace{1cm} m\in\mathbb{N}, \label{f-F misc}
		\end{align} 
		with respect to the double null coordinate system $ (u,v,\vartheta,\varphi) $ introduced in Section \ref{Geometry}, then the following estimates hold near future null infinity \begin{align}
			\int_{\check{O}_{\tau_2}} r^{2m+k-2}f^2 + \int_{\I}r^{2m+k-3} f^2 \leq C \int_{\check{O}_{\tau_1}} r^{2m+k-2}f^2 + C \int_{\I} r^{k-3}F^2.
		\end{align}
		for any $ \tau_1<\tau_2 $ and $ k>0 $, where $ C>0 $  depends on $ M,k $ and $ R $ of Section \ref{r-p}.
	\end{lemma}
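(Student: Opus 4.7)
The plan is to view the transport equation as a first-order ODE along the ingoing null generators (curves of constant $v$ and $\omega$) and derive a divergence identity by multiplying it by the weight $2r^{k-2}(r^m f)$. Using $\partial_u r=-D/2$ in the double-null coordinates, the result rearranges to the pointwise identity
\begin{equation*}
\partial_u\!\left(r^{2m+k-2} f^2\right) + \tfrac{(k-2)D}{2}\, r^{2m+k-3} f^2 \;=\; -D\, r^{m+k-3}\, f\, F,
\end{equation*}
which will be the workhorse of the estimate.

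I would then integrate this identity over $\check{I}_{\tau_1}^{\tau_2}$ with the coordinate measure $du\,dv\,d\omega$. The fundamental theorem of calculus in $u$, at fixed $(v,\omega)$, turns the first term into the difference between the $\check{O}_{\tau_2}$ and $\check{O}_{\tau_1}$ boundary fluxes, plus a favourably-signed contribution at $\mathcal{I}^+$. The timelike boundary at $\{r=R\}$ is treated exactly as in the proof of Proposition \ref{r_p}: introduce a smooth cutoff $\zeta_R$ supported in $\{r\geq R\}$; the resulting error in the strip $\{R\leq r\leq R+1\}$ is quadratic in the $1$-jet of $\Psi_i$ with bounded weights and is absorbed into $\int_{\check{\Sigma}_{\tau_1}} J^T_\mu[\Psi_i]\,n^\mu$ via Theorem \ref{Morawetz-Spacetime}. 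The cross term is handled by Young's inequality,
\begin{equation*}
|D\, r^{m+k-3} fF| \;\leq\; \tfrac{\varepsilon D}{2}\, r^{2m+k-3} f^2 + \tfrac{D}{2\varepsilon}\, r^{k-3} F^2,
\end{equation*}
and since $D$ is uniformly bounded between positive constants on $\{r\geq R\}$, all $D$-factors are absorbed into a constant $C=C(M,k,R)$.

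For $k>2$ the bulk coefficient $(k-2)/2$ is strictly positive, so choosing $\varepsilon<k-2$ yields directly the coercive control of $\int_{\check{I}} r^{2m+k-3} f^2$ in terms of initial data and the $F^2$-source. For $0<k\leq 2$ the bulk coefficient has the wrong sign and the divergence identity alone does not extract the bulk $f^2$-integral. In this regime I would complement the argument by integrating the transport equation pointwise,
\begin{equation*}
(r^m f)(u,v,\omega) \;=\; (r^m f)(\tau_1,v,\omega) - \int_{\tau_1}^u \tfrac{D}{2r}\, F\, du',
\end{equation*}
squaring, and applying a weighted Cauchy--Schwarz with $A(u')=D^{1/2}/r^{(k+1)/2}$ and $B(u')=D^{1/2}r^{(k-1)/2} F/2$. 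The substitution $du'=-2\,dr'/D$ at fixed $v$ evaluates $\int_{\tau_1}^u A^2\,du' \leq 2/(k\,r^k)$, an integral which is finite \emph{precisely} because $k>0$. Multiplying the resulting pointwise bound for $r^{2m+k-3} f^2$ by the bulk measure and integrating in $(u,v,\omega)$ yields the required bulk control; combining with the boundary analysis above completes the estimate. The main technical obstacle is the borderline case $k=2$, where the divergence-identity bulk coefficient vanishes while the direct-integration step produces logarithmic weights, and reconciling the two regimes with uniform constants across the full range $k>0$ is where the delicate work lies.
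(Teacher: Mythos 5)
Your derivation of the transport identity and the handling of the $\{r=R\}$ boundary by cutoff are both consistent with the paper's proof. However, there is a genuine gap in the way you treat the range $0<k\leq 2$, and it stems from a missed step that the paper exploits decisively.

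You integrate $\partial_u\!\left(r^{2m+k-2}f^2\right)$ against the coordinate measure, so the only bulk coefficient available to you is $(k-2)D/2$, which is nonpositive for $k\leq 2$. The paper instead converts $\partial_u$ into a spacetime divergence before integrating: writing $\partial_u g = \mathrm{div}(g\,\partial_u) - g\,\mathrm{div}(\partial_u)$ and using the ERN identity $\mathrm{div}(\partial_u)=-\sqrt{D}/r$, the transport identity becomes
\begin{align*}
\mathrm{div}\!\left(r^{2m+q}f^2\,\partial_u\right) + r^{2m+q-1}\sqrt{D}\left(1+\frac{(q-2)\sqrt{D}}{4}\right)f^2 \;\leq\; \frac{r^{q-1}}{q+2}D\,F^2,
\end{align*}
with $q=k-2$, after absorbing the cross term by Cauchy--Schwarz with $\varepsilon=(q+2)/2$. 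The crucial point is the additional $+\sqrt{D}/r$ contribution: since $\sqrt{D}\in[0,1)$ and $q>-2$, the coefficient $4+(q-2)\sqrt{D}\geq \min(4,\,q+2)>0$ is uniformly bounded below for the entire range $k>0$, including $k=2$. There is no sign obstruction and no borderline case; the single divergence identity closes the estimate uniformly. By working with the coordinate measure you discard precisely the term that rescues the low-$k$ range, and are then forced into a second, ad hoc argument.

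That second argument also has a problem at the boundary. Squaring the integral form of the transport equation and applying weighted Cauchy--Schwarz does give bulk control of $\int_{\check{I}} r^{2m+k-3}f^2$ for $0<k<2$ (with the $k=2$ logarithm you flag), but it does not deliver the flux term $\int_{\check{O}_{\tau_2}}r^{2m+k-2}f^2$ on the left-hand side of the lemma. If you try to extract that flux by evaluating your pointwise bound at $u=\tau_2$ and multiplying by $r(\tau_2,v)^{k-2}$, the factor $r(\tau_2,v)^{k-2}$ dominates $r(\tau_1,v)^{k-2}$ for $k<2$ (since $r(\tau_2,v)<r(\tau_1,v)$), so you do not recover $\int_{\check{O}_{\tau_1}}r^{2m+k-2}f^2$ on the right with the correct weight. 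In short, the missing ingredient is $\mathrm{div}(\partial_u)=-\sqrt{D}/r$: incorporating it makes the argument both simpler and uniform in $k>0$, and renders the case split and the direct-integration detour unnecessary.
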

	\begin{proof}
		Multiplying (\ref{f-F misc}) with $ 2r^m\cdot f $ we obtain 
		\begin{align*}
			2 r^m f \partial_u(r^m f) &= -\dfrac{D}{r} r^m f \cdot F\\
			\Rightarrow \quad \partial_u \left(r^{2m} f^2\right) &=  -\dfrac{D}{r} r^m f \cdot F.
		\end{align*}
		To obtain the best possible decay in r we multiply the above relation with a factor of $ r^{q} $  which yields \begin{align*}
			\partial_u \left(r^{2m+q}f^2 \right) + q\dfrac{D}{2}r^{2m+q-1} f^2&= -\dfrac{D}{r} r^{m+q}f\cdot
			F \\
			&= - D r^{q-1} \cdot (r^{m}f) \cdot F 
			\\ &\leq  \varepsilon \dfrac{D}{2} r^{q-1} r^{2m}f^2 + \dfrac{D}{2\varepsilon} r^{q-1} F^2.
		\end{align*}
		Then, for $ \varepsilon = \frac{q+2}{2} >0 $ we obtain \begin{align*}
			\partial_u \left(r^{2m+q}f^2 \right) + (q-2)\dfrac{D}{4}r^{2m+q-1} f^2 \leq \dfrac{r^{q-1}}{q+2}D F^2.
		\end{align*}
		Now, in view of $ div\left (\frac{\partial}{\partial_u}\right ) = - \frac{\sqrt{D}}{r} $ we may rewrite the above estimate as \begin{align*}
			div \left( r^{2m+q} f^2  \frac{\partial}{\partial_u} \right) + r^{2m+q}\dfrac{\sqrt{D}}{r}f^2 +  (q-2)\dfrac{D}{4}r^{2m+q-1} f^2 \leq   \dfrac{r^{q-1}}{q+2}D F^2
		\end{align*}
		\begin{align}
			\Rightarrow \quad 	div \left( r^{2m+q} f^2 \frac{\partial}{\partial_u} \right) + r^{2m+q-1}\sqrt{D}\left(1+\frac{(q-2)\sqrt{D}}{4}\right)f^2 \leq   \dfrac{r^{q-1}}{q+2}D F^2. \label{div-misc}
		\end{align}
		Thus, integrating in $ \I $ and applying the divergence theorem readily yields \begin{align*}
			\int_{\check{O}_{\tau_2}} r^{2m+q}f^2 + \int_{\I}r^{2m+q-1}(4+(q-2)\sqrt{D}) f^2 \leq C \int_{\check{O}_{\tau_1}} r^{2m+q}f^2 + C \int_{\I} \dfrac{r^{q-1}}{q+2}F^2
		\end{align*} 
		for a positive constant $ C  $ depending on $ R $. In particular, in view of $ q+2>0 $, we can consider $ q=k-2 $ for $ k > 0 $ and conclude the estimate.

	\end{proof}
	\paragraph{Control for the remaining first-order derivatives.} It is clear that using the transport equation (\ref{f-F misc}), we control    $ \partial_u f $ in terms of $ F $ and $ f $. In addition, 
	using the above Lemma we may also produce energy estimates for $ \partial_{v}f $ as well. Indeed, by taking a $ \partial_v $-derivative of (\ref{f-F misc}) we obtain the relating equation \begin{align*}
		\partial_{u}\left( r^{m}\partial_vf\right)= -m\cdot\partial_u \left(\dfrac{D}{2r}\right)r^{m}f+\left(m\cdot \left(\dfrac{D}{2r}\right)^2-\partial_v\left(\dfrac{D}{2r}\right)  \right)F   - \dfrac{D}{2r}\partial_v(F)   
	\end{align*}
	Then, using Lemma \ref{transport-3} for $ \partial_vf $ with $ k+2 $ instead of $ k $, and using Cauchy Schwartz yields \begin{align*}
		\begin{aligned}
			\int_{\check{O}_{\tau_2}} r^{2m+k}(\partial_vf)^2 + \int_{\I}r^{2m+k-1}(\partial_vf)^2\ \leq& \  C \int_{\check{O}_{\tau_1}} r^{2m+k}(\partial_vf)^2 \\ &+ C \int_{\I} r^{k+2m-3}f^2+ r^{k-3}F^2 + r^{k-1} (\partial_vF)^2
		\end{aligned}
	\end{align*}
	\begin{align}
		\begin{aligned}
			\Rightarrow \quad 		\int_{\check{O}_{\tau_2}} r^{2m+k}(\partial_vf)^2 + \int_{\I}r^{2m+k-1}(\partial_vf)^2\ \leq& \ C \int_{\check{O}_{\tau_1}} r^{2m+k}(\partial_vf)^2 + r^{2m+k-2} f^2 \\
			& +  C \int_{\I}  r^{k-3}F^2 + r^{k-1} (\partial_vF)^2.
		\end{aligned}
	\end{align}
	Finally, in view of $ \abs{\slas{\nabla} f} = \slas{g}^{AB}\slas{\nabla}_{A}f\slas{\nabla}_{B}f = \frac{1}{r^2}(\partial_{\vartheta}f)^2 + \frac{1}{r^2\sin^2\vartheta}(\partial_{\varphi}f)^2 $, we can obtain estimates for the angular derivatives of $ f $ in terms of  $ F $ using (\ref{f-F misc}) as \begin{align*}
		\partial_u\left (r^{m+1} \dfrac{\partial_{\vartheta}f}{r}\right ) &= -\dfrac{D}{2r}\partial_{\vartheta}F \\
		\partial_u\left (r^{m+1} \dfrac{\partial_{\varphi}f}{r\sin\vartheta}\right ) &= -\dfrac{D}{2r\sin\vartheta}\partial_{\varphi}F 
	\end{align*}
	Using the above relations, we apply lemma \ref{transport-3} twice to obtain \begin{align}
		\int_{\check{O}_{\tau_2}} r^{2m+k} \abs{\slas{\nabla}f}^2 + \int_{\I}r^{2m+k-1} \abs{\slas{\nabla}f}^2 \leq C \int_{\check{O}_{\tau_1}} r^{2m+k}\abs{\slas{\nabla}f}^2 + C \int_{\I} r^{k-1}\abs{\slas{\nabla}F}^2.
	\end{align}
	\paragraph{Energy estimates near null infinity $ \I $.}
	Using the above lemma we obtain estimates for $ h_{\mathfrak{f}},\   {h_{\tilde{\beta}_{\ell}}}$ which satisfy (\ref{transport + spin scalar}) and  with respect to the double null coordinate system they read \begin{align}
		\partial_u(r^2 h_{\mathfrak{f}}) &= \dfrac{D(r)}{2r} \phi \label{trans-misc1} \\ 
		\partial_u(r^4 {h_{\tilde{\beta}_{\ell}}}) &= \dfrac{D(r)}{2r} \psi \label{trans-misc2}
	\end{align}
	In order to obtain bounds with respect to initial data of $ \p^{^{(\ell)}} $, we will be working with the spherical harmonics decomposition of the scalars above. In particular, we have the following estimates
	\begin{corollary} \label{h estimates away}
		For any $ k>0 $ the following estimates hold for a constant $ C $ depending on $ k, M, R.$\begin{align}
			\begin{aligned}
				\int_{\check{O}_{\tau_2}}& r^{k+2}({h_{\mathfrak{f}_{\ell}}})^{2}+ r^{k+4}(\partial_v{h_{\mathfrak{f}_{\ell}}})^2 + r^{k+4}\abs{\slas{\nabla}{h_{\mathfrak{f}_{\ell}}}}^2 + \int_{\I}r^{k+1}{h_{\mathfrak{f}_{\ell}}}^{2}+ r^{k+3}(\partial_v{h_{\mathfrak{f}_{\ell}}})^2 + r^{k+3}\abs{\slas{\nabla}{h_{\mathfrak{f}_{\ell}}}}^2 \\ &\leq  C \int_{\check{O}_{\tau_1}} r^{k+2}{h_{\mathfrak{f}_{\ell}}}^{2}+ r^{k+4}(\partial_v{h_{\mathfrak{f}_{\ell}}})^2 + r^{k+4}\abs{\slas{\nabla}{h_{\mathfrak{f}_{\ell}}}}^2 + C \int_{\I}r^{k-3}\phi_{\ell}^{2}+ r^{k-1}(\partial_v\phi_{\ell})^2 + r^{k-1}\abs{\slas{\nabla}\phi_{\ell}}^2 
			\end{aligned} \\
			\begin{aligned}
				\int_{\check{O}_{\tau_2}}& r^{k+6}h_{\tilde{\beta}_{\ell}}^{2}+ r^{k+8}(\partial_vh_{\tilde{\beta}_{\ell}})^2 + r^{k+8}\abs{\slas{\nabla}h_{\tilde{\beta}_{\ell}}}^2 + \int_{\I}r^{k+5}h_{\tilde{\beta}_{\ell}}^{2}+ r^{k+7}(\partial_vh_{\tilde{\beta}_{\ell}})^2 + r^{k+7}\abs{\slas{\nabla}h_{\tilde{\beta}_{\ell}}}^2. \\ &\leq  C \int_{\check{O}_{\tau_1}} r^{k+6}h_{\tilde{\beta}_{\ell}}^{2}+ r^{k+8}(\partial_vh_{\tilde{\beta}_{\ell}})^2 + r^{k+8}\abs{\slas{\nabla}h_{\tilde{\beta}_{\ell}}}^2 + C \int_{\I}r^{k-3}{\psi_{\ell}}^{2}+ r^{k-1}(\partial_v{\psi_{\ell}})^2 + r^{k-1}\abs{\slas{\nabla}{\psi_{\ell}}}^2. 
			\end{aligned}
		\end{align}
	\end{corollary}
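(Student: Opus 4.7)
The plan is to apply Lemma \ref{transport-3} separately to $h_{\mathfrak{f}_\ell}$ and $h_{\tilde{\beta}_\ell}$, exploiting the fact that the transport equations \eqref{trans-misc1}, \eqref{trans-misc2} projected on the $\ell$-harmonic are exactly of the schematic form \eqref{f-F misc}. Specifically, one reads off $m=2$, $f = h_{\mathfrak{f}_\ell}$, $F = -\phi_\ell$ in \eqref{trans-misc1}, and $m=4$, $f = h_{\tilde{\beta}_\ell}$, $F = -\psi_\ell$ in \eqref{trans-misc2}. Plugging these values into the conclusion of Lemma \ref{transport-3} gives, for any $k>0$,
\begin{align*}
\int_{\check{O}_{\tau_2}} r^{k+2} h_{\mathfrak{f}_\ell}^2 + \int_{\I} r^{k+1} h_{\mathfrak{f}_\ell}^2 &\leq C \int_{\check{O}_{\tau_1}} r^{k+2} h_{\mathfrak{f}_\ell}^2 + C \int_{\I} r^{k-3} \phi_\ell^2, \\
\int_{\check{O}_{\tau_2}} r^{k+6} h_{\tilde{\beta}_\ell}^2 + \int_{\I} r^{k+5} h_{\tilde{\beta}_\ell}^2 &\leq C \int_{\check{O}_{\tau_1}} r^{k+6} h_{\tilde{\beta}_\ell}^2 + C \int_{\I} r^{k-3} \psi_\ell^2,
\end{align*}
which accounts for the zeroth-order pieces in each estimate.

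Next, to control the $\partial_v$-derivatives, I follow verbatim the paragraph titled \emph{Control for the remaining first-order derivatives} in Section \ref{f,beta estimates}. Differentiating \eqref{trans-misc1} and \eqref{trans-misc2} in $\partial_v$ and using the commutation $[\partial_u,\partial_v]=0$ yields transport equations for $\partial_v h_{\mathfrak{f}_\ell}$ and $\partial_v h_{\tilde{\beta}_\ell}$ of the form \eqref{f-F misc} but with additional lower-order source terms proportional to $h_{\mathfrak{f}_\ell}$, $\phi_\ell$, $\partial_v\phi_\ell$ (respectively $h_{\tilde{\beta}_\ell}$, $\psi_\ell$, $\partial_v\psi_\ell$), since $\partial_v(D/r)$ and $(D/r)^2$ are bounded on $\{r\geq R\}$. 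Applying Lemma \ref{transport-3} to these equations with $k$ shifted to $k+2$, and absorbing the extra source terms by Cauchy--Schwarz, produces the desired bounds
$$\int_{\check{O}_{\tau_2}} r^{k+4}(\partial_v h_{\mathfrak{f}_\ell})^2 + \int_{\I} r^{k+3}(\partial_v h_{\mathfrak{f}_\ell})^2 \ \leq\ \text{RHS}$$
(and analogously for $h_{\tilde{\beta}_\ell}$), with the zeroth-order estimate already derived in the previous paragraph feeding back in to close the $\int r^{2m+k-2}f^2$ term on the right-hand side.

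For the angular derivatives I would proceed exactly as in the excerpt: since $[\partial_u,\slas\nabla]=0$ and $|\slas\nabla f|^2$ equals $\frac{1}{r^2}(\partial_\vartheta f)^2 + \frac{1}{r^2\sin^2\vartheta}(\partial_\varphi f)^2$, the scalars $r\partial_\vartheta h_{\mathfrak{f}_\ell}$ and $(r\sin\vartheta)^{-1}\partial_\varphi h_{\mathfrak{f}_\ell}$ again satisfy transport equations of the form \eqref{f-F misc} with $m$ as above and source terms built from angular derivatives of $\phi_\ell$ (respectively $\psi_\ell$). One more application of Lemma \ref{transport-3} gives the $|\slas\nabla h_{\mathfrak{f}_\ell}|^2$ and $|\slas\nabla h_{\tilde{\beta}_\ell}|^2$ bounds with the weights matching those of the zeroth-order estimates raised by two, and with $r^{k-1}|\slas\nabla\phi_\ell|^2$, $r^{k-1}|\slas\nabla\psi_\ell|^2$ as the nonhomogeneous data on $\I$.

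Adding the three estimates in each case and absorbing the $h$-initial data into the right-hand side over $\check O_{\tau_1}$ yields the two displayed estimates. There is no real obstacle here beyond bookkeeping: all the analytical work is already done in Lemma \ref{transport-3} and the comments that follow it, and the corollary is essentially a direct specialisation of that machinery to the specific values $m=2,4$ dictated by the transport equations \eqref{transport + spin scalar}. The only point to be careful about is that the constant $C$ picks up dependence on $k$ through the factors $1/(q+2)$ in the proof of Lemma \ref{transport-3} (and on $R$ through the localisation near $\I$), which is consistent with the claim.
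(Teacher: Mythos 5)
Your proposal is correct and is precisely the argument the paper has in mind: the corollary is the specialisation of Lemma \ref{transport-3} (and the two derivative-commutation remarks that immediately follow it) to the transport equations \eqref{trans-misc1}, \eqref{trans-misc2} with $m=2$ and $m=4$ respectively, and your bookkeeping of the exponents $2m+k-2$, $2m+k$, etc., and of the right-hand-side source terms, is exactly right. The sign $F=-\phi_\ell$ (resp.\ $F=-\psi_\ell$) is immaterial since only $F^2$ enters the estimate, and you correctly note that the zeroth-order bound closes the $r^{2m+k-2}f^2$ term that the $\partial_v$-commuted estimate produces on its right-hand side.
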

	
	\paragraph{Energy estimates in the intermediate region $ \br{r_c\leq r\leq R} $.} Given $ r_c\in(M,2M) $ and $ R>2M $ introduced in Section \ref{r-p}, we can apply the divergence theorem for (\ref{div-misc}) to obtain bounds for $ {h_{\mathfrak{f}_{\ell}}}, {h_{\tilde{\beta}_{\ell}}} $ in the intermediate region $\mathcal{D}(0,\tau):=R(0,\tau)\cap \br{r_c\leq r\leq R}.$ \\
	We no longer need to pay attention in the weights in $ r $ and thus obtain the following estimates 
	\begin{align}
		\begin{aligned}
			\int_{\check{\Sigma}_{\tau_2}\cap\mathcal{D}}& \frac{1}{r^2} {h_{\mathfrak{f}_{\ell}}}^2 + (\partial_v {h_{\mathfrak{f}_{\ell}}})^2 + (\partial_u {h_{\mathfrak{f}_{\ell}}})^2 + \abs{\slas{\nabla}{h_{\mathfrak{f}_{\ell}}}}^2 + 		\int_{\mathcal{D}(0,\tau)} \frac{1}{r^2}{h_{\mathfrak{f}_{\ell}}}^2 + (\partial_v {h_{\mathfrak{f}_{\ell}}})^2 + (\partial_u {h_{\mathfrak{f}_{\ell}}})^2 + \abs{\slas{\nabla}{h_{\mathfrak{f}_{\ell}}}}^2 \\ & \leq C 		\int_{\check{\Sigma}_{\tau_1}\cap\mathcal{D}} \left( \frac{1}{r^2}{h_{\mathfrak{f}_{\ell}}}^2 + (\partial_v {h_{\mathfrak{f}_{\ell}}})^2 + (\partial_u {h_{\mathfrak{f}_{\ell}}})^2 + \abs{\slas{\nabla}{h_{\mathfrak{f}_{\ell}}}}^2 + \dfrac{1}{r^2}\phi_{\ell}^2\right) + 		C\int_{\mathcal{D}(0,\tau)} \frac{1}{r^2}\phi_{\ell}^2 + (\partial_v\phi_{\ell})^2 +\abs{\slas{\nabla}\phi_{\ell}}^2 \\
			&  \hspace{1cm}+ C\int_{\mathcal{D}\cap \br{r=R}}\frac{1}{r^2}{h_{\mathfrak{f}_{\ell}}}^2 +  (\partial_v {h_{\mathfrak{f}_{\ell}}})^2 + \abs{\slas{\nabla}{h_{\mathfrak{f}_{\ell}}}}^2. 
		\end{aligned}
	\end{align} 
	Note, using the relation (\ref{trans-misc1}) we were able to  obtain  bounds for $ (\partial_u{h_{\mathfrak{f}_{\ell}}})^2 $ as well, explaining where the extra term $ \frac{1}{r^2}\phi_{\ell}^2 $ on the $ \check{\Sigma}_{\tau_1}\cap \mathcal{D} $ integral comes from. In addition, by applying the divergence theorem in $ \mathcal{D}(0,\tau) $ we also obtain on the right-hand side a timelike boundary term which is controlled altogether by the right-hand side of the estimates in Corollary \ref{h estimates away}.
	
	Thus, using the Morawetz estimate of Theorem \ref{Spacetime non degenerate photon estimate} and Proposition \ref{posT} we may rewrite the above estimate with respect to the $ (t,r^{\star}) $ coordinate system as \begin{align}
		\begin{aligned}
			\int_{\check{\Sigma}_{\tau_2}\cap\mathcal{D}}&\frac{1}{r^2} {h_{\mathfrak{f}_{\ell}}}^2 + (\partial_t {h_{\mathfrak{f}_{\ell}}})^2 + (\partial_{r^{\star}} {h_{\mathfrak{f}_{\ell}}})^2 + \abs{\slas{\nabla}{h_{\mathfrak{f}_{\ell}}}}^2 + \int_{\mathcal{D}(0,\tau)} \frac{1}{r^2}{h_{\mathfrak{f}_{\ell}}}^2 + (\partial_t {h_{\mathfrak{f}_{\ell}}})^2 + (\partial_{r^{\star}} {h_{\mathfrak{f}_{\ell}}})^2 + \abs{\slas{\nabla}{h_{\mathfrak{f}_{\ell}}}}^2 \\ \leq
			&\  C\ \sum_{i=1}^{2}\left( \int_{\check{\Sigma}_1} J_{\mu}^T[\Psi_i]n_{\Sigma_{\tau_1}}^{\mu}+J_{\mu}^T[T\Psi_i]n_{\check{\Sigma}_1}^{\mu}\right) +  C	\int_{\check{\Sigma}_{\tau_1}\cap\mathcal{D}}\frac{1}{r^2} {h_{\mathfrak{f}_{\ell}}}^2 + (\partial_t {h_{\mathfrak{f}_{\ell}}})^2 + (\partial_{r^{\star}} {h_{\mathfrak{f}_{\ell}}})^2 + \abs{\slas{\nabla}{h_{\mathfrak{f}_{\ell}}}}^2 \\
			&  \hspace{8cm}+ C\int_{\mathcal{D}\cap \br{r=R}}\frac{1}{r^2}{h_{\mathfrak{f}_{\ell}}}^2 +  (\partial_v {h_{\mathfrak{f}_{\ell}}})^2 + \abs{\slas{\nabla}{h_{\mathfrak{f}_{\ell}}}}^2. \label{h-D-estimates}
		\end{aligned}
	\end{align}
	The same estimate also holds for $ {h_{\tilde{\beta}_{\ell}}} $ in the region $ \mathcal{D}(0,\tau). $
	
	\paragraph{Energy estimates near and including the event horizon $ \h $.} In this paragraph, we will be using the coordinate system $ (v,r,\vartheta,\varphi) $ which is regular on  the horizon $\h.  $ Using the transport equations (\ref{transport + spin scalar}) we obtain the corresponding Lemma  \ref{transport-3}, where all scalars involved are regular on the horizon $ \h. $ Simply by repeating the ideas above we obtain the following estimates in the region $ \mathcal{A}_c := \mathcal{R}(0,\tau) \cap \br{M\leq r\leq r_c} $  \begin{align*}
		\begin{aligned}
			\int_{\check{\Sigma}_{\tau_2}\cap \mathcal{A}_c}& \frac{1}{r^2} {h_{\mathfrak{f}_{\ell}}}^2 + (\partial_v{h_{\mathfrak{f}_{\ell}}})^2 + (\partial_r{h_{\mathfrak{f}_{\ell}}})^2 +\abs{\slas{\nabla}{h_{\mathfrak{f}_{\ell}}}}^2 +  \int_{ \mathcal{A}_c}\frac{1}{r^2} {h_{\mathfrak{f}_{\ell}}}^2 + (\partial_v{h_{\mathfrak{f}_{\ell}}})^2 + (\partial_r{h_{\mathfrak{f}_{\ell}}})^2 +\abs{\slas{\nabla}{h_{\mathfrak{f}_{\ell}}}}^2   \\ 
			& \leq C 		\int_{\check{\Sigma}_{\tau_1}\cap \mathcal{A}_{c}} \left( {h_{\mathfrak{f}_{\ell}}}^2 + (\partial_v {h_{\mathfrak{f}_{\ell}}})^2  + \abs{\slas{\nabla}{h_{\mathfrak{f}_{\ell}}}}^2 + \dfrac{1}{r^2}\phi_{\ell}^2\right) + 		C\int_{\mathcal{A}_c}\frac{1}{r^2} \phi_{\ell}^2 + (\partial_v\phi_{\ell})^2 +\abs{\slas{\nabla}\phi_{\ell}}^2 \\
			&  \hspace{1cm}+ C\int_{\mathcal{A}_c\cap \br{r=R}}\frac{1}{r^2}{h_{\mathfrak{f}_{\ell}}}^2 +  (\partial_v {h_{\mathfrak{f}_{\ell}}})^2 + \abs{\slas{\nabla}{h_{\mathfrak{f}_{\ell}}}}^2.
		\end{aligned}
	\end{align*} 
	Then, using Proposition \ref{posT} and Theorem \ref{Nuniform } we obtain \begin{align}
		\begin{aligned}
			\int_{\check{\Sigma}_{\tau_2}\cap \mathcal{A}_c}&\frac{1}{r^2} {h_{\mathfrak{f}_{\ell}}}^2 + (\partial_v{h_{\mathfrak{f}_{\ell}}})^2 + (\partial_r{h_{\mathfrak{f}_{\ell}}})^2 +\abs{\slas{\nabla}{h_{\mathfrak{f}_{\ell}}}}^2 +  \int_{ \mathcal{A}_c} \frac{1}{r^2} {h_{\mathfrak{f}_{\ell}}}^2 + (\partial_v{h_{\mathfrak{f}_{\ell}}})^2 + (\partial_r{h_{\mathfrak{f}_{\ell}}})^2 +\abs{\slas{\nabla}{h_{\mathfrak{f}_{\ell}}}}^2   \\ 
			& \leq C 		\int_{\check{\Sigma}_{\tau_1}\cap \mathcal{A}_{c}} \left(\frac{1}{r^2} {h_{\mathfrak{f}_{\ell}}}^2 + (\partial_v {h_{\mathfrak{f}_{\ell}}})^2  + \abs{\slas{\nabla}{h_{\mathfrak{f}_{\ell}}}}^2\right) + C \sum_{i=1}^{2}\int_{\check{\Sigma}_{\tau_1}}J^{N}_{\mu}[\p]n_{\check{\Sigma}}^{\mu}\\
			&  \hspace{1cm}+ C\int_{\mathcal{A}_c\cap \br{r=R}}\frac{1}{r^2} {h_{\mathfrak{f}_{\ell}}}^2 +  (\partial_v {h_{\mathfrak{f}_{\ell}}})^2 + \abs{\slas{\nabla}{h_{\mathfrak{f}_{\ell}}}}^2		 \label{h-A-estimates}
		\end{aligned}
	\end{align} 
	
	\paragraph{Combining all estimates above.}
	Now, in order to write all the above estimates more concisely we introduce the following energy quantities. For any scalar $ h $ regular up to and including the horizon  $  \h$, we consider 
	\begin{align}
		E_{\geq R}^{p}[h](\tau) :=& 	\int_{\check{O}_{\tau}} r^{p+1}{h}^{2}+ r^{p+3}(\partial_v{h})^2 + r^{p+3}\abs{\slas{\nabla}{h}}^2, \hspace{1cm}  p>0 \\
		E_{c,R}[h](\tau) : =& 	\int_{\check{\Sigma}_{\tau}\cap\mathcal{D}}\frac{1}{r^2} {h}^2 + (\partial_t h)^2 + (\partial_{r^{\star}} h)^2 + \abs{\slas{\nabla}{h}}^2 \\
		E_{\mathcal{A}_c}[h](\tau) :=& \int_{\check{\Sigma}_{\tau}\cap \mathcal{A}_c}\frac{1}{r^2} {h}^2 + (\partial_v{h})^2 + (\partial_r{h})^2 +\abs{\slas{\nabla}{h}}^2
	\end{align}
where in each region we use the associated coordinate system as done earlier.
	In addition, we also denote by \begin{align}
		E^{p}[h](\tau):= 	E_{\geq R}^{p}[h](\tau) + 	E_{c,R}[h](\tau) + 	E_{\mathcal{A}_c}[h](\tau).
	\end{align}
We are now ready to show the following proposition.
		\begin{proposition} Let $ \mathfrak{f}, \tilde{\beta} $ be  solutions to the Teukolsky system and consider the induced scalars
			$ h_{\mathfrak{f}} = r^2\dd\DD \mathfrak{f}_{\star} $, $  {h_{\tilde{\beta}_{\ell}}}= r^2\dd\DD \tilde{\beta}_{\star},  $   then \begin{align}
				E^{1-\delta}[h_{\mathfrak{f}_{\ell}}](\tau) \leq C \tilde{E}^{3-\delta}[h_{\mathfrak{f}_{\ell}}](0)\dfrac{1}{\tau^2},  \hspace{1cm} \forall \ell \geq 2,
			\end{align}
			where $ \tilde{E}^{3-\delta}[h_{\mathfrak{f}_{\ell}}](0)$ is the right-hand side of (\ref{h-misc-12}) at $ \check{\Sigma}_{0}. $ Similarly, we have \begin{align}
				E^{5-\delta}[h_{\tilde{\beta}_{\ell}}](\tau) \leq C \tilde{E}^{7-\delta}[h_{\tilde{\beta}_{\ell}}](0)\dfrac{1}{\tau^2}, \hspace{1cm} \forall \ell \geq 1.
			\end{align}
		\end{proposition}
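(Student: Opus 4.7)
The plan is to combine the three types of estimates already derived in this section, namely the $r^{p}$--weighted transport estimates of Corollary~\ref{h estimates away} near $\mathcal{I}^{+}$, the intermediate estimate (\ref{h-D-estimates}), and the near--horizon estimate (\ref{h-A-estimates}), into a single $r^{p}$--hierarchy for $h_{\mathfrak{f}_{\ell}}$ and $h_{\tilde{\beta}_{\ell}}$, and then to close decay via a dyadic--sequence argument analogous to the one used for $\Psi_{i}^{^{(\ell)}}$ in Proposition~\ref{non-degen decay}. The key observation is that the source terms on the right--hand sides of the transport estimates are controlled by quadratic expressions in $\phi_{\ell},\psi_{\ell}$, which in turn satisfy the $r^{p}$--hierarchy of Proposition~\ref{r_p} and the energy decay estimates of Section~\ref{r-p} (in particular the $1/\tau^{2}$ decay of the non--degenerate energy of $\Psi_{i}^{^{(\ell)}}$ for $\ell\geq 2+(-1)^{i}$, which covers the relevant range after decomposing $\phi_{\ell},\psi_{\ell}$ according to Corollary~\ref{inverse relation}).

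First, I would apply Corollary~\ref{h estimates away} with $k=1-\delta$ for $h_{\mathfrak{f}_{\ell}}$ and $k=5-\delta$ for $h_{\tilde{\beta}_{\ell}}$, together with the analogous estimate obtained by commuting the transport equations (\ref{trans-misc1})--(\ref{trans-misc2}) with $\partial_{v}$ and with the angular momentum operators $\Omega_{j}$, to bound $E_{\geq R}^{1-\delta}[h_{\mathfrak{f}_{\ell}}](\tau_{2})$ (respectively $E_{\geq R}^{5-\delta}[h_{\tilde{\beta}_{\ell}}](\tau_{2})$) by the initial energy $E_{\geq R}^{1-\delta}[h_{\mathfrak{f}_{\ell}}](\tau_{1})$ plus the spacetime source
\[
\int_{\I} r^{-2-\delta}\phi_{\ell}^{2} + r^{-\delta}(\partial_{v}\phi_{\ell})^{2} + r^{-\delta}|\slas{\nabla}\phi_{\ell}|^{2},
\]
and the analogous source in $\psi_{\ell}$. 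By Proposition~\ref{r_p} applied to $\Psi_{i}^{^{(\ell)}}$, this source is in turn controlled by the $r^{3-\delta}$--weighted initial data of $\phi_{\ell},\psi_{\ell}$ plus the initial $T$--flux, which absorbs into $\tilde{E}^{3-\delta}[h_{\mathfrak{f}_{\ell}}](0)$, respectively $\tilde{E}^{7-\delta}[h_{\tilde{\beta}_{\ell}}](0)$. Adding (\ref{h-D-estimates}) and (\ref{h-A-estimates}), the intermediate and near--horizon contributions are bounded by the $N$--fluxes of $\Psi_{1},\Psi_{2}$, which are also absorbed. The outcome is the hierarchy
\[
E^{p}[h](\tau_{2}) + \int_{\tau_{1}}^{\tau_{2}} E^{p-1}[h](\tau)\,d\tau \ \leq\ C\, E^{p}[h](\tau_{1}) + C\, \tilde{E}^{p+2}[h](0),
\]
where the bulk $E^{p-1}$ term is produced by the degenerate factors $\sqrt{D}$ in (\ref{div-misc}) together with the lower--weight source control.

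From this hierarchy, the dyadic--sequence argument of Proposition~\ref{non-degen decay} then applies verbatim: one constructs a sequence $\{\tau_{n}\}$ with $2\tau_{n}\leq\tau_{n+1}\leq 3\tau_{n}$ along which $E^{p-1}[h](\tau_{n})\lesssim \tilde{E}^{p+1}[h](0)/\tau_{n}$, uses uniform boundedness of $E^{p-1}[h](\tau)$ (which follows from the same hierarchy at weight $p-1$) to pass from the dyadic sequence to all $\tau\geq 1$, and iterates one more time with weight $p-1$ in place of $p$ to upgrade $1/\tau$ decay to $1/\tau^{2}$ for $E^{p-2}[h](\tau)$. Relabelling $p-2\mapsto p$ and tracking the shift $+2$ in the initial--data weight yields the stated estimates.

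The main obstacle I expect is bookkeeping the precise $r$--weights in step one so that every source integral is both finite (requires $\delta>0$ to kill the logarithmic divergence at $\mathcal{I}^{+}$) and absorbable into the Regge--Wheeler $r^{p}$--hierarchy whose admissible range is $0<p\leq 2$ by Proposition~\ref{r_p}; this is what forces the weights $1-\delta$ and $5-\delta$ rather than integers, and what fixes the loss of $r^{2}$ between $E^{p}[h](\tau)$ and $\tilde{E}^{p+2}[h](0)$. The second delicate point is ensuring that the spacetime bulk produced by the transport Lemma~\ref{transport-3} really dominates the full $E^{p-1}[h]$ (including the $(\partial_{v}h)^{2}$ and $|\slas{\nabla}h|^{2}$ contributions), which requires commuting the transport equations with $\partial_{v}$ and $\Omega_{j}$ \emph{before} applying Lemma~\ref{transport-3} and not after; the remaining $(\partial_{r}h)^{2}$ and $(\partial_{u}h)^{2}$ contributions are then recovered algebraically from the transport equations (\ref{transport + spin scalar}) and the control of $\phi_{\ell},\psi_{\ell}$.
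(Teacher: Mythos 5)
Your strategy is essentially the paper's: combine the $r^p$--weighted transport estimate of Corollary~\ref{h estimates away} near $\mathcal{I}^+$ with the intermediate estimate (\ref{h-D-estimates}) and the near--horizon estimate (\ref{h-A-estimates}), control the $\phi_\ell,\psi_\ell$ sources via Proposition~\ref{r_p} and the $N$/$T$--flux estimates, then iterate and close with the dyadic sequence argument of Proposition~\ref{non-degen decay}. One correction in the bookkeeping: Corollary~\ref{h estimates away} should be applied with the \emph{same} parameter $k=1-\delta$ for both $h_{\mathfrak{f}_\ell}$ and $h_{\tilde\beta_\ell}$ (not $k=5-\delta$ for the latter), since the corollary is already written with the $r^4$--weight shift for $\tilde\beta$ built into its left--hand side; with $k=5-\delta$ the source would carry weight $r^{4-\delta}(\partial_v\psi_\ell)^2$, which lies outside the admissible range $0<p\leq 2$ of Proposition~\ref{r_p} and cannot be absorbed. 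Your subsequently listed source weights ($r^{-2-\delta}$, $r^{-\delta}$) are in fact the $k=1-\delta$ values, so you applied the right estimate and simply misstated the parameter. Similarly, the displayed hierarchy $E^p(\tau_2)+\int E^{p-1}\leq C E^p(\tau_1)+C\tilde E^{p+2}(0)$ is schematic: the paper's single step loses one weight ($\int E^{1-\delta}\leq \tilde E^{2-\delta}(\tau_1)$) and the $+2$ shift only accumulates after iterating twice, which is consistent with your final relabelling.
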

	\begin{proof}
			Let $ 0<\delta\ll1 $, then using Corollary \ref{h estimates away} for $ k=1-\delta $,  estimates (\ref{h-D-estimates}, \ref{h-A-estimates}) and coarea formula we obtain \begin{align}	
			\begin{aligned}
				\int_{\tau_1}^{\tau_2} E^{1-\delta} [h_{\mathfrak{f}_{\ell}}](\tilde{\tau})d\tilde{\tau}\ \leq\ C &  E^{2-\delta} [h_{\mathfrak{f}_{\ell}}](\tau_1)  + C\sum_{i=1}^{2}\int_{\check{\Sigma}_{\tau_1}}J_{\mu}^{N}[\p] n^{\mu}_{\check{\Sigma}} \\  &+ C\sum_{i=1}^{2}\int_{\check{\Sigma}_{\tau_1}}J_{\mu}^{N}[T\p] n^{\mu}_{\check{\Sigma}} + C \int_{\check{O}_{\tau_1}}r^{1-\delta} \dfrac{(\partial_v\Phi_i)^2}{r^2}.
			\end{aligned}
		\end{align}
		Let us denote by $ \tilde{E}^{2-\delta}[h_{\mathfrak{f}_{\ell}}](\tau) $ the right-hand side of the above relation with $ \tau $ instead $ \tau_1. $ Then, we also need an estimate for \begin{align} \label{QQQ}
			\int_{\tau_1}^{\tau_2}  \tilde{E}^{2-\delta}[h_{\mathfrak{f}_{\ell}}](\tilde{\tau})d\tilde{\tau}.
		\end{align}
		However, applying the same steps as above along with the estimates of Proposition \ref{r_p} for $ k=2-\delta $, and using (\ref{NonDEB}) we obtain \begin{align}
			\begin{aligned}
				\int_{\tau_1}^{\tau_2}  \tilde{E}^{2-\delta}[h_{\mathfrak{f}_{\ell}}](\tilde{\tau})d\tilde{\tau}\ \leq \ C E^{3-\delta}[h_{\mathfrak{f}_{\ell}}]&(\tau_{1}) +  C\sum_{i=1}^{2}\int_{\check{\Sigma}_{\tau_1}}J_{\mu}^{N}[\p] n^{\mu}_{\check{\Sigma}}+ \\ +   &C\sum_{i=1}^{2}\Big(\mathcal{E}[\p](\tau_1)+ \mathcal{E}[T\p](\tau_1)\Big) + C \int_{\check{O}_{\tau_1}}\dfrac{1}{r^{\delta}} (\partial_v\Phi_i)^2. \label{h-misc-12}
			\end{aligned}
		\end{align}
		Therefore, repeating the argument of a dyadic sequence of Proposition \ref{non-degen decay} we show decay for the energy fluxes of the assumption. \\
	\end{proof}
\begin{remark}
	Note, we had to give some  $ \delta>0 $ ``space" when obtaining estimates for $ E^{1-\delta}[h_{\mathfrak{f}_{\ell}}]$ and $ E^{5-\delta}[h_{\tilde{\beta}_{\ell}}] $. That is because if we apply Corollary \ref{h estimates away} for $ k=2 $, instead of $ k=2-\delta $, then we cannot control the right-hand side $ \phi_{\ell},\psi_{\ell} $ terms using Proposition \ref{r_p}.
	
\end{remark}

		Once again, we repeat the $ \mathbb{S}^{2} $ estimates of Section \ref{Section - Regge Wheeler estimates} for the scalars $ \left(r^{\frac{4-\delta}{2}}h_{\mathfrak{f}}\right), \left(r^{\frac{8-\delta}{2}}h_{\tilde{\beta}}\right) $, and we commute with the angular momentum operators to obtain the pointwise estimates below, using Sobolev inequalities. In addition, repeating the ideas of this section for $ h_{\alpha_{\ell}} $ satisfying the transport equation (\ref{scalar relating a,f,b}) we conclude 
		\begin{corollary} \label{Decay teukolsky +}
			For any $ 0<\delta \ll 1 $, there exists $ C>0 $ depending on $ M, \check{\Sigma}_0, \delta $ and norms of initial data such that \begin{align}
			  	\abs{r^{\frac{5-\delta}{2}} h_{\mathfrak{f}_{\ell}}}\leq C \dfrac{1}{\tau}, \hspace{1cm} 	\abs{r^{\frac{9-\delta}{2}}h_{\tilde{\beta}_{\ell}}}\leq C \dfrac{1}{\tau}, \hspace{1cm} 	\abs{r^{\frac{4-\delta}{2}} h_{\alpha_{\ell}}} \leq C \dfrac{1}{\tau},  \label{a,f,b decay}
			\end{align}
		\end{corollary}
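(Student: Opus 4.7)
The plan is to convert the energy decay of the previous proposition into the claimed pointwise control through a weighted Sobolev argument on $\check{\Sigma}_\tau$, in direct analogy with the scheme used in Theorem~\ref{allfreqdecay}. For $h_{\mathfrak{f}_\ell}$, I would first consider the rescaled scalar $r^{(4-\delta)/2} h_{\mathfrak{f}_\ell}$ and, at each fixed $\tau$, apply the fundamental theorem of calculus in the $\rho$-coordinate of $\check{\Sigma}_\tau$:
\[ r^{5-\delta}\, h_{\mathfrak{f}_\ell}^2(\tau, r,\omega) \;=\; -\int_r^\infty \partial_\rho\!\left(\rho^{5-\delta}\, h_{\mathfrak{f}_\ell}^2\right) d\rho, \]
where $\partial_\rho = k(\rho)\,\partial_v + \partial_r$ for a bounded $k$. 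Expanding the $\rho$-derivative and applying Cauchy--Schwarz and H\"older on $\mathbb{S}^2$ bounds the right-hand side by quantities controlled by $E^{1-\delta}[h_{\mathfrak{f}_\ell}](\tau)$ together with an energy of $\partial_r h_{\mathfrak{f}_\ell}$; the latter reduces, via the transport identity (\ref{trans-misc1}), to an energy of $\phi_\ell$ already handled in Proposition~\ref{non-degen decay}. Commuting the whole scheme with the angular Killing operators $\Omega_j$, which by spherical symmetry of ERN commute with $\dd$, $\DD$, and $\slashed{\nabla}_{\partial_r}$, the same decay is obtained for $\Omega_j^a h_{\mathfrak{f}_\ell}$ with $|a|\leq 2$. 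Sobolev embedding on $\mathbb{S}^2$ then upgrades this to the pointwise estimate $|r^{(5-\delta)/2} h_{\mathfrak{f}_\ell}| \leq C\tau^{-1}$. The argument for $h_{\tilde{\beta}_\ell}$ is structurally identical, starting from $r^{(8-\delta)/2} h_{\tilde{\beta}_\ell}$ and using (\ref{trans-misc2}) together with $E^{5-\delta}[h_{\tilde{\beta}_\ell}](\tau)\leq C\tau^{-2}$.

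For the extreme curvature component $h_{\alpha_\ell}$, no decoupled wave equation is available, so I would instead exploit the scalar relating identity
\[ -\partial_r(r\cdot h_{\alpha_\ell}) \;=\; \frac{\mu^2}{2M}\,r^2\, h_{\tilde{\beta}_\ell} \;+\; 2\left(1-4\sqrt{D}\right) h_{\mathfrak{f}_\ell} \]
from (\ref{scalar relating a,f,b}), integrating in the $r$-direction along $\check{\Sigma}_\tau$. Inserting the pointwise bounds just derived, the right-hand side is majorized by $C\tau^{-1}\rho^{-(5-\delta)/2}$, which is integrable at infinity for $0<\delta\ll 1$. Integrating and using suitable decay of the initial data at $i^{0}$ to control the boundary term then yields $|r\cdot h_{\alpha_\ell}| \leq C\tau^{-1}\,r^{-(3-\delta)/2}$, which in particular implies the stated $|r^{(4-\delta)/2}h_{\alpha_\ell}|\leq C\tau^{-1}$. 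Once again, first commuting with $\Omega_j$ and then applying Sobolev on $\mathbb{S}^2$ gives the pointwise version.

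The main obstacle I anticipate is in the $h_{\alpha_\ell}$ step: integrating (\ref{scalar relating a,f,b}) requires a boundary value at either $i^{0}$ or a finite sphere, and controlling it demands either a decay assumption on the initial profile of $\alpha_\star$ at spacelike infinity, or an independent $L^\infty(\mathbb{S}^2)$-bound on a reference sphere obtained by a direct energy argument on the Teukolsky equation for $\alpha_\star$ coupled to the already-controlled source $\mathfrak{f}_\star$. Ensuring uniformity of the $\ell$-dependent constants arising through the factor $\mu^2=(\ell-1)(\ell+2)$ in (\ref{scalar relating a,f,b}), and combining the near-horizon, intermediate, and near-infinity regions of $\check{\Sigma}_\tau$ (with the weighted energies $E_{\mathcal{A}_c}$, $E_{c,R}$, $E^{1-\delta}_{\geq R}$) in a single chain of estimates, are the remaining technical points. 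All other ingredients are routine adaptations of the Sobolev and fundamental-theorem-of-calculus arguments already developed in Section~\ref{Section - Regge Wheeler estimates}.
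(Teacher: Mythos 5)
Your treatment of $h_{\mathfrak{f}_\ell}$ and $h_{\tilde{\beta}_\ell}$ tracks the paper's: rescale by the stated powers of $r$, run the fundamental-theorem-of-calculus/Sobolev argument of Section \ref{Section - Regge Wheeler estimates} on $\check{\Sigma}_\tau$, commute with the angular momentum operators, and use the transport relations (\ref{trans-misc1}), (\ref{trans-misc2}) to trade the $\partial_r$-derivative of $h$ for $\phi$, $\psi$, whose decay is given by Proposition \ref{non-degen decay}. The essential input is the energy decay $E^{1-\delta}[h_{\mathfrak{f}_\ell}]\lesssim\tau^{-2}$, $E^{5-\delta}[h_{\tilde{\beta}_\ell}]\lesssim\tau^{-2}$ just established.

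The $h_{\alpha_\ell}$ step, however, has a genuine gap. The relating identity (\ref{scalar relating a,f,b}) is a transport equation along the ingoing null direction $e_{3}^{\star}=-\partial_r$ at fixed $v$, which is \emph{not} tangent to $\check{\Sigma}_\tau$: for $r\geq R$ the leaf is the outgoing cone $\{u=\tau\}$, to which $e_3^\star$ is transversal, and for $r\leq R$ the tangent is $\partial_\rho = k(r)\partial_v + \partial_r$ with $k\neq 0$. If you integrate at fixed $v$, the endpoint $(v,r')$ with $r'\to\infty$ has $u = v - 2r^\star(r')\to-\infty$, so you march into the past, where the time-decay bounds for $h_{\mathfrak{f}_\ell}$, $h_{\tilde{\beta}_\ell}$ you want to insert do not apply and the boundary term sits at past infinity rather than at $i^0$. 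If instead you integrate along $\check{\Sigma}_\tau$ in the $\rho$-variable, you need $\partial_\rho(r h_{\alpha_\ell})$, and (\ref{scalar relating a,f,b}) gives you only the $\partial_r$-piece of it, with no control on $\partial_v h_{\alpha_\ell}$. So the obstruction you flag (the boundary value) is real but secondary: the primary problem is the direction of the transport and the missing $\partial_v$-control. The paper closes this by treating $h_{\alpha_\ell}$ the same way as $h_{\mathfrak{f}_\ell}$ and $h_{\tilde{\beta}_\ell}$: rewrite (\ref{scalar relating a,f,b}) in the $\partial_u$-form of Lemma \ref{transport-3}, run the divergence-theorem weighted energy estimates in the three regions (near $\mathcal{I}^+$, intermediate, near $\h$), commute with $T$ and $\Omega_j$ so that $\partial_v h_{\alpha_\ell}$ and $\slashed{\nabla}h_{\alpha_\ell}$ are also controlled, and only then pass to pointwise bounds via the fundamental-theorem-of-calculus/Sobolev argument. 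Your alternative of an independent energy estimate for the Teukolsky wave equation for $\alpha_\star$ with $\mathfrak{f}_\star$ as source would also work in principle, but it is a considerably heavier route than the one-line transport argument the paper uses.
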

		for all $ r\geq M, \ \tau \geq 1,$ and all admissible frequencies $ \ell \in \mathbb{N}. $ Using standard elliptic identities of Section \ref{Geometry}, and Sobolev inequalities we obtain the following pointwise estimates 
		\begin{align}
			\norm{\mathfrak{f}_{\star}}_{L^{\infty}(S^{2}_{v,r})} \leq C \dfrac{1}{v \cdot r^{\frac{5-\delta}{2}}}, \hspace{0.5cm} \norm{\tilde{\beta}_{\star}}_{L^{\infty}(S^{2}_{v,r})} \leq C \dfrac{1}{v \cdot r^{\frac{9-\delta}{2}}}, \hspace{0.5cm} \norm{\alpha_{\star}}_{L^{\infty}(S^{2}_{v,r})} \leq C \dfrac{1}{v \cdot r^{\frac{5-\delta}{2}}}
		\end{align}
		for all $ r\geq M $ and $ v>0. $
		
		\subsection{Decay, Non-decay and Blow-up for \texorpdfstring{$ \mathfrak{f}_{\star}, \tilde{\beta}_{\star}$}{PDFstring}, and \texorpdfstring{$ \alpha_{\star} $}{PDFstring} along the horizon \texorpdfstring{$ \h $}{PDFstring}.} 
		
		We derive estimates for both $ \mathfrak{f}_{\star}, \tilde{\beta}_{\star} $ and their transversal derivatives along the event horizon $ \h. $ In addition, using the relating equation (\ref{scalar relating a,f,b}) we also prove estimates for the extreme curvature component $ \alpha $ along $\h. $
		\begin{theorem} \label{f,b big Theorem}
			Let $ \mathfrak{f}_{\star}, \tilde{\beta}_{\star} $ be solutions to the generalized Teukolsky equations of positive spin, then for generic initial data the following estimates hold asymptotically along $ \h $ for all $ \tau \geq 1 $ 
			\begin{itemize}
				\item Decay
				\begin{align*}
					\norm{\slashed{\nabla}_{\partial_r}^k\mathfrak{f}_{\star}}_{L^{\infty}(S^2_{\tau,M})} &\lesssim_{_{M}} \dfrac{1}{\tau^{\left(\frac{4-k}{4}\right)^{k}}}\ , \hspace{1cm} 0\leq k\leq 2, \\
					\norm{\slashed{\nabla}_{\partial_r}^k\tilde{\beta}_{\star}}_{L^{\infty}(S^2_{\tau,M})} &\lesssim_{_{M}} \dfrac{1}{\tau^{\left(\frac{4-k}{4}\right)^{k}}}\ , \hspace{1cm} 0\leq k\leq 2
				\end{align*}
				
				\item Non-decay
				\begin{align*}
					\norm{\slashed{\nabla}_{\partial_r}^3\mathfrak{f}_{\star}}_{S^2_{\tau,M}}\quad \quad   &\xrightarrow{\tau \to \infty} \quad \dfrac{1}{10\sqrt{12}M^5}\mathcal{H}_{\ell=2}[\Psi_2], \\
					\norm{\slashed{\nabla}_{\partial_r}^3\tilde{\beta}_{\star}}_{S^2_{\tau,M}}\quad \quad   &\xrightarrow{\tau \to \infty} \quad \dfrac{1}{10\sqrt{6}M^6}\mathcal{H}_{\ell=2}[\Psi_2]
				\end{align*}
				
				\item Blow-up
				\begin{align*}
					\norm{\slashed{\nabla}_{\partial_r}^{k+3}\mathfrak{f}_{\star}}_{S^2_{\tau,M}} \ &= \ \frac{a_{2,k}^{^{(2)}}}{10\sqrt{12}M^5} \mathcal{H}_{\ell=2}[\Psi_2]\cdot \tau^{k} + \mathcal{O}\left(\tau^{k-\frac{1}{4}}\right), \hspace{1cm} \forall\ k\geq0, \\
					\norm{\slashed{\nabla}_{\partial_r}^{k+3}\tilde{\beta}_{\star}}_{S^2_{\tau,M}} \ &= \ \frac{a_{2,k}^{^{(2)}}}{10\sqrt{6}M^6} \mathcal{H}_{\ell=2}[\Psi_2] \cdot \tau^{k} + \mathcal{O}\left(\tau^{k-\frac{1}{4}}\right), \hspace{1cm} \forall\ k\geq0,
				\end{align*}
			\end{itemize}
			where $ \mathcal{H}_{\ell=2}[\Psi_2] :=\norm{H_{2}[\Psi_2]}_{S^{2}_{\tau,M}}, \ \forall \ \tau \geq 1, $ and $ a_{2,k}^{^{(2)}} $ is given in Theorem \ref{Scalar blow up}.
		\end{theorem}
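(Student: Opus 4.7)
The strategy is to reduce everything to the Regge--Wheeler scalars $\phi,\psi$ (whose behaviour along $\mathcal{H}^{+}$ is already pinned down in Corollary \ref{psi-phi l study }) by exploiting the transport identities
\begin{align*}
  \partial_{r}(r^{2}h_{\mathfrak{f}})\ =\ -\tfrac{1}{r}\phi,\qquad \partial_{r}(r^{4}h_{\tilde{\beta}})\ =\ -\tfrac{1}{r}\psi,
\end{align*}
projected to each fixed frequency $\ell\geq 2$. Working with the scalars $h_{\mathfrak{f}}=r^{2}\slas{D}_{1}\slas{D}_{2}\mathfrak{f}_{\star}$ and $h_{\tilde{\beta}}=r^{2}\slas{D}_{1}\slas{D}_{2}\tilde{\beta}_{\star}$ is convenient since, by the lemma in Section \ref{Section - Regge Wheeler estimates}, $\partial_{r}^{k}h_{\mathfrak{f}}=r^{2}\slas{D}_{1}\slas{D}_{2}(\slas{\nabla}_{\partial_{r}}^{k}\mathfrak{f}_{\star})$ and the standard elliptic identities on $S^{2}_{\tau,M}$ supported on frequency $\ell$ convert $L^{2}$ norms of $\partial_{r}^{k}h_{\mathfrak{f}_{\ell}}$ into $L^{2}$ norms of $\slas{\nabla}_{\partial_{r}}^{k}\mathfrak{f}_{\star,\ell}$ with explicit numerical factors $\tfrac{1}{2}\ell(\ell+1)\big(\ell(\ell+1)-2\big)$. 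Pointwise control follows by commuting with the angular momentum operators $\Omega_{j}$ and invoking Sobolev embedding on the sphere, as in Proposition \ref{finished estimates of Regge}.

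For the \emph{decay} part ($k\leq 2$) the case $k=0$ is just Corollary \ref{Decay teukolsky +} restricted to $r=M$. For $k=1,2$ I would evaluate the transport identity and its $\partial_{r}$--derivative on $\mathcal{H}^{+}$, which on each mode reads
\begin{align*}
  M^{2}\partial_{r}h_{\mathfrak{f}_{\ell}}+2M\,h_{\mathfrak{f}_{\ell}}\ &=\ -\tfrac{1}{M}\phi_{\ell},\\
  M^{2}\partial_{r}^{2}h_{\mathfrak{f}_{\ell}}+4M\,\partial_{r}h_{\mathfrak{f}_{\ell}}+2\,h_{\mathfrak{f}_{\ell}}\ &=\ -\partial_{r}\!\left(\tfrac{\phi_{\ell}}{r}\right)\!\big|_{r=M},
\end{align*}
and recursively extract the decay rate from the rates of $\phi_{\ell}$ and $\partial_{r}\phi_{\ell}$ on $\mathcal{H}^{+}$ given by Corollary \ref{psi-phi l study }. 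The slowest mode is $\ell=2$, yielding respectively $\tau^{-3/4}$ and $\tau^{-1/4}$, which after summation on modes (higher $\ell$ decay faster) and Sobolev embedding produce the stated rates $\tau^{-1},\tau^{-3/4},\tau^{-1/4}$. The analogous manipulations using $\partial_{r}(r^{4}h_{\tilde{\beta}})=-\psi/r$ and the decay of $\psi_{\ell}$, $\partial_{r}\psi_{\ell}$ yield the corresponding statements for $\tilde{\beta}_{\star}$.

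For \emph{non-decay} and \emph{blow-up} ($k\geq 3$) I would differentiate the transport identity $k$ times and evaluate on $\mathcal{H}^{+}$, obtaining
\begin{align*}
  M^{2}\,\partial_{r}^{k}h_{\mathfrak{f}_{\ell}}\ =\ -\partial_{r}^{k-1}\!\left(\tfrac{\phi_{\ell}}{r}\right)\!\big|_{r=M}\ -\ \sum_{j=1}^{k-1}\binom{k}{j}\,\partial_{r}^{j}(r^{2})\big|_{r=M}\,\partial_{r}^{k-j}h_{\mathfrak{f}_{\ell}},
\end{align*}
whose top order contribution on the right is $\partial_{r}^{k-1}\phi_{\ell}|_{\mathcal{H}^{+}}/M$ up to lower order transversal derivatives. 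Plugging in the asymptotic expansions from Corollary \ref{psi-phi l study } the $\ell=2$ contribution dominates: for $k=3$ it produces the non-decaying constant $\tfrac{1}{2M^{2}(2\ell+1)}H_{\ell=2}[\Psi_{2}]=\tfrac{1}{10M^{2}}H_{2}[\Psi_{2}]$, while for $k\geq 4$ it produces $(-1)^{k-3}\tfrac{a_{2,k-3}^{(2)}}{10M^{2}}H_{2}[\Psi_{2}]\,\tau^{k-3}$ modulo $\mathcal{O}(\tau^{k-3-1/4})$ remainders, which by the inductive use of the Leibniz sum above absorb all lower order terms. Passing from $\partial_{r}^{k}h_{\mathfrak{f}_{\ell=2}}$ to $\slas{\nabla}_{\partial_{r}}^{k}\mathfrak{f}_{\star}$ requires dividing by $\sqrt{\tfrac{1}{2}\ell(\ell+1)(\ell(\ell+1)-2)}\big|_{\ell=2}=\sqrt{12}$ (and an extra factor $M^{2}$ coming from the $r^{2}$ in the definition of $h_{\mathfrak{f}}$), which produces the constants in the statement; the analysis for $\tilde{\beta}_{\star}$ is identical with the replacement of $\phi_{\ell}$ by $\psi_{\ell}$ and the extra $r^{4}$ weight, giving the factor $\sqrt{6}M^{6}$.

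The main obstacle I anticipate is bookkeeping the precise numerical constants: one must keep track of the Leibniz expansion when differentiating $\partial_{r}^{k}(r^{2}h_{\mathfrak{f}})$ and $\partial_{r}^{k}(r^{4}h_{\tilde{\beta}})$ on $\mathcal{H}^{+}$, verify that all lower order transversal derivative contributions are indeed of lower asymptotic order (this is where the inductive control of steps 2--3 is used repeatedly), sum rigorously over the spherical harmonic modes with weights depending on $\ell$, and finally convert the scalar $L^{2}$--information into the tensorial norms that appear in the statement via the exact elliptic identity on $\ell=2$. Once this accounting is done, the decay/non-decay/blow-up dichotomy for $\mathfrak{f}_{\star}$ and $\tilde{\beta}_{\star}$ is simply the propagation through transport of the corresponding dichotomy already established for $\phi$ and $\psi$.
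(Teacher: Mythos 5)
Your proposal reproduces the paper's own argument: reduce to the Regge--Wheeler scalars $\phi,\psi$ via the transport identities $\partial_r(r^2 h_{\mathfrak{f}})=-\phi/r$, $\partial_r(r^4 h_{\tilde{\beta}})=-\psi/r$ projected to fixed $\ell$, differentiate with Leibniz and evaluate on $\mathcal{H}^+$ so that the top order term is a transversal derivative of $\phi_\ell$ (resp.\ $\psi_\ell$), feed in the dichotomy of Corollary \ref{psi-phi l study } with the $\ell=2$ mode dominant, and finally pass from scalar to tensor via the elliptic identities of Section~\ref{Geometry} and Sobolev embedding after commuting with the angular Killing fields. The arithmetic is also right: the factor $\sqrt{12}$ comes from $\tfrac{1}{2}\ell(\ell+1)(\ell(\ell+1)-2)|_{\ell=2}$ for the two-tensor $\mathfrak{f}_\star$, while the $\tilde{\beta}_\star$ case uses the one-form identity \eqref{elliptic 5 } giving $\sqrt{6}$ (the paper's stated definition $h_{\tilde{\beta}}=r^2\dd\DD\tilde{\beta}_\star$ is a typo --- $\tilde{\beta}_\star$ is a one-form, and what is actually used is $h_{\tilde{\beta}}=r\dd\tilde{\beta}_\star$ --- and you implicitly corrected this by invoking the right elliptic factor). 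This is essentially the same proof as in the paper; the paper just organizes the Leibniz bookkeeping by first rewriting the transport equation as $\partial_r h_{\mathfrak{f}}=-\phi/r^3-\tfrac{2}{r}h_{\mathfrak{f}}$ and inducting, which is a cosmetic difference.
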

		\begin{proof}
			Using the decay estimates we have acquired for  $ h_{\mathfrak{f}} $, we  produce estimates for $ \partial_r^k(h_{\mathfrak{f}}) $.
			Going back to (\ref{transport + spin scalar}), we can write it as  \begin{align*}
				\partial_r(h_{\mathfrak{f}}) &= -\dfrac{\phi}{r^3}-\dfrac{2}{r}h_{\mathfrak{f}}
			\end{align*} 
			By induction, it's immediate to check for all $  k\in\mathbb{N} $ \begin{align}
				\partial_r^{k+1}(h_{\mathfrak{f}}) & = -\dfrac{\partial_r^{k}\phi}{r^3} + \mathfrak{D}^{<k}(\phi) + A_k(r)\cdot h_{\mathfrak{f}},
			\end{align}
			where $ \mathfrak{D}^{k}(\phi) $ is an expression involving $ \partial_r- $derivatives of $ \phi $ of order less than $ k $ and $ A_k(r) $ is  a function of $ r $ alone, for any $ k $. 
			Thus, Corollary \ref{psi-phi l study } and relation  (\ref{a,f,b decay})  yield the following decay estimates  asymptotically on $ \h $ \begin{align}
				\norm{\partial_r^k h_{\mathfrak{f}_{\ell}}}_{L^{\infty}(S^2_{\tau,M})} \lesssim_{M} \begin{cases}
					\frac{1}{\tau}, \ \	\quad k=0, \\
					\frac{1}{\tau^{\frac{3}{4}}}, \quad k=1, \\
					\frac{1}{\tau^{\frac{1}{4}}}, \quad k=2, \label{misc13-decay}
				\end{cases} 
			\end{align}
			for any $ \ell \geq 2. $ In addition, we deduce the following estimates for higher order derivatives along $ \h $ \begin{align}
				\abs{\partial_r^{k} h_{\mathfrak{f}_{\ell}}} &
				\lesssim_{_{k,M}} \dfrac{1}{\tau^{\frac{1}{4}}}, \hspace{1cm} \forall \ k\leq \ell \label{misc12-4} \\
				\partial_r^{k+\ell+1}(h_{\mathfrak{f}_{\ell}})(\tau,\omega)\ &\ =\  (-1)^{k+1}\dfrac{a_{2,k}^{^{(\ell)}}}{2M^{5}(2\ell+1)} H_{\ell}[\Psi_2](\omega)\cdot \tau ^{k} + \mathcal{O}\left(\tau^{k-\frac{1}{4}}\right), \hspace{1cm} k\geq 0.
			\end{align}
			Now, using the elliptic identity (\ref{elliptic 6})  and  relation (\ref{misc12-a}) we write \begin{align*}
				\int_{S^{2}_{\tau,M}} \abs{\slas{\nabla}_{\partial_r}^{k}\mathfrak{f}_{\star}}^{2} = \sum_{\ell\geq 2}\left ( \int_{S^{2}_{\tau,M}}\abs{\slas{\nabla}_{\partial_r}^{k}\mathfrak{f}_{\star_{\ell}}}^{2}\right ) = \sum_{\ell\geq 2} \left[\dfrac{2}{\ell(\ell+1)}\cdot \dfrac{1}{\ell(\ell+1)-2}\left ( \int_{S^{2}_{\tau,M}}\abs{\partial_r^{k}h_{\mathfrak{f}_{\ell}}}^{2}\right )\right]
			\end{align*}
			In view of $ h_{\mathfrak{f}_{\ell=2}} $ having the dominant behavior along $ \h, $ we obtain the decay results for $ k\leq 2 $, and for $ k>3 $ we see that the infinite sum for $ \ell\geq k $ decays uniformly in  $ \ell $ while the first remaining terms of the sum inherit the dominant asymptotic of $ \partial_r^{k}h_{\mathfrak{f}_{\ell=2}}. $ 
			
			To produce the estimates of the assumption for $ \tilde{\beta}_{\star} $, we work similarly as above, however, we use the elliptic identity  (\ref{elliptic 5 }) instead.
			\\ 
		\end{proof}
		We conclude the study of the positive spin gauge invariant components by
		proving estimates for the extreme curvature tensor $ \alpha_{\star} $, using the estimates obtained for $ \mathfrak{f}_{\star} $ and $ \tilde{\beta}_{\star}. $
		\begin{theorem} \label{a big Theorem}
			Let $ \alpha_{\star}$ be a solution to the generalized Teukolsky equation of $ +2 $ spin, then for generic initial data the following estimates hold asymptotically along $ \h $ for all $ \tau \geq 1 $ \begin{align*}
				&\norm{\slashed{\nabla}_{\partial_r}^k\alpha_{\star}}_{L^{\infty}(S^2_{\tau,M})} \lesssim_{_{M}} \dfrac{1}{\tau}\ , \hspace{1cm} 0\leq k\leq 2. \\
				&\norm{\slashed{\nabla}_{\partial_r}^{k+2}\alpha_{\star}}_{L^{\infty}(S^2_{\tau,M})} \lesssim_{_{M}} \dfrac{1}{\tau^{\left(\frac{4-k}{4}\right)^k }}\ , \hspace{1cm} 1\leq k\leq 2. \\
				&\norm{\slashed{\nabla}_{\partial_r}^5\alpha_{\star}}_{S^2_{\tau,M}}\quad \quad   \xrightarrow{\tau \to \infty} \quad  \left ( \dfrac{1}{12}\left(\dfrac{16}{35M^7}\right) ^{2}\mathcal{H}_{\ell=2}[\Psi_2]^{2} +
				\dfrac{1}{60}\left(\dfrac{6}{35M^6}\right)^{2}\mathcal{H}_{\ell=3}[\Psi_2]^{2} \right )^{\frac{1}{2}} \\
				&	\norm{\slas{\nabla}_{\partial_r}^{k+5}\alpha_{\star}}_{S^{2}_{\tau,M}} = c_k \cdot \tau^{k} + \mathcal{O}\left (\tau^{k-\frac{1}{4}}\right ), \hspace{2cm} \forall\ k\geq 0\\
			\end{align*}
			where $ \mathcal{H}_{\ell}[\Psi_2] :=\norm{H_{\ell}[\Psi_2]}_{S^{2}_{\tau,M}}, \ \forall \ \tau \geq 1, $ and the constant $ c_k $ are given by \[ c_k = \left(\dfrac{1}{12}\left(\dfrac{4(k+4)a_{2,k}^{^{(2)}}}{5M^{7}}\right)^{2}\mathcal{H}_{2}^{2}[\Psi_2] + \dfrac{1}{60}\left(\dfrac{6a_{2,k}^{^{(3)}}}{35M^{6}}\right)^{2}\mathcal{H}_{3}^{2}[\Psi_2] \right) ^{\frac{1}{2}}, \hspace{1cm}\forall\ k\geq 0. \]
		\end{theorem}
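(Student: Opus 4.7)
The plan is to use the relating equation (\ref{scalar relating a,f,b}) as a transport-type identity in $r$ for the scalar $h_{\alpha_\ell} := r^{2}\dd\DD\alpha_{\star,\ell}$, transferring the estimates for $\mathfrak{f}_\star, \tilde{\beta}_\star$ from Theorem \ref{f,b big Theorem} to $\alpha_\star$. Projected on a fixed angular frequency, the equation reads
\begin{equation*}
-\partial_r(r\, h_{\alpha_\ell}) \;=\; r^{2}\tfrac{\mu^{2}}{2M}\, h_{\tilde{\beta}_\ell} \,+\, 2(1-4\sqrt{D})\, h_{\mathfrak{f}_\ell}.
\end{equation*}
Differentiating $k$ times in $r$ and evaluating on $\h$ (where $\sqrt{D}=0$, $\partial_r\sqrt{D}=1/M$, $r=M$) expresses $\partial_r^{k+1}h_{\alpha_\ell}|_{\h}$ as an explicit linear combination of $\{\partial_r^{j} h_{\mathfrak{f}_\ell}|_{\h}, \partial_r^{j} h_{\tilde{\beta}_\ell}|_{\h}\}_{j\leq k}$ and of lower-order $\partial_r^{j}h_{\alpha_\ell}|_{\h}$, the latter handled inductively starting from Corollary \ref{Decay teukolsky +}.

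Next I would rewrite $\partial_r^{j} h_{\mathfrak{f}_\ell}|_\h$ and $\partial_r^{j} h_{\tilde{\beta}_\ell}|_\h$ via the transport identities (\ref{transport + spin scalar}) in terms of $\partial_r^{s}\phi_\ell|_\h, \partial_r^{s}\psi_\ell|_\h$ for $s\leq j-1$ (as in the proof of Theorem \ref{f,b big Theorem}), and then use Corollary \ref{inverse relation} to write $\phi_\ell, \psi_\ell$ as linear combinations of $\Psi_1^{(\ell)}, \Psi_2^{(\ell)}$; the asymptotic behavior of these Regge--Wheeler scalars along $\h$ is then supplied by Theorems \ref{PWD-estimates}, \ref{non-decay} and \ref{Scalar blow up}. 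A naive termwise count predicts an asymptotic for $\partial_r^{k+1}h_{\alpha_\ell}|_\h$ shifted by only one derivative relative to $h_{\mathfrak{f}_\ell}, h_{\tilde{\beta}_\ell}$ --- in particular, $\slas{\nabla}^{2}_{\partial_r}\alpha_\star$ would only decay as $\tau^{-3/4}$ and $\slas{\nabla}^{5}_{\partial_r}\alpha_\star$ would grow linearly in $\tau$.

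The principal obstacle is therefore that the theorem requires a two-derivative shift (decay $\tau^{-1}$ for $k\leq 2$, non-decay at $k=5$, blow-up $\tau^k$ for $\slas{\nabla}^{k+5}$) rather than a one-derivative shift. The mechanism is an algebraic cancellation: the coefficients $r^{2}\mu^{2}/(2M)$ and $2(1-4\sqrt{D})$ are tuned so that, after the substitutions from Corollary \ref{inverse relation}, the dominant $\Psi_2^{(\ell=2)}$-contributions (which govern the slowest decay and fastest growth) enter with opposite weights at each critical order and cancel exactly. This can be verified at the first nontrivial order: using
\begin{equation*}
\partial_r h_{\mathfrak{f}_\ell}|_\h = -\phi_\ell/M^{3} - 2 h_{\mathfrak{f}_\ell}|_\h/M, \qquad \partial_r h_{\tilde{\beta}_\ell}|_\h = -\psi_\ell/M^{5} - 4 h_{\tilde{\beta}_\ell}|_\h/M,
\end{equation*}
one sees that the combination $2\psi_2/M^{4} + 2\phi_2/M^{3}$ appearing in $\partial_r^{2}h_{\alpha_{\ell=2}}|_\h$ has its leading $\tau^{-3/4}$ contribution --- a $\Psi_2^{(\ell=2)}$-multiple via Corollary \ref{inverse relation} --- vanishing identically, producing a net $\tau^{-1}$ rate. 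The hard part of the proof is to carry out the analogous computation \emph{order by order} and verify that the same mechanism converts the naive $\tau^{k}$-growth from $\partial_r^{k+3}h_{\mathfrak{f}_\ell}, \partial_r^{k+3}h_{\tilde{\beta}_\ell}$ into a $\tau^{k-1}$-growth in $\partial_r^{k+5}h_{\alpha_\ell}|_\h$, while extracting the explicit surviving constants $c_k$.

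Finally, the elliptic identities (\ref{misc12-a})--(\ref{misc12-b}), combined with the $\ell$-dependent conversion factors (\ref{elliptic 5 })--(\ref{elliptic 6}) for passing between $\|\partial_r^{k}h_{\alpha_\ell}\|_{S^{2}_{\tau,M}}$ and $\|\slas{\nabla}^{k}_{\partial_r}\alpha_{\star,\ell}\|_{S^{2}_{\tau,M}}$, promote the per-$\ell$ asymptotics to the $L^{2}$ statements of the theorem: at orders $k\leq 4$ only $\ell=2$ contributes to the dominant rate, while at $k=5$ both $\ell=2$ (now non-decaying after cancellation) and $\ell=3$ (reaching its own non-decay order) contribute, producing precisely the explicit constant involving $\mathcal{H}_{\ell=2}[\Psi_2]$ and $\mathcal{H}_{\ell=3}[\Psi_2]$; the higher-order constants $c_k$ arise by the same bookkeeping. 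The $L^{\infty}$ decay bounds for $k\leq 4$ then follow from Sobolev embedding on $\mathbb{S}^{2}$ after commuting the relating equation with the Killing angular momentum operators $\Omega_j$, exactly as in the proof of Theorem \ref{PWD-estimates}.
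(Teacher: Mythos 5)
Your proposal matches the paper's proof in strategy and in its key mechanism: you use the relating equation (\ref{scalar relating a,f,b}) as a transport identity, substitute the transport relations (\ref{transport + spin scalar}) to express $\partial_r^{k+1}h_{\alpha_\ell}|_{\h}$ in $\partial_r^{k-1}\phi_\ell,\partial_r^{k-1}\psi_\ell$, and you correctly identify that the apparent one-order shift becomes a two-order shift because the top-order combination (proportional to $\frac{1}{M}\psi_2+\phi_2$) is a scalar multiple of $\Psi_1^{(\ell=2)}$ --- so the dominant $\Psi_2^{(\ell=2)}$ contribution cancels and the next-to-leading term (the paper's $-\tfrac{8k}{M^{5}}\partial_r^{k-2}\phi_{\ell=2}$) survives, while $\ell=3$ reaches its natural non-decay threshold at the same order without any cancellation. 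Your first-order verification of the cancellation is correct up to an inessential overall power of $M$, and the summation over $\ell$ via (\ref{elliptic 6}) together with Sobolev on $\mathbb S^2$ is exactly the paper's route to the stated $L^{2}$ and $L^{\infty}$ conclusions.
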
 
		\begin{proof}
			Let us denote by \begin{align*}
				\tilde{h}_{\mathfrak{f}} : = r^2 \cdot h_{\mathfrak{f}}, \hspace{1cm} \tilde{h}_{\tilde{\beta}}: = r^4 \cdot h_{\tilde{\beta}}
			\end{align*}
			then, we rewrite the relating equation (\ref{scalar relating a,f,b}) as \begin{align}
				\partial_r h_{\alpha_{\ell}} = - \dfrac{\mu^{2}}{2M} \dfrac{1}{r^3} \tilde{h}_{\tilde{\beta}_{\ell}} - \dfrac{2}{r^3} (1-4\sqrt{D}) \tilde{h}_{\mathfrak{f}_{\ell}} - \dfrac{1}{r} h_{\alpha_{\ell}}
			\end{align}
			By taking k many $ \partial_r-$derivatives of the above relation and by writing explicitly the first two top order terms only,  we obtain the following expression \begin{align}\begin{aligned}
					\partial_r^{k+1}h_{\alpha_{\ell}}\ =\  & -\dfrac{\mu^2}{2M}\dfrac{1}{r^3} \partial_r^{k} \tilde{h}_{\tilde{\beta}_{\ell}} - \dfrac{2}{r^3}(1-4\sqrt{D})\partial_r^{k}\tilde{h}_{\mathfrak{f}_{\ell}}\\
					& + \dfrac{(3k+1)}{2M} \dfrac{\mu^2}{r^4} \partial_r^{k-1}\tilde{h}_{\tilde{\beta}_{\ell}} + \dfrac{2}{r^4}\left(7k+1 -(4+16k)\sqrt{D}\right) \partial_r^{k-1}\tilde{h}_{\mathfrak{f}_{\ell}}\\ & + \mathcal{D}^{\leq k-2}\left[\tilde{h}_{\tilde{\beta}}, \tilde{h}_{\mathfrak{f}}, h_{\alpha}\right] \label{k+1-of a}
				\end{aligned}
			\end{align}
			where $ \mathcal{D}^{\leq s}[\ \cdot \ ] $ is a linear expression involving up to $ s-$many $ \partial_r-$derivatives of its arguments. However, in view of the transport equations that $ \tilde{h}_{\mathfrak{f}}, \tilde{h}_{\tilde{\beta}}$ satisfy, i.e. \begin{align*}
				- \dfrac{\phi}{r} = \partial_r \tilde{h}_{\mathfrak{f}}, \hspace{1cm} 	- \dfrac{\psi}{r} = \partial_r \tilde{h}_{\tilde{\beta}},
			\end{align*}
			relation (\ref{k+1-of a}) reads \begin{align}
				\begin{aligned} 
					\partial_r^{k+1}h_{\alpha_{\ell}}\ = \ &\  \dfrac{\mu^2}{2M}\dfrac{1}{r^4} \partial_r ^{k-1}\psi_{\ell} + \dfrac{2}{r^4}(1-4\sqrt{D})\partial_r^{k-1} \phi_{\ell} \\
					& - \dfrac{\mu^{2}}{2M}\dfrac{(4k+1)}{r^5} \partial_r^{k-2}\psi_{\ell}  - \dfrac{2}{r^5} \left(8k+1 - (4+20k)\sqrt{D}\right) \partial_r^{k-2} \phi_{\ell}\\ & + \mathcal{D}^{k-3}[\phi,\psi] + A_1(r)\cdot h_{\alpha_{\ell}} + A_2(r)\cdot \tilde{h}_{\mathfrak{f}_{\ell}} + A_3(r) \cdot \tilde{h}_{\tilde{\beta}_{\ell}}, \label{k+1- of a in terms of phi,psi}
				\end{aligned}
			\end{align}
			where $ A_i(r),\  i\in \br{1,2,3} $ are scalar functions of $ r $ alone.
		
			In view of the estimates obtained for $ \Psi_i^{^{(\ell)}}, \ i \in \br{1,2} $ in Section \ref{Section - Regge Wheeler estimates}, it suffices to study the low frequencies $ h_{\alpha_{\ell}} $, $ \ell = 2,3 $, which give the dominant behavior of $ \alpha_{\star} $ along the event horizon $ \h. $	
			Using the estimates of Corollary \ref{psi-phi l study } and relation (\ref{k+1- of a in terms of phi,psi}), we obtain the following decay estimates  \begin{align} \label{a decay 1}
				\abs{\partial_r^{k} h_{\alpha_{\ell}}}(\tau, \omega) &\lesssim_{_{M}} 
				\dfrac{1}{\tau^{\frac{1}{4}}}, \hspace{2cm}  \forall \ \tau \geq 1,\ \ell \geq 3, \ k\leq 4
			\end{align}
			and in particular, for $ \ell=2 $ we have
			\begin{align}
				\abs{\partial_r^{k} h_{\alpha_{_{\ell=2}}}}(\tau, \omega) &\lesssim_{_{M}} 
				\dfrac{1}{\tau}, \hspace{2cm}  \forall \ \tau \geq 1, \ k\leq 2  \\ 	\abs{\partial_r^{k+2} h_{\alpha_{_{\ell=2}}}}(\tau, \omega) &\lesssim_{_{M}}  \dfrac{1}{
					\tau^{\left(\frac{4-k}{4}\right)^k}}, \hspace{1cm} \forall \tau \geq 1, \ 1\leq k \leq 2 \label{a decay 2}
			\end{align}	
			\begin{flushleft}
				\hrulefill
			\end{flushleft}
			
			Next, the first non-decay estimate occurs at the level of five transversal invariant derivatives for both the $ \ell=2, \ \ell=3  $ frequencies.
			For the $ \ell=2$ frequency, relation (\ref{k+1- of a in terms of phi,psi}) yields along the event horizon $ \h $ \begin{align*}
				\partial_r^{k+1} h_{\alpha_{_{\ell=2}}}=\ & \dfrac{2}{M^4} \left(M  \partial_r^{k-1}\psi_{_{\ell=2}} +  \partial_r^{k-1} \phi_{_{\ell=2}} \right)  - \dfrac{2(4k+1)}{M^5} \left(M \partial_r^{k-2}\psi_{_{\ell=2}}+ \partial_r^{k-2} \phi_{_{\ell=2}}\right)   - \dfrac{8k}{M^5} \partial_r^{k-2}\phi_{\ell=2} \\ & + \mathcal{D}^{k-3}[\phi,\psi]\Big|_{r=M} + A_1(M)\cdot h_{\alpha_{_{\ell=2}}} + A_2(M)\cdot \tilde{h}_{\mathfrak{f}_{_{\ell=2}}} + A_3(M) \cdot \tilde{h}_{\tilde{\beta}_{_{\ell=2}}} 
			\end{align*}
			However, $( M\psi_{\ell=2} + \phi_{\ell=2} )$ is a scalar multiple of $ \Psi_1^{^{(\ell=2)}} $ and thus the dominant term in the expression above is $ -\frac{8k}{M^5}  \partial_r^{k-2} \phi_{_{\ell=2}}.$ 
			Hence, for $ k=4 $  we obtain \begin{align}
				\partial_r^{5} h_{\alpha_{_{\ell=2}}}(\tau,\omega) = - \dfrac{32}{10}\dfrac{1}{M^7} \partial_r^{2} \Psi_2^{^{(\ell=2)}} + \mathcal{O}(\tau^{-\frac{1}{4}}) \xrightarrow{\tau \to \infty} -\dfrac{32}{10M^7} H_2[\Psi_2](\omega).
			\end{align}
			Similarly, for $ \ell=3 $ and $ k=4 $ relation  (\ref{k+1- of a in terms of phi,psi})  yields \begin{align}
				\partial_r^{5}h_{\alpha_{_{\ell=3}}}(\tau,\omega) =\dfrac{3}{M^5}\partial_r^{3}\psi_{_{\ell=3}} +  \left (\dfrac{2}{M^5}\partial_r^{3}\psi_{_{\ell=3}} + \dfrac{2}{M^4} \partial_r^{3} \phi_{_{\ell=3}}\right ) + \mathcal{O}(\tau^{-\frac{1}{4}}) \xrightarrow{\tau \to \infty} - \dfrac{6}{35 M^6} H_{3}[\Psi_2](\omega)
			\end{align}
			\begin{flushleft}
				\hrulefill
			\end{flushleft}
			
			For the blow-up estimates, we can see from the non-decaying results above that both frequencies $ \ell=2$  and $\ell=3 $ ought to contribute to the leading term of the asymptotics. In particular, for $ 
			\ell=2 $ we have established above that \begin{align*}
				\partial_r^{k+1} h_{\alpha_{_{\ell=2}}} =\  & - \dfrac{4k}{5M^7}\partial_r^{k-2}\Psi_2^{^{(\ell=2)}} + \mathcal{D}^{k-3}[\Psi_2^{^{(\ell=2)}}] + \mathcal{D}^{k-2}[\Psi_1^{^{(\ell=2)}}] \\ & +  A_1(M)\cdot h_{\alpha_{_{\ell=2}}} + A_2(M)\cdot \tilde{h}_{\mathfrak{f}_{_{\ell=2}}} + A_3(M) \cdot \tilde{h}_{\tilde{\beta}_{_{\ell=2}}} 
			\end{align*}
			Thus, using Theorem \ref{Scalar blow up} we obtain the following asymptotic along $ \h $ \begin{align}
				\partial_r^{k+1}h_{\alpha_{\ell=2}}(\tau,\omega) = - \dfrac{4k}{5M^{7}}(-1)^{k-4} a_{2,k-4}^{^{(2)}} H_{2}[\Psi_2] (\omega) \cdot \tau^{k-4} + \mathcal{O}\left(\tau^{k-4-\frac{1}{4}}\right)
			\end{align}
			On the other hand, we consider relation (\ref{k+1- of a in terms of phi,psi}) for $ \ell=3 $ and we obtain quite similarly as above \begin{align}
				\partial_r^{k+1}h_{\alpha_{\ell=3}}(\tau,\omega) = (-1)^{k+1}\dfrac{6}{35}\frac{a_{2,k-4}^{^{(3)}}}{M^6} H_{3}[\Psi_2] (\omega)\cdot \tau^{k-4} + \mathcal{O}\left(\tau^{k-4-\frac{1}{4}}\right)
			\end{align}
			
			Now, we pass the above estimates for the scalar $ h_{\alpha} $ to the corresponding tensor $ \alpha_{\star} $ using once again the following elliptic identity \begin{align}
				\int_{S^{2}_{\tau,M}} \abs{\slas{\nabla}_{\partial_r}^{k}\alpha_{\star}}^{2} = \sum_{\ell\geq 2}\left ( \int_{S^{2}_{\tau,M}}\abs{\slas{\nabla}_{\partial_r}^{k}\alpha_{\star_{\ell}}}^{2}\right ) = \sum_{\ell\geq 2} \left[\dfrac{2}{\ell(\ell+1)}\cdot \dfrac{1}{\ell(\ell+1)-2}\left ( \int_{S^{2}_{\tau,M}}\abs{\partial_r^{k}h_{\alpha_{\ell}}}^{2}\right )\right] \label{a - sum equation}
			\end{align}
			For $ k\leq 4 $	 we use the decay estimates (\ref{a decay 1}-\ref{a decay 2})
			and the equation above to obtain $ L^2- $decay for $ \slas{\nabla}_{\partial_r}^{k} \alpha_{\star}$. The $ L^{\infty}(S^{2}_{\tau,M}) $ estimate comes from commuting with the angular momentum operators and using the Sobolev inequality  on the sphere.
			
			Now let $ k \geq 5 $, then note that the tail of the sum (\ref{a - sum equation}) decays uniformly in $ \ell $ for frequencies  $ \ell \geq k-1 $ and we only have to deal with the first $ k-2 $ terms of the sum. However, the dominant behavior comes from the frequencies $ \ell=2 $ and $ \ell=3 $ along $ \h $ and thus we obtain \begin{align*}
				\int_{S^{2}_{\tau,M}} \abs{\slas{\nabla}_{\partial_r}^{k+1}\alpha_{\star}}^{2} = \dfrac{1}{12}\left(\dfrac{4ka_{2,k-4}^{^{(2)}}}{5M^{7}}\right)^{2}\mathcal{H}_{2}^{2}[\Psi_2] \tau^{2k-8} + \dfrac{1}{60}\left(\dfrac{6a_{2,k-4}^{^{(3)}}}{35M^{6}}\right)^{2}\mathcal{H}_{3}^{2}[\Psi_2] \tau^{2k-8} + \mathcal{O}\left(\tau^{2k-8-\frac{1}{4}}\right)
			\end{align*}
			Thus, by changing the parameter $ k $ we can write the above estimate as \begin{align*}
				\norm{\slas{\nabla}_{\partial_r}^{k+5}\alpha_{\star}}_{S^{2}_{\tau,M}} = c_k \cdot \tau^{k} + \mathcal{O}\left (\tau^{k-\frac{1}{4}}\right ), \hspace{2cm} \forall\ k\geq 0, \ \tau \geq 1
			\end{align*}	
			where \[ c_k = \left(\dfrac{1}{12}\left(\dfrac{4(k+4)a_{2,k}^{^{(2)}}}{5M^{7}}\right)^{2}\mathcal{H}_{2}^{2}[\Psi_2] + \dfrac{1}{60}\left(\dfrac{6a_{2,k}^{^{(3)}}}{35M^{6}}\right)^{2}\mathcal{H}_{3}^{2}[\Psi_2] \right) ^{\frac{1}{2}} \]
			
		\end{proof}
		
		\begin{remark}
			As opposed to the gauge invariant quantities $ \mathfrak{f}_{\star},\tilde{\beta}_{\star} $, which their behavior is dominated by their $ \ell=2 $ frequency asymptotically along $ \h $, the extreme curvature component $ \alpha_{\star} $  is dominated by both $ \ell=2 $ and $ \ell=3  $ frequencies. 
		\end{remark}

		\section{The negative spin Teukolsky equations instability.} \label{Teukolsky negative section}
		We show that solutions to the negative spin Teukolsky system exhibit an even more unstable behavior compared to the positive spin case. In particular, the $ L^2(S^{2}_{v,M})- $norm of the extreme curvature component $ \underline{\alpha}_{\star} $ does \textbf{not} decay  asymptotically along the even horizon $ \h. $ In other words, no transversal derivative of the aforementioned quantity is required for the instability to manifest. 
		
		First, we obtain decay estimates outside the event horizon $ \h $ using the transport equations, and then, we derive decay, non-decay, and blow-up  estimates asymptotically along $ \h. $ For the latter, as opposed to the $ + $ spin case where we obtained estimates using only the Regge--Wheeler system and their transport equations, for the corresponding negative spin quantities we need to use the coupled Teukolsky system they satisfy. 
				
		\subsection{The negative spin components.} 
		We are interested in obtaining estimates for the gauge invariant quantities $ \underline{\alpha}_{\star}, \underline{\mathfrak{f}}_{\star} $ and $ \underline{\beta}_{\star} $ on the exterior up to and including the event horizon $ \h $, which satisfy the following relations with respect to the Regge--Wheeler solutions 	\begin{align}
			\begin{aligned} 
				\underline{\bm{q}}^F &= \dfrac{1}{\kappa_{\star}} \slas{\nabla}_{4^{\star}}\left (r^3 \kappa_{\star}\cdot \underline{\mathfrak{f}}_{\star}\right ), \\
				\underline{\bm{p}} &= \dfrac{1}{\kappa_{\star}} \slas{\nabla}_{4^{\star}}\left (r^5 \kappa_{\star}\cdot \underline{\tilde{\beta}}_{\star}\right ). \label{trans-underline}
			\end{aligned}
		\end{align}
		They also satisfy the following relating equation, which we  use later on to obtain estimates for $ \underline{\alpha}_{\star} $ \begin{align}
			^{(F)}\rho \dfrac{1}{\kappa_{\star}} \slas{\nabla}_{4^{\star}}\left(r^3 \kappa_{\star}^2 \underline{\alpha}_{\star}\right) =  \left(^{(F)}\rho^{2} + 3 \rho\right) r^3 \kappa_{\star} \underline{\mathfrak{f}}_{\star} + r^3 \kappa_{\star} \DD^{\star}\left(\tilde{\underline{\beta}}_{\star}\right)  \label{underline relating tensor}
		\end{align}
		Here, all quantities with star subscript are expressed in the rescaled frame $ \mathcal{N}_{\star} $, and we recall that   \begin{align} \label{star- values}
			\begin{aligned}
				e_{4}^{\star} = 2 \dfrac{\partial}{\partial_v} + D \dfrac{\partial}{\partial_r}, \hspace{1cm} \kappa_{\star} =  \dfrac{2D}{r}, \hspace{1cm} ^{(F)} \rho = \dfrac{M}{r^2}, \hspace{1cm} \rho = -\dfrac{2M}{r^3}\sqrt{D}.
			\end{aligned}
		\end{align}
		Once again, it will be more convenient to work with the derived scalar quantities. In particular, using standard elliptic identities from Section \ref{Geometry}, we obtain the following equations for the scalars $ h_{\underline{\mathfrak{f}}} := r^2\dd\DD(\underline{\mathfrak{f}}_{\star}), \ h_{\underline{\tilde{\beta}}}:= r\dd(\underline{\tilde{\beta}}_{\star}),$ $ \underline{\phi} := r^2 \dd\DD\left(\underline{\bm{q}}^{F}\right), \ \underline{\psi} := r^2\dd\DD\left(\underline{\bm{p}}\right) $
		\begin{align} \label{scalar transport underline}
			\begin{aligned}
				\underline{\phi} &= 2 r^3\partial_{v}\left(h_{\underline{\mathfrak{f}}}\right) +  r \partial_{r}\left (r^2 D h_{\underline{\mathfrak{f}}}\right ) \\
				\underline{\psi} &= 2 r^5\partial_{v}\left(h_{\underline{\tilde{\beta}}}\right) +  r \partial_{r}\left (r^4 D h_{\underline{\tilde{\beta}}}\right )
			\end{aligned}
		\end{align}
		while the relating equation (\ref{underline relating tensor}) yields for the scalar $ h_{\underline{\alpha}}:= r^2 \dd\DD (\underline{\alpha}_{\star}) $,	after we project to the $ E_{\ell}$ eigenspace,
		\begin{align} \label{scalar relating underline}
			\dfrac{M}{r \cdot D^2}e_{4^{\star}} \left(r D^2 h_{\underline{\alpha}_{_{\ell}}}\right) = \dfrac{2M}{r}(1-4\sqrt{D}) h_{\underline{\mathfrak{f}}_{_{\ell}}} + r \dfrac{\mu^{2}}{2} h_{\underline{\tilde{\beta}}_{_{\ell}}}, \hspace{1cm} \mu^{2}= (\ell-1)(\ell+2)
		\end{align}
		One final rescaling of the above scalars allows us to write all equations to follow more concisely and perform computations more easily.  In particular, let \begin{align}
			\underline{f}:= r^2 \cdot h_{\underline{\mathfrak{f}}}, \hspace{1cm} \underline{b}:=r^4 \cdot h_{\tilde{\underline{\beta}}}, \hspace{1cm}  \underline{a}:= r^2 h_{\underline{\alpha}}
		\end{align}
		then we may write (\ref{scalar transport underline}) as \begin{align}
		\begin{aligned} \label{reduced transport negative}
				\dfrac{1}{r}\cdot \underline{\phi} &= \partial_{v}(2\underline{f}) + \partial_r\left(D\cdot \underline{f}\right) \\
			\dfrac{1}{r}\cdot \underline{\psi} &= \partial_{v}(2\underline{b}) + \partial_r\left(D\cdot \underline{b}\right) 
		\end{aligned}
		\end{align}
		Moreover, the relating equation (\ref{scalar relating underline}) now reads \begin{align}
			\dfrac{1}{r}\dfrac{M}{r} \Big( 2\partial_v(\underline{a}_{_{\ell}})+D\partial_r\underline{a}_{_{\ell}}+2D'\underline{a}_{_{\ell}}\Big) =  \dfrac{\mu^{2}}{2r^3} \underline{b}_{_{\ell}} + \dfrac{2M}{r^3}(1-4\sqrt{D})\underline{f}_{_{\ell}} + D \dfrac{M}{r^3} \underline{a}_{_{\ell}}. \label{scalar relating l frequency}
		\end{align}
		
		\subsection{Negative spin Teukolsky system} We write the Teukolsky equations of Section \ref{Teukolsky section} with the coefficients expanded in the $ \mathcal{N}_{\star} $ frame and we obtain\begin{align}
			\begin{aligned} \label{Tensor Teukolsky}
				\square(r\underline{\mathfrak{f}_{\star}}) &= D'(r) \slas{\nabla}_{3^{\star}}(r\underline{\mathfrak{f}_{\star}}) - \dfrac{2}{r}\slas{\nabla}_{4^{\star}}(r\underline{\mathfrak{f}_{\star}}) + \left(\dfrac{2D}{r^2}-D''(r)\right) r\underline{\mathfrak{f}_{\star}} + \dfrac{M}{r}\left(\slas{\nabla}_{4^{\star}}\underline{\alpha}_{\star}+ \left(\dfrac{2D}{r} + 2D'(r) \right)\underline{\alpha}_{\star}\right) \\
				\square(\underline{\alpha}_{\star}) &= 2 D'(r) \slas{\nabla}_{3^{\star}} (\underline{\alpha}_{\star}) - \dfrac{4}{r} \slas{\nabla}_{4^{\star}}\underline{\alpha}_{\star} - \dfrac{2\sqrt{D}}{r^2}(2-\sqrt{D})\underline{\alpha}_{\star} - \dfrac{4M}{r^2}\left(\slas{\nabla}_{3^{\star}}\underline{\mathfrak{f}}_{\star} - \dfrac{2}{r}\underline{\mathfrak{f}}_{\star}\right) \\
				\square(r^3\tilde{\underline{\beta}}_{\star}) &= D'(r)\slas{\nabla}_{3^{\star}}(r^3\tilde{\underline{\beta}}_{\star})-\dfrac{2}{r}\slas{\nabla}_{4^{\star}} (r^3\tilde{\underline{\beta}}_{\star}) + \dfrac{1}{r^2}\left(3-8\sqrt{D}+4D\right)r^3\tilde{\underline{\beta}}_{\star} \\ & \ \  + 8r \left(\dfrac{M}{r}\right)^2 \slas{div}\underline{\mathfrak{f}}_{\star} +D\cdot \mathcal{A}[\underline{\tilde{\beta}}, \underline{\alpha}],
			\end{aligned}
		\end{align}
		where $ \mathcal{A}[\underline{\tilde{\beta}}, \underline{\alpha}] $ is an expression depending on $ \slas{div}\underline{\alpha}_{\star} $ and up to one $ e_{3^{\star}} $--derivative of $ \underline{\tilde{\beta}}_{\star} $; see p. 41, \cite{giorgi2019boundedness}, pp. 133-134 \cite{giorgi2020linear}.
		Using elliptic identities from Section \ref{Geometry}, we obtain the corresponding Teukolsky system for the scalar rescaled quantities $ \underline{a}, \underline{f} $ and $ \underline{b}, $ when supported on the fixed frequency $ \ell $.  With respect to the ingoing Finkelstein-Eddington  coordinates the system reads
		\begin{align}
			\begin{aligned} \label{Teukolsky scalar rescaled}
				\square(\underline{f}_{\ell}) &= -\dfrac{6}{r^2} \left(\dfrac{M}{r}\right)^2 \underline{f}_{\ell} - \dfrac{2}{r}\partial_v(\underline{f}_{\ell}) - D'(r) \partial_r\underline{f}_{\ell} + 	\dfrac{1}{r}\dfrac{M}{r} \Big( 2\partial_v(\underline{a}_{\ell})+D\partial_r\underline{a}_{\ell}+2D'\underline{a}_{\ell}\Big) \\
				\square(\underline{a}_{\ell}) &= -\dfrac{4}{r^2} \left(\dfrac{M}{r}\right)^2 \underline{a}_{\ell}
				- \dfrac{4}{r}\partial_v(\underline{a}_{\ell}) - 2D'(r) \partial_r\underline{a}_{\ell} + \dfrac{4}{r}\dfrac{M}{r}\partial_r\underline{f}_{\ell} \\
				\square(\underline{b}_{\ell}) &= \dfrac{2}{r^2} \left(\dfrac{M}{r}\right)^2 \underline{b}_{\ell}+ \dfrac{8}{r} \left(\dfrac{M}{r}\right) ^2 \underline{f}_{\ell}- \dfrac{2}{r}\partial_v(\underline{b}_{\ell}) - D'(r) \partial_r\underline{b}_{\ell} + D \cdot  \mathcal{A}[\underline{b}_{\ell},\underline{a}_{\ell}].
			\end{aligned}
		\end{align}
		
		Using the relating equation (\ref{scalar relating l frequency}) we can rewrite the Teukolsky equation for $ \underline{f} $ so that it is coupled only with $ \underline{b} $ modulo a term with a factor of $ D(r) $ in it, i.e. \begin{align}
			\square(\underline{f}_{\ell}) &= -\dfrac{2}{r^2} \left(\dfrac{M}{r}\right)\left(2+\sqrt{D}\right)\underline{f}_{\ell}  + \dfrac{\mu^{2}}{2r^3} \underline{b}_{\ell} - \dfrac{2}{r}\partial_v(\underline{f}_{\ell}) - D'(r) \partial_r\underline{f}_{\ell}  + D\dfrac{M}{r} \dfrac{1}{r^2} \underline{a}_{\ell}.
		\end{align}

		\subsection{Decay for \texorpdfstring{$ \underline{\mathfrak{f}}_{\star}, \underline{\tilde{\beta}}_{\star}$ and $ \underline{\alpha}_{\star} $ away from the horizon $ \h. $ }{PDFstring}} 
		
	We repeat the ideas of Section \ref{f,beta estimates} for the transport equations (\ref{trans-underline}) and the relating equation (\ref{underline relating tensor}). This time we omit most of the details involved as the procedure is identical to that of Section \ref{f,beta estimates}, however, we examine more carefully the two main differences that manifest when controlling the negative spin quantities. First, we prove the following lemma.

\begin{lemma} \label{lemma negative control of transport}
	Let $ f, F $ be two scalar functions satisfying \begin{align}
		e_{4^{\star}}\left(r^{m}\kappa_{\star}^{n}f\right)=\kappa_{\star}\cdot F, \hspace{1cm} \ m,n \geq 0\label{outgoing transport}
	\end{align}
then the following energy estimates hold near future null infinity $ \I $ \begin{align}
	\int_{\I} r^{2(m-n)-k-3} f^{2} \leq C \int_{\I\cap \br{r=R}} r^{2(m-n)-k-2} f^{2}  +  C \int_{\I} \dfrac{1}{r^{k+3}} F^{2},\hspace{1cm} \forall \  k>0
\end{align}
for a constant $ C>0 $ depending on $ M, k $ and $ R $ of Section \ref{r-p}.
\end{lemma}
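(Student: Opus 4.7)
The plan is to mirror the proof of Lemma \ref{transport-3}, adapted to the outgoing null direction $e_{4^{\star}}$. In the double null coordinates $(u,v,\omega)$ of Section \ref{Geometry}, we have $e_{4^{\star}} = 2\partial_v$ and $\kappa_{\star} = 2D/r$, so that setting $H := r^{m-n}D^n f$ reduces the given transport equation to the simple form $\partial_v H = \frac{D}{2^n r}F$. The weights $r^{m-n}$ and $D^n$ are chosen precisely so that the $\kappa_{\star}^{n}$-factor on the left and the $\kappa_{\star}$ on the right collapse jointly to $D/r$. Since $r\geq R>M$ throughout $\I$, the factor $D^n$ is uniformly bounded above and below by positive constants depending on $M$ and $R$, so at the end of the argument one can freely pass between $H$ and $f$ up to an overall constant.

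Starting from $\partial_v(H^2)=\frac{D}{2^{n-1}r}HF$, multiplying by $r^q$, and using the identity $\partial_v r = D/2$ at fixed $u$, I would derive the pointwise identity
\begin{equation*}
\partial_v(r^q H^2) \; = \; q\,\frac{D}{2}\,r^{q-1}H^2 + \frac{D\,r^{q-1}}{2^{n-1}}HF.
\end{equation*}
For $q<0$, transferring the coercive term to the left-hand side and absorbing the source via Cauchy--Schwarz gives
\begin{equation*}
\partial_v(r^q H^2) + \frac{|q|D}{4}r^{q-1}H^2 \;\leq\; \frac{C_n}{|q|}\,D\,r^{q-1}F^2,
\end{equation*}
with a constant $C_n$ depending only on $n$. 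Note the sign is exactly \emph{opposite} to that encountered in Lemma \ref{transport-3}: there $\partial_u r = -D/2$ produces the coercive bulk from \emph{positive} weights, while here coercivity requires $q<0$. Matching the target exponent $r^{2(m-n)-k-3}$ on the left-hand side of the lemma, once one substitutes $H^{2} = r^{2(m-n)}D^{2n}f^{2}$ and passes to the spacetime measure $dVol \sim Dr^{2}\,du\,dv\,d\omega$, forces the choice $q = -k$, so that $|q|=k>0$ as required.

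I would then integrate this differential inequality in $v$ from $v_R(u)$, where $r=R$, out to $\mathcal{I}^{+}$, and thereafter over $u\in[\tau_1,\tau_2]$ and the sphere. No flux occurs on $\check{O}_{\tau_1}$ or $\check{O}_{\tau_2}$ because these are null hypersurfaces whose conormals lie in $du$, transverse to the $\partial_v$-current; the asymptotic boundary term $\lim_{v\to\infty}r^q H^2 = 0$ for $q=-k<0$, provided $H$ is allowed at most subpolynomial growth of order $o(r^{k/2})$, which is consistent with the regularity class of Teukolsky solutions to which the lemma is applied. The remaining contribution $-[r^q H^2]_{v=v_R(u)}$, after integration in $u$ and $\omega$, produces the stated boundary term on $\I\cap\br{r=R}$, while the bulk and source integrals in $\int r^{-k-1}H^{2}\,du\,dv\,d\omega$ and $\int Dr^{-k-1}F^{2}\,du\,dv\,d\omega$ convert to the advertised expressions after multiplying and dividing by $\tfrac{D}{2}r^{2}$ to reach the spacetime volume form. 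The only real obstacle is the careful bookkeeping of weights, since the three integrals appearing in the statement are taken with respect to three distinct measures (the spacetime form on $\I$, the induced form on $\I\cap\br{r=R}$, and the flat $du\,dv\,d\omega$ form intrinsic to the transport argument); all the resulting conversion factors involve bounded positive powers of $D$ on $r\geq R$ and so are absorbed into the final constant $C=C(M,k,R)$.
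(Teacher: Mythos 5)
Your proposal follows essentially the same route as the paper's proof: both multiply the transport equation by the quantity itself times an $r$-weight, exploit the sign of the outgoing derivative of $r$ to generate a coercive bulk, absorb the source by Cauchy--Schwarz, and apply the divergence theorem over $\I$, identifying vanishing flux through $\check{O}_{\tau_1},\check{O}_{\tau_2}$ and a boundary term on $\{r=R\}$. The only differences are presentational: the paper keeps the abstract vector field $e_{4^{\star}}$ and uses $\operatorname{div}(e_{4^{\star}})=\frac{2\sqrt{D}}{r}$, whereas you change to double-null coordinates where $e_{4^{\star}}=2\partial_v$ and the estimate reduces to a $v$-ODE; the paper's multiplier exponent $q=k+2$ and your $q=-k$ are the same weight once one accounts for the extra $r^2$ in the spacetime volume form, and your $H=r^{m-n}D^{n}f$ is just $2^{-n}r^{m}\kappa_{\star}^{n}f$. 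One small remark: the decay hypothesis $H=o(r^{k/2})$ you invoke to kill the term at $\mathcal{I}^{+}$ is not actually needed, since after moving the bulk to the left that term appears with a favorable (non-negative) sign and can simply be dropped.
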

\begin{proof}
	We multiply relation (\ref{outgoing transport}) with $ r^{m}\kappa_{\star}^{n} f$ and we obtain \begin{align*}
		e_{4^{\star}}\left(r^{2m}\kappa_{\star}^{2n} f^{2}\right) = 2\kappa_{\star}^{n+1} r^{m} f\cdot F
	\end{align*}
Now, dividing by $ r^{q} $ for $ q>0 $ we have \begin{align*}
	e_{4^{\star}}\left(r^{2m-q}\kappa_{\star}^{2n}f^{2}\right) - e_{4^{\star}}(r^{-q}) r^{2m}\kappa_{\star}^{2n}f^{2} = 2\kappa_{\star}^{n+1} r^{m-q}f\cdot F
\end{align*}
However, $ div (e_{4^{\star}}) =\frac{2\sqrt{D}}{r} $, and note $ e_{4^{\star}}= 2\partial_v + D\partial_r $, in ingoing coordinates, thus we write \begin{align*}
	div\left(r^{2m-q}\kappa_{\star}^{2n}f^{2} \cdot e_{4^{\star}}\right) + \sqrt{D}\left(q\sqrt{D}-2\right) r^{2m-q-1} \kappa_{\star}^{2n}f^{2} &=  2 \kappa_{\star} r^{-q}  \left(r^{m}\kappa_{\star}^{n}f\right)\cdot F  \\
	&\leq \epsilon r^{2m-q}\kappa_{\star}^{2n} \kappa_{\star}f^{2} + \dfrac{1}{\epsilon} r^{-q} \kappa_{\star} F^{2}
\end{align*}
Let $ q = k + 2  $, for $ k>0 $, and pick $ \epsilon = \frac{k}{4} $, then obtain \begin{align*}
		div\left(r^{2m-k-2}\kappa_{\star}^{2n}f^{2}\cdot e_{4^{\star}} \right) +  \sqrt{D}\left(\dfrac{k}{2}\sqrt{D}-2\dfrac{M}{r}\right) r^{2m-k-3} \kappa_{\star}^{2n}f^{2} \leq \dfrac{4}{k} \dfrac{\kappa_{\star}}{r^{k+2}} \cdot F^{2}, 
\end{align*}
Note $ \kappa_{\star} = \frac{2D}{r} $, and thus there exists $ R $ large enough such that \begin{align*}
		div\left(r^{2(m-n)-k-2}f^{2}\cdot e_{4^{\star}} \right) +   r^{2(m-n)-k-3} f^{2} \leq C\dfrac{1}{r^{k+3}} F^{2}
\end{align*}
for a constant $ C>0 $ depending on $ k,M $ and $ R. $ We apply the divergence theorem on the $ \I $ region to conclude the proof. Note, the null vector $ e_{4^{\star}} $ is normal to the boundary null hypersurfaces $ \check{O}_{\tau} $, and thus no such terms appear in the estimate. \\
\end{proof}

\paragraph{Estimates near future null infinity $ \I. $}		
Already, comparing the above lemma to the corresponding Lemma \ref{transport-3} of Section \ref{Teukolsky positive section}, we see that $ k $ appears with the opposite sign. This leads to weaker decay rates in $ r $ for the negative spin Teukolsky solutions. In particular, the transport equations (\ref{trans-underline}) yield for the corresponding scalars \begin{align}
	\begin{aligned}
		e_{4^{\star}}\left(r^{3}\kappa_{\star}h_{\underline{\mathfrak{f}}}\right) &= \kappa_{\star} \cdot \underline{\phi} \\
			e_{4^{\star}}\left(r^{5}\kappa_{\star}h_{\underline{\tilde{\beta}}}\right) &= \kappa_{\star} \cdot \underline{\psi}.
	\end{aligned}
\end{align} We see that for $ h_{\underline{\mathfrak{f}}} $ we have $ m=3, n=1, $ and thus, applying the above lemma for $ k=\delta $, where $ 0<\delta\ll1, $ we obtain \begin{align}
\int_{\I} r^{3-\delta} \dfrac{1}{r^{2}}h_{\underline{\mathfrak{f}}}^{2} \leq C\int_{\I\cap \br{r=R}} r^{2-\delta} h_{\underline{\mathfrak{f}}}^{2} + C \int_{\I} \dfrac{1}{r^{3+\delta}}\underline{\phi}^{2}
\end{align}
The second term of the right-hand side above is controlled  in Proposition \ref{control N flux at null infinity} for any $ \delta>0 $, while the timelike boundary term is treated by applying the divergence theorem in the region $ \mathcal{D}(\tau_1,\tau_2):=\mathcal{R}(\tau_1,\tau_2) \cap \br{r_c \leq r\leq R}. $ Using also the transport equations for $ h_{\underline{\tilde{\beta}}}, $ $ h_{\underline{\alpha}} $, and coarea formula we obtain \begin{align}
\begin{aligned}
			\int_{\tau_1}^{\tau_2} &\left( \int_{\check{O}_{\tilde{\tau}}} r^{3-\delta}  \dfrac{1}{r^{2}}h_{\underline{\mathfrak{f}}}^{2} + r^{7-\delta} \dfrac{ 1}{r^{2}}h_{\underline{\tilde{\beta}}}^{2} + r^{1-\delta} \dfrac{1}{r^{2}}h_{\underline{\alpha}}^{2}\right) d\tilde{\tau}\ \leq \   \\ & C  \sum_{i=1}^{2}\left( \int_{\check{\Sigma}_1} J_{\mu}^T[\underline{\Psi}_i]n_{\Sigma_{\tau_1}}^{\mu}+J_{\mu}^T[T\underline{\Psi}_i]n_{\check{\Sigma}_1}^{\mu} + \int_{\check{O}_{\tau_1}} \dfrac{1}{r}(\partial_v\underline{\Phi}_i)^2\right) 
		  + \int_{\check{\Sigma}_1\cap \br{r_c\leq r\leq R}} h_{\underline{\mathfrak{f}}}^{2} + h_{\underline{\tilde{\beta}}}^{2} + h_{\underline{\alpha}}^{2}  
\end{aligned}
\end{align}   
 The remaining first-order derivatives are controlled similarly as in Section \ref{f,beta estimates}, and so do we obtain estimates in the intermediate region $ \mathcal{D}(\tau_1,\tau_2).  $ In the following paragraph, we examine a degeneracy that manifests close to the event horizon $ \h $.

\paragraph{Estimates near the event horizon $ \h. $} The main issue we are faced with near the horizon can be already seen in the proof of Lemma \ref{lemma negative control of transport}, where the estimates degenerate on $ \h. $ This is due to the occurrence of stronger trapping on $ \h $ for the negative spin Teukolsky solutions, which also leads to a stronger instability as we will see in the coming Section \ref{horizon instability negative spin}.  Nevertheless, we show how to obtain pointwise estimates on the exterior which degenerate, however, on the horizon.  In this paragraph, we will be working with the transport equations (\ref{reduced transport negative}). Fix $ M<r_c<2M $ and let $ \mathcal{A}_c := \mathcal{R}(\tau_1,\tau_2) \cap \br{M\leq r\leq r_c} $, then we have the following lemma. 
\begin{lemma}\label{horizon lemma negative spin}
	Let $ f,F $ be two scalar functions satisfying \begin{align} \label{horizon transport negative spin}
		2\partial_r(f) + \partial_r(D\cdot f) = F,
	\end{align}
with respect to the ingoing coordinates $ (v,r,\theta,\phi), $ then for any $ a\in \left[\frac{1}{2},1\right ) $ and $ \tau_1\leq \tau_2 $,  we have \begin{align} 
	 \int_{\check{\Sigma}_{\tau_2}\cap \mathcal{A}_{c}}D^{a}f^{2} + \int_{\mathcal{A}_{c}} D^{\frac{1}{2}+a}f^{2} \leq C\int_{\check{\Sigma}_{\tau_1}\cap \mathcal{A}_{c}}D^{a}f^{2} + C \int_{\mathcal{A}_c} F^{2},
\end{align}
for a constant $ C>0 $ depending on $ M, r_c, a $ and $ \Sigma_{\tau_1}. $ 
\end{lemma}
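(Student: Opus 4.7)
The approach is a weighted energy identity tailored to the degeneracy of the outgoing transport equation (\ref{horizon transport negative spin}) on the horizon $\h$. I would multiply the equation $2\partial_v f + \partial_r(Df) = F$ by the multiplier $D^a f$ and rewrite the resulting pointwise identity as
\begin{equation*}
\partial_v(D^a f^2) + \tfrac{1}{2}\partial_r(D^{a+1}f^2) + \tfrac{(1-a)M}{r^2}\,D^{a+1/2}f^2 \;=\; D^a f\, F,
\end{equation*}
using $D' = 2M\sqrt{D}/r^2$ to extract the coercive third term. The factor $(1-a)$ makes this bulk manifestly non-negative on $\mathcal{A}_c$ provided $a<1$, and the weight $D^{a+1/2}$ matches exactly the one that appears on the left-hand side of the claimed inequality.

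For the forcing, I would apply Cauchy--Schwarz in the form
\begin{equation*}
|D^a f F| \;\leq\; \tfrac{(1-a)M}{2r_c^2}\,D^{a+1/2}f^2 \;+\; C(a,M,r_c)\, D^{a-1/2} F^2.
\end{equation*}
The hypothesis $a \geq \tfrac{1}{2}$ enters precisely here: it keeps $D^{a-1/2}$ uniformly bounded on $\mathcal{A}_c$, so the error is controlled by $C\,F^2$ as demanded, and the first piece is absorbed into the coercive bulk.

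Next I would integrate the resulting differential inequality over $\mathcal{R}(\tau_1,\tau_2)\cap\mathcal{A}_c$ against the measure $\sin\theta\,dv\,dr\,d\omega$ and apply the fundamental theorem of calculus, noting that the foliation parameter satisfies $\tau = v + \text{const}$ along the flow of $T = \partial_v$. The $\partial_v$-term produces the spacelike fluxes $\int D^a f^2$ on $\check{\Sigma}_{\tau_1}\cap\mathcal{A}_c$ and $\check{\Sigma}_{\tau_2}\cap\mathcal{A}_c$; the $\partial_r$-term produces a boundary flux that \emph{vanishes} on $\h$ (since $D(M)=0$) and is \emph{non-negative} at $r=r_c$ (so it can be discarded); and the bulk term yields the asserted $\int_{\mathcal{A}_c}D^{a+1/2}f^2$ up to uniform constants absorbable into $C$, as $r$ is bounded on $\mathcal{A}_c$ and the $r^{\pm 2}$ weights arising from comparing the coordinate measure with the ERN volume element are harmless.

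The main subtlety is the sharp balancing of the weight $a$: the identity degenerates as $a\uparrow 1$ (loss of coercive bulk) and as $a\downarrow\tfrac{1}{2}$ (loss of boundedness of $D^{a-1/2}$ on the forcing), so the half-open interval $[1/2,1)$ is sharp for this particular multiplier. This is a quantitative manifestation of the stronger trapping suffered on $\h$ by the negative-spin Teukolsky system compared with the positive-spin analysis of Section~\ref{f,beta estimates}, where the analogue Lemma~\ref{transport-3} produced a non-degenerate weight independent of any such parameter; it is precisely this $D^a$ degeneracy that will preclude decay of $\underline{\alpha}_{\star}$ itself in the subsequent horizon analysis.
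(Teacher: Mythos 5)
Your proof is correct and follows essentially the same route as the paper. You correctly read the transport equation as $2\partial_v f + \partial_r(Df) = F$ (the $\partial_r$ in the lemma statement is a typo for $\partial_v$, as the proof and the surrounding text (\ref{reduced transport negative}) make clear), you use the same multiplier $D^a f$, you extract the same coercive bulk $\tfrac{(1-a)M}{r^2}D^{a+1/2}f^2$ via $D'=2M\sqrt{D}/r^2$, you apply the same Cauchy--Schwarz with $a\geq\tfrac12$ keeping $D^{a-1/2}$ bounded, and you integrate; the only cosmetic difference is that the paper phrases the final step via the covariant divergence theorem whereas you use the fundamental theorem of calculus against the coordinate measure, which is equivalent up to the harmless $r^{\pm 2}$ weights you explicitly acknowledge.
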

\begin{proof}
	Let $a\in \left[\frac{1}{2},1\right ) $, then 
	we multiply relation (\ref{horizon transport negative spin}) with $ D^{a} \cdot f$ and we obtain \begin{align*}
		\partial_v\left(D^{a} f^{2}\right) + \dfrac{2M}{r^{2}}D^{\frac{1}{2}+ a} f^{2} + \partial_r\left(\dfrac{D^{1+a}}{2}f^{2}\right) - \partial_r\left(\frac{D^{1+a}}{2}\right)f^{2} = D^{a} f \cdot F
	\end{align*}
Now, using the fact that $ div(\partial_v) =0 $ and $ div(\partial_r) = \frac{2}{r} $, we obtain \begin{align} \label{lemma variation negative spin}
	div \left(D^{a}f^{2}\partial_v + \frac{D^{1+a}}{2}f^{2}\partial_r\right) + \left(\frac{2M}{r^{2}} D^{\frac{1}{2}+a} -\frac{1}{2} \partial_rD^{1+a} - \frac{D^{1+a}}{r} \right)f^{2} = D^{a}f \cdot F
\end{align} 
By simplifying the above expression and applying Cauchy-Schwarz  on the right-hand side, we get \begin{align*}
		div \left(D^{a}f^{2}\partial_v + \frac{D^{1+a}}{2}f^{2}\partial_r\right) + D^{\frac{1}{2}+a}\frac{1}{r}\left(1-a -\epsilon D^{a-\frac{1}{2}}+ (a-2)\sqrt{D}\right) f^{2} \leq \frac{r}{\epsilon} F^{2
		}
\end{align*}
Thus, choosing $ \epsilon = \frac{a}{3}>0 $ and  $ r_c $ close enough to $ r=M $, we apply the divergence theorem in $ \mathcal{A}_{c} $  and we find a positive constant $ C $ depending on $ M, r_c, a $ such that the estimate of the assumption holds.\\
\end{proof}

\begin{remark} Note, $ a=\frac{1}{2} $ is the smallest value for which the above lemma holds, for if we take $ a<\frac{1}{2} $, then $ D^{2a} > D^{\frac{1}{2}+a} $ near $ \h $ and the term we obtain after applying Cauchy-Schwarz in the last line cannot be absorbed in left-hand side one.
\end{remark}

With the help of the second Hardy inequality (Lemma \ref{second hardy}) we show a slightly stronger estimate which allows us to obtain better pointwise decay for the negative spin Teukolsky solutions on the exterior.

\begin{lemma} \label{improved horizon lemma for negative spin}
	Let $ f,F $ be as in the assumptions of Lemma \ref{horizon lemma negative spin}, then there exists $ \delta_0>0 $ such that for any $ 0<\delta\leq\delta_0 $ the following estimate hold \begin{align*}
		 \int_{\check{\Sigma}_{\tau_2}\cap \mathcal{A}_{c}}D^{\frac{1}{2}-\delta}f^{2} + \int_{\mathcal{A}_{c}} D^{1-\delta}f^{2} \leq C\int_{\check{\Sigma}_{\tau_1}\cap \mathcal{A}_{c}}D^{\frac{1}{2}-\delta}f^{2} + C \int_{\mathcal{D}} F^{2} + C \int_{\mathcal{A}_c\cup\mathcal{D}} D^{1-\delta}\left[(\partial_vF)^{2} + (\partial_rF)^{2}\right],
	\end{align*}
for a positive constant $ C $ depending on $ M,r_c,r_d, \delta_0, $ and $ \Sigma_{\tau_1}. $
\end{lemma}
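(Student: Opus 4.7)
The plan is to mimic the energy identity in the proof of Lemma \ref{horizon lemma negative spin}, but now with the slightly less degenerate multiplier $D^{1/2-\delta}f$, and then to dispose of the unavoidable negative power of $D$ that arises on the source by means of a weighted analogue of the second Hardy inequality (Lemma \ref{second hardy}).

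Multiplying the transport equation $2\partial_v f+\partial_r(Df)=F$ by $D^{1/2-\delta}f$ produces the divergence identity
\begin{align*}
\partial_v\!\left(D^{1/2-\delta} f^2\right) + \tfrac{1}{2}\,\partial_r\!\left(D^{3/2-\delta} f^2\right) + \left(\tfrac{1}{2}+\delta\right)\frac{M}{r^2}\,D^{1-\delta} f^2 \ = \ D^{1/2-\delta} f\cdot F.
\end{align*}
Integrating over $\mathcal{A}_c$, the $\{r=M\}$ boundary contribution vanishes because $D^{3/2-\delta}(M)=0$; the spacelike fluxes on $\check{\Sigma}_{\tau_i}\cap \mathcal{A}_c$ give the $\int D^{1/2-\delta} f^{2}$ terms in the statement (by coercivity against $T$, exactly as in Lemma \ref{horizon lemma negative spin}); the timelike trace on $\{r=r_c\}$ is later absorbed into an integral over $\mathcal{D}$; and the coefficient $(1/2+\delta)\tfrac{M}{r^2}$ produces the positive bulk $\int_{\mathcal{A}_c} D^{1-\delta} f^{2}$. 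On the source side I split by Cauchy--Schwarz as
\begin{align*}
\big|D^{1/2-\delta} f\,F\big| \ \leq\ \varepsilon\,\frac{M}{r^2}\,D^{1-\delta}f^2 \,+\, C_\varepsilon\, D^{-\delta} F^2,
\end{align*}
and for $\varepsilon$ sufficiently small the first term is absorbed into the good bulk, leaving the offending contribution $\int_{\mathcal{A}_c} D^{-\delta} F^2$ with a negative power of $D$ on $F$.

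The key new input is the \emph{weighted} Hardy inequality
\begin{align*}
\int_{\mathcal{A}_c} D^{-\delta} F^2 \ \leq\ C\!\int_{\{r=r_c\}}\!F^2 \;+\; C\!\int_{\mathcal{A}_c} D^{1-\delta}\big[(\partial_v F)^2+(\partial_r F)^2\big],
\end{align*}
obtained by repeating the argument of Lemma \ref{second hardy} with $\phi(r):=D^{1/2-\delta}(r)$ in place of $\sqrt{D}(r)$. Indeed, $\partial_r\phi=(1-2\delta)\tfrac{M}{r^2}D^{-\delta}$ produces exactly the desired Hardy weight; integration by parts in $r$ over $[M,r_c]$ is legitimate precisely because $\phi(M)=0$; and a Cauchy--Schwarz step on the cross term $D^{1/2-\delta} F\,\partial_r F$ yields the weight $D^{1-\delta}$ on $\partial_r F$. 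Passing from $\partial_r F$ to both $\partial_v F$ and $\partial_r F$ is automatic upon writing $\partial_\rho=k(r)\partial_v+\partial_r$ along the foliation, as in the proof of Lemma \ref{second hardy}. The condition $\phi(M)=0$ forces $\delta<1/2$, which is the precise source of the threshold $\delta_0$ in the statement. The remaining timelike boundary integral $\int_{\{r=r_c\}}F^2$ is then absorbed into $\int_{\mathcal{D}}F^2+\int_{\mathcal{D}}(\partial_r F)^2$ by a standard trace/cutoff argument inside $\mathcal{D}$, completing the estimate.

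The principal technical point is the fine matching of exponents. The split $D^{1/2-\delta}=D^{(1-\delta)/2}\cdot D^{-\delta/2}$ in Cauchy--Schwarz is dictated by the requirement of absorbing into the good bulk coefficient $(\tfrac12+\delta)\tfrac{M}{r^2}D^{1-\delta}$, and this in turn fixes the exponent $-\delta$ on $F^{2}$; for any weight more degenerate than $D^{-\delta}$ the auxiliary function $\phi$ would not vanish at $r=M$ and the boundary-term elimination driving the weighted Hardy step would fail. This is the main obstacle, and it also explains why this method cannot push past $\delta_0=1/2$.
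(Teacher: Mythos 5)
Your proof is correct, and it takes a genuinely different route from the paper at the Hardy step, which is the heart of the argument. Both you and the paper start with the same divergence identity coming from multiplying the transport equation by $D^{1/2-\delta}f$, and the same Cauchy--Schwarz split that isolates the singular source term $\int_{\mathcal{A}_c} D^{-\delta}F^2$. The difference is in how this term is then tamed. The paper's proof applies the \emph{unweighted} second Hardy inequality (Lemma~\ref{second hardy}) to the rescaled function $\psi := F/D^{\delta/2}$; expanding $\partial_r\bigl(F/D^{\delta/2}\bigr)$ then reintroduces a term comparable to $\delta^2\int_{\mathcal{A}_c}D^{-\delta}F^2$, which is absorbed back into the left-hand side on the condition $1-\tilde{C}(\delta/2)^2>0$. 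This is the source of $\delta_0$ there, and the threshold depends implicitly on the Hardy constant $\tilde{C}$. You instead rederive a \emph{weighted} Hardy inequality directly, by running the identity~(\ref{Hardy identity}) with the auxiliary function $\phi := D^{1/2-\delta}$ in place of $\sqrt{D}$, so that $\partial_r\phi = (1-2\delta)\tfrac{M}{r^2}D^{-\delta}$ produces the needed weight and $\phi(M)=0$ kills the horizon boundary term at once, for any $\delta<1/2$. This avoids the self-similar absorption entirely: the only place $\delta$ enters is through the positivity and vanishing of $\phi$, so your threshold $\delta_0<1/2$ is structural rather than tied to the numerical value of $\tilde{C}$. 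The remaining details (trace at $r=r_c$ into $\mathcal{D}$, conversion of $\partial_\rho$ to $\partial_v,\partial_r$, divergence corrections $\mathrm{div}(\partial_r)=2/r$) are handled the same way in both. In short: your argument is slightly cleaner and makes the role of $\delta_0$ more transparent, and it would stand as a valid alternative proof of the same statement.
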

\begin{proof} Let $ \delta>0 $ to be determined in the end, then
	repeating the steps of the previous lemma for $ a=\frac{1}{2}-\delta $ we arrive at \begin{align*}
			div \left(D^{\frac{1}{2}-\delta}f^{2}\partial_v + \frac{D^{1-\delta}}{2}f^{2}\partial_r\right) + \left(\frac{2M}{r^{2}} D^{\frac{1}{2}-\delta} -\frac{1}{2} \partial_rD^{1-\delta} - \frac{D^{1-\delta}}{r} \right)f^{2} &= D^{\frac{1}{2}-\frac{\delta}{2}} f \cdot D^{-\frac{\delta}{2}} F \\
			&\leq \epsilon D^{1-\delta} f^{2} + \dfrac{1}{\epsilon} \left(\dfrac{F}{D^{\frac{\delta}{2}}}\right)^{2}
	\end{align*}
Now, choosing $ \epsilon>0 $ small enough we can absorb the first term of the right-hand side in the left one. However, we still need to treat the second term of the right-hand side. For that, we apply the second Hardy inequality for the regions $ \mathcal{A}_c $, $ \mathcal{D}(\tau_1,\tau_2) $ and we obtain 
 \begin{align} \label{mario}
	\int_{\mathcal{A}_c} \left(\frac{F}{D(r)^{\frac{\delta}{2}}} \right)^{2} \leq \tilde{C} \int_{\mathcal{D}} F^{2} + \tilde{C} \int_{\mathcal{A}_c\cup \mathcal{D}} D^{1-\delta}(\partial_vF)^{2} + D \left[\partial_r \left(\frac{F}{D(r)^{\frac{\delta}{2}}} \right)\right]^{2}
\end{align}
where the constant $ \tilde{C} $ depends only on $ M,r_c,r_d=R  $ and $ \Sigma_0 $.
However, we have \[ \partial_r \left(\frac{F}{D(r)^{a}} \right) = D^{-a}\partial_rF - a D^{-a-1}\partial_rD\cdot F = D^{-a}\partial_rF - \dfrac{2aM}{r^{2}}D^{-a-\frac{1}{2}} F \]
Thus going back to (\ref{mario}) we obtain \begin{align}\begin{aligned}
		\int_{\mathcal{A}_c} \left(\frac{F}{D(r)^{\frac{\delta}{2}}} \right)^{2} \leq \tilde{C}\int_{\mathcal{B}} F^{2} + \tilde{C}\int_{\mathcal{A}_c\cup \mathcal{D}} D^{1-\delta}(\partial_vF)^{2} + D^{1-\delta}(\partial_rF)^{2} + (\delta/2)^{2} \left(\dfrac{F}{D^{\frac{\delta}{2}}}\right)^{2} 
		\\ \Rightarrow \ \ \ \ 
		\int_{\mathcal{A}_c} \left (1-\tilde{C}\left(\delta/2\right)^{2}\right )\left(\frac{F}{D(r)^{a}} \right)^{2} \leq \tilde{C}\int_{\mathcal{D}} F^{2} + \tilde{C}\int_{\mathcal{A}_c\cup \mathcal{D}} D^{1-\delta}(\partial_vF)^{2} + D^{1-\delta}(\partial_rF)^{2}
	\end{aligned}
\end{align}
Thus, there exists a small enough $ \delta_0>0 $ such that $ \left (1-\tilde{C}\left(\delta/2\right)^{2}\right ) $  is uniformly positive definite for any $ 0<\delta\leq \delta_0 $. Hence, for any $ 0<\delta\leq \delta_0 $, applying the divergence theorem for the relation at the beginning concludes the proof.

\end{proof}

In view of the transport equations (\ref{reduced transport negative}) for $ \underline{f} $ and $ \underline{b} $, we apply Lemma \ref{horizon lemma negative spin}  for $ a=1-\delta $,  $ 0<\delta \ll 1  $, and using coarea formula we obtain \begin{align}
	\begin{aligned}
			\int_{\tau_1}^{\tau_2}  \left(\int_{\check{\Sigma}_{\tilde{\tau}}\cap \mathcal{A}_c} D^{\frac{3}{2}-\delta} \underline{f}^{2} + D^{\frac{3}{2}-\delta} \underline{b}^{2} \right) & d\tilde{\tau}
		 \leq C \int_{\check{\Sigma}_{\tau_1}\cap \mathcal{A}_c} D^{1-\delta} \underline{f}^{2} + D^{1-\delta} \underline{b}^{2} +  \int_{\mathcal{A}_c} \phi^{2} + \psi^{2} \\
		& \hspace{-1.5cm} \leq C \int_{\check{\Sigma}_{\tau_1}\cap \mathcal{A}_c} D^{1-\delta} \underline{f}^{2} + D^{1-\delta} \underline{b}^{2} +  C  \sum_{i=1}^{2}\left( \int_{\check{\Sigma}_1} J_{\mu}^T[\underline{\Psi}_i]n_{\Sigma_{\tau_1}}^{\mu} \right) 
	\end{aligned}
\end{align}
Next, we apply Lemma \ref{improved horizon lemma for negative spin}, coarea formula, and Theorem \ref{Spacetime non degenerate photon estimate}  to get \begin{align}
	\begin{aligned}
		\int_{\tau_1}^{\tau_2} \left(\int_{\check{\Sigma}_{\tilde{\tau}}\cap \mathcal{A}_c} D^{1-\delta} \underline{f}^{2} + D^{1-\delta} \underline{b}^{2} \right)  d\tilde{\tau} \leq C & \int_{\check{\Sigma}_{\tau_1}\cap \mathcal{A}_c} D^{\frac{1}{2}-\delta} \underline{f}^{2} + D^{\frac{1}{2}-\delta} \underline{b}^{2} \\ &+   C  \sum_{i,j=1}^{2}\left( \int_{\check{\Sigma}_1} J_{\mu}^T[\underline{\Psi}_i]n_{\Sigma_{\tau_1}}^{\mu}+J_{\mu}^T[T^{j}\underline{\Psi}_i]n_{\check{\Sigma}_1}^{\mu} \right) 
	\end{aligned}
\end{align}
Using the transport equations (\ref{reduced transport negative}) and
 commuting them with the killing vector field $ T $ and the angular derivatives we derive estimates for the first-order derivatives of $ \underline{f},\underline{b}. $ With these local integrated energy estimates  in hand, we use the above lemmas (slightly modified) for the transport equation (\ref{scalar relating l frequency}) to show energy estimates for $ \underline{a} $ as well. Using as ingredients the energy estimates in each respected region  above and applying the dyadic sequence argument we show quadratic decay for the energy of $ h_{\underline{\mathfrak{f}}}, h_{\underline{\tilde{\beta}}} $ and $ h_{\underline{\alpha}}. $

 \begin{corollary} \label{Decay teukolsky -}
Let $ 0<\delta \ll 1 $, then there exists $ C>0 $ depending on $ M, \check{\Sigma}_0, \delta $ and norms of initial data such that \begin{align}
	\abs{r^{\frac{3-\delta}{2}} D^{\frac{3}{4}-\frac{\delta}{2}} \cdot h_{\underline{\mathfrak{f}}_{\ell}}}\leq C \dfrac{1}{\tau}, \hspace{1cm} 	\abs{r^{\frac{7-\delta}{2}}D^{\frac{3}{4}-\frac{\delta}{2}}\cdot h_{\underline{\tilde{\beta}}_{\ell}}}\leq C \dfrac{1}{\tau}, \hspace{1cm} 	\abs{r^{\frac{1-\delta}{2}} D^{\frac{7}{4}-\frac{\delta}{2}}  \cdot h_{\underline{\alpha}_{\ell}}} \leq C \dfrac{1}{\tau},  \label{a,f,b decay negative spin}
\end{align}
\end{corollary}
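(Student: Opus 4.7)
The strategy mirrors the passage from energy decay to pointwise decay carried out in Theorem~\ref{allfreqdecay} and in Corollary~\ref{Decay teukolsky +} for the positive spin case, with one essential novelty: the degeneracy of the energy estimates near $\mathcal{H}^+$ produced by Lemmas~\ref{lemma negative control of transport}, \ref{horizon lemma negative spin} and \ref{improved horizon lemma for negative spin} will be \emph{absorbed into the weights} of the final pointwise bounds, giving the factors $D^{3/4-\delta/2}$ and $D^{7/4-\delta/2}$ in the statement. I plan to proceed in three steps.

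First, I would collect the energy decay output of the argument sketched in the paragraph preceding the corollary. Repeating the dyadic sequence procedure of Proposition~\ref{non-degen decay}, applied to the weighted hierarchy produced by the three lemmas above (transported to $h_{\underline{\mathfrak{f}}_\ell}, h_{\underline{\tilde\beta}_\ell}$ via \eqref{reduced transport negative} and to $h_{\underline{\alpha}_\ell}$ via the relating equation \eqref{scalar relating l frequency}), and combined with Corollary~\ref{h estimates away} and the intermediate-region bounds of the type~\eqref{h-D-estimates}, yields inverse quadratic decay
\begin{equation*}
  \int_{\check{\Sigma}_\tau} W_{\underline{\mathfrak{f}}}\, h_{\underline{\mathfrak{f}}_\ell}^{\,2}\ +\ \int_{\check{\Sigma}_\tau} W_{\underline{\tilde\beta}}\, h_{\underline{\tilde\beta}_\ell}^{\,2}\ +\ \int_{\check{\Sigma}_\tau} W_{\underline{\alpha}}\, h_{\underline{\alpha}_\ell}^{\,2}\ \leq\ C\ \mathcal{I}[\underline{\Psi}_1,\underline{\Psi}_2](0)\cdot \dfrac{1}{\tau^{2}},
\end{equation*}
where the weights $W_\bullet$ are precisely $r^{2-\delta}D^{3/2-\delta}$, $r^{6-\delta}D^{3/2-\delta}$ and $D^{7/2-\delta}/r^{\delta}$ respectively, i.e.\ the squares of the weighting functions appearing in \eqref{a,f,b decay negative spin}. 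The cutoff and gluing procedure between the three regions $\br{M\leq r\leq r_c}$, $\mathcal{D}$ and $\br{r\geq R}$ is identical to the one used for $h_\mathfrak{f}, h_{\tilde\beta}, h_\alpha$ in Section~\ref{Teukolsky positive section}, so I would only need to check that the $D^{a}$ weights propagate correctly across the matching regions, which they do since $D$ is bounded and bounded below away from $\mathcal{H}^+$.

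Second, since the background is spherically symmetric and $\dd, \DD$, $\partial_v$, $\partial_r$ all commute with the Killing angular momentum vector fields $\Omega_j$, the same weighted decay holds with $h_{\underline{\mathfrak{f}}_\ell}, h_{\underline{\tilde\beta}_\ell}, h_{\underline{\alpha}_\ell}$ replaced by $\Omega^{a} h_{\,\cdot\,}$ for any multi-index $|a|\leq 2$. I would then introduce the rescaled scalars
\begin{equation*}
  g_{\underline{\mathfrak{f}}_\ell}\ :=\ r^{\frac{3-\delta}{2}}D^{\frac{3}{4}-\frac{\delta}{2}}h_{\underline{\mathfrak{f}}_\ell},\qquad
  g_{\underline{\tilde\beta}_\ell}\ :=\ r^{\frac{7-\delta}{2}}D^{\frac{3}{4}-\frac{\delta}{2}}h_{\underline{\tilde\beta}_\ell},\qquad
  g_{\underline{\alpha}_\ell}\ :=\ r^{\frac{1-\delta}{2}}D^{\frac{7}{4}-\frac{\delta}{2}}h_{\underline{\alpha}_\ell},
\end{equation*}
and apply the fundamental theorem of calculus along $\check{\Sigma}_\tau$, writing $g^{2}(r,\omega) = -\int_r^{\infty}\partial_{\rho}(g^{2})\,d\rho$, followed by Cauchy--Schwarz. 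Integrating on the unit sphere $\mathbb{S}^2$ produces an upper bound in terms of the weighted $L^{2}(\check{\Sigma}_\tau)$ norm of $g_{\,\cdot\,}$ and of $\partial_\rho g_{\,\cdot\,}$, which after expanding the $\partial_\rho$ derivative is exactly controlled by the weighted fluxes of step one, hence decays like $\tau^{-2}$. The pointwise $L^{\infty}(\mathbb{S}^2)$ bound is then obtained by commuting twice with $\Omega_j$ and invoking the Sobolev embedding $H^{2}(\mathbb{S}^2)\hookrightarrow L^{\infty}(\mathbb{S}^2)$, giving the claimed $\tau^{-1}$ decay.

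The main obstacle I anticipate is the careful bookkeeping of the $D$--weights near $\mathcal{H}^+$ combined with the $r$--weights near $\mathcal{I}^+$. Unlike the positive spin setting of Corollary~\ref{Decay teukolsky +}, the absence of any uniformly non-degenerate pointwise bound on the horizon is an intrinsic feature of negative spin perturbations, tied to the stronger trapping that already manifested in Lemma~\ref{horizon lemma negative spin} (the threshold $a\geq \tfrac{1}{2}$ is sharp). The $D^{7/4-\delta/2}$ degeneracy of $h_{\underline{\alpha}_\ell}$ is particularly delicate: it originates from the extra factor of $D$ multiplying $\underline{a}_\ell$ on the right-hand side of \eqref{scalar relating l frequency}, which forces the transport-type argument to be iterated on the relating equation and accumulates additional $D$ weights. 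Verifying that the dyadic argument closes for this additional level of degeneracy, and that the corresponding $r$-decay at null infinity is not spoiled when controlling the $\rho$-integrals above, will require the most attention.
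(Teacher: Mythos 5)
Your plan follows the paper's approach: obtain weighted energy decay on $\check\Sigma_\tau$ via the dyadic slice procedure, commute with $\Omega_j$, and promote to pointwise bounds by integrating along $\check\Sigma_\tau$ (fundamental theorem of calculus plus Cauchy--Schwarz) and Sobolev on the sphere. You also correctly identify the origin of each $D$-weight. However, there is a concrete gap in your second step that cannot simply be asserted.

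With $g := r^{(3-\delta)/2}D^{3/4-\delta/2}h_{\underline{\mathfrak{f}}_\ell}$, the Cauchy--Schwarz cross term requires control of $\int_{\check\Sigma_\tau}|\partial_\rho g|^2$. Since $D(r)=(1-M/r)^2$ vanishes to second order at $\h$, one has $\partial_r D^a = \tfrac{2aM}{r^2}\,D^{a-\frac12}$, so $\partial_\rho$ hitting the $D^{3/4-\delta/2}$ factor produces a term $\sim D^{1/4-\delta/2}h_{\underline{\mathfrak{f}}_\ell}$ whose squared weight is $D^{1/2-\delta}$. In the two-level hierarchy generated by Lemmas~\ref{horizon lemma negative spin} and~\ref{improved horizon lemma for negative spin}, the weight $D^{1/2-\delta}$ sits at the \emph{top}, for which only uniform boundedness, not integrated decay, is available; the threshold $a\geq\tfrac12$ in Lemma~\ref{horizon lemma negative spin} (respectively $a=\tfrac12-\delta$ in Lemma~\ref{improved horizon lemma for negative spin}) is sharp. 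Your assertion that $\int|\partial_\rho g|^2$ ``is exactly controlled by the weighted fluxes of step one, hence decays like $\tau^{-2}$'' is therefore unsupported near $\h$: a naive Cauchy--Schwarz gives $g^2\lesssim \tau^{-2}+(\tau^{-2})^{1/2}\cdot 1^{1/2}\sim\tau^{-1}$, i.e.\ $|g|\lesssim\tau^{-1/2}$ rather than $\tau^{-1}$. To recover the claimed rate you must either exhibit genuine decay of $\int_{\check\Sigma_\tau\cap\mathcal{A}_c}D^{1/2-\delta}\underline{f}^2$ (which the hierarchy as stated does not yield), or split $\partial_\rho$ into $\partial_v$ (routed through the transport equation \eqref{reduced transport negative}, where $\partial_v\underline{f}$ appears alongside $D\partial_r\underline{f}$ and $\underline{\phi}$) and a $D$-weighted $\partial_r$, so that the loss of $D^{1/2}$ does not fall on the lowest-weight term; otherwise you would at best recover a stronger degeneracy such as $D^{5/4-\delta/2}$ in the final pointwise estimate. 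The same issue compounds by an extra power of $D$ for $h_{\underline{\alpha}_\ell}$. As a minor point, the weights $W_\bullet$ you describe as ``the squares of the weighting functions'' are each off by one power of $r$; the missing factor is indeed produced in the Cauchy--Schwarz step, but the accounting should be made explicit rather than stated by fiat.
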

for all $ r> M, \ \tau \geq 1,$ and all admissible frequencies $ \ell \in \mathbb{N}. $ In addition, fix $ r_0> M $, then using standard elliptic identities of Section \ref{Geometry}, and Sobolev inequalities we obtain the following pointwise decay estimates away from the horizon $ \h $
\begin{align}
\norm{\underline{\mathfrak{f}}_{\star}}_{L^{\infty}(S^{2}_{v,r})} \leq C_{r_0} \dfrac{1}{v \cdot r^{\frac{3-\delta}{2}}}, \hspace{0.5cm} \norm{\underline{\tilde{\beta}}_{\star}}_{L^{\infty}(S^{2}_{v,r})} \leq C_{r_0} \dfrac{1}{v \cdot r^{\frac{7-\delta}{2}}}, \hspace{0.5cm} \norm{\underline{\alpha}_{\star}}_{L^{\infty}(S^{2}_{v,r})} \leq C_{r_0} \dfrac{1}{v \cdot r^{\frac{1-\delta}{2}}}
\end{align}
for all $ r\geq r_0$ and $ v>0.  $

		\subsection{Estimates for \texorpdfstring{$ \underline{\mathfrak{f}}_{\star}, \underline{\tilde{\beta}}_{\star}$ and $ \underline{\alpha}_{\star} $ along the event horizon $ \h $}{PDFstring}.}  \label{horizon instability negative spin}
		First, we prove estimates for $ \underline{\mathfrak{f}}_{\star}, \tilde{\underline{
				\beta}}_{\star} $ along the event horizon $ \h. $ For that, we use the Teukolsky equations they satisfy and the transport equations relating them to the Regge--Wheeler  solutions $ \underline{\bm{q}}^{F}, \ \underline{\bm{p}} $, for which we have already obtained estimates in the previous sections. In view of $  \underline{\bm{q}}^{F}, \ \underline{\bm{p}}  $ satisfying the same equations as the positive spin equivalent, $ \bm{q}^{F}, \bm{p}$, the same estimates holds for the negative spin scalar quantities $ \underline{\phi} , \underline{\psi}$ as in Section \ref{Section - Regge Wheeler estimates}.  As far as the tensor $ \underline{\alpha}_{\star} $ is concerned, we obtain estimates by using the induced relating equation (\ref{scalar relating l frequency}) in addition to the above.
		
		In order to understand the dominant behavior of Teukolsky solutions of \textit{negative spin} asymptotically along the event horizon $ \h $, it suffices to only study the quantities $ \underline{f}_{\ell=2}, \underline{b}_{\ell=2}$ and $ \underline{a} _{\ell=2}$ supported on the fixed harmonic frequency $ \ell=2.$ \textbf{For shortness, we will often neglect the ``$ \ell=2 $" subscript in the equations.}

		\begin{theorem} \label{f,b undeline estimates}
		
			Let $ \underline{\mathfrak{f}}_{\star}, \underline{\tilde{\beta}}_{\star} $ be solutions to the generalized Teukolsky equations of negative spin, then for generic initial data the following estimates hold asymptotically along $ \h $, for all $ \tau \geq 1 $ 
			\begin{itemize}
				\item Decay
				\begin{align*}
					\norm{\underline{\mathfrak{f}}_{\star}}_{L^{\infty}(S^{2}_{\tau,M})} &\lesssim_{_{M}} \tau^{-\frac{1}{4}}, \\
					\norm{\underline{\tilde{\beta}}_{\star}}_{L^{\infty}(S^{2}_{\tau,M})} &\lesssim_{_{M}} \tau^{-\frac{1}{4}}
				\end{align*}
				\item Non-Decay and Blow up \begin{align*}
					\norm{\slas{\nabla}_{\partial_r}^{k+1}\underline{\mathfrak{f}}_{\star}}_{S^{2}_{\tau,M}} &= \dfrac{1}{\sqrt{12}M^{2}}f_k \cdot  \mathcal{H}_{2}[\underline{\Psi}_2] \cdot \tau^{k} + \mathcal{O}\left(\tau^{k-\frac{1}{4}}\right) \\
					\norm{\slas{\nabla}_{\partial_r}^{k+1}\underline{\tilde{\beta}}_{\star}}_{S^{2}_{\tau,M}} &= \dfrac{1}{\sqrt{6}M^{4}}b_k \cdot \mathcal{H}_{2}[\underline{\Psi}_2] \cdot \tau^{k} + \mathcal{O}\left(\tau^{k-\frac{1}{4}}\right), 
				\end{align*}
			\end{itemize}
			for all $ k\geq 0, $ where $ \mathcal{H}_{2}[\underline{\Psi}_2] :=\norm{H_{\ell=2}[\underline{\Psi}_2]}_{S^{2}_{\tau,M}}$, $ \forall \ \tau \geq 1 $, and  the coefficients $ f_k,b_k $ are given by \[ f_k:= \dfrac{1}{M}\cdot b_k, \hspace{1cm} b_k := \dfrac{1}{20}\dfrac{1}{(2M^2)^k} \dfrac{(k+3)!}{3!}. \]
		\end{theorem}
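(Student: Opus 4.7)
My plan starts from the observation that $\underline{\bm{q}}^{F}$ and $\underline{\bm{p}}$ satisfy the same Regge--Wheeler system (\ref{csystem}) as the positive-spin quantities, so the entire hierarchy of estimates in Section \ref{Section - Regge Wheeler estimates} applies verbatim to the induced scalars $\underline{\phi},\underline{\psi}$: they decay according to Corollary \ref{psi-phi l study}, and their transversal derivatives obey the non-decay/blow-up profile of Theorem \ref{Scalar blow up} driven by the Aretakis constant $H_{2}[\underline{\Psi}_{2}]$ of the dominant $\ell=2$ mode. Throughout I work at the scalar level with $\underline{f}_{\ell},\underline{b}_{\ell},\underline{a}_{\ell}$, transfer to tensorial $L^{2}$-norms via the elliptic identities (\ref{elliptic 5 })--(\ref{elliptic 6}), and pass to $L^{\infty}(S^{2}_{\tau,M})$ by commuting with the angular Killing fields $\Omega_{j}$ and applying Sobolev embedding on $S^{2}$, exactly as in Theorems \ref{f,b big Theorem} and \ref{a big Theorem}. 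Only the $\ell=2$ mode contributes to the leading asymptotic on $\h$: the infinite $\ell$-tail in the harmonic decomposition converges uniformly as in Proposition \ref{finished estimates of Regge}.

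The technical engine is the Teukolsky system (\ref{Teukolsky scalar rescaled}) restricted to $\h$. Using $D(M)=D'(M)=0$ and $\slashed{\Delta}=-\ell(\ell+1)/M^{2}$, for the $\ell=2$ mode the potential coefficient of $\underline{f}_{\ell=2}$ cancels exactly, which produces the Aretakis-type conservation law
\begin{align*}
\partial_{v}\Bigl(\partial_{r}\underline{f}_{\ell=2} + \tfrac{2}{M}\underline{f}_{\ell=2} - \tfrac{1}{M^{2}}\underline{a}_{\ell=2}\Bigr)\Big|_{\h} \;=\; 0.
\end{align*}
Parallel computations on the equations for $\underline{a}_{\ell=2}$ and $\underline{b}_{\ell=2}$, combined with the relating equation (\ref{scalar relating l frequency}), whose restriction to $\h$ reads $\partial_{v}\underline{a}_{\ell=2}|_{\h}=M^{-2}\underline{b}_{\ell=2}|_{\h}+M^{-1}\underline{f}_{\ell=2}|_{\h}$, provide a closed first-order $v$-ODE system for the triple $(\underline{f}_{\ell=2},\underline{a}_{\ell=2},\underline{b}_{\ell=2})|_{\h}$. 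The sources in this system are then tied to $H_{2}[\underline{\Psi}_{2}]$ through the horizon transport identities $2M\partial_{v}\underline{f}_{\ell=2}|_{\h}=\underline{\phi}_{\ell=2}|_{\h}$ and $2M\partial_{v}\underline{b}_{\ell=2}|_{\h}=\underline{\psi}_{\ell=2}|_{\h}$, obtained from (\ref{scalar transport underline}) upon noting that $\partial_{r}(r^{2}D h_{\underline{\mathfrak{f}}})|_{\h}=0$.

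With this setup the decay claim follows by propagating the bulk pointwise decay of Corollary \ref{Decay teukolsky -} from a fixed $r_{0}>M$ down to $r=M$ along the horizon transport ODE, making use of the degenerate control supplied by Lemmas \ref{horizon lemma negative spin}--\ref{improved horizon lemma for negative spin}. The non-decay of $\slas{\nabla}_{\partial_{r}}\underline{\mathfrak{f}}_{\star}$ is then automatic from the conservation law above once $\underline{f}_{\ell=2}|_{\h}$ and $\underline{a}_{\ell=2}|_{\h}$ have been shown to decay, and the value of the conserved charge is identified with a multiple of $\mathcal{H}_{2}[\underline{\Psi}_{2}]$ by matching against the horizon transport asymptotic. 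The $\tau^{k}$ blow-up at order $k+1$ is obtained by induction on $k$: commute the Teukolsky equation with $\partial_{r}^{k}$, evaluate at $r=M$, use the vanishing of $D$ and $D'$ to isolate a clean $\partial_{v}$-relation whose right-hand side is $O(\tau^{k-1})$ by the inductive hypothesis, and integrate in $v$; the explicit coefficients $f_{k}$ and $b_{k}$ emerge from the combinatorial factors analogous to those producing $a_{2,k}^{^{(\ell)}}$ in Theorem \ref{Scalar blow up}.

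The principal obstacle is the apparent mismatch between the sharp $\tau^{-3/4}$ decay of $\underline{\phi}_{\ell=2}|_{\h}$ and the stronger $\tau^{-1/4}$ bound claimed for $\underline{f}_{\ell=2}|_{\h}$: naive $v$-integration of $2M\partial_{v}\underline{f}_{\ell=2}|_{\h}=\underline{\phi}_{\ell=2}|_{\h}$ would produce only $\tau^{+1/4}$ growth. Resolving this forces one to upgrade the pointwise Regge--Wheeler bound on $\h$ to an exact expansion of the form $\underline{\phi}_{\ell=2}|_{\h}=\partial_{v}G+(\text{strictly faster})$ with $|G|\lesssim\tau^{-1/4}$, obtained by opening up the proof of Theorem \ref{Scalar blow up} one order earlier and extracting the full subleading asymptotic profile of $\Psi_{2}^{(2)}|_{\h}$ rather than merely its upper bound. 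This is the technically heaviest step of the argument and is precisely where the sharpness of the Aretakis-type hierarchy of Section \ref{Section - Regge Wheeler estimates} enters decisively into the Teukolsky analysis.
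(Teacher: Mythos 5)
Your overall setup is sound (positive-spin Regge--Wheeler machinery transfers to $\underline{\phi},\underline{\psi}$; dominant $\ell=2$ mode; elliptic identities and Sobolev on $S^2$), and your horizon conservation law for $\underline{f}_{\ell=2}$ and the horizon transport identities $2M\,\partial_v\underline{f}|_{\h}=\underline{\phi}|_{\h}$, $2M\,\partial_v\underline{b}|_{\h}=\underline{\psi}|_{\h}$ are correct. But the strategy you then propose --- assemble a first-order $v$-ODE system on $\h$, integrate in $v$, notice the rate mismatch $\tau^{-3/4}\to\tau^{+1/4}$, and fix it by extracting an ``exact expansion'' $\underline{\phi}|_{\h}=\partial_v G+(\text{faster})$ with $|G|\lesssim\tau^{-1/4}$ --- is circular, because on $\h$ the transport equation already gives you $G=2M\underline{f}$, so exhibiting such a $G$ with the claimed bound is \emph{equivalent} to the decay of $\underline{f}$ you are trying to prove. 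There is no genuine new input in that step.

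The actual mechanism in the paper avoids any integration in $v$. The key observation is that the restriction of the Teukolsky equation to $\h$ produces a relation of the schematic form $T\bigl(2\partial_r\underline{f}+\tfrac{4}{M}\underline{f}\bigr)=\tfrac{2}{M^2}\underline{f}+\tfrac{2}{M^3}\underline{b}$, while the $\partial_r$-differentiated transport equation restricted to $\h$ gives $T(2\partial_r\underline{f})=M^{-1}\partial_r\underline{\phi}-M^{-2}\underline{\phi}-\tfrac{2}{M^2}\underline{f}$ and the undifferentiated transport gives $T\underline{f}=\tfrac{1}{2M}\underline{\phi}$. Substituting the latter two into the former \emph{eliminates all $T$-derivatives} and leaves a purely algebraic identity
\begin{align*}
\frac{4}{M^2}\underline{f}+\frac{2}{M^3}\underline{b}=M^{-1}\partial_r\underline{\phi}+M^{-2}\underline{\phi},
\end{align*}
with a companion algebraic identity from the $\underline{b}$-equation, yielding a nondegenerate $2\times 2$ linear system that expresses $\underline{f}|_{\h},\underline{b}|_{\h}$ directly as fixed linear combinations of $\underline{\phi},\underline{\psi},\partial_r\underline{\phi},\partial_r\underline{\psi}$ on $\h$. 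The $\tau^{-1/4}$ rate then falls out immediately from Corollary \ref{psi-phi l study } applied to $\partial_r\underline{\phi}_{\ell=2}$ and $\partial_r\underline{\psi}_{\ell=2}$ (with $\ell-1=1$), with no $v$-integration and no refined asymptotics of $\Psi_2^{(2)}$ needed. The same elimination at order $\partial_r^{k+1}$ yields the non-decay and blow-up coefficients by plugging in Theorem \ref{Scalar blow up} for $\partial_r^{k+2}\underline{\Psi}_2^{(2)}$. This is the step missing from your proposal: the Teukolsky equation must be used to \emph{cancel} the $T$-derivatives appearing in the transport system, not as a source to be integrated.
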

		\begin{proof} We begin by studying the induced scalars $ \underline{f}
			_{\ell=2} $ and $ \underline{b}_{\ell=2} $.
			First, we show the decay estimates. We evaluate the Teukolsky equation for $ \underline{f} $ on the horizon $ \h $ and we obtain \begin{align*}
				T\left (2\partial_r \underline{f} + \frac{4}{M}\underline{f} \right )- \dfrac{6}{M^2}\underline{f} = - \dfrac{4}{M^2}\underline{f} + \dfrac{2}{M^3}\underline{b}
			\end{align*}
			On the other hand, differentiating the transport equation for $ \underline{f} $ and evaluating it on $ \h $ yields \begin{align*}
				T(2\partial_r\underline{f}) = M^{-1} \partial_r \underline{\phi} - M^{-2}\underline{\phi} - \dfrac{2}{M^2}\underline{f} \label{f teukolsky}
			\end{align*}
			Combining the equations above and the transport equation for $ \underline{f} $ produces \begin{align}
				\dfrac{4}{M^2}\underline{f}+\dfrac{2}{M^3}\underline{b}= M^{-1} \partial_r \underline{\phi} + M^{-2}\underline{\phi}. 
			\end{align}
			Following the same procedure for the Teukolsky and transport equations satisfied by $ \underline{b} $, we obtain \begin{align}
				\begin{aligned}
					T\left (2\partial_r \underline{b} + \frac{4}{M}\underline{b} \right )- \dfrac{6}{M^2}\underline{b} &= \frac{8}{M}\underline{f} + \frac{2}{M^2} \underline{b} \\
					\Rightarrow \ \ \ M^{-1}\partial_r\underline{\psi} + M^{-2} \underline{\psi} &= \frac{8}{M}\underline{f} + \frac{10}{M^2} \underline{b} \label{b teukolsky}
				\end{aligned}
			\end{align}
			Thus, solving the system (\ref{f teukolsky},\ref{b teukolsky}) yields the expressions \begin{align*}
				\dfrac{6}{M^2}\underline{b} &= \dfrac{1}{M}\partial_r\underline{\psi} - 2 \partial_r \underline{\phi} + \dfrac{1}{M^2}\underline{\psi} - \dfrac{2}{M}\underline{\phi} \\
				\dfrac{12}{M}\underline{f} & = 5 \partial_r \underline{\phi} - \dfrac{1}{M}\partial_r \underline{\psi} + \dfrac{5}{M} \underline{\phi} - \dfrac{1}{M^2}\underline{\psi},
			\end{align*}
			and using Corollary \ref{psi-phi l study } we obtain the decay rates of $ \underline{f}, \underline{b} $ along the event horizon $ \h. $
			\begin{flushleft}
				\hrulefill
			\end{flushleft}
			
			To show the non-decay estimates we consider a $ \partial_r$-derivative of the Teukolsky system and we consider sufficiently many $  \partial_r-$derivatives  of  the transport equations to express all components in terms of $ \underline{f}, \underline{b}, \underline{\phi} $ and $ \underline{\psi}. $ In particular, the equation for $ \underline{f} $ yields \begin{align*}
				T&\left(2\partial_r^{2}\underline{f} + \dfrac{4}{M}\partial_r\underline{f} - \frac{4}{M^2}\underline{f}\right) - \frac{6}{M^2}\partial_r\underline{f} + \frac{12}{M^3} \underline{f} + \dfrac{2}{M^2}\partial_r \underline{f} \ = \\ & \ \ - \dfrac{4}{M^2}\partial_r \underline{f} + \dfrac{10}{M^3} \underline{f} + \dfrac{2}{M^2}\partial_r\underline{b} - \dfrac{6}{M^4} \underline{b} - \dfrac{2}{M^2} \partial_r \underline{f}
			\end{align*}
			\begin{align}
				\Rightarrow \hspace{0.5cm} \dfrac{4}{M}\partial_r \underline{f} + \dfrac{2}{M^2} \partial_r \underline{b} = \partial_r^{2}\underline{\phi} - \dfrac{2}{M^2} \underline{\phi} + \dfrac{10}{M^2}\underline{f} + \dfrac{6}{M^3}\underline{b}
			\end{align}
			Using the same approach for the equations of $ \underline{b} $ we obtain \begin{align}
				\dfrac{8}{M}\partial_r \underline{f} + \dfrac{10}{M^2} 
				\partial_r\underline{b} = \frac{1}{M} \partial_r^{2}\underline{\psi} - \dfrac{2}{M^3}\underline{\psi} + \dfrac{24}{M^2}\underline{f}+  \dfrac{30}{M^3}\underline{b}
			\end{align}
			Note that the coefficients of $ \partial_r\underline{f}, \partial_r\underline{b} $ terms in the equations above are the same with the top order ones in (\ref{f teukolsky},\ref{b teukolsky}). Solving the system once again yields \begin{align}
				\begin{aligned}
					\dfrac{6}{M^2}\partial_r\underline{b}\ =\ &	\dfrac{1}{M}\partial_r^{2}\underline{\psi} - 2\partial_r^{2}\underline{\phi}  - \dfrac{2}{M^3}\underline{\psi}  + \dfrac{4}{M^2} \underline{\phi} +  \dfrac{4}{M^2}\underline{f}+   \dfrac{18}{M^3}\underline{b} \\ 
					\dfrac{12}{M}\partial_r \underline{f}\ = \ &  5\partial_r^{2}\underline{\phi}- \frac{1}{M} \partial_r^{2}\underline{\psi} - \dfrac{10}{M^2} \underline{\phi} + \dfrac{2}{M^3}\underline{\psi}+ \dfrac{26}{M^2}\underline{f} \label{partial_r f,b}
				\end{aligned}
			\end{align}
			Consider the decomposition of $ \underline{\phi}, \underline{\psi} $ in terms of $ \Psi_1^{^{(\ell=2)}}, \Psi_2^{^{(\ell=2)}} $
			\begin{align*}\begin{aligned}
					\underline{\phi}_{\ell=2} &= \dfrac{1}{10M^2} \Psi_2^{^{(2)}}+ \dfrac{1}{20M^2}\Psi_1^{^{(2)}}\\
					\underline{\psi}_{\ell=2} &= -\dfrac{1}{10M} \Psi_2^{^{(2)}}+ \dfrac{1}{5M}\Psi_1^{^{(2)}} 
				\end{aligned}
			\end{align*}
			then, using the conservation laws of Theorem \ref{conservation horizon} and the decay estimates of Theorem \ref{PWD-estimates} for the expressions in (\ref{partial_r f,b}) we obtain\begin{align*}
				\partial_r{\underline{b}} &= - \dfrac{1}{20} H_{2}[\underline{\Psi}_2] +	\mathcal{O}(\tau^{-\frac{1}{4}})\\
				\partial_r\underline{f} &= \dfrac{1}{20M} H_{2}[\underline{\Psi}_2] +	\mathcal{O}(\tau^{-\frac{1}{4}}),
			\end{align*}
			and thus concluding the non-decay estimates of the proposition.
			\begin{flushleft}
				\hrulefill
			\end{flushleft}
			
			Finally, to obtain the blow-up estimates we use induction on the number of derivatives. Note that the $ k=0 $ case corresponds to the non-decay estimates shown above. Assume the expression of the assumption holds for all $ s\leq k $, $ k\geq 1, $ then we consider $ \partial_r^{k+1}-$derivatives of the Teukolsky system and we use the same procedure as above. In particular, for the equations of $ \underline{f} $ we obtain \begin{align}\begin{aligned}
					&T\left(2\partial_r^{k+2}\underline{f}+ \partial_r^{k+1}\left(\dfrac{2}{r}\underline{f}\right)\Big|_{r=M}  \right) -  \dfrac{6}{M^2}\partial_r^{k+1}\underline{f} + \left[\binom{k+1}{2}D''(M)+
					\binom{k+1}{1}R'(M)\right]\partial_r^{k+1}\underline{f} \\ & \hspace{0.5cm}=\ -\dfrac{4}{M^2}\partial_r^{k+1}\underline{f} + \dfrac{2}{M^3}\partial_r^{k+1}\underline{b} - \binom{k+1}{1}D''(M)\partial_r^{k+1}\underline{f} + D''(M) \dfrac{1}{M^2}\partial_r^{k-1}\underline{a} + \mathcal{L}[\underline{f}^{^{\leq k}},\underline{b}^{^{\leq k}}, \underline{a}^{^{\leq k-2}}] \label{k-teukolsky f}
				\end{aligned}
			\end{align}
			where $  \mathcal{L}[\underline{f}^{^{\leq k}},\underline{b}^{^{\leq k}}, \underline{a}^{^{\leq k-2}}]$ is an expression involving $ \partial_r-$
			derivatives of the underline quantities up to the order of their respective superscript.
			Differentiating the transport equation for $ \underline{f} $ as well we obtain \begin{align*}
				T(2\partial_r^{s}\underline{f}) = \partial_r^{s}\left(\dfrac{1}{r}\underline{\phi}\right)- \partial_r^{s+1}\left(D\cdot \underline{f}\right), \hspace{1cm} \forall \ s\geq 0
			\end{align*}
			and thus the previous equation yields \begin{align}
				\dfrac{4}{M^2}\partial_r^{k+1}\underline{f} 	+\dfrac{2}{M^3}\partial_r^{k+1}\underline{b} = \dfrac{1}{M}\partial_r^{k+2}\underline{\phi} + \tilde{\mathcal{L}}[\underline{\phi}^{^{\leq k+1}}] +  \mathcal{L}[\underline{f}^{^{\leq k}},\underline{b}^{^{\leq k}}]. \label{k-f}
			\end{align}
			Note that the term of $ \partial_r^{k-1} \underline{a}$ in (\ref{k-teukolsky f}) was absorbed in the $  \mathcal{L}[\underline{f}^{^{\leq k}},\underline{b}^{^{\leq k}}]  $ term. We can do this by taking $ 
			\partial_r^{k-1}- $ derivatives of  the Teukolsky equation for $ \underline{a} $ and using the relating equation (\ref{scalar relating l frequency})  we express $  \partial_r^{k-1} \underline{a} $ in terms of lower derivatives of itself and derivatives up to order $ k $ of both $ \underline{f},\underline{b}. $ By consecutively repeating the same procedure for the remaining lower order derivative of $ \underline{a} $ we eventually write it only in terms of derivatives of $ \underline{f},\underline{b}. $
			
			We repeat these steps for the equations of $ \underline{b} $ and we obtain \begin{align}
				\frac{8}{M}\partial_r^{k+1}\underline{f} +  \dfrac{10}{M^2}\partial_r^{k+1}b &= \dfrac{1}{M}\partial_r^{k+2}\underline{\psi}   +  \tilde{\mathcal{L}}[\underline{\psi}^{^{\leq k+1}}] +  \mathcal{L}[\underline{f}^{^{\leq k}},\underline{b}^{^{\leq k}}] \label{k - b} 
			\end{align}
			Once again, after solving the system formed by (\ref{k-f},\ref{k - b}), using the induction assumption for all derivatives up to order $ k $ and using Theorem \ref{Scalar blow up}, we obtain \begin{align*}
				\begin{aligned}
					\partial_r^{k+1}\underline{b}(\tau,\omega) &= -\dfrac{1}{M}\dfrac{M^3}{20M^2} \partial_r^{k+2}\underline{\Psi}_2^{^{(2)}}(\tau,\omega) + \mathcal{O}(\tau^{k-1}) \\ &=
					(-1)^{k+1} b_k\  H_2[\underline{\Psi}_2](\omega) \cdot \tau^{k} + \mathcal{O}(\tau^{k-\frac{1}{4}}) \\ 
					\partial_r^{k+1}\underline{f}(\tau,\omega) \ &=\ \dfrac{1}{20M}\partial_r^{k+2}\underline{\Psi}_2^{^{(2)}} + \tilde{\mathcal{L}}[\underline{\Psi_2}^{^{\leq k+1}}, \underline{\Psi_1}^{^{\leq k+2}} ] +  \mathcal{L}[\underline{f}^{^{\leq k}},\underline{b}^{^{\leq k}}] \\ &= \
					(-1)^{k} f_k\  H_2[\underline{\Psi}_2](\omega) \cdot \tau^{k} + \mathcal{O}(\tau^{k-\frac{1}{4}})
				\end{aligned}
			\end{align*}
			asymptotically along the event horizon $ \h,  $ for any $ k\geq 1. $
			
			\begin{flushleft}
				\hrulefill
			\end{flushleft}
			
			One can proceed similarly to show estimates for $\underline{\mathfrak{f}}_{\ell}$ and $ \underline{b}_{\ell} $, for all  frequencies $ \ell $. Of course, the higher the frequency $ \ell $ the more $ \partial_r- $derivatives we must take for the non-decay and blow estimates to manifest. Nevertheless, using standard elliptic identities and the results of the dominant frequency projection $ \ell=2 $ we conclude the estimates of the assumption.	
			
		\end{proof}

		\subsection*{Non-decay and blow up estimates for $ \underline{\alpha}_{\star} $ on $ \h. $}
		Now that we understand the behavior of $ \underline{f}, \underline{b} $ asymptotically along the event horizon $ \h $, we can prove estimates for the  rescaled  scalar extreme curvature component $ \underline{a} $ of spin -2. In particular, we see that it does \textbf{not} decay along the event horizon, and any transversal invariant derivative we consider leads to blow-up asymptotically on $ \h. $ Then, estimates for $ \underline{\alpha}_{\star} $ follow by standard elliptic identities.
		
		\begin{theorem} \label{underline a estimate}
			Let $ \underline{\alpha}_{\star} $ be solutions to the generalized Teukolsky equation of  -2--spin, then for generic initial data the following estimates hold asymptotically along $ \h $, for all $ \tau \geq 1 $ 	
			\begin{itemize}\item Non-decay
				\begin{align*}
					\norm{\underline{\alpha}_{\star}}_{S^{2}_{\tau,M}} &=\  \dfrac{1}{\sqrt{12}M^{2}} \dfrac{1}{30}\cdot  \mathcal{H}_{2}[\underline{\Psi}_2]  + \mathcal{O}\left(\tau^{-\frac{1}{4}}\right), 
				\end{align*}
				\item Blow-up
				\begin{align*}
					\hspace{2.5cm}	\norm{\slas{\nabla}_{\partial_r}^{k}\underline{\alpha}_{\star}}_{S^{2}_{\tau,M}} &= \dfrac{1}{\sqrt{12}M^{2}}c_k \cdot  \mathcal{H}_{2}[\underline{\Psi}_2] \cdot \tau^{k} + \mathcal{O}\left(\tau^{k-\frac{1}{4}}\right), \hspace{1cm} \forall \ k\geq 1.
				\end{align*}
			\end{itemize}
			where $\mathcal{H}_{2}[\underline{\Psi}_2] :=\norm{H_{\ell=2}[\underline{\Psi}_2]}_{S^{2}_{\tau,M}}, \ \forall \ \tau \geq 1$ and  \[ c_k := \dfrac{1}{30}\dfrac{1}{(2M)^k} \frac{(k+3)!}{3!}, \hspace{1cm} k\geq 1.\] 
		\end{theorem}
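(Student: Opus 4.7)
The strategy will mirror the proof of Theorem \ref{f,b undeline estimates} but relies on the relating equation (\ref{scalar relating l frequency}) rather than the Teukolsky equation for $\underline{a}$ itself. Since $D(M) = D'(M) = 0$, evaluating (\ref{scalar relating l frequency}) on the event horizon $\h$ kills the $\partial_r \underline{a}_\ell$ and $\underline{a}_\ell$ terms on the left and right simultaneously, leaving the first-order transport equation
\begin{align*}
T\underline{a}_\ell \ =\ \dfrac{\mu^{2}}{4M^{2}}\, \underline{b}_\ell + \dfrac{1}{M}\, \underline{f}_\ell,\qquad \mu^{2}=(\ell-1)(\ell+2)
\end{align*}
along the null generators of $\h$. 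Specializing to the dominant frequency $\ell=2$ gives $T\underline{a}=\frac{1}{M^2}\underline{b}+\frac{1}{M}\underline{f}$, from which the non-decay statement will be extracted by integration.

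First I would substitute into the RHS the algebraic formulas for $\underline{f}, \underline{b}$ derived in the proof of Theorem \ref{f,b undeline estimates}, expressed in terms of $\partial_r\underline{\phi}, \partial_r\underline{\psi}$ and lower-order data. A crucial cancellation occurs: the leading $\tau^{-\frac14}$ contributions from $\partial_r \Psi_2^{(\ell=2)}$ in $\underline{f}$ and $\underline{b}$ cancel in the particular combination $\frac{1}{M^2}\underline{b} + \frac{1}{M}\underline{f}$. What survives decays integrably in $\tau$, so $\underline{a}_{\ell=2}(\tau,\omega)$ approaches a finite limit along the null generators, which for generic initial data is proportional to $H_{\ell=2}[\underline{\Psi}_2](\omega)$ with proportionality $\frac{1}{30}$. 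The error rate $\mathcal{O}(\tau^{-\frac14})$ is inherited from the pointwise decay estimates of Theorem \ref{PWD-estimates}. Passing to the tensorial norm uses the elliptic identity of Section \ref{Geometry}, $\int_{S^2_{\tau,M}}|\underline{\alpha}_\star|^{2}=\sum_{\ell\geq 2}\frac{2}{\ell(\ell+1)(\ell(\ell+1)-2)}\int_{S^2_{\tau,M}}|h_{\underline{\alpha}_\ell}|^2$, and on $\h$ the dominant contribution from $\ell=2$ gives the prefactor $\frac{1}{\sqrt{12}M^2}$.

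For the blow-up estimates, I would induct on $k\geq 1$. At each step, differentiate the relating equation $k$ times in $\partial_r$ and evaluate on $\h$, which produces a transport equation
\begin{align*}
T\,\partial_r^{k}\underline{a}_{\ell=2} \ =\ \text{const}\cdot\partial_r^{k}\underline{f}_{\ell=2}+\text{const}\cdot\partial_r^{k}\underline{b}_{\ell=2}+\mathcal{D}^{\leq k-1}\!\left[\underline{a}_{\ell=2},\underline{f}_{\ell=2},\underline{b}_{\ell=2}\right],
\end{align*}
where $\mathcal{D}^{\leq k-1}$ denotes a linear combination of lower-order $\partial_r$-derivatives (with coefficients depending on $M$ and $k$) obtained by differentiating the coefficients $D(r),D'(r)$ and reducing via the relating equation whenever a $D\partial_r\underline{a}$-type term would otherwise appear. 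By Theorem \ref{f,b undeline estimates}, $\partial_r^{k}\underline{f},\partial_r^{k}\underline{b}$ grow like $\tau^{k-1}\,\mathcal{H}_2[\underline{\Psi}_2]$ asymptotically, while the lower-order terms are controlled by the inductive hypothesis. Integrating in $v$ along $\h$ from $\tau=1$ to $\tau$ then yields the asymptotic $\partial_r^{k}\underline{a}_{\ell=2}(\tau,\omega)= (-1)^{k}\,c_k\, H_{\ell=2}[\underline{\Psi}_2](\omega)\,\tau^{k}+\mathcal{O}(\tau^{k-\frac14})$. The elliptic identity for $\slas{\nabla}_{\partial_r}^{k}\underline{\alpha}_\star$ together with the fact that the tail sum over $\ell\geq k+3$ decays uniformly (and the finite set of intermediate frequencies $2<\ell\leq k+2$ contribute only subleading polynomial behavior, as explained in the analogous treatment of $\alpha_\star$ in Theorem \ref{a big Theorem}) delivers the stated norm estimate.

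\paragraph{Main obstacle.} The qualitative structure (non-decay at $k=0$ and polynomial blow-up $\tau^k$ for $k\geq 1$) follows cleanly from the relating equation combined with the already-established Aretakis-type behavior of $\underline{f},\underline{b}$. The hard part is the precise constant $c_k=\frac{1}{30\,(2M)^k}\frac{(k+3)!}{6}$: one must verify that the recursion generated by $k$-fold differentiation of (\ref{scalar relating l frequency}), after the cancellation of the $\partial_r^{k+2}\Psi_2^{(\ell=2)}$ leading term in $\text{const}\cdot\partial_r^{k}\underline{f}+\text{const}\cdot\partial_r^{k}\underline{b}$ (analogous to the $k=0$ cancellation), telescopes to the explicit factorial expression above, with the $(2M)^{-k}$ weight inherited from Theorem \ref{Scalar blow up}. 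This is a bookkeeping task rather than a conceptual one, but it is where the analysis becomes delicate, since the $\ell=2$ frequency alone dictates the asymptotics and the finite-rank matrices relating $(\underline{f}_{\ell=2},\underline{b}_{\ell=2})$ to $(\partial_r\underline{\phi}_{\ell=2},\partial_r\underline{\psi}_{\ell=2})$ must be inverted carefully at each differentiation step.
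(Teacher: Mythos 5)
Your starting point is correct: evaluating (\ref{scalar relating l frequency}) on $\h$ does yield the transport equation $T\underline{a}_{\ell=2}=\frac{1}{M^2}\underline{b}_{\ell=2}+\frac{1}{M}\underline{f}_{\ell=2}$, and you are right that in this combination the $\partial_r\underline{\Psi}_2^{(2)}$ and $\underline{\Psi}_2^{(2)}$ contributions cancel. But your claim that ``what survives decays integrably in $\tau$'' does not hold. Substituting the algebraic formulas for $\underline{f}_{\ell=2},\underline{b}_{\ell=2}$ from the proof of Theorem \ref{f,b undeline estimates} and the decomposition $\underline{\phi}_{\ell=2}=\tfrac{1}{10M^2}\underline{\Psi}_2^{(2)}+\tfrac{1}{20M^2}\underline{\Psi}_1^{(2)}$, $\underline{\psi}_{\ell=2}=-\tfrac{1}{10M}\underline{\Psi}_2^{(2)}+\tfrac{1}{5M}\underline{\Psi}_1^{(2)}$, what survives on $\h$ is
\begin{align*}
T\underline{a}_{\ell=2}=\frac{1}{48M^{2}}\,\partial_r\underline{\Psi}_1^{(2)}+\frac{1}{48M^{3}}\,\underline{\Psi}_1^{(2)},
\end{align*}
and Theorem \ref{PWD-estimates} only gives $|\underline{\Psi}_1^{(2)}|,|\partial_r\underline{\Psi}_1^{(2)}|\lesssim\tau^{-1}$ (for $i=1,\ell=2$ the $\tau^{-1}$ range is exactly $k\leq 1$). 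So the right-hand side decays no faster than $\tau^{-1}$, and $\int^\tau\tau^{-1}$ diverges logarithmically; your integration argument cannot conclude that $\underline{a}_{\ell=2}$ converges. Even granting convergence, the transport identity would give $\underline{a}_{\ell=2}(\tau)=\underline{a}_{\ell=2}(\tau_0)+\int_{\tau_0}^{\tau}(\cdots)$, and neither the initial value $\underline{a}_{\ell=2}(\tau_0)$ nor the $\underline{\Psi}_1$-integral has any a priori relation to $H_{\ell=2}[\underline{\Psi}_2]$; the proportionality $-\tfrac{1}{30}$ you assert is not derivable this way.

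The paper avoids this obstruction by producing an \emph{algebraic} (time-local) identity for $\underline{a}$ along $\h$, not a transport equation. It takes one $\partial_r$-derivative of the relating equation (\ref{scalar relating l frequency}), and combines it with the Teukolsky equation for $\underline{a}$ and the Teukolsky equation for $\underline{f}$ evaluated on $\h$ (together with the transport equation for $\underline{f}$) so that \emph{all} $T$-derivatives of $\underline{a}$ cancel, yielding
\begin{align*}
\dfrac{6}{M^2}\underline{a}=-\dfrac{2}{M}\partial_r\underline{f}+\dfrac{2}{M^2}\partial_r\underline{b}+\dfrac{16}{M^2}\underline{f}-\dfrac{2}{M^3}\underline{b}+\dfrac{3}{M}\partial_r\underline{\phi}+\dfrac{3}{M^2}\underline{\phi}.
\end{align*}
This expresses $\underline{a}$ pointwise in $\tau$ in terms of $\partial_r\underline{f},\partial_r\underline{b}$ (whose asymptotics on $\h$ are $\pm\frac{1}{20M},-\frac{1}{20}$ times $H_2[\underline{\Psi}_2]$) plus decaying remainders, directly giving the limit $-\frac{1}{30}H_2[\underline{\Psi}_2]$. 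The blow-up estimates proceed by taking $k$ more $\partial_r$-derivatives of the Teukolsky equation for $\underline{a}$ and the relating equation, again eliminating all $T$-derivatives of $\underline{a}$ algebraically, and using the inductive hypothesis together with Theorem \ref{f,b undeline estimates}; your inductive scheme via integration of a transport equation would inherit the same $\tau^{k-1}$ cancellation in the leading $\partial_r^{k}\underline{f},\partial_r^{k}\underline{b}$ terms and would thus again depend on lower-order contributions that are not shown to integrate to the claimed value. You should therefore use the Teukolsky equation for $\underline{a}$ (the second line of (\ref{Teukolsky scalar rescaled})) as an essential ingredient, not just the relating equation.
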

		\begin{proof}First, we show the corresponding estimates for the induced scalar $ \underline{a} $ supported on the fixed frequency $ \ell=2. $
			For the non-decay estimate, we begin by taking one $ \partial_r- $derivative of the relating equation (\ref{scalar relating l frequency}) and after we evaluate it on the horizon $ \br{r=M} $ we obtain \begin{align}
				T(2\partial_r\underline{a}) + \dfrac{4}{M^2}\underline{a} \ = \ \dfrac{2}{M}\partial_r\underline{f} +\dfrac{2}{M^2}\partial_r\underline{b} -\dfrac{10}{M^2}\underline{f}-\dfrac{2}{M^3}\underline{b} \label{one derivative of relating}
			\end{align}
			On the other hand, we write down the Teukolsky equation for $ \underline{a} $ and evaluate it on the horizon $ \h $ to get \begin{align*}
				T\left(2\partial_r\underline{a}+\dfrac{6}{M}\underline{a}\right) = \dfrac{2}{M^2}\underline{a} + \dfrac{4}{M}\partial_r\underline{f},
			\end{align*}
			while evaluating the Teukolsky equation for $ \underline{f}  $ on $ \h $,  written as in (\ref{Teukolsky scalar rescaled}), yields \begin{align*}
				T\left(\frac{2}{M}\underline{a}\right) = T\left(2\partial_r\underline{f}+ \dfrac{4}{M}\underline{f}\right).
			\end{align*}
			Thus, plugging in the last two relations to (\ref{one derivative of relating}) and using the transport equation for $ \underline{f} $ gives us \begin{align}
				\begin{aligned}
					\dfrac{6}{M^2}\underline{a} = - \dfrac{2}{M}\partial_r\underline{f} + \dfrac{2}{M^2}\partial_r\underline{b} + \dfrac{16}{M^2}\underline{f}- \dfrac{2}{M^3}\underline{b} + \dfrac{3}{M}\partial_r\underline{\phi} + \dfrac{3}{M^2}\underline{\phi} 
				\end{aligned}
			\end{align}
			However, from proposition \ref{f,b undeline estimates} we have that the last four terms decay; taking the limit  for the first two yields \begin{align*}
				&\underline{a} = - \dfrac{M}{3}\partial_r\underline{f} + \dfrac{1}{3}\partial_r \underline{b} + \mathcal{O}(\tau^{-\frac{1}{4}}) = \dfrac{1}{3}\left(-\dfrac{1}{20}-\dfrac{1}{20}\right) H_2[\underline{\Psi}_2]  + \mathcal{O}(\tau^{-\frac{1}{4}}) \\
				\Rightarrow \hspace{0.5cm} &\underline{a}(\tau, \omega)  \xrightarrow{\tau \to \infty} -\dfrac{1}{30}H_2[\underline{\Psi}_2]( \omega) .
			\end{align*}
			In view of  $ H_{2}[\Psi_2] $ being almost everywhere non-zero on $ S^{2}_{v,M} $ for generic initial data,  we obtain the non-decay result of $ \underline{a}. $
			\begin{flushleft}
				\hrulefill
			\end{flushleft}
			To obtain the higher-order blow-up estimates we use induction on the number of derivatives. The base case $ k=0 $ corresponds to the non-decay results proved earlier.   Assume the proposition holds for all $ s\leq k-1 $, $ k\geq 1  $, then we consider $ k $ many $ \partial_r- $derivatives of the Teukolsky equation for $ \underline{a} $ and evaluate it on the horizon $ \h  $ to obtain \begin{align}
				\begin{aligned}
					T&\left(2\partial_r^{k+1}\underline{a}+ \partial_r^{k}\left(\dfrac{2}{r}\underline{a}\right)\Big|_{r=M}  \right) -  \dfrac{6}{M^2}\partial_r^{k}\underline{a} + \left[\binom{k}{2}D''(M)+
					\binom{k}{1}R'(M)\right]\partial_r^{k}\underline{a} \\
					=& \ -\frac{4}{M^2}\partial_r^{k}\underline{a} - 2\binom{k}{1} D''(M)\partial_r^{k}\underline{a} + \dfrac{4}{M} \partial_r^{k+1}\underline{f} +
					\mathcal{L}[\underline{a}^{^{\leq k-1}},\underline{f}^{^{\leq k}}]   \label{a- kderivatives}
				\end{aligned}
			\end{align}
			On the other hand, we may write the relating equation (\ref{scalar relating l frequency}) as \begin{align*}
				T(2\underline{a}) = \frac{2}{Mr}\underline{b} + \dfrac{2}{r}\left(1-4\sqrt{D}\right) \underline{f} - 2D'(r) \underline{a} -D\partial_r\underline{a}  + \dfrac{D}{r}\underline{a}
			\end{align*}
			and by taking $ k+1 $-many derivatives of the above and evaluating on $ \h $ we obtain \begin{align*}
				T(2\partial_r^{s}\underline{a}) =\dfrac{2}{M^2}\partial_r^{s}\underline{b} + \dfrac{2}{M}\partial_r^s\underline{f} - \left[ 2\binom{s}{1}D''(M)+ \binom{s}{2}D''(M) \right]\partial_r^{s-1}\underline{a} + \tilde{\mathcal{L}}[\underline{f}^{^{\leq s-1}},\underline{b}^{^{\leq s-1}}, \underline{a}^{^{\leq s-2}}]
			\end{align*}
			Thus, plugging the above expression in (\ref{a- kderivatives}) for all $ s\leq k+1 $ yields \begin{align*}
				\dfrac{6}{M^2}\partial_r^{k}\underline{a}=	\dfrac{2}{M^2}\partial_r^{k+1}\underline{b} - \dfrac{2}{M}\partial_r^{k+1}\underline{f} +\mathcal{L}[\underline{f}^{^{\leq k}},\underline{b}^{^{\leq k}}, \underline{a}^{^{\leq k-1}}]
			\end{align*}
			Now we can use the blow-up estimates of proposition \ref{f,b undeline estimates} and the inductive hypothesis to get  \begin{align*}
				\partial_r^{k}\underline{a}(\omega)&= (-1)^{k} \left( \dfrac{1}{3}  b_k  -M f_k \right) H_2[\underline{\Psi}_2](\omega) \cdot \tau^{k} + \mathcal{O}(\tau^{k-\frac{1}{4}}) \\
				&=(-1)^{k+1}\dfrac{2}{3}b_k \ H_2[\underline{\Psi}_2](\omega) \cdot \tau^{k} + \mathcal{O}(\tau^{k-\frac{1}{4}})
			\end{align*}
			and in view of $ \frac{2}{3}  b_k = c_k$ we conclude the estimates for $ \underline{a}. $ Finally, using standard elliptic identities we show the estimates of the assumption for the tensor $ \underline{\alpha}_{\star}. $\\
		\end{proof}

		\setstretch{1}

		\appendix

\printbibliography

	\end{document}